\def \d{\text{\rm d}}
\def \E{\text{\rm E}}
\def \P{\text{\rm P}}
\def \st{=_{\rm st}}
\def \Lo{\le_{\rm L}}
\def \B{\mathbb{B}}
\newcommand{\sgn}{\mbox{\rm sgn}}
\newcommand{\R}{\ensuremath{\mathbb{R}}}
\newcommand{\I}{\ensuremath{\mathcal{I}}}
\newcommand{\N}{\mathbb{N}}
\newcommand{\p}{\partial}
\newcommand{\ext}{\ensuremath{\text{\rm Ext}}}
\DeclareMathOperator*{\esssup}{ess\,sup}
\DeclareMathOperator*{\diam}{diam}
\definecolor{violet}{rgb}{0.7,0,0.6}
\theoremstyle{definition}
\newtheorem{definition}{Definition}
\newtheorem{assumption}{Assumption}
\theoremstyle{theorem}
\newtheorem{theorem}{Theorem}
\newtheorem{proposition}{Proposition}
\newtheorem{corollary}{Corollary}
\newtheorem{lemma}{Lemma}
\title{\textbf{\Large Extremal points of Lorenz curves and \\ applications to inequality analysis }}
\author{Amparo Ba\'illo, Javier C\'{a}rcamo\footnote{Corresponding author}\, and Carlos Mora-Corral \\
{\normalsize Departamento de Matem\'{a}ticas, Universidad Aut\'{o}noma de Madrid, 28049 Madrid (SPAIN)}}
\date{\today}
\begin{document}
%-------------------------------------------------------------------

\maketitle

%-------------------
\begin{abstract}
We find the set of extremal points of Lorenz curves with fixed Gini index and compute the maximal $L^1$-distance between Lorenz curves with given values of their Gini coefficients. As an application we introduce a bidimensional index that simultaneously measures relative inequality and dissimilarity between two populations. This proposal employs the Gini indices of the variables and an $L^1$-distance between their Lorenz curves. The index takes values in a right-angled triangle, two of whose sides characterize perfect relative inequality---expressed by the Lorenz ordering between the underlying distributions. Further, the hypotenuse represents maximal distance between the two distributions. As a consequence, we construct a chart to, graphically, either see the evolution of (relative) inequality and distance between two income distributions over time or to compare the distribution of income of a specific population between a fixed time point and a range of years. We prove the mathematical results behind the above claims and provide a full description of the asymptotic properties of the plug-in estimator of this index. Finally, we apply the proposed bidimensional index to several real EU-SILC income datasets to illustrate its performance in practice.
\end{abstract}
%------------------
%\begin{keywords}
%Berk-Jones statistic; Classification error; Delta method; empirical risk minimization; empirical processes; Hadamard directional derivative; $k$-means; kernel distance; Kolmogorov distance; Kolmogorov-Smirnov statistic; Kuiper statistic.
%\end{keywords}

%-----------------
%\begin{MSC}
%Primary 62E20; secondary 62G20,  62G30,  62H30.
%\end{MSC}
%--------------------------------------------------------------------------
%\newpage

%%%%%%%%%%%%%%%%%%%%%%%%%%%%%%%%%%%%%%%%%%%%%%%%%%%%%%%%%%%%%%%%%%%%%%%%%%%%%%%%%%%
% SECTION: INTRODUCTION
%%%%%%%%%%%%%%%%%%%%%%%%%%%%%%%%%%%%%%%%%%%%%%%%%%%%%%%%%%%%%%%%%%%%%%%%%%%%%%%%%%%

\section{Introduction and motivation} \label{Section.Introduction}

Inequality is one of the main global issues in nowadays world; see \cite{Atkinson-2015}. It is commonly accepted that social inequality has increased across the globe over the last few decades; see for example \cite{Greselin-2014} for an analysis of income inequality in the United States of America and \cite{Bosmans-2014} who analyze inequality in many countries. The media often claim that the social gap has widened considerably; day by day the richest are getting richer and the poor poorer. This kind of assertions are commonly based on striking facts such as ``the world's richest 1\%, those with more than \$1 million, own 45\% of the world's wealth\footnote{See \url{https://inequality.org/facts/global-inequality/}}''. Obviously, social inequality has wide-ranging adverse impacts on both the society and economy; see \cite{Stiglitz-2012}, \cite{Bourguignon}, \cite{Jarman-2016}, and the references therein. Therefore, in the econometric literature there is a major interest to develop and study indices quantifying inequality with accuracy, as well as summaries of the dispersion/heterogeneity of income or wealth distribution in a given population. The analysis of suitable empirical counterparts to make statistical inferences related to inequality has also played a central role within this field of research.

The motivation of this work is to compare and quantify the inequality between two populations, bringing to light the intrinsic differences between the underlying distributions. The usual econometric tools for quantifying inequality are based on the comparison of one-dimensional indices that might coincide for very different distributions; see \cite{Yitzhaki-Schechtman-2012} and \citet[Appendix B]{Fontanari-Cirillo-Oosterlee-2018} for concrete examples regarding the Gini index. As a matter of fact, the first goal of this work is to specify how different two distributions can be (with respect to a suitably chosen distance) if we know their Gini indices. To this end, we have explicitly computed the set of extreme points of Lorenz curves with a fixed value of their Gini coefficient as well as the maximum $L^1$-distance between the Lorenz curves of the distributions when their Gini indices are given. We also want to identify \textit{extremal distributions}, that is, those pairs of distributions for which this maximal distance is attained. We believe that these issues are relevant by themselves and they turn out to be interesting and deep mathematical problems.

A second, though primary, goal of this work is to propose a new inequality index that amends (to the extent possible) the deficiencies of the current econometric approaches to compare inequality between two populations. Specifically, we aim at introducing an index (defined for pairs of distributions) that combines the main elements in social welfare evaluations: the Lorenz curve, the Gini index, and the Lorenz ordering. A detailed inspection of the minimum desired requirements such index should satisfy reveals that a statistic with appropriate properties cannot be one-dimensional, as are the most frequently used inequality measures; see Section \ref{Subsection.Ideal-properties}. For this reason we propose a two-dimensional index that measures at the same time the difference between the Gini indices and an $L^1$-distance between the Lorenz curves of the two populations. We can further normalize this index so that it takes values in a triangle in $\R^2$ with vertexes $(0,0)$, $(1,1)$ and $(-1,1)$. We prove that the legs of this triangle characterize perfect inequality---expressed as Lorenz ordering between the distributions---while the hypothenuse (upper face), maximum dissimilarity. This new index allows us to evaluate all at once the difference between the income distributions of the two populations and their relative inequality. We can use this index to visualize the evolution (over time) of relative inequality and distance.

This paper is structured as follows: Section \ref{Section.Inequality-comparisons} reviews the usual econometrical approaches to quantify and compare inequality. We also list the ideal properties a relative inequality index should satisfy. In Section \ref{Section.Lorenz-Gini} we revise the basic concepts related to inequality used throughout the paper: the Lorenz curve, the Gini index and the Lorenz ordering. In Section \ref{Section.Extremal} we consider the class of Lorenz curves with a given value of their Gini coefficients and show that this set is compact (and convex) in the space $L^1$. We further compute the extreme points of this set. Section \ref{Section.Maximum} is devoted to compute maximum distances between two Lorenz curves and to find extremal pairs of distributions with fixed values of their Gini indices. In Section \ref{Section.Index} we introduce the aforementioned inequality index and enumerate its main properties. Two simple normalizations of the index are also proposed to improve data visualization. In Section \ref{Section-Asymptotics} we follow a ``plug-in approach" to estimate the proposed indices. %In other words, we consider the empirical counterparts of the quantities that we want to estimate.
We show the strong consistency of the estimators and compute the limit distributions of their normalized versions.
Necessary and sufficient conditions for some of the statistics to be asymptotically normal are provided.
We also include the conclusions of a simulation study to evaluate the behaviour of the asymptotic results in finite samples. In Section \ref{Section.RealData} we compute the bidimensional index for various income datasets from EU-SILC (European Union Statistics on Income and Living Conditions).
%We analyze the evolution of income distribution in Spain along the 2008-2019 period and compare the relative inequality between Finland and Greece (years 2004-2019),  Spain and Portugal (years 2008-2019) and France and Germany (years  2005-2019).
Additional examples and material regarding the analysis of real data sets and the simulation study are included in the Supplementary Material file. Finally, the proofs of the main results are collected in Section \ref{Section.Appendix}, a technical appendix.

\section{Inequality: comparisons between distributions}\label{Section.Inequality-comparisons}

In this section, first we give a general review of the current techniques to quantify inequality as well as to compare income distribution across different populations. The most frequent approach is to summarize the distribution into a one-dimensional quantity. However, we conclude the section pointing out that there is no unidimensional index satisfying simultaneously all the reasonable properties a good relative inequality measure between two distributions should fulfill.

\subsection{Tools to compare inequality between distributions}

Comparing inequality in two populations is far from being new. The usual econometric tools to carry out comparisons among income distributions within countries or to analyze the evolution of inequality in different moments of time can be essentially divided into two groups.

\textsl{1. Inequality measures.} In the literature there is a great amount of statistics to assess economic inequality. We can mention the well-known \textit{Theil}, \textit{Hoover}, \textit{Amato}, \textit{Atkinson} and \textit{generalized entropy} indices. These are only a few examples among many others, and even new measures are introduced from time to time; see \cite{Prendergast-Staudte-2018} for a recent proposal. The interested reader might consult the book by \cite{Cowell} or \cite{Eliazar-Sokolov-2012} for a panoramic overview on equality indices. These measurements summarize and quantify---usually in a (normalized) single real number---the statistical dispersion and heterogeneity of income distribution in a population. There is no doubt that the most commonly used measure in this context is the \textit{Gini index} (see Section \ref{Section.Gini.index}), which is at the heart of social welfare evaluations. In practice, it is quite frequent to analyze the situation of two (or more) countries in terms of evenness by comparing their respective Gini indices. Most rankings where countries are ordered by income equality and poverty mappings (i.e., maps of income disparity) are usually obtained in this way. In this first group, the comparison of income distributions relies on the corresponding analysis of suitable income inequality metrics.

\vspace{1 mm}

\textsl{2. Stochastic comparisons.} An essentially different way to compare (income) distributions is to establish a stochastic ordering between them; see \cite{Sriboonchita-2009}. Stochastic orders, also known as \textit{stochastic dominance rules} in the economic literature, are nothing but partial order relations in the set of probability measures; see \cite{Shaked-Shanthikumar-2006}. Therefore, they allow comparing and ranking distributions according to some specific criterion. If such a criterion is evenness, the \textit{Lorenz order} is the  most commonly accepted rule, primarily in economic sciences; see \cite{Arnold-Sarabia-2018}. Two distributions are ordered with respect to this relation if one of their Lorenz curves (see the precise definition in Section \ref{Section-Lorenz.curve}) is completely above the other one. In terms of inequality, this roughly speaking means that the wealth is distributed in a fairer way in one of the two populations.
Hence, this second group of techniques consists in performing global comparisons of the distributions by means of stochastic dominance rules that take into account evenness.

\vspace{1 mm}

\textsl{Pros and cons.} Each of the previous two approaches has advantages and limitations. Inequality metrics provide useful summaries of income distributions that are simple and easily interpretable. These measurements can be used to make comparisons (related to inequality) among distributions by simply arranging the selected index. However, it is clear that a single real number cannot represent faithfully the distribution of income of a population. The same inequality index might correspond to many very different distributions. On the other hand, if two distributions are stochastically ordered---with respect to a certain relation that takes into account inequality---, we can derive many important consequences regarding the underlying distributions. Typically, stochastic dominance implies an ordering among many inequality measures simultaneously. For instance, if the Lorenz ordering holds, an inequality between expectations of convex functions of the involved variables is satisfied; see \cite{Arnold-Sarabia-2018}. Additionally, from the empirical point of view, there are various hypothesis tests to check whether it is reasonable to assume that two variables are ordered; see \cite{Anderson-1996}, \cite{Barrett-Donald-2003}, \cite{Zheng-2002}, \cite{Berrendero-Carcamo-2011}, \cite{Barrett-Donald-Bhattacharya-2014}, \cite{Sun-Beare-2021}, among others. Nevertheless, dominance rules are partial orders and, hence, not every pair of distributions can be arranged; see \cite{Davies-Hoy-1995}. Further, in general, the statement that two distributions are ordered cannot be proved statistically, as one would desire. This happens because a test with null hypothesis ``two variables are  \textit{not} ordered" and alternative ``the variables are ordered" is usually ill-posed: given a pair of ordered distributions, we can normally find pairs of non-ordered distributions arbitrarily close to the initial ones and hence the null and alternative hypotheses are indistinguishable; see \cite{Ermakov-2017}. %As an illustration of a similar case, this situation is analogous to a testing problem with alternative hypothesis ``the mean of the distribution is zero", which in general is not well-posed.

\subsection{Ideal properties of a univariate relative inequality index}\label{Subsection.Ideal-properties}

Let us assume that we want to construct a unidimensional index, say $I$, to measure relative inequality between two populations $X_1$ and $X_2$. We might want this index to combine the most commonly employed tools in social welfare evaluations: the Gini index and the Lorenz ordering. One can easily describe the most desirable properties a reasonable index should fulfill.

\begin{enumerate}[(P$_1$)]
  \item \textsl{Normalization:} The index has to take values on a reference interval. As we intend to measure relative inequality (of one variable with respect to another one), this interval has to be symmetric. Let us assume that the normalization is such that $I(X_1,X_2)\in [-1,1]$. By convention, positive and negative values of $I(X_1,X_2)$ can indicate that $X_1$ is fairer (in some precise sense) than $X_2$, or the other way around. For instance, we might ask that $I(X_1,X_2)> 0$ if and only if the Gini index of $X_1$ (or any other reference inequality index) is less than the corresponding index of $X_2$.

  \item \textsl{Symmetry:} The index has to verify that $I(X_1,X_2)=-I(X_2,X_1)$.

  \item \textsl{Extreme values:} The endpoints of the reference interval should reflect that one distribution is uniformly more equitable than the other one. Ideally, this fact can be translated into the Lorenz ordering between the variables:
  \begin{enumerate}[(i)]
    \item $I(X_1,X_2)=+ 1$ if and only if $X_1$ is smaller than $X_2$ in the Lorenz order.
    \item $I(X_1,X_2)=-1$ if and only if $X_2$ is smaller than $X_1$ in the Lorenz order.
  \end{enumerate}
  \item \textsl{Value at the origin:} If $I(X_1,X_2)=0$, then $X_1$ and $X_2$ have to satisfy some kind of equality. A weak version of this property could be that $I(X_1,X_2)=0$ if and only if their Gini indices (or some other related statistic) coincide. However, preferably the equality should be in distribution. For instance, we could ask that `$I(X_1,X_2)=0$' be equivalent to `$X_1\st c X_2$', where $c>0$ is a constant and `$\st$' stands for stochastic equality. In other words, `$I(X_1,X_2)=0$' would mean that $X_1$ and $X_2$ distribute the wealth exactly in the same manner up to a size factor.
  \item \textsl{Continuity:} Let $\{X_{1, n_1}\}$ and $\{X_{2, n_2}\}$ be two sequences of random variables such that $X_{1,n_1}\rightsquigarrow X_1$ and $X_{2,n_2}\rightsquigarrow X_2$ (as $n_1,n_2\to\infty$), where `$\rightsquigarrow$' stands for some suitable mode of convergence of random variables. In this setting, continuity of the index amounts to $I(X_{1, n_1},X_{2, n_2})\to I(X_1,X_2)$ (as $n_1,n_2\to\infty$). This property is essential to estimate well the index in practice with random samples of the populations.
\end{enumerate}

Unfortunately, some of these properties are usually incompatible for a one-dimensional index. For instance, generally (P$_3$) and (P$_5$) cannot hold at the same time. The reason relies on the fact that there are pairs of ordered distributions---according to the Lorenz ordering---arbitrarily close. We might have that $X_{1,n_1} \rightsquigarrow X_1$ and $X_1$ is smaller than $X_{1,n_1}$ in the Lorenz order if $n_1$ even and the other way around if $n_1$ odd, so that under (P$_3$), $I(X_1,X_{1,n_1})=(-1)^{n_1}$. Therefore, if we want to construct an index satisfying similar properties to those enumerated before, we need to quantify the difference between two income distributions with more than one number.

\section{Lorenz curve, Gini index and Lorenz ordering}\label{Section.Lorenz-Gini}

Since their introduction by \cite{Lorenz-1905} and \cite{Gini-1914}, the Lorenz curve and the Gini index have been key tools in the analysis of economic inequality; see \cite{Kleiber-Kotz-2003}. An irrefutable proof of their historical transcendence is their continued use for more than a century. In this section we recall the precise definitions of these crucial concepts and set the notation used throughout the rest of the paper.

\subsection{The Lorenz curve}\label{Section-Lorenz.curve}

The Lorenz curve provides a graphical representation of the distribution of income or wealth in a population. The following conditions on the involved random variables will be assumed in the sequel: let $X$ be a positive random variable with finite mean $\mu>0$ and cumulative distribution function $F(x)=\P(X\le x)$, for $x\ge 0$.
Formally, the \textit{Lorenz curve} of the variable $X$ (or of the distribution $F$) is
\begin{equation}
\label{Lorenz-curve}
\ell(t)=\frac{1}{\mu}\int_0^t F^{-1}(x)\, \d x,\quad 0\le t \le 1,
\end{equation}
where
\begin{equation}\label{eq:F-1}
F^{-1}(x)=\inf\{ y \ge 0 : F(y)\ge x \}
\end{equation}
($0<x<1$) is the \textit{quantile function} of $X$, that is, the generalized inverse of $F$.
Hence, if $X$ measures income in a population, for each value $t\in [0,1]$, the function in \eqref{Lorenz-curve} gives us the (normalized) total income accumulated by the proportion $t$ of the poorest in that population.
Note that $F^{-1}$ is non-decreasing, $\mu=\int_0^1 F^{-1}(x)\, \d x$ and $\ell^\prime(t)=F^{-1}(t)/\mu$ a.e.\ $t\in(0,1)$. Therefore, $\ell$ is a convex and non-decreasing function such that $\ell(0)=0$ and $\ell(1)=1$. In particular, $\ell$ is continuous except perhaps at the point $1$ and has positive second derivative $\ell^{\prime\prime}$ a.e.
Moreover, as the quantile function characterizes the probability distribution, $\ell$ determines the distribution of the underlying variable up to a (positive) scale transformation. Explicit analytic expressions for the Lorenz curves of the usual parametric distributions can be found in \citet[Section 2.1.2]{Kleiber-Kotz-2003}.

By convexity, for every the Lorenz curve $\ell$ it holds that
\begin{equation}\label{inequalities-Lorenz}
\ell_{\rm pi}\le \ell \le \ell_{\rm pe},
\end{equation}
where
\begin{equation}\label{pe-pi-curve}
\ell_{\rm pe}(t)=t \quad (0\le t\le 1) \quad \text{and}\quad
\ell_{\rm pi}(t)=
\begin{cases}
0, & \text{if }0\le t < 1,\\
1, & \text{if }t=1.
\end{cases}
\end{equation}
Figure \ref{fi:Lorenz} shows a graphical representation of the inequalities in \eqref{inequalities-Lorenz}. The function $\ell_{\rm pe}$ is called the \textit{perfect equality curve} as it corresponds to the Lorenz curve of a Dirac delta measure, i.e., the probability measure corresponding to a population in which all individuals have equal (and positive) incomes. Additionally, $\ell_{\rm pi}$ is the \textit{perfect inequality curve} because it can be viewed as the limit (when the total number of individuals tends to infinity) of Lorenz curves in finite populations where only one person accumulates all the wealth.

\begin{figure}[h]
\centering{\includegraphics[width=9cm]{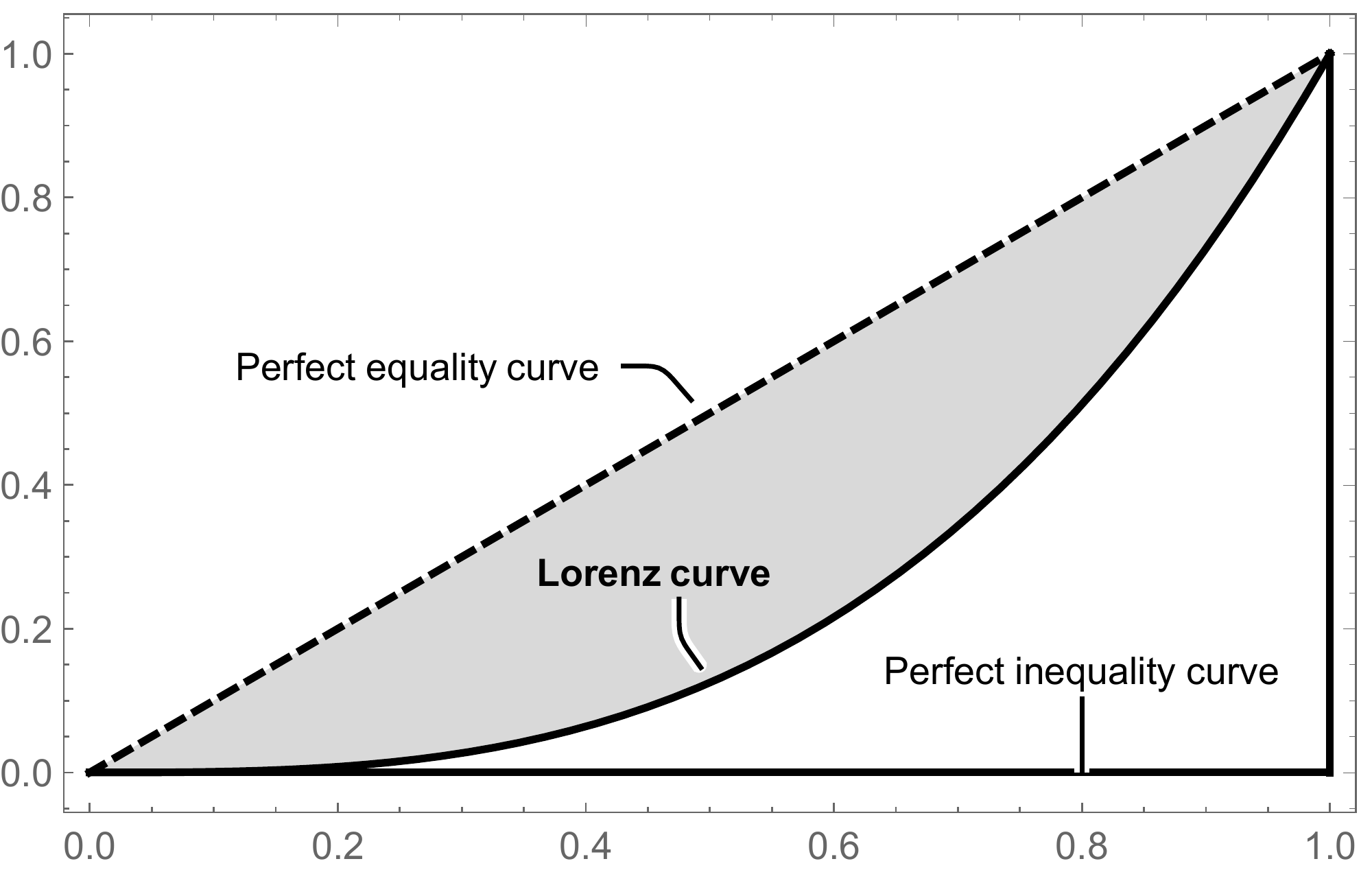}}
\caption{A Lorenz curve together with the perfect equality and inequality curves.}
\label{fi:Lorenz}
\end{figure}

\subsection{The Gini index}\label{Section.Gini.index}

Different characteristics, functionals and values of the Lorenz curve are employed to construct inequality indices; see \cite{Arnold-Sarabia-2018}. The Gini, Pietra, Amato, {20:20 ratio} and {Palma ratio} indices are some examples of inequality measures derived from the Lorenz curve. As we have mentioned before, a simple and effective comparison frequently used by the media can be made by analyzing the evolution of the proportion of income accumulated by the top (or bottom) 1\% of the population, which is nothing but the analysis of one single value of the Lorenz curve. %This type of summaries of the Lorenz curve can be included in the previous point.

The most popular inequality measure derived from the Lorenz curve is the \textit{Gini index}. This index has almost an uncountable number of interesting interpretations and representations; see \citet[Chapter 2]{Yitzhaki-Schechtman-2012}. %, where more than a dozen alternative ways of spelling Gini are presented.
One possible way to define it is the following:
\begin{equation*}%\label{Gini-index}
G(X)= 2 \int_0^1 (t-\ell (t))\, \d t.
\end{equation*}

In the sequel we denote by $L^1=L^1([0,1])\equiv$ the Banach space of equivalence
classes of measurable functions $f:[0,1]\to \R$ endowed with the usual $L^1$-norm,
\begin{equation*}
\Vert f\Vert =\int_0^1 |f(t)|\, \d t,\quad \text{for }f\in L^1.
\end{equation*}
From \eqref{inequalities-Lorenz}--\eqref{pe-pi-curve}, we have that $G(X)=2\| \ell_{\rm pe}-\ell  \| =  1-2 \|  \ell  \|$. Geometrically, the Gini index corresponds to twice the shaded area in Figure \ref{fi:Lorenz}. In particular,
\begin{equation}\label{Gini-index-2}
G(X)=\frac{\Vert \ell-\ell_{\rm pe} \Vert}{\Vert \ell_{\rm pe} -\ell_{\rm pi}  \Vert}.
\end{equation}
The denominator in \eqref{Gini-index-2} equals $1/2$ (the maximum $L^1$-distance between Lorenz curves) and acts as a normalizing constant so that $0\le G(X)\le 1$.

The Gini index has many desirable properties: it is scale-free (because the Lorenz curve is itself invariant under positive scaling); it can be computed whenever the considered random variable is integrable (finite second moment is not necessary); it is normalized so that it takes values between 0 (perfect equality) and 1 (perfect inequality); it has a simple and effective interpretation (small values of this index amount to fair income distributions, whereas high values indicate unequal distributions); it is a quasi-convex measure (see \cite{Blackorby-Donaldson-1980}), i.e., for all variables $X_1$, $X_2$ and $\lambda\in[0,1]$, $G(\lambda X_1+ (1-\lambda) X_2)\le \max\{ G(X_1),G(X_2)\}$.

\subsection{The Lorenz ordering}

Another important instrument to compare distributions according to inequality is the so-called Lorenz ordering. Let $X_1$ and $X_2$ be two variables with Lorenz curves $\ell_1$ and $\ell_2$, respectively. It is said that $X_1$ is less than or equal to $X_2$ in the \textit{Lorenz order}, written $X_1\Lo X_2$, if $\ell_1(t)\ge \ell_2(t)$, for all $t\in [0,1]$. In this case, we have that $\ell_{\rm pe}\ge \ell_1 \ge \ell_2$, where $\ell_{\rm pe}$ is the perfect equality curve defined in \eqref{pe-pi-curve}. In other words, income is distributed in a more equitable manner in $X_1$ than in $X_2$.% as $\ell_1$ is uniformly closer than $\ell_2$ to $\ell_{\rm pe}$.

%From \eqref{Lorenz-curve}, we see that the Lorenz ordering is (positive) scale-free, that is, if $X_1\Lo X_2$, then $a X_1\Lo b X_2$, for constants $a,b>0$. This stochastic relation is connected with other important variability orderings such as the convex and star-shaped orders; see \cite{Shaked-Shanthikumar-2006}. Also, it admits a beautiful characterization in terms of martingales through Strassen's theorem; see \citet[Theorem 3.2.3]{Arnold-Sarabia-2018}.
For many families of parametric distributions usually considered in applications, two members of the family differing in the dispersion parameter are usually ordered in accordance with this relation. For instance, Pareto, normal, lognormal, Gamma, Weibull distributions (among others) satisfy this property; see \cite{Kleiber-Kotz-2003}. %This happens because the Lorenz order is a variability ordering after a size normalization.

%Quiz\'{a} a\~nadir dibujo con dos curvas ordenadas.

\section{Extreme points of Lorenz curves with given Gini index}\label{Section.Extremal}

Let us consider
\begin{equation}\label{eq:L}
\mathcal{L}=\{  \ell:[0,1]\to[0,1] : \ell \text{ convex and } \ell(0)=0,\,  \ell(1)=1  \},
\end{equation}
the closure (with respect to the pointwise convergence) of the set of Lorenz curves of positive and integrable random variables with strictly positive expectation. For example, the function $\ell_{\rm pi}$ defined in (\ref{pe-pi-curve}) (see also Figure~\ref{fi:Lorenz}), which is \textit{not} a proper Lorenz curve, belongs to $\mathcal{L}$. For simplicity, we will refer to $\mathcal{L}$ as the class of Lorenz curves.

The Gini index of $\ell\in \mathcal{L}$ will be also denoted by $G(\ell)$. In other words,
\begin{equation}\label{Gini-Lorenz}
G(\ell)  =   1-2\Vert \ell\Vert.
\end{equation}
For $a\in[0,1],$ we define
\begin{equation}\label{eq:La}
\mathcal{L}_a=\{  \ell\in\mathcal{L} : G(\ell)= a  \},
\end{equation}
the collection of Lorenz curves with Gini index $a$. Note that $\mathcal{L}_a=\{  \ell\in\mathcal{L} : \| \ell \|= (1-a)/2  \}$.

The following proposition shows the compactness of $\mathcal{L}_a$ in the space $L^1$.
\begin{proposition}\label{pr:LaCompact}
For each $a \in [0,1]$, $\mathcal{L}_a$ is a compact and convex set in $L^1$.
\end{proposition}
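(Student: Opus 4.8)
The statement has two parts, and I would treat them separately, doing the easy part (convexity) first and then the substantive part (compactness). For compactness the strategy is to show that the whole class $\mathcal{L}$ is compact in $L^1$ and that $\mathcal{L}_a$ is a closed subset of it; since a closed subset of a compact metric space is compact, this yields the claim. The only genuinely nontrivial ingredient is the $L^1$-compactness of $\mathcal{L}$, which I would obtain by a selection argument exploiting monotonicity and convexity, followed by an upgrade from pointwise to $L^1$ convergence.

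\textbf{Convexity.} Let $\ell_1,\ell_2\in\mathcal{L}_a$ and $\lambda\in[0,1]$, and set $\ell=\lambda\ell_1+(1-\lambda)\ell_2$. A convex combination of convex functions is convex, and $\ell$ clearly takes values in $[0,1]$ with $\ell(0)=0$ and $\ell(1)=1$, so $\ell\in\mathcal{L}$. Because every element of $\mathcal{L}$ is nonnegative, the $L^1$-norm coincides with the plain integral, which is linear; hence $\Vert\ell\Vert=\lambda\Vert\ell_1\Vert+(1-\lambda)\Vert\ell_2\Vert=(1-a)/2$, and by \eqref{Gini-Lorenz} we get $G(\ell)=a$, i.e.\ $\ell\in\mathcal{L}_a$. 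This is the routine part.

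\textbf{Compactness.} First I would record two structural facts about $\ell\in\mathcal{L}$: convexity together with $\ell(0)=0$ and $\ell\ge 0$ forces the right derivative at $0$ to be nonnegative, so $\ell$ is non-decreasing, and convexity with the endpoint values gives $0\le\ell(t)\le t$, so the family is uniformly bounded by $1$. Given any sequence $\{\ell_n\}\subset\mathcal{L}$, Helly's selection theorem (applied to these uniformly bounded monotone functions) yields a subsequence converging pointwise, at all but countably many points, to a non-decreasing limit $\ell$; being a pointwise limit of convex functions, $\ell$ is convex, and the bounds $0\le\ell\le t$ pass to the limit, so $\ell\in\mathcal{L}$. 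Since $|\ell_{n_k}-\ell|\le 1$ and $\ell_{n_k}\to\ell$ almost everywhere, the dominated convergence theorem upgrades this to $\Vert\ell_{n_k}-\ell\Vert\to 0$, giving $L^1$-convergence to a point of $\mathcal{L}$. Thus $\mathcal{L}$ is sequentially compact, hence compact, in $L^1$. Finally, $\mathcal{L}_a=\{\ell\in\mathcal{L}:\Vert\ell\Vert=(1-a)/2\}$ is the preimage of a point under the continuous map $\ell\mapsto\Vert\ell\Vert$ restricted to $\mathcal{L}$, so it is closed in $\mathcal{L}$ and therefore compact.

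\textbf{Main obstacle.} The delicate point is the behaviour near $t=1$, where slopes may blow up (as for $\ell_{\rm pi}$) and pointwise convergence can fail; the local Lipschitz estimates that control convex functions degenerate at that endpoint. Working in $L^1$ rather than in a uniform norm is exactly what circumvents this difficulty: the single endpoint contributes nothing to the integral, and the uniform bound by $1$ lets the dominated convergence theorem convert almost-everywhere convergence into $L^1$-convergence without any equicontinuity up to the boundary. Verifying that the limit genuinely lies in $\mathcal{L}$ with the correct endpoint values is the other place requiring a little care, but it follows directly from the characterization of $\mathcal{L}$ as the pointwise closure recorded in \eqref{eq:L}.
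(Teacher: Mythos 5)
Your proof is correct, but it takes a genuinely different route from the paper's. The paper first establishes quantitative regularity bounds (Proposition \ref{pr:LaContinuous}): a uniform bound in $W^{1,1}(0,1)$, which via the Rellich--Kondrachov theorem gives $L^1$-precompactness of $\mathcal{L}_a$ directly, and uniform $W^{1,\infty}(0,\alpha)$ bounds for each $\alpha<1$, which via Ascoli--Arzel\`a identify the limit as a convex function with the right endpoint behaviour; the constraint $G(\ell)=a$ is preserved by $L^1$-convergence. You instead prove the stronger statement that all of $\mathcal{L}$ is $L^1$-compact by the elementary combination of Helly's selection theorem (legitimate here since $\ell(0)=0$, $\ell\ge 0$ and convexity force monotonicity, and $\ell_{\rm pi}\le\ell\le\ell_{\rm pe}$ gives the uniform bound) with dominated convergence, and then carve out $\mathcal{L}_a$ as the preimage of a point under the continuous, indeed $1$-Lipschitz, map $\ell\mapsto\|\ell\|$. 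Both arguments handle the degeneracy of the slopes near $t=1$ correctly --- the paper by keeping Ascoli--Arzel\`a away from the endpoint, you by observing that the $L^1$ topology does not see it. What the paper's heavier route buys is that the Sobolev-type estimates of Proposition \ref{pr:LaContinuous} are needed again later (for the second-derivative isomorphism $T_a$ in the proof of Theorem \ref{th:extLa}), so nothing is wasted; what your route buys is a shorter, self-contained argument requiring no Sobolev embedding, plus the compactness of $\mathcal{L}$ itself as a by-product. One cosmetic caveat: state Helly's theorem in the form giving pointwise convergence at \emph{every} point of $[0,1]$ (which it does), so that convexity of the limit and the endpoint values $\ell(0)=0$, $\ell(1)=1$ are read off without fuss; with only almost-everywhere convergence you would need to add a line choosing a representative of the $L^1$-limit inside $\mathcal{L}$.
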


To achieve a deeper understanding of the class $\mathcal{L}_a$ in \eqref{eq:La} we need some basic concepts about convex sets. The notion of {extreme point} plays a prominent role in convex analysis; see, for example, \citet[Section 8]{Simon-2011}. Roughly speaking, an extreme of a convex set is a point that cannot be expressed as a proper convex combination of other points within the set. Formally, given a convex set $C$, $x \in C$ is an \textit{extreme point} of $C$ if $x = t x_1 + (1-t) x_2$, for some $t \in (0,1)$ and $x_1, x_2 \in C$, implies that $x_1 = x_2$. In the following we denote by $\ext(C)$ the set of extreme points of $C$. The relevance of extreme points is comprehended through the Krein--Milman theorem (see, e.g., \citet[Theorem 8.14]{Simon-2011}), which is a central result in convex analysis. This theorem affirms that a convex and compact set in a locally convex space is the closed convex hull of its extreme points. Therefore, we can retrieve the entire convex set by knowing only the (usually much smaller) set of extreme points. Further, Bauer's maximum principle (see, e.g., \citet[Proposition 16.6]{Phelps-2001} or \citet[7.69]{Aliprantis-2006}) states that a convex, upper-semicontinuous functional on a non-empty, compact and convex set of a locally convex space attains its maximum at an extreme point. In consequence, the knowledge of extreme points is fundamental in mathematical optimization. The practical application of these powerful results goes through the explicit computation of the extreme points of the convex set under study, which is usually a difficult task in infinite-dimensional spaces.

%Additionally, an extreme point $x\in \ext(C)$ on a Banach space $X$ is called \textit{exposed} (see \citet[Section 7.15]{Aliprantis-2006}) if $x$ is the unique maximizer over $C$ of a continuous linear functional. In other words, there exists $f\in X^*$ (the dual space of $X$) such that $f(x)=\sup_C (f)$ and for every $y\in C$ with $f(y)=\sup_C (f)$ we have that $x=y$. There is also an analogue of the Krein--Milman theorem for exposed points; see \citet[Theorem 8.29]{Fabian}. In particular, exposed points are the only relevant points for convex optimization purposes.

The next theorem determines the set of extreme points of $\mathcal{L}_a$ in \eqref{eq:La}.
%We also show that all extreme points are exposed and hence potential maximizers.
We believe that this result might be of independent interest and it is indeed necessary for further developments of this work.

\begin{theorem}\label{th:extLa}
For $a \in [0, 1]$, we have that
\[
 \ext(\mathcal{L}_a) = \left\{ \ell^a_{x_1} : x_1 \in [0, a] \right\} \cup \left\{ m^a_{x_2} : x_2 \in (a, 1) \right\} \cup \left\{ n^a_{x_1, x_2} : \, x_1 \in (0,a) , \, x_2 \in (a,1) \right\},
\]
where $\ell^a_{x_1}$, $m^a_{x_2}$ and $n^a_{x_1, x_2}$ are the piecewise affine functions of $\mathcal{L}_a$ such that
\begin{equation}\label{eq:extremePW}
 \ell^a_{x_1} : \begin{cases}
  0 \mapsto 0 \\
 x_1 \mapsto 0 \\
 1^- \mapsto \frac{1-a}{1-x_1} ,
 \end{cases} \qquad
 m^a_{x_2} : \begin{cases}
 0 \mapsto 0 \\
 x_2 \mapsto x_2 -a \\
 1 \mapsto 1,
 \end{cases} \qquad
 n^a_{x_1,x_2} : \begin{cases}
 0 \mapsto 0 \\
 x_1 \mapsto 0 \\
 x_2 \mapsto \frac{x_2 -a}{1-x_1} \\
 1 \mapsto 1
 \end{cases}
\end{equation}
(with the notation $\ell^a_{x_1}(1^-)=\lim_{t\uparrow 1}\ell^a_{x_1}(t)$ and the convention $\ell^1_{1}=\ell_{\rm pi}$ in \eqref{pe-pi-curve}). %Moreover, every extreme point of $\mathcal{L}_a$ is exposed.
\end{theorem}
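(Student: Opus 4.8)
The plan is to work directly with the defining constraints of $\mathcal{L}_a$ and to characterise extremality through admissible perturbations. Recall that $\ell\in\mathcal{L}_a$ means $\ell$ is convex with $\ell(0)=0$, $\ell(1)=1$ and $\int_0^1\ell=(1-a)/2$; convexity together with $\ell(0)=0$ is equivalent to saying that $g:=\ell'$ is nonnegative and non-decreasing on $(0,1)$, while the value $\ell(1)=1$ may be reached with a jump $J=1-\ell(1^-)\ge0$ at the right endpoint. The key observation is that $\ell$ fails to be extreme exactly when there is a nonzero $h$ with $\ell\pm h\in\mathcal{L}_a$: the identity $\ell=\tfrac12\big((\ell+h)+(\ell-h)\big)$ exhibits $\ell$ as a proper midpoint, and conversely any nontrivial representation $\ell=\tfrac12(\ell_1+\ell_2)$ gives $h=\tfrac12(\ell_1-\ell_2)$. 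I would translate the membership of $\ell\pm h$ into four requirements on $h$: the affine constraints $h(0)=h(1)=0$ and $\int_0^1 h=0$; the convexity requirement $|h''|\le\ell''$ (as measures), so that $h'$ is constant on every interval where $g$ is constant; the monotonicity requirement $|h'|\le g$, which forces $h'\equiv0$ wherever $g=0$; and the endpoint requirement $|J_h|\le J$, forcing the jump of $h$ at $1$ to vanish whenever $J=0$.

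With this dictionary the extremality of the three families reduces to small linear systems. For each listed function $g=\ell'$ is piecewise constant with at most two positive levels, so an admissible $h$ has $h'$ piecewise constant with the same breakpoints; the initial flat stretch $[0,x_1]$ (where $g=0$) contributes no free constant since there $h'\equiv0$, and $J_h=0$ whenever there is no jump (as for $m^a_{x_2}$ and $n^a_{x_1,x_2}$). What survives is a system in one or two unknowns (the free slope(s) of $h'$ and, when $J>0$, the jump $J_h$) subject to $h(1)=0$ and $\int_0^1 h=0$; a direct computation gives the nonvanishing determinants $(1-x_1)^2/2$, $-x_2(1-x_2)/2$ and $-(1-x_1)(1-x_2)(x_2-x_1)/2$ for $\ell^a_{x_1}$, $m^a_{x_2}$ and $n^a_{x_1,x_2}$ respectively, so the only admissible perturbation is $h=0$ and each listed function is extreme. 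The verification that these functions actually lie in $\mathcal{L}_a$ (convexity, endpoints and $\int=(1-a)/2$) is routine and also pins down the parameter ranges: $x_1\le a$ is exactly $J=1-\tfrac{1-a}{1-x_1}\ge0$, and the convexity of $n^a_{x_1,x_2}$ degenerates to the single segment $\ell^a_a$ precisely at $x_1=a$.

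For the converse I would show that an extreme $\ell$ must be one of the listed functions in three steps. First, if $g=\ell'$ is not piecewise constant with finitely many pieces, one can isolate two disjoint subintervals of $(0,1)$ on which $g$ is positive and increases at two or more places, and build there compactly supported ``bump'' perturbations $h_1,h_2$ (letting $h_i'$ rise and fall using only the increase of $g$, so that $|h_i'|\le g$, $|h_i''|\le\ell''$ and $h_i(0)=h_i(1)=0$); a nonzero combination $c_1h_1+c_2h_2$ can be chosen with zero integral, contradicting extremality, so $\ell$ is piecewise affine. Second, if $g$ had three or more positive levels, or two positive levels together with a jump at $1$, the admissible $h'$ would range over a space of dimension at least three against only the two equations $h(1)=0$, $\int_0^1 h=0$, again producing a nonzero admissible $h$; hence the number of positive affine pieces plus the jump freedom is at most two. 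Finally, enumerating the piecewise affine members of $\mathcal{L}_a$ obeying this bound and imposing $\ell(0)=0$, $\ell(1)=1$, $\int_0^1\ell=(1-a)/2$ together with monotonicity of $g$ yields exactly the three forms in \eqref{eq:extremePW}, with the stated parameter ranges.

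The main obstacle is this completeness direction, and within it the careful treatment of the right endpoint. The jump $J$ at $t=1$ contributes to the constraint $\ell(1)=1$ but not to $\int_0^1\ell$, and it is precisely this asymmetry that separates the family $\ell^a_{x_1}$ (which exploits the jump) from $m^a_{x_2}$ and $n^a_{x_1,x_2}$ (which reach $(1,1)$ continuously); tracking when $J>0$ is permitted and how it enters the dimension count is the delicate bookkeeping. A secondary difficulty is making the bump construction of the first step fully rigorous when the increase of $g$ occurs through atoms (kinks) rather than through a continuous part, which forces one to choose the localised perturbations so that $h_i'$ moves only where $g$ moves. Once these points are handled, the remaining case analysis and the determinant computations are routine.
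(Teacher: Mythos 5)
Your overall strategy is sound and is essentially the paper's argument in disguise: the paper linearizes the problem by showing that $\ell\mapsto\ell''$ is an affine isomorphism from $\mathcal{L}_a$ onto a set $\mathcal{M}_a$ of non-negative measures constrained by two moment inequalities, and then classifies $\ext(\mathcal{M}_a)$; your ``admissible perturbation'' dictionary ($|h''|\le\ell''$, $|h'|\le g$, $|J_h|\le J$, plus the two linear conditions $h(1)=0$ and $\int_0^1 h=0$) is the same structure read off directly on $\ell$. Your sufficiency direction is correct: the determinant computations for $\ell^a_{x_1}$, $m^a_{x_2}$ and $n^a_{x_1,x_2}$ check out, and your degree-of-freedom count in the second step of the converse (three or more positive slope levels, or two positive levels plus a jump at $1$, give three free parameters against two homogeneous equations) matches the paper's Steps 4--8.

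There is, however, a genuine gap in the first step of your converse. A perturbation $h_i$ that is compactly supported in a subinterval $I_i\subset(0,1)$ must have $h_i'$ starting at $0$, returning to $0$, \emph{and} integrating to $0$ over $I_i$; since $h_i'$ can only change where $\ell''$ has mass, this forces $h_i''$ to occupy at least \emph{three} disjoint portions of $\ell''|_{I_i}$ (so that $h_i'$ can go $0\to+\to-\to0$). With only ``two or more places'' of increase, as you propose, $h_i''$ has two parts with zero total mass, so $h_i'$ has constant sign on $I_i$ and $h_i$ is monotone there --- it cannot return to $0$, and in fact $h_i(1)\ne 0$, contradicting your own requirement. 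Concretely, the $2\times 2$ homogeneous system for a two-part compactly supported bump has determinant $m_1 m_2(\bar s_2-\bar s_1)\ne 0$ (where $m_i,\bar s_i$ are the masses and barycenters of the two parts), so only the trivial bump exists. The net effect is that your two bumps supply only two degrees of freedom against the two constraints $h(1)=0$ and $\int_0^1 h=0$, which is exactly one short. The fix is what the paper does in its Step 8: take \emph{three} disjoint Borel portions $A_1,A_2,A_3$ of $\ell''$ with $\ell''|_{A_i}\ne 0$, set $h''=\sum_i\alpha_i\,\ell''|_{A_i}$, and solve the two homogeneous linear conditions for a nonzero $(\alpha_1,\alpha_2,\alpha_3)$; for small $\varepsilon$ the perturbations $\ell\pm\varepsilon h$ stay in $\mathcal{L}_a$. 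With that correction (which also disposes of your worry about atoms versus continuous parts, since only disjointness of the three portions is needed), the rest of your enumeration goes through.
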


\begin{figure}[h]
\centering{\includegraphics[width=16cm]{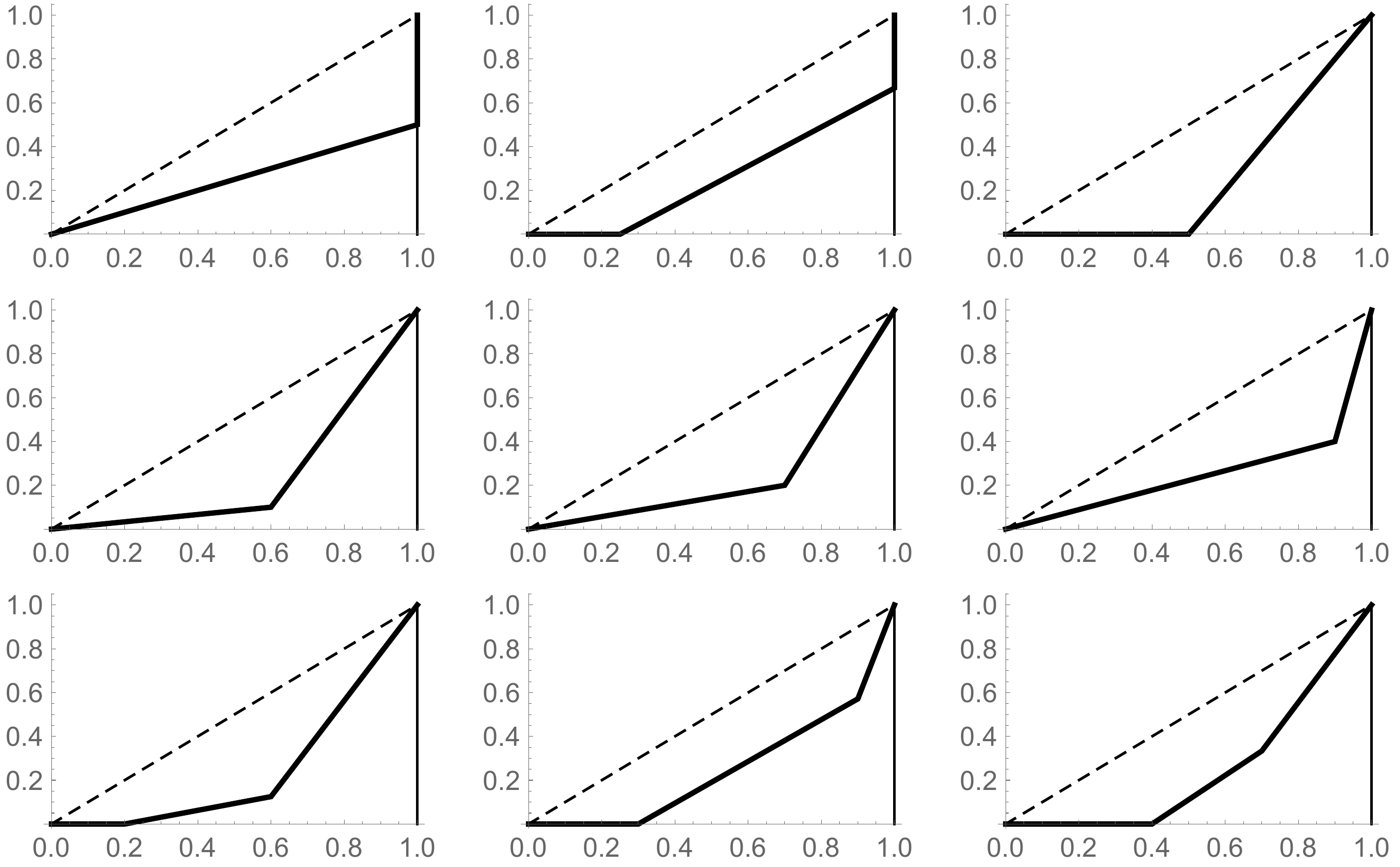}}
\caption{The functions $\ell^{a}_{x_1}$, with $a=0.5$ and $x_1=0$ (upper left panel), $x_1=0.25$ (upper middle panel) and $x_1=0.5$ (upper right panel). The functions $m^{a}_{x_2}$, with $a=0.5$ and $x_2=0.6$ (central left panel), $x_2=0.7$ (central middle panel) and $x_1=0.9$ (central right panel). The functions $n^{a}_{x_1,x_2}$, with $a=0.5$, $x_1=0.2$ and $x_2=0.6$ (lower left panel), $x_1=0.3$ and $x_2=0.9$ (lower middle panel) and $x_1=0.4$ and $x_2=0.7$ (lower right panel).}
\label{fi:extremal}
\end{figure}

Theorem \ref{th:extLa} summarizes the information of $\mathcal{L}_a$ (an infinite-dimensional collection) in the set of its extreme points, which has only dimension 2. Furthermore, as $\mathcal{L}_a$ is compact in $L^1$, it identifies all possible maximizers of convex and continuous functionals. %In particular, as , each linear functional of its dual space attains its maximum over $\mathcal{L}_a$ at some point of $\ext (\mathcal{L}_a)$.
To prove Theorem~\ref{th:extLa} we first show that twice differentiation determines an affine isomorphism between $\mathcal{L}_a$ and the set of non-negative measures on $(0,1)$ with some restrictions. Afterwards, we identify those combinations of delta measures that are extreme points. The last step of the proof of Theorem \ref{th:extLa} is related to the results of \cite{Winkler-1988} and \cite{Pinelis-2016}, where they analyze the set of extreme points of a subset of measures defined through some inequalities.

In Figure \ref{fi:extremal} we have depicted various extreme points of $\mathcal{L}_a$, with $a=0.5$. The probabilistic and economic meaning of some of these Lorenz curves is described in the next section.

\section{Maximum distance between Lorenz curves with fixed Gini}\label{Section.Maximum}

As stated in the introduction, for two distributions with fixed Gini indices, one aim of this work is to quantify how ``far'' they can be from one another. Specifically, we are interested in computing the value $d(\mathcal{L}_a,\mathcal{L}_b)$ ($a,b\in[0,1]$), for a suitable metric $d$ on $\mathcal{L}\times \mathcal{L}$, where $\mathcal{L}$ is defined in \eqref{eq:L}, and $\mathcal{L}_a$ and $\mathcal{L}_b$ as in \eqref{eq:La}. Theorem \ref{th:extLa} is extremely useful for this purpose. If $d$ is defined through a norm, $d$ is a convex and continuous functional on (the convex set) $\mathcal{L}_a \times \mathcal{L}_b$. Therefore, as long as $\mathcal{L}_a$ and $\mathcal{L}_b $ are compact, by Bauer's maximum principle, the supremum of $d$ on $ \mathcal{L}_a \times \mathcal{L}_b$ is attained in $\ext (\mathcal{L}_a \times \mathcal{L}_b)   =  \ext(\mathcal{L}_a) \times \ext(\mathcal{L}_b) $. Thus, thanks to Theorem \ref{th:extLa}, we reduce the calculation of $d(\mathcal{L}_a,\mathcal{L}_b)$ to a finite-dimensional problem.

The exact computation of $d(\mathcal{L}_a,\mathcal{L}_b)$ will eventually depend on the particular choice of the metric $d$. In Section \ref{Subsection-Metric}, we introduce a distance between Lorenz curves which is natural in this context. % for two given values of the Gini indices
The computation of this maximal distance, carried out in Section \ref{Subsection.Maximal.distance}, as well as the characterization of the distributions where the maximum is attained, is crucial to define the bidimensional inequality index proposed in Section \ref{Section.Index}.

\subsection{The choice of the metric: the Lorenz distance}\label{Subsection-Metric}

Depending on the interests of the researcher and the problem at hand, there are many probability metrics that can be used to quantify the distance between two random variables; see the compilation volume on probability distances and their applications by \cite{Rachev-2013}. However, we note that the Gini coefficient itself is defined in terms of a (normalized) $L^1$-distance between Lorenz curves; see formula \eqref{Gini-index-2}. Therefore, a sensible and convenient choice
to measure dissimilarities between distributions is also a normalized $L^1$-norm of the difference between the corresponding Lorenz curves. The $L^1$ distance between Lorenz curves has also been used in \cite{Zheng2018} related to \textit{almost stochastic dominance} of \cite{Leshno-Levy-2002}.

%\textcolor{violet}{Atenci\'{o}n: Yo he usado en todo el documento la notaci\'{o}n $\ell_1$ y $\ell_2$. Lo he hecho en gran medida debido a los procesos de Lorenz que aparecen en los resultados asint\'{o}ticos ya que me parec\'{\i}a mejor llamarlos $\mathbb{L}_i$ que $\mathbb{L}$ y $\mathbb{M}$. Sin embargo, Carlos usa $\ell$ y $m$. Tendremos que decidir si pasamos de esto o cambiamos lo necesario.}

Explicitly, given $X_1$ and $X_2$ two random variables with Lorenz curves $\ell_1$ and $\ell_2$, respectively, we define the \textit{Lorenz distance} between the variables as
\begin{equation*}
d_{\rm L}(X_1,X_2)=\frac{\Vert \ell_1-\ell_2\Vert}{\Vert \ell_{\rm pe} -\ell_{\rm pi}  \Vert}= 2  \Vert \ell_1-\ell_2\Vert.
\end{equation*}
We observe that $0\le d_{\rm L}(X_1,X_2)\le 1$ and $d_{\rm L}$ is actually a pseudo-metric because $d(X_1,X_2)=0$ holds if and only if $X_1\st c X_2$, where $c>0$ is a constant. Further, $d_{\rm L}$ can only achieve the value $1$ when the variables have the perfect equality and inequality Lorenz curves in \eqref{pe-pi-curve}. Observe that with this definition the Gini index of a variable is nothing but the Lorenz distance between the variable and a positive constant.

%In addition of being an appropriate and natural metric in this setting, the distance $d_{\rm L}$ is essential to construct the inequality index in Section \ref{Section.Index}.

\subsection{Maximal distance and extremal distributions with given Gini indices}\label{Subsection.Maximal.distance}

We endow the set $\mathcal{L}$ in \eqref{eq:L} with the Lorenz distance
\begin{equation}\label{Distance}
d_{\rm L}(\ell_1,\ell_2)= 2  \Vert \ell_1-\ell_2 \Vert = 2\int_0^1 |\ell_1-\ell_2|,\quad \ell_1,\ell_2\in \mathcal{L}.
\end{equation}
By (\ref{inequalities-Lorenz}), the diameter of $\mathcal{L}$ with respect to the metric $d_{\rm L}$ is
\begin{equation*}
\diam(\mathcal{L})=\sup \{d_{\rm L} (\ell_1, \ell_2) :  \ell_1,\, \ell_2 \in \mathcal{L}   \}=d_{\rm L} (\ell_{\rm pe}, \ell_{\rm pi})  =1.
\end{equation*}
%Hence, $\ell_{\rm pe}$ and $\ell_{\rm pi}$ can be viewed as extreme points of the convex set $\mathcal{L}$.  whose associated `probability distributions' characterize perfect equality and perfect inequality, respectively.
We further observe that $\mathcal{L}_a$ in \eqref{eq:La} is the set of $\ell\in\mathcal{L}$ such that $d_{\rm L} ( \ell, \ell_{\rm pe} ) = a$. For any fixed $a,b\in[0,1]$, $\mathcal{L}_a$ and $\mathcal{L}_b$ are compact sets in $L^1$ (see Proposition \ref{pr:LaCompact}). Therefore, from Theorem \ref{th:extLa}, the maximum% of $d_{\rm L}( \ell_1 , \ell_2 )$ among $\ell_1 \in \mathcal{L}_a$ and $\ell_2 \in \mathcal{L}_b$ is attained. In what follows we denote this value by $M(a,b)$, that is,
\begin{equation}\label{Mab}
M(a,b)=\max \{ d_{\rm L} (\ell_1, \ell_2)  : \ell_1 \in\mathcal{L}_a \text { and }\ell_2 \in\mathcal{L}_b \}
\end{equation}
is attained at $\ext(\mathcal{L}_a) \times \ext(\mathcal{L}_b)$.

%The following proposition collects some basic properties of the function $M$.
%\begin{proposition}\label{Proposition.M}
%For $0\le a,b\le 1$, let $M(a,b)$ be as in \eqref{Mab}. We have that
%\begin{enumerate}[(a)]
%  \item $0\le M(a,b)\le 1$ and $M(a,b)=M(b,a)$.
%  \item $M(0,0)=M(1,1)=0$ and $M(0,1)=M(1,0)=1$.
%  \item $M(a,0)=a$ and $M(a,1)=1-a$.
%  \item $|a-b|\le M(a,b)$. Moreover, if $\ell_1\in \mathcal{L}_a$ and $\ell_2\in \mathcal{L}_b$, it holds that $d_{\rm L} %(\ell_1, \ell_2)=|a-b|$ if and only if $\ell_1\le \ell_2$ or $\ell_2\le \ell_1$. No me gusta poner esto aqu\'{\i}.
%\end{enumerate}
%\end{proposition}

%We observe that the \textit{maximal function} $M$ satisfies that $0\le M(a,b)\le 1$, $M(0,0)=M(1,1)=0$ and $M(a,b)=M(b,a)$. Further, we also have that $|a-b|\le M(a,b)$. Indeed, if $\ell\in \mathcal{L}_a$ and $m\in\mathcal{L}_b$, by triangular inequality and \eqref{Gini-Lorenz}, we have that
%\begin{equation}\label{lower.bound.M}
%  |a-b| =  2 |\Vert \ell\Vert -\Vert m\Vert   |\le 2 \Vert\ell_1-\ell_2\Vert\le M(a,b).
%\end{equation}
%As $\mathcal{L}_0=\{ \ell_{\rm pe} \}$ and $\mathcal{L}_1=\{ \ell_{\rm pi} \}$--the perfect equality and inequality Lorenz curves in \eqref{pe-pi-curve}--we also have that $M(a,0)=a$ and $M(a,1)=1-a$.

\begin{definition}\label{Definition.extremal}
We say that the pair $(\ell_1,\ell_2)\in \mathcal{L}\times \mathcal{L}$ is \textit{extremal} if
\begin{equation*}
  d_{\rm L}(\ell_1,\ell_2)=M(G(\ell_1),G(\ell_2)).
\end{equation*}
The pair of probability distributions associated to an extremal pair of Lorenz curves will also be called \textit{extremal distributions}.
\end{definition}

For notational convenience, we rename the functions $\ell_a^a$ and $\ell^a_0$ in \eqref{eq:extremePW} as $\ell_a^-$ and $\ell_a^+$, respectively. In other words, for $0\le a \le 1$, $\ell_{a}^-, \ell_{a}^+ \in \mathcal{L}_a$ are defined as
\begin{equation}\label{la+-la-}
 \ell_{a}^- (t) = \max \left\{ 0,  \frac{t-a}{1-a}  \right\}
\quad\text{and}\quad
\ell_{a}^+ (t) = \begin{cases}
(1-a) t, & \text{if } 0\le t < 1 , \\
1, & \text{if } t=1\\
\end{cases}
\end{equation}
(with the agreement that $\ell_{1}^-\equiv \ell_{\rm pi}$ defined in (\ref{pe-pi-curve})). These two functions will play an essential role in the rest of the section. In Figure \ref{fi:l+-} we display two of these functions.

\begin{figure}[h]
\centering{\includegraphics[width=15.5cm]{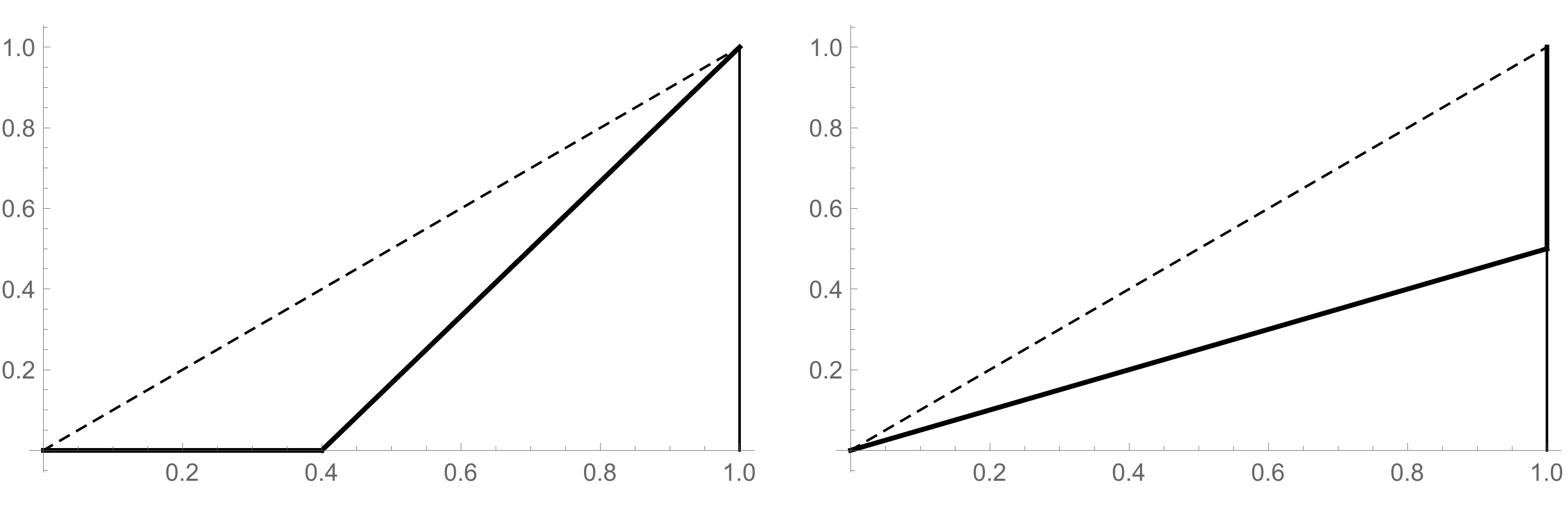}}
\caption{The functions $\ell_{0.4}^-$ (left panel) and $\ell_{0.5}^+$ (right panel).}
\label{fi:l+-}
\end{figure}

The following theorem, which is the main theoretical result of this section, provides an explicit expression for $M(a,b)$ and shows that this maximum distance is precisely attained at functions of the form \eqref{la+-la-}. It should be mentioned that the computation of $M(a,b)$ is a mathematical problem whose statement is very simple and seems to be deceptively easy. However, the proof of this result, which begins at Theorem \ref{th:extLa} and is collected in the technical appendix, reveals that this issue is indeed more delicate and complex than expected.

\begin{theorem}\label{Theorem.Mab}
For $0\le a,b\le 1$, let $M(a,b)$ be as in \eqref{Mab}. We have that
\begin{equation}\label{Mab.expression}
M(a,b)=\frac{(1-a)b^2 +(1-b)a^2}{a+b-ab}
\end{equation}
(the value $M(0,0)=0$ is taken by continuity). Moreover, $(\ell_{a}^-, \ell_{b}^+)$ and $(\ell_{a}^+,\ell_{b}^-)$ are pairs of extremal Lorenz curves within the set $\mathcal{L}_a\times \mathcal{L}_b$.% \textcolor{blue}{?`Es posible usar un enfoque de c\'{a}lculo de variaciones junto con lo que sabemos de los puntos extremos?}
\end{theorem}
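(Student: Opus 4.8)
The plan is to bracket $M(a,b)$ between a lower bound, obtained by evaluating $d_{\rm L}$ at the proposed pair, and a matching upper bound, obtained from the extreme-point description of Theorem~\ref{th:extLa}. First I would establish the lower bound $M(a,b)\ge \frac{(1-a)b^2+(1-b)a^2}{a+b-ab}$ by directly computing $d_{\rm L}(\ell_a^-,\ell_b^+)$. Both functions in \eqref{la+-la-} are piecewise affine, so everything reduces to locating their unique crossing: on $(a,1)$ one has $\ell_a^-(t)=(t-a)/(1-a)$ and $\ell_b^+(t)=(1-b)t$, and $\ell_a^-=\ell_b^+$ gives $t_0=a/(a+b-ab)\in(a,1)$, with $\ell_b^+\ge\ell_a^-$ on $[0,t_0]$ and $\ell_b^+\le\ell_a^-$ on $[t_0,1]$. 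Splitting $2\int_0^1|\ell_a^--\ell_b^+|$ at the points $a$ and $t_0$ and integrating the affine pieces yields exactly the claimed expression. Since that expression (and $M$) is symmetric in $a$ and $b$, the value for the second pair follows from $d_{\rm L}(\ell_a^+,\ell_b^-)=d_{\rm L}(\ell_b^-,\ell_a^+)$, i.e.\ the same computation with $a$ and $b$ interchanged; thus both $(\ell_a^-,\ell_b^+)$ and $(\ell_a^+,\ell_b^-)$ realize the same distance.

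For the upper bound I would invoke the reduction spelled out before the statement: by Proposition~\ref{pr:LaCompact} the sets $\mathcal{L}_a,\mathcal{L}_b$ are compact and convex, and $d_{\rm L}$ is a continuous convex functional on $\mathcal{L}_a\times\mathcal{L}_b$, so Bauer's maximum principle places the maximum in \eqref{Mab} at a pair of extreme points, while $\ext(\mathcal{L}_a\times\mathcal{L}_b)=\ext(\mathcal{L}_a)\times\ext(\mathcal{L}_b)$. By Theorem~\ref{th:extLa} each extreme point of $\mathcal{L}_a$ is one of $\ell^a_{x_1}$, $m^a_{x_2}$, $n^a_{x_1,x_2}$, and likewise for $\mathcal{L}_b$. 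This turns \eqref{Mab} into a finite-dimensional problem: every candidate pair consists of two piecewise-affine curves with at most two interior breakpoints each, so $d_{\rm L}$ is an explicit piecewise-rational function of the breakpoint parameters $(x_1,x_2)$ and $(y_1,y_2)$, and it remains to show its maximum is the value computed above.

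The main work, and the main obstacle, is to prove that this finite-dimensional maximum is attained at the opposite corner pair $(\ell_a^-,\ell_b^+)$ (equivalently $(\ell_a^+,\ell_b^-)$) and not at any interior configuration or same-side corner. I would organize this in three steps. First, for a given family-versus-family pairing, one determines the (finite) crossing pattern of the two piecewise-affine convex curves and shows that varying each free breakpoint while holding the Gini index fixed changes $d_{\rm L}$ monotonically or convexly in that parameter, forcing the optimum to the boundary of each parameter range. Second, these boundaries are precisely the degenerations of the families into one another: $m^a_{x_2}\to\ell_a^-$ as $x_2\downarrow a$ and $m^a_{x_2}\to\ell_a^+$ as $x_2\uparrow 1$, while the endpoints $x_1=a$ and $x_1=0$ of $\ell^a_{x_1}$ are $\ell_a^-$ and $\ell_a^+$; hence the search collapses onto the four corner pairs $(\ell_a^{\pm},\ell_b^{\pm})$. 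Third, a direct comparison discards the same-side pairs: since $(1-a)t$ and $(1-b)t$ do not cross on $[0,1)$ one gets $d_{\rm L}(\ell_a^+,\ell_b^+)=|b-a|$, and $M(a,b)-|b-a|=\frac{2\min(a,b)^2\,(1-\max(a,b))}{a+b-ab}\ge 0$ shows this (and likewise $(\ell_a^-,\ell_b^-)$) is dominated, leaving the opposite pairs as maximizers. The delicate point is the first step: the crossing locations themselves depend on the parameters, so controlling the sign of the derivative of $d_{\rm L}$ in each breakpoint requires careful, case-by-case bookkeeping of which curve lies above the other, with the four-parameter case $n^a_{x_1,x_2}$ versus $n^b_{y_1,y_2}$ being the most laborious. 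Carrying out this monotonicity analysis and the corner comparison matches the upper bound to the lower bound and establishes \eqref{Mab.expression}.
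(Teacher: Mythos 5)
Your lower bound and the reduction to $\ext(\mathcal{L}_a)\times\ext(\mathcal{L}_b)$ via Bauer's principle are correct and coincide with the paper's Proposition~\ref{pr:Mabextreme}, and your corner comparison (including the identity $M(a,b)-|b-a|=\tfrac{2\min(a,b)^2(1-\max(a,b))}{a+b-ab}$) checks out. The gap is in what you yourself flag as ``the main work'': the claim that $d_{\rm L}$, viewed as a function of the breakpoint parameters, is monotone or convex in each parameter so that the optimum is forced to the boundary of the parameter ranges. This is not established, and it is not true as stated. The extreme curves do not depend affinely on their breakpoints, so convexity of $d_{\rm L}$ on $\mathcal{L}_a\times\mathcal{L}_b$ gives you nothing about the parametrized function; and in fact the paper's computation of $d_3(a,b)$ (Lemma~\ref{le:d3}) exhibits an \emph{interior} critical point of $A(x_1,x_2)=d_{\rm L}(\ell^a_{x_1},m^b_{x_2})$ at $x_1=1-\sqrt{1-a}$, $x_2=(1+b)/2$ whenever $a+b^2<2b$, whose value must be evaluated and compared separately. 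So your step one, as written, would fail, and with it the collapse onto the four corner pairs.

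It is also worth seeing how the paper sidesteps the combinatorial explosion you anticipate. The four-parameter case $n^a_{x_1,x_2}$ versus $n^b_{y_1,y_2}$ is never confronted: the reflection $\ell\mapsto\tilde\ell(x)=1-\ell^{-1}(1-x)$ is a $d_{\rm L}$-isometry of $\mathcal{L}_a$ onto itself (Lemma~\ref{le:tilde}) that sends $\ell^a_{x_1}$ and $m^a_{x_2}$ to curves of the same type but sends $n^a_{x_1,x_2}$ to a non-extreme point, which (combined with convexity of $d_{\rm L}$) removes the $n$-family from the search entirely (Lemma~\ref{le:symmetry}). The pairs $\ell^a_{x_1}$ vs.\ $\ell^b_{y_1}$ and $m^a_{x_2}$ vs.\ $m^b_{y_2}$ cross at most once, and \emph{all} such pairs are handled at a stroke by Lemma~\ref{le:signswitch}: writing $\|\ell-m\|$ as $\int s_{t_0}\,\d(m-\ell)$ for a convex test function $s_{t_0}$ and invoking the convex-order inequalities of Lemma~\ref{le:convexorder} together with Kantorovich--Rubinstein duality bounds the distance by $d_{\rm L}(\ell_a^-,\ell_b^+)$ without any parameter optimization. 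Only the two-crossing case $\ell^a_{x_1}$ vs.\ $m^b_{x_2}$ requires an explicit (and still lengthy) two-parameter maximization. If you want to repair your argument, you need either these structural reductions or a genuine proof of boundary attainment in each family-versus-family configuration, including an analysis of the interior critical point above.
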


In Figure \ref{fi:Mab} we have plotted the function $M(a,b)$. From \eqref{Mab.expression}, it is easy to check that $(2-\sqrt{2},2-\sqrt{2})$ is a saddle point on the graph of the function $z=M(a,b)$.

\begin{figure}[h]
\centering{\includegraphics[width=16cm]{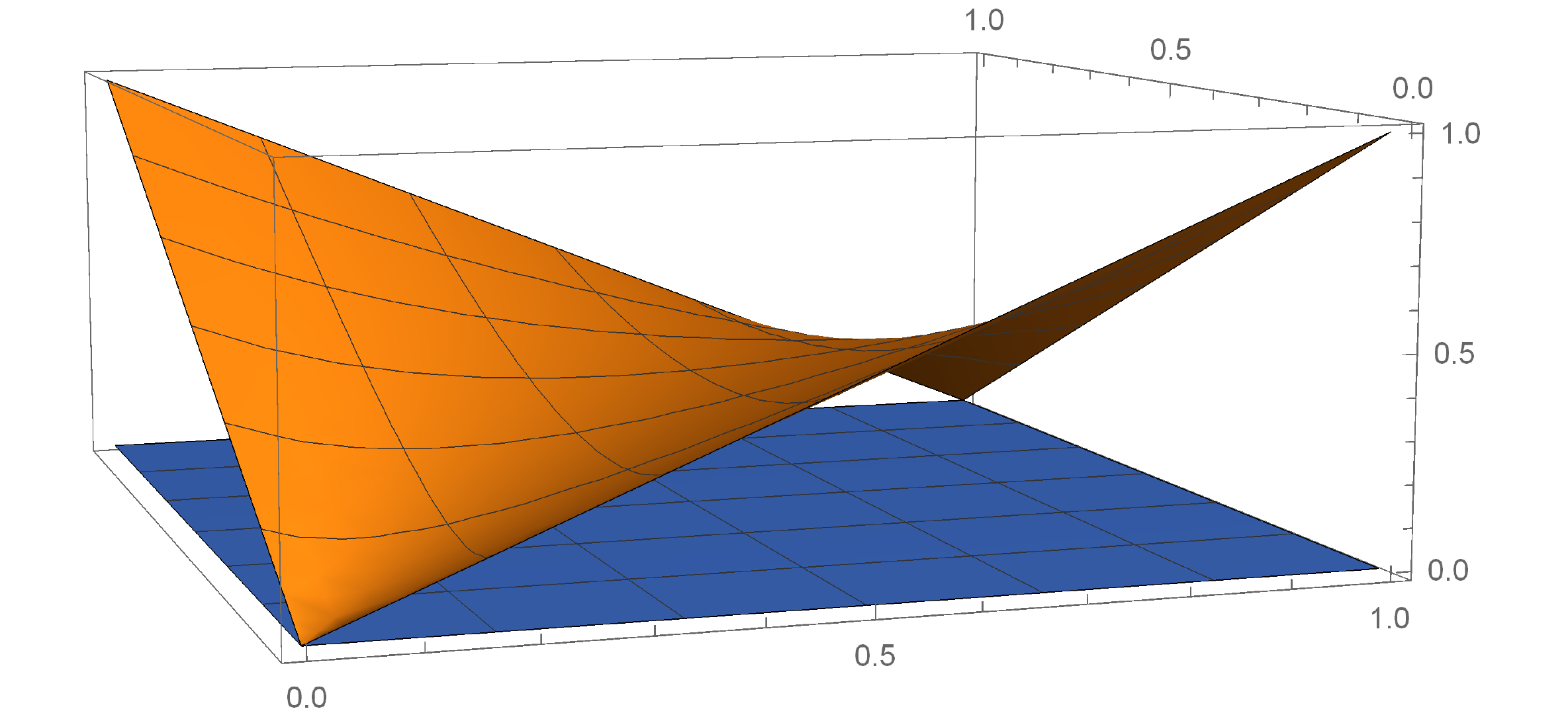}}
\caption{The maximal function $M(a,b)$, for $0\le a,b \le 1$.}
\label{fi:Mab}
\end{figure}

Theorem \ref{Theorem.Mab} asserts that the maximum distance in \eqref{Mab} is attained at the pairs $(\ell_{a}^-,\, \ell_{b}^+)$ and $(\ell_{a}^+,\, \ell_{b}^-)$. Hence, the associated probability distributions (unique up to positive scale transformations) are extremal. The function $\ell_{a}^-$ is the Lorenz curve of a population in which a proportion $a$ of the people have $0$ income and the rest, a proportion $1-a$, have equal and positive income. Also, $\ell_{a}^-$ is the Lorenz curve of a variable $X_a$ with Bernoulli distribution with parameter $1-a$, that is, $\P(X_a=0)=a$ and $\P(X_a=1)=1-a$. On the other hand, $\ell_{b}^+$ is not a proper Lorenz curve, but it can be expressed as the limit (as $n$ goes to infinity) of Lorenz curves of populations with $n$ individuals where $n-1$ of them fairly share a proportion $(1-b)$ of the wealth and there is only one ``lucky person" who accumulates the rest of the total wealth (the proportion $b$). From a probabilistic perspective, we have that $\ell_{b}^+=\lim_{p\to 0}\ell_{X(b,p)}$, where $\ell_{X(b,p)}$ is the Lorenz curve of $X(b, p)$, a random variable with distribution $\P(X(b, p)=1-b)=1-p$ and $\P(X(b, p)=1-b+b/p)=p$.

From Theorem \ref{Theorem.Mab} we can easily see the range of values of the distance $d_{\rm L}(\ell_1,\ell_2)$, when $(\ell_1,\ell_2)$ varies in $\mathcal{L}_a\times \mathcal{L}_b$.
\begin{corollary}\label{Coro.range}
For $0\le a,b\le 1$, we have that
\[
 \{ d_{\rm L} (\ell_1, \ell_2) : \, \ell_1 \in \mathcal{L}_a \text{ and } \ell_2  \in \mathcal{L}_b \} = \left[ \left| b-a \right| , M (a,b) \right],
\]
where $ M (a,b)$ is given in \eqref{Mab.expression}.
\end{corollary}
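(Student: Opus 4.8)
The plan is to describe the set
\[
S = \{ d_{\rm L}(\ell_1,\ell_2) : \ell_1 \in \mathcal{L}_a,\ \ell_2 \in \mathcal{L}_b \}
\]
by identifying it as a closed interval and pinning down its two endpoints. The right endpoint is immediate: by Theorem~\ref{Theorem.Mab}, $\max S = M(a,b)$, with the value attained (e.g.\ at $(\ell_a^-,\ell_b^+)$). The interval structure is a topological observation. Each $\mathcal{L}_a$ is convex by Proposition~\ref{pr:LaCompact}, hence connected, so $\mathcal{L}_a \times \mathcal{L}_b$ is connected and compact; since $d_{\rm L}$ is continuous (it is twice an $L^1$-norm of a difference), the image $S$ is a compact connected subset of $\R$, that is, a closed bounded interval. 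Thus everything reduces to computing $\min S$.

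For the lower bound I would assume without loss of generality that $a \le b$ (the problem is symmetric in $a,b$, as both $d_{\rm L}$ and $M$ are symmetric and $|b-a|=|a-b|$). Recalling from \eqref{Gini-Lorenz} that $G(\ell)=1-2\|\ell\|$, membership in $\mathcal{L}_a$ and $\mathcal{L}_b$ forces $\|\ell_1\| = (1-a)/2$ and $\|\ell_2\| = (1-b)/2$. The reverse triangle inequality in $L^1$ then gives, for every admissible pair,
\[
d_{\rm L}(\ell_1,\ell_2) = 2\|\ell_1 - \ell_2\| \ge 2\bigl| \|\ell_1\| - \|\ell_2\| \bigr| = |b-a| ,
\]
so $\min S \ge |b-a|$.

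To show this bound is sharp I would exhibit a pointwise-ordered pair, for which the reverse triangle inequality above becomes an equality. Taking $\ell_1 = \ell_a^-$ and $\ell_2 = \ell_b^-$ from \eqref{la+-la-}, a direct verification shows $\ell_a^- \ge \ell_b^-$ on $[0,1]$ whenever $a \le b$ (equivalently, the underlying distributions are Lorenz-ordered). Since both functions are nonnegative and $\ell_a^- - \ell_b^- \ge 0$, the absolute value inside the integral disappears and
\[
d_{\rm L}(\ell_a^-,\ell_b^-) = 2\int_0^1 (\ell_a^- - \ell_b^-) = 2\bigl(\|\ell_a^-\| - \|\ell_b^-\|\bigr) = b-a = |b-a| .
\]
Hence $|b-a| \in S$, which together with the previous paragraph yields $S = [\,|b-a|, M(a,b)\,]$.

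The only computation requiring a little care is the pointwise comparison $\ell_a^- \ge \ell_b^-$: on the overlap $t \ge b$ it reduces, after clearing the positive denominators $1-a$ and $1-b$, to $(b-a)(1-t)\ge 0$, which holds; on $a\le t<b$ the left side is nonnegative while the right side vanishes, and for $t<a$ both vanish. No genuine obstacle arises, since the substance of the statement—the maximal distance $M(a,b)$—is already delivered by Theorem~\ref{Theorem.Mab}; the corollary simply combines a connectedness argument with an elementary reverse-triangle-inequality bound whose attainment is witnessed along a Lorenz-ordered pair.
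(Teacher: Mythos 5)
Your proof is correct and follows essentially the same route as the paper: the maximum comes from Theorem~\ref{Theorem.Mab}, the lower bound $|b-a|$ from the reverse triangle inequality applied to $\|\ell_j\|=(1-G(\ell_j))/2$ with attainment at an explicitly Lorenz-ordered pair (the paper's Lemma~\ref{lemma:ordered} uses $(\ell_a^+,\ell_b^+)$ where you use $(\ell_a^-,\ell_b^-)$, an immaterial difference), and the interval structure from continuity of $d_{\rm L}$ on the compact connected set $\mathcal{L}_a\times\mathcal{L}_b$.
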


Theorem \ref{Theorem.Mab} also allows us to explicitly compute the maximum distance between Lorenz curves with a given difference of their Gini indices.

\begin{corollary}\label{Corollary.maximal}
For $-1\le c\le 1$, let us consider
\begin{equation*}
M^*(c)=\max\{M(a,b): a,b \in[0,1]\text{ and } b-a=c  \}.
\end{equation*}
We have that
\begin{equation}\label{ac}
M^*(c)=M(a_c,a_c+c)\quad \text{with}\quad a_c=(  4-c-\sqrt{8+c^2} )/2
\end{equation}
and
\begin{equation}\label{M*}
 M^*(c)=8-\frac{8+(c^2+8)^{3/2}}{c^2+4}.
\end{equation}
\end{corollary}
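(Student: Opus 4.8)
The plan is to fix $c$, substitute $b = a + c$ into the formula \eqref{Mab.expression} for $M$, and solve the resulting one--variable optimization in $a$. Because the expression in \eqref{Mab.expression} is symmetric in $a$ and $b$, swapping the two Gini indices replaces $c$ by $-c$; hence $M^*(-c) = M^*(c)$, which is already reflected in the fact that \eqref{M*} depends on $c$ only through $c^2$. I would therefore assume $c \in [0,1]$ without loss of generality. The admissible range is then $a \in [0, 1-c]$ (so that both $a$ and $b = a+c$ lie in $[0,1]$), and a direct evaluation at the two endpoints gives $M(0,c) = M(1-c,1) = c$, which is precisely the lower end of the interval in Corollary \ref{Coro.range}. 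Thus the maximum over the interior must exceed $c$, and the optimizer is an interior critical point.

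The central computation is to locate that critical point. Rather than differentiating $M(a, a+c)$ directly, I would pass to the symmetric variable $u = a + b$ (with $b - a = c$ fixed), in which the numerator and denominator of $M$ become a cubic and a quadratic polynomial in $u$. Setting the derivative to zero, the numerator of the derivative is a quartic in $u$ that I expect to factor cleanly as $(u^2 - 8u + 8 - c^2)(u^2 - c^2)$. The factor $u^2 - c^2$ produces the boundary/spurious roots $u = \pm c$, while $u^2 - 8u + 8 - c^2 = 0$ yields $u = 4 \pm \sqrt{8+c^2}$; only $u_c = 4 - \sqrt{8+c^2}$ lies in the admissible window, and it corresponds to $a_c = (u_c - c)/2 = (4 - c - \sqrt{8+c^2})/2$, matching \eqref{ac}. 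Checking $c \le u_c \le 2-c$ reduces in each case to the single inequality $c \le 1$ (with equality exactly at $c = 1$, where the domain degenerates to the point $(0,1)$), and a sign analysis of the derivative---using that the denominator $-u^2 + 4u + c^2$ stays positive on the window---confirms that $u_c$ is the global maximizer there.

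It remains to evaluate $M$ at the optimizer and simplify to \eqref{M*}. The efficient route is to avoid substituting the radical directly: since $u_c$ satisfies $u_c^2 = 8u_c - 8 + c^2$, I would use this relation to reduce the cubic numerator and quadratic denominator to affine expressions in $u_c$, obtaining $M(u_c) = (12 - c^2 - 10 u_c)/(2 - u_c)$. Writing $s = \sqrt{c^2 + 8}$ so that $u_c = 4 - s$ and $c^2 = s^2 - 8$, this collapses to $8 - s - 4/(s-2)$, and rationalizing via the identity $s + 4/(s-2) = (s^3 + 8)/(s^2 - 4)$ gives exactly $M^*(c) = 8 - (8 + (c^2+8)^{3/2})/(c^2+4)$, as claimed in \eqref{M*}.

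The main obstacle I anticipate is purely computational and twofold: first, verifying that the quartic critical-point equation factors as stated and correctly identifying which of its four roots is the interior maximizer for every $c \in [-1,1]$, including the degenerate endpoints $c = \pm 1$; and second, the final algebraic collapse of $M(u_c)$ into the compact closed form of \eqref{M*}. Both steps become routine once the symmetric substitution $u = a+b$ and the quadratic relation $u_c^2 = 8u_c - 8 + c^2$ are exploited, but they are where sign bookkeeping and careful simplification matter most.
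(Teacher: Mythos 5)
Your proposal is correct, and it is essentially the argument the paper implicitly relies on: the paper states Corollary \ref{Corollary.maximal} as a direct consequence of Theorem \ref{Theorem.Mab} and omits the optimization entirely, so your computation is exactly the missing routine step. I checked the key points --- the symmetric substitution $u=a+b$ turns $M$ into $(-u^3+2u^2+c^2u+2c^2)/(-u^2+4u+c^2)$, the numerator of its $u$-derivative does factor as $(u^2-8u+8-c^2)(u^2-c^2)$, the root $u_c=4-\sqrt{8+c^2}$ lies in $[c,2-c]$ precisely when $c\le 1$, and the reduction via $u_c^2=8u_c-8+c^2$ followed by the substitution $s=\sqrt{c^2+8}$ does collapse to \eqref{M*} --- so the proof is complete as written.
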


\begin{definition}\label{Definition.super-extremal}
We say that the pair $(\ell_1,\ell_2)\in \mathcal{L}^2$ is \textit{super-extremal} if
\begin{equation*}
  d_{\rm L}(\ell_1,\ell_2)=M^*(G(\ell_1)-G(\ell_2)).
\end{equation*}
The associated pairs of probability distributions will be also called \textit{super-extremal distributions}.
\end{definition}

Obviously, each super-extremal pair is extremal because it always holds that
\begin{equation}\label{M-and-M*}
   M(a,b)\le M^*(a-b), \quad \text{for }0\le a,b\le 1.
\end{equation}
However, from Theorem \ref{Theorem.Mab} and for any $0 \le c\le 1$, among all the pairs $(\ell_{a}^-,\ell_{a+c}^+)$ and $(\ell_{a+c}^-, \ell_{a}^+)$ (with $a\in[0,1-c]$) of extreme Lorenz curves with a value $c$ for the difference of their Gini indices there are only two super-extremal curves. Namely, the pairs corresponding to $a=a_c$ in \eqref{ac}.

Observe that $M^*(0)$ is the maximum possible distance between Lorenz curves with equal Gini indices. By Theorem \ref{Theorem.Mab} and Corollary \ref{Corollary.maximal}, we have that the maximum distance between two income distributions both with Gini indices equal to $a$ is
\begin{equation}\label{Maa}
M(a,a)=\frac{2 a(1-a)}{ 2-a},\quad \text{for }a\in[0,1],
\end{equation}
which attains its maximum at the point $a_0=2-\sqrt{2}\approx 0.59$. Therefore, the maximum (Lorenz) distance between distributions with the same Gini index is
\begin{equation*}
  M^*(0) =\max_{0\le a\le 1} M(a,a) = M(a_0,a_0)= 6 - 4 \sqrt{2}\approx 0.34.
\end{equation*}
Additionally, $M(a,a)$ is the $d_{\rm L}$-diameter of $\mathcal{L}_a$. A graphical representation of $M(a,a)$ and $M^*(0)$ is presented in Figure~\ref{fi:Maa}. The Lorenz curves $\ell^-_{a_0}$ and $\ell^+_{a_0}$ are hence super-extremal Lorenz curves with equal Gini indices; see  Figure~\ref{fi:maximales}.

\begin{figure}[h]
\centering{\includegraphics[width=10 cm]{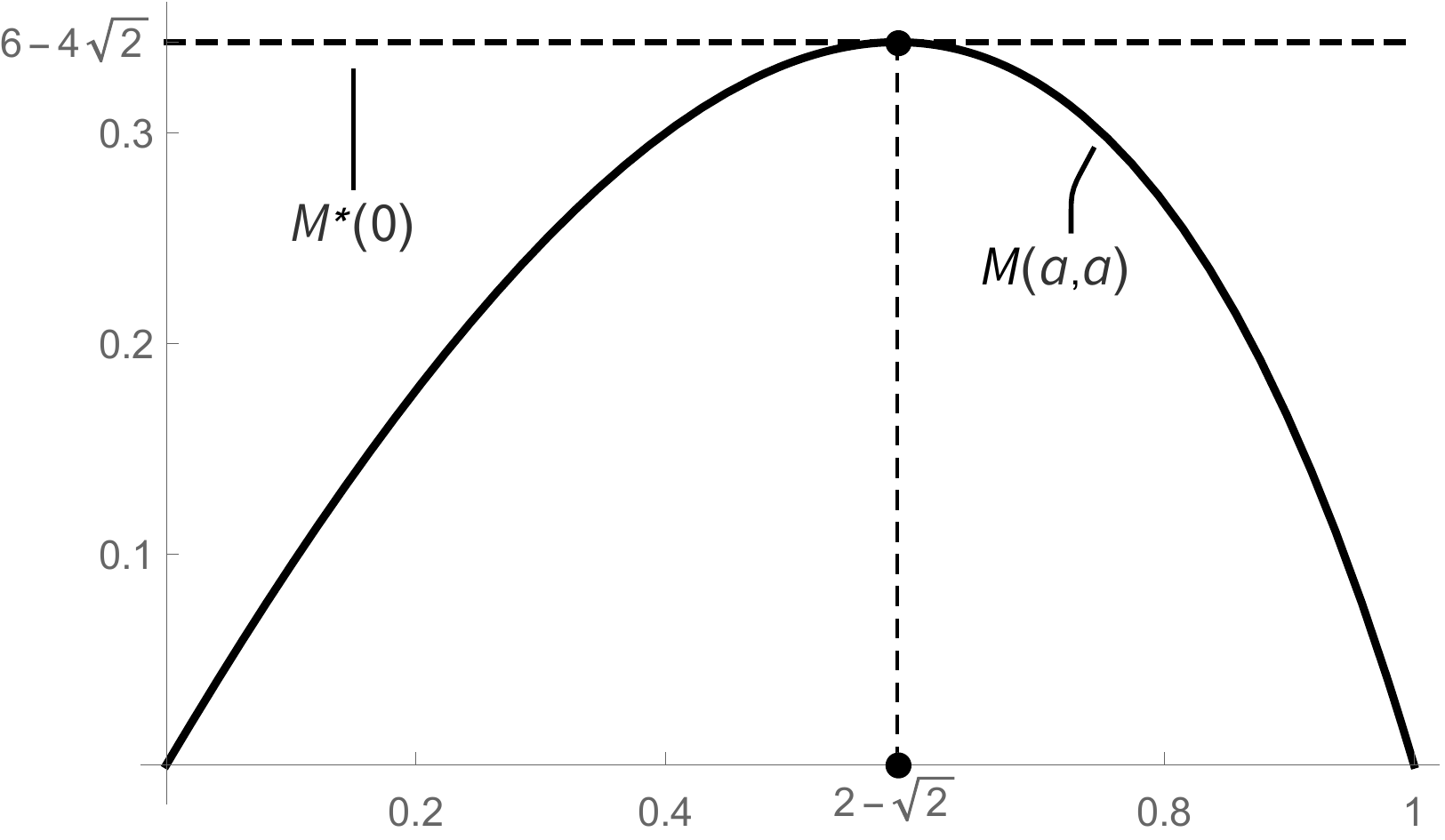}}
\caption{The funtion $M(a,a)$ $(0\le a\le 1)$ in \eqref{Maa} and its maximum value $M^*(0)=6-4\sqrt{2}$, attained at the point $2-\sqrt{2}$.}
\label{fi:Maa}
\end{figure}

\begin{figure}[h]
\centering{\includegraphics[width=10cm]{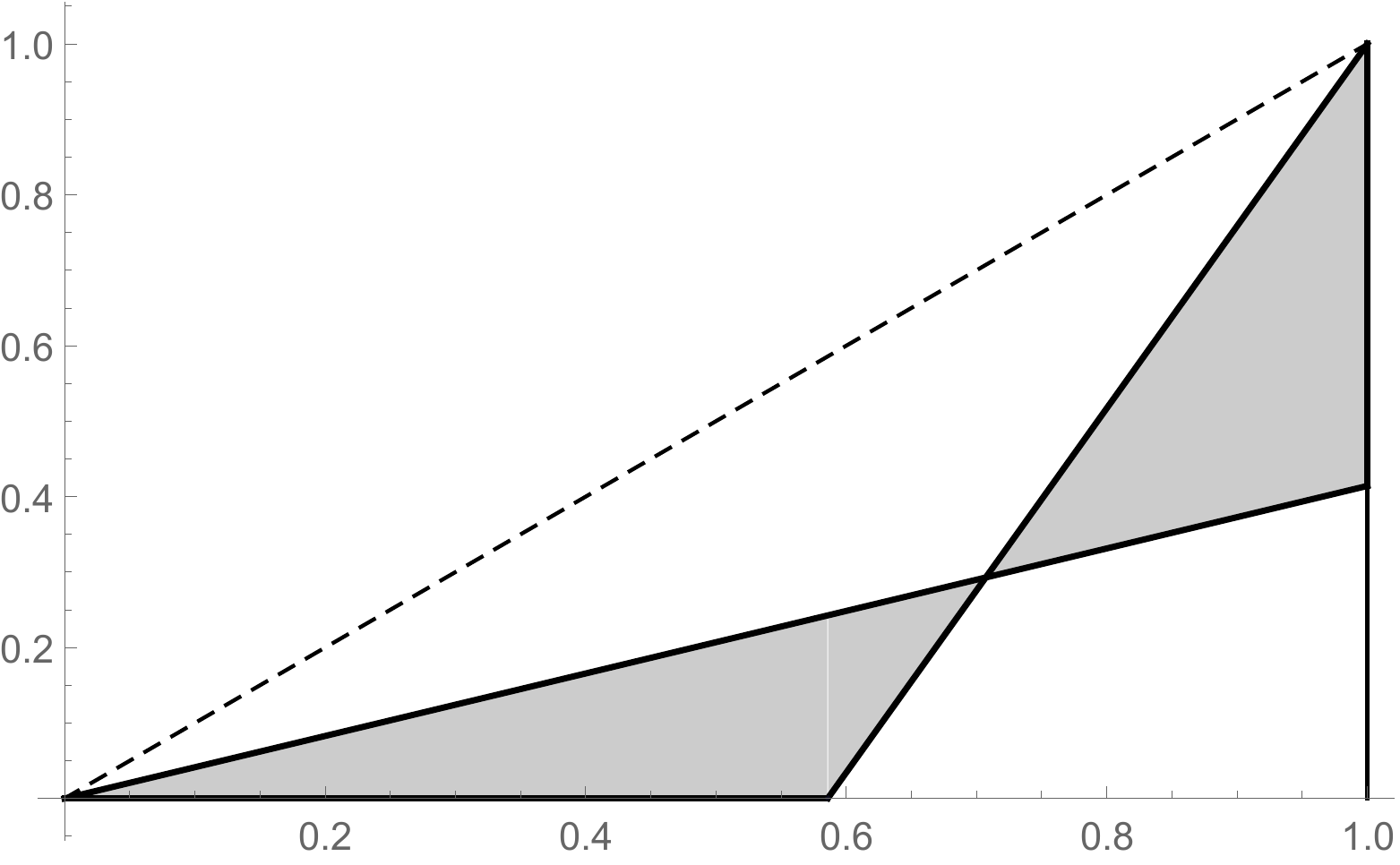}}
\caption{Lorenz curves with equal Gini indices and maximal distance.}
\label{fi:maximales}
\end{figure}

The curves $\ell_a^-$ and $\ell_a^+$ satisfy another extremal property related to inequality within the class $\mathcal{L}_a$.

\begin{proposition}\label{Proposition.Extremal.Third}
  Let $a\in [0,1]$ be fixed. For all $\ell\in \mathcal{L}_a$ and all $t\in[0,1]$, it holds that
  \begin{equation}\label{Extremality-la}
    \int_{0}^{t} \ell_a^+(x)\,\d x \ge \int_{0}^{t} \ell(x)\,\d x \ge \int_{0}^{t} \ell_a^-(x)\,\d x.
  \end{equation}

\end{proposition}

Observe that for $t\in[0,1]$, by Fubini's theorem, we have that
\begin{equation*}
  \int_{0}^{t} \ell(x)\,\d x = \frac{1}{\mu}\int_{0}^{t} (t-x)F^{-1}(x)\,\d x.
\end{equation*}
If we measure income, this quantity is a weighted average of the income accumulated by the proportion $t$ of the poorest in that population. The weight function, $w(x)=t-x$, for $0\le x\le t$, places more weight on the poorest. Therefore, inequalities in \eqref{Extremality-la} show that the Lorenz curve $\ell_a^+$ (respectively, $\ell_a^-$) is the most equitable (respectively, least equitable) within the class $\mathcal{L}_a$ in this precise sense. In other words, the distributions given by $\ell_a^-$ and $\ell_a^+$ are extremes for the stochastic relation given in \eqref{Extremality-la}. This relationship is closely related to a stochastic ordering called \textit{third order inverse stochastic dominance}; see \cite{Cal-Carcamo-2010}.

\section{A bidimensional inequality index}\label{Section.Index}

In this section we introduce a two-dimensional inequality index defined for pairs of distributions that combines the Gini coefficients of two variables with the Lorenz distance defined in \eqref{Distance}. Hence, the proposed index simultaneously measures relative inequality and dissimilarity between two populations. We will show that this bidimensional index satisfies many desirable properties. As the definitions in this section only involve Lorenz curves, we refer to pairs $(\ell_1,\ell_2)$, with $\ell_1, \ell_2 \in\mathcal{L}$, instead of considering random variables.

%We will need two previous steps to construct the index that we will consider in the rest of the paper.

%To solve this problem we introduce a two-dimensional index, $\I$, that  One of the advantages of our proposal is that $\I$ takes values

%for instance, two different regions in the same year or the same society in two different time points. Summarizing the relative inequality between the populations with a numerical index allows to analyze the evolution of the pair over time (for example, before and after a tax reform or an economic crisis). Here we propose a bidimensional index to quantify the relative inequality between the two populations under consideration. In Subsection~\ref{Subsection.Inequality-comparisons} we briefly review previous proposals to describe relative inequality

%The aim of this work is to introduce a two-dimensional index that allows visualizing the evolution of inequality in a population and comparing inequality in two or more regions over time. To be able to do this, we first explicitly calculate the maximum dissimilarity between populations with a certain value of their Gini indices. We consider that this last result is interesting by itself since it allows us to specify and quantify (as a particular case) a fact widely known in the literature: the differences between distributions with the same Gini index.

\subsection{Definition of the bidimensional index $\mathcal{I}$}

%We want to construct a two-dimensional index defined for pairs of random variables and measuring both relative inequality and dissimilarity.

Let $\ell_1$ and $\ell_2$ be two Lorenz curves in $\mathcal{L}$. As a measure of relative inequality we simply consider the difference of the Gini indices, that is, $G(\ell_2)-G(\ell_1)$. To quantify dissimilarity we employ the distance $d_{\rm L}(\ell_1,\ell_2)$ in \eqref{Distance}. Therefore, a natural proposal for a new two-dimensional index is the following:
\begin{equation}\label{I0}
  \mathcal{I} (\ell_1,\ell_2) =  (G(\ell_2)-G(\ell_1),d_{\rm L}(\ell_1,\ell_2)).
\end{equation}

The next result provides the region of $\R^2$ where $\mathcal{I}$ takes values.
\begin{proposition}\label{Proposition:rangeDelta}
  Let  $\I$ be defined in \eqref{I0} and let us consider the region of $\R^2$ defined by
  \begin{equation*}
    \Delta= \mathcal{I}(\mathcal{L}\times \mathcal{L})\equiv  \{ \mathcal{I}(\ell_1,\ell_2)  : \ell_1,\ell_2\in\mathcal{L}   \}. %=\{ 2 ( \Vert \ell_1 \Vert-\Vert \ell_2 \Vert , \Vert \ell_1- \ell_2 \Vert  )  : \ell_1,\ell_2\in\mathcal{L}   \}.
  \end{equation*}
We have that
\begin{equation}\label{Delta-estrellita}
   \Delta= \left\{ (x,y)\in [-1,1]\times [0,1]  : |x|\le y \le M^*(x)         \right\},
\end{equation}
where the function $M^*$ is defined in \eqref{M*}.
\end{proposition}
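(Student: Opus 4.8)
The plan is to establish the claimed equality \eqref{Delta-estrellita} by proving two inclusions. Throughout, fix $\ell_1,\ell_2\in\mathcal{L}$, write $a=G(\ell_1)$ and $b=G(\ell_2)$, and set $x=G(\ell_2)-G(\ell_1)=b-a$ together with $y=d_{\rm L}(\ell_1,\ell_2)$, so that $\mathcal{I}(\ell_1,\ell_2)=(x,y)$. First I would show that $\Delta$ is contained in the right-hand side of \eqref{Delta-estrellita}. Since Gini indices lie in $[0,1]$ we have $x\in[-1,1]$, and since $0\le d_{\rm L}\le 1$ we have $y\in[0,1]$. For the lower bound $|x|\le y$, I would use \eqref{Gini-Lorenz} to write $x=b-a=2(\|\ell_1\|-\|\ell_2\|)$ and $y=2\|\ell_1-\ell_2\|$; the reverse triangle inequality in $L^1$ then gives $|x|=2\,|\,\|\ell_1\|-\|\ell_2\|\,|\le 2\|\ell_1-\ell_2\|=y$. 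For the upper bound, the definition \eqref{Mab} of $M(a,b)$ yields $y=d_{\rm L}(\ell_1,\ell_2)\le M(a,b)$, and combining this with \eqref{M-and-M*} and the symmetry of $M$ in its arguments (so that $M^*$ is even) gives $y\le M(a,b)\le M^*(a-b)=M^*(b-a)=M^*(x)$.

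The substantive step is the reverse inclusion, i.e.\ surjectivity: given any $(x,y)$ with $|x|\le y\le M^*(x)$, I must exhibit $\ell_1,\ell_2\in\mathcal{L}$ with $\mathcal{I}(\ell_1,\ell_2)=(x,y)$. Because $M(a,b)$ is continuous and the set $\{(a,b)\in[0,1]^2:b-a=x\}$ is nonempty and compact, the maximum defining $M^*(x)$ in Corollary \ref{Corollary.maximal} is attained at some pair $(a^*,b^*)$ with $b^*-a^*=x$ and $M(a^*,b^*)=M^*(x)$. By hypothesis, $|x|=|b^*-a^*|\le y\le M^*(x)=M(a^*,b^*)$, so $y$ belongs to the interval $[\,|b^*-a^*|,\,M(a^*,b^*)\,]$. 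Corollary \ref{Coro.range} then furnishes $\ell_1\in\mathcal{L}_{a^*}$ and $\ell_2\in\mathcal{L}_{b^*}$ with $d_{\rm L}(\ell_1,\ell_2)=y$. For this pair, $G(\ell_2)-G(\ell_1)=b^*-a^*=x$ and $d_{\rm L}(\ell_1,\ell_2)=y$, that is, $\mathcal{I}(\ell_1,\ell_2)=(x,y)$, which completes the argument.

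Most of the difficulty has already been absorbed into the earlier results, so there is no genuine analytic obstacle remaining in this proposition; the reduction is essentially to Corollary \ref{Coro.range} and the inequality \eqref{M-and-M*}. The two points that do require care, both in the surjectivity step, are: first, verifying that the maximum in the definition of $M^*$ is actually attained, so that the optimal Gini pair $(a^*,b^*)$ exists; and second, using the fact that Corollary \ref{Coro.range} describes the \emph{full} interval of attainable distances $[\,|b-a|,M(a,b)\,]$, not merely its endpoints, so that the prescribed value $y$ can be realized exactly. Both are routine, and the conceptual content of the proof lies entirely in recognizing that the top boundary of $\Delta$ is traced out by the optimal Gini pairs that define $M^*$.
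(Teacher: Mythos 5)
Your proof is correct and takes essentially the same route as the paper, whose entire argument for this proposition is the one-line citation ``follows from Corollaries \ref{Coro.range} and \ref{Corollary.maximal}''; your write-up is exactly the expansion of that citation (forward inclusion via the reverse triangle inequality and \eqref{M-and-M*}, surjectivity via the attained maximizer $(a_c,a_c+c)$ of Corollary \ref{Corollary.maximal} combined with the full interval of Corollary \ref{Coro.range}).
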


\begin{figure}[h]
\centering{\includegraphics[width=14 cm]{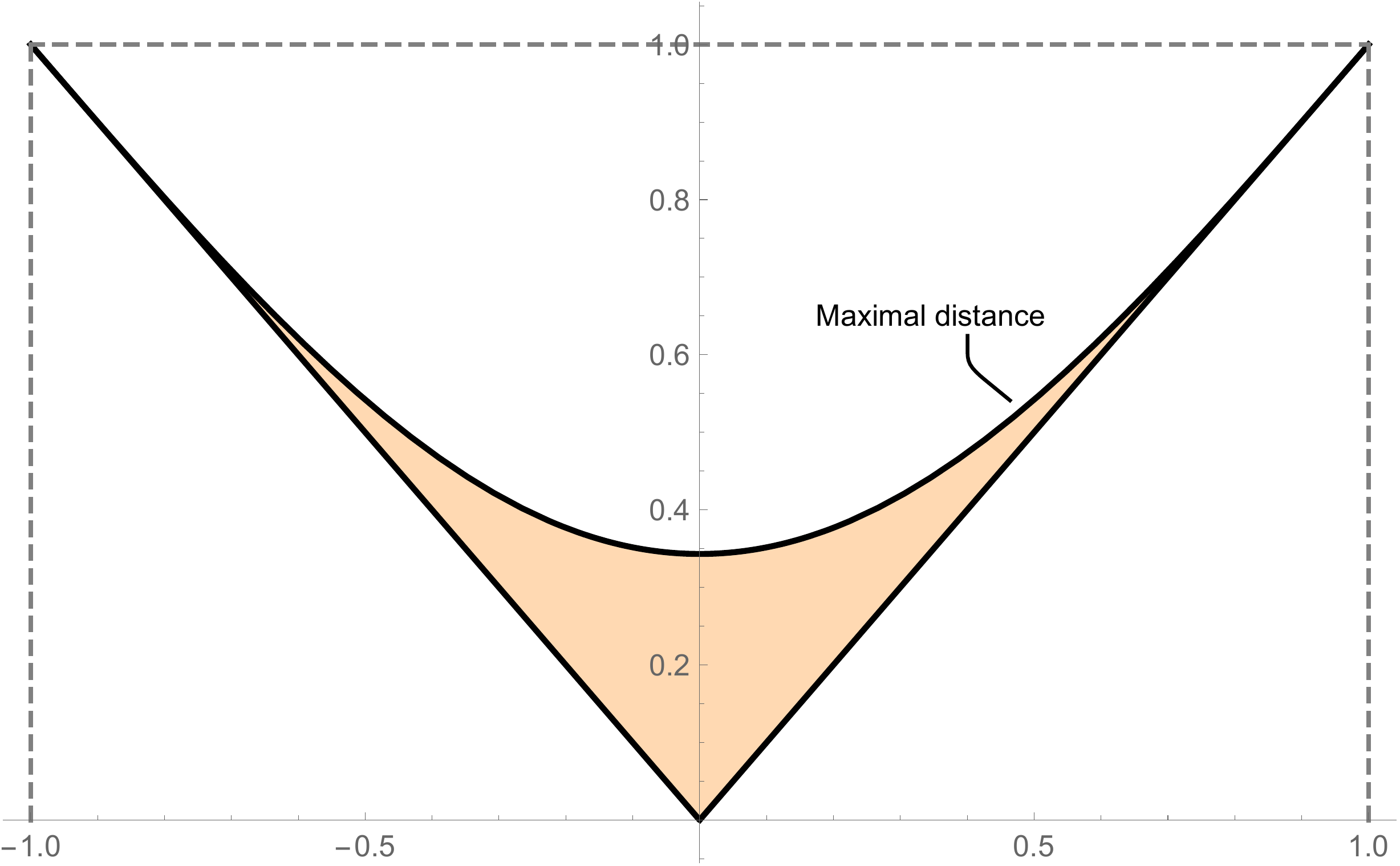}}
\caption{The region $\Delta$ in \eqref{Delta-estrellita}.}
\label{fi:delta-estrellita}
\end{figure}

In Figure~\ref{fi:delta-estrellita} we have plotted the region $\Delta$ specified in \eqref{Delta-estrellita}. This graphical representation is very informative because we see how different the Gini indices of two variables can be in accordance with the distance between their Lorenz curves. We first notice that the range of variation of the index $\I$ is limited as $\Delta$ has a very small area:
$$\text{Area}(\Delta)= 2(6-\pi-\log(16))\approx 0.17.$$

Figure~\ref{fi:delta-estrellita} also reveals that if two distributions are very different, that is, $d_{\rm L}(\ell_1,\ell_2)$ is large, then the difference of their Gini indices cannot vary too much. This is specially noticeable when $d_{\rm L}(\ell_1,\ell_2) > M^*(0)\approx 0.34$. If we look first at the $x$-axis in Figure~\ref{fi:delta-estrellita}, we see that the distance between variables with very different Gini indices has a very small range of variation. For example, if $|G(\ell_1)-G(\ell_2)|=0.35$, the quantity $d_{\rm L}(\ell_1,\ell_2)$ can only vary $0.09$. This small range of variation includes extremely different situations: from the case in which the variables are stochastically ordered in the Lorenz sense---which amounts to minimum distance---to the situation in which the two distributions are as different as possible (the extreme pairs of distributions obtained in Theorem \ref{Theorem.Mab}).

%\begin{figure}[h]
%\centering{\includegraphics[width=9cm]{Range-distance}}
%\caption{The $x$-axis represents the absolute value of the difference between Gini indices of two variables, $|G(\ell_1)-G(\ell_2)|$ ($\ell_1,\,\ell_2\in\mathcal{L}$) while $y$-axis provides the range of the distance $2\,\Vert %\ell_1-\ell_2\Vert$. This function corresponds to $x\mapsto 8-x-\frac{8+(x^2+8)^{3/2}}{x^2+4}$, for $0\le x\le 1$.}
%\label{fi:range}
%\end{figure}

It is easy to check that the index $\I$ in \eqref{I0} satisfies all the ideal properties enumerated in Section \ref{Subsection.Ideal-properties}. Nevertheless, it has the disadvantage that its values are difficult to interpret because they are located in a narrow region. Therefore, in the rest of this section, we suggest two possible transformations of the index taking values in a more convenient region.

\subsection{Transforming the index to improve its visualization}\label{Subsetion.I*}

To facilitate the understanding of the graphical representation of the index, in this section we propose two normalizations of $\mathcal{I}$ in \eqref{I0} taking values on a simpler region of the plane, instead of lying on the set $\Delta$ displayed in Figure \ref{fi:delta-estrellita}. For example, we can transform---through a suitable homeomorphism---the set $\Delta$ in \eqref{Delta-estrellita} into the triangle
\begin{equation}\label{T}
  T=\{(x,y)\in\R^2 : |x|\le y \le 1\}.
\end{equation}

There are several alternatives to carry out this normalization. The simplest way to transform the set $\Delta$ into $T$ is by linearly stretching the segment
$[(x,|x|),(x,M^*(x))]$ into $[(x,|x|), (x,1)]$, for each $-1\le x\le 1$,
where $M^*$ is in \eqref{M*}. We thus consider the map $t_*:\Delta\to T$ given by
\begin{equation}\label{t*}
  t_*(x,y)=\left(x, |x| + \frac{(1-|x|)(y-|x|)}{M^*(x)-|x|}\right).
\end{equation}

We introduce the two-dimensional index defined by
\begin{equation}\label{I*}
  \mathcal{I}_*(\ell_1,\ell_2)=t_*(\mathcal{I}_0 (\ell_1,\ell_2))=t_*(G(\ell_1)-G(\ell_2),d_{\rm L } (\ell_1,\ell_2)),
\end{equation}
where $t_*$ is the homeomorphism defined in \eqref{t*}. By construction, we have that $\mathcal{I}_*$ takes values in the triangle \eqref{T}.

The mapping $t_*$ in \eqref{t*} is perhaps the most natural homeomorphism to transform $\Delta$ in \eqref{Delta-estrellita} into $T$ in \eqref{T}. Nevertheless, only super-extremal pairs of distributions (see Definition \ref{Definition.super-extremal}) lay on the uppermost side of the triangle $T$, that is, the segment $[-1,1]\times\{1\}$. For instance, among all pairs of extremal distributions with equal Gini, $\{(\ell^-_a,\ell^+_{a}) : a\in[0,1]$\}, only the pair with $a=2-\sqrt{2}$ achieves a value of $\mathcal{I}_*$ equal to $(0,1)$. This happens because $\mathcal{I}_*$ only takes into account \textit{the difference} between the Gini indices of the involved variables.

The second proposal is to incorporate to $\mathcal{I}$ in \eqref{I0} the value of the Gini indices of each variable separately. In this way, we can send all extremal pairs of distributions to the uppermost side of $T$. We start with the following proposition.

\begin{proposition}\label{Proposition-range-Delta*}
  Let us consider the region of $\R^3$ defined by
  \begin{equation*}
  \Delta^*  =\{ (G(\ell_1),G(\ell_2),d_{\rm L}(\ell_1,\ell_2))  : \ell_1,\ell_2\in\mathcal{L}   \}.
   %=\{  ( 1-2 \Vert \ell\Vert,1-2 \Vert m \Vert , 2 \Vert \ell- m \Vert  )  : \ell_1,\ell_2\in\mathcal{L}   \}.
  \end{equation*}
We have that
\begin{equation}\label{Delta-set}
   \Delta^*= \left\{ (x,y,z)\in [0,1]^3  : |x-y|\le z \le M(x,y)         \right\},
\end{equation}
where $M$ is given in \eqref{Mab}.
\end{proposition}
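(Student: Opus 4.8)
The plan is to deduce Proposition \ref{Proposition-range-Delta*} almost directly from Corollary \ref{Coro.range}, since that corollary already encodes the fibrewise description of the achievable distances; the three-dimensional statement is essentially a repackaging of it. Write $S=\{(x,y,z)\in[0,1]^3 : |x-y|\le z\le M(x,y)\}$ for the set on the right-hand side of \eqref{Delta-set}. I would prove the two inclusions $\Delta^*\subseteq S$ and $S\subseteq\Delta^*$ separately.

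For $\Delta^*\subseteq S$, take arbitrary $\ell_1,\ell_2\in\mathcal{L}$ and set $x=G(\ell_1)$, $y=G(\ell_2)$, $z=d_{\rm L}(\ell_1,\ell_2)$. First, $x,y\in[0,1]$ because the Gini index is normalized, and $z\in[0,1]$ since $0\le d_{\rm L}\le 1$. The upper bound $z\le M(x,y)$ is immediate: because $\ell_1\in\mathcal{L}_x$ and $\ell_2\in\mathcal{L}_y$, the value $d_{\rm L}(\ell_1,\ell_2)$ is one of those over which the maximum in \eqref{Mab} is taken. For the lower bound $z\ge|x-y|$, I would use the identity $G(\ell)=d_{\rm L}(\ell,\ell_{\rm pe})$ recorded just before \eqref{Mab} (equivalently, that the Gini index is the Lorenz distance to a constant). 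The reverse triangle inequality for the pseudo-metric $d_{\rm L}$ then yields $|x-y|=|d_{\rm L}(\ell_1,\ell_{\rm pe})-d_{\rm L}(\ell_2,\ell_{\rm pe})|\le d_{\rm L}(\ell_1,\ell_2)=z$, as needed.

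For $S\subseteq\Delta^*$, fix $(x,y,z)\in S$. Since $\mathcal{L}_x$ and $\mathcal{L}_y$ are nonempty for every value in $[0,1]$ (for instance $\ell_x^-,\ell_y^+$ from \eqref{la+-la-} belong to them), Corollary \ref{Coro.range} applies and states that the set of distances $\{d_{\rm L}(\ell_1,\ell_2) : \ell_1\in\mathcal{L}_x,\ \ell_2\in\mathcal{L}_y\}$ is exactly the closed interval $[|x-y|,M(x,y)]$. As $z$ lies in this interval by hypothesis, there exist $\ell_1\in\mathcal{L}_x$ and $\ell_2\in\mathcal{L}_y$ with $d_{\rm L}(\ell_1,\ell_2)=z$; then $G(\ell_1)=x$ and $G(\ell_2)=y$, so $(x,y,z)\in\Delta^*$. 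Taking the union over all $(x,y)\in[0,1]^2$ gives $S\subseteq\Delta^*$ and completes the argument.

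A word on where the substance lies: essentially all of it is inherited from Corollary \ref{Coro.range}, whose content rests in turn on Theorem \ref{Theorem.Mab}. The only genuinely nontrivial ingredient is the intermediate-value statement---that \emph{every} distance between $|x-y|$ and $M(x,y)$ is attained by some admissible pair---and this is precisely what the corollary supplies through a connectedness/continuity argument on $\mathcal{L}_x\times\mathcal{L}_y$. I therefore do not expect a serious obstacle at this stage: the proof is the assembly of the fibrewise result into a three-dimensional graph, together with the elementary triangle-inequality bound $z\ge|x-y|$ and the surjectivity of $G$ onto $[0,1]$.
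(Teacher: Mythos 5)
Your proof is correct and follows essentially the same route as the paper, whose entire proof is the one-line remark that the result follows from Corollary \ref{Coro.range}; you have simply written out the routine details of that deduction (both inclusions are immediate consequences of the corollary's statement that the fibre of achievable distances over $(x,y)$ is exactly $[\,|x-y|,M(x,y)]$).
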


\begin{figure}[h]
\centering{\includegraphics[width=14cm]{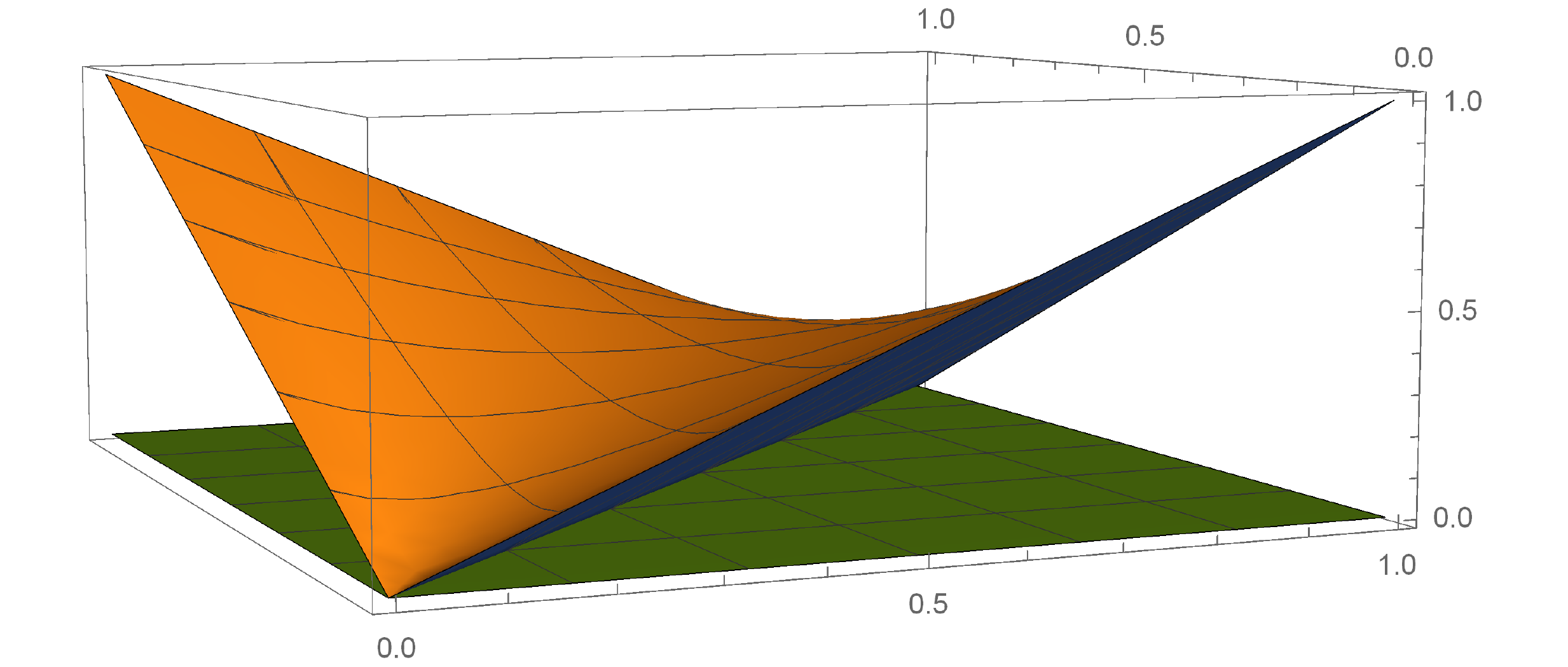}}
\caption{The region $\Delta^*$ in \eqref{Delta-set}.}
\label{fi:delta}
\end{figure}

Figure \ref{fi:delta} represents the set $\Delta^*$ in \eqref{Delta-set}. To understand the actual size of $\Delta^*$, we point out that its volume  is
\begin{equation*}
  \text{Vol} (\Delta^*)=\frac{2}{3} (10-\pi ^2)\approx 0.087.
\end{equation*}
However, the normalization used to construct $\I_*$ in Section \ref{Subsetion.I*} (through the function $M^*$) generates the set
\[ \left\{ (x,y,z)\in [0,1]^3  : |x-y|\le z \le M^*(x-y)         \right\}  \]
that contains $\Delta^*$ (by \eqref{M-and-M*}) and whose volume is (approximately) $0.14$ ($1.58$ times larger that the one of $\Delta^*$).

Next, we consider the map $t^*:\Delta^*\to T$ defined by
\begin{equation}\label{t}
  t^*(x,y,z)= \left(y-x, |y-x| + \frac{(1-|y-x|)(z-|y-x|)}{M(x,y)-|y-x|}\right).
\end{equation}
Observe that $t^*(\Delta)=T$, but $t^*$ is \textit{not} injective. This is not a problem as we want to send all extremal distributions with a given difference of their Gini indices to the same point on the frontier of $T$.

Finally, we define the bidimensional index $\mathcal{I}^*$ as
\begin{equation}\label{I}
  \I^*(\ell_1,\ell_2)= t^*(G(\ell_1),G(\ell_2),d_{\rm L}(\ell_1,\ell_2)).
\end{equation}

By construction, $\I^*$ takes values in $T$. Moreover, $\I^*$ sends all pairs of extremal distributions (see Definition \ref{Definition.extremal}) to the upper side of $T$. Observe that, from \eqref{M-and-M*}, the second component of $\I^*$ is always larger that the corresponding one of $\I_*$ in \eqref{I*}. In this regard, we highlight that $M(a,b)$ could be very different from $M^*(b-a)$. This is specially noticeable when the Gini indices of both variables are simultaneously small or large, as can be seen in Figure~\ref{fi:Maa}. Hence, this second proposal could be significatively different than the previous one in this situation.

Using the expression for $M$ in \eqref{Mab.expression}, we can rewrite the index $\I^*(\ell_1,\ell_2)$ in a slightly different way. For simplicity, let us set
\begin{equation*}
G(\ell_1)=a,\quad  G(\ell_2)=b\quad \text{and}\quad d_{\rm L}(\ell_1,\ell_2)=d.
\end{equation*}
We have that
\begin{equation*}%\label{I-alternativo}
  \I^*(\ell_1,\ell_2)=
  \begin{cases}
   \displaystyle \left( b-a, a-b  + \frac{(1-a+b)(a+b-a b)}{2(1-a)b^2}\cdot (d-a+b)   \right), & \mbox{if } a\ge b, \\[5mm]
    \displaystyle \left( b-a,   b-a +\frac{(1+a-b)(a+b-ab)}{2 a^2 (1-b)}\cdot (d+a-b)   \right), & \mbox{if } b < a.
  \end{cases}
\end{equation*}

\subsection{Main properties of the index}

The following proposition enumerates the main properties of the indices $\I_*$ and $\I^*$ defined in \eqref{I*} and \eqref{I}, respectively.

\begin{proposition}\label{Proposition.Properties}
Let  $(X_1,X_2)$ be a pair of random variables with Lorenz curves $\ell_1$ and $\ell_2$. We consider the index $\I_*=(\I_{*1},\I_{*2})$ defined in \eqref{I}. The following properties hold:

  \begin{enumerate}[(i)]
    \item \textsl{Normalization:} We have that $\I_*(\mathcal{L}\times \mathcal{L})=T$ in \eqref{T}, i.e., $\I_*$ takes values in the triangle $T$.
    %$$T=\{\I(\ell_1,\ell_2):\ell_1,\ell_2\in\mathcal{L}\}.$$
    Moreover, $\I_{*1}(\ell_1,\ell_2)\ge 0$ if and only if $G(\ell_1)\le G(\ell_2)$, whereas $\I_{*1}(\ell_1,\ell_2)\le 0$ if and only if $G(\ell_2)\le G(\ell_1)$. In other words, positive (respectively, negatives) values of the first component of $\I_*$ indicates that $\ell_1$ (respectively, $\ell_2$) is fairer than $\ell_2$ (respectively, $\ell_1$) according to the Gini index.
    \item \textsl{Symmetry:} We have that $\I_*(\ell_2,\ell_1)=(-\I_{*1}(\ell_1,\ell_2),\I_{*2}(\ell_1,\ell_2))$.
    \item \textsl{Extreme values (frontier of $T$):}
    \begin{enumerate}[(1)]
    \item $\I_*(\ell_1,\ell_2)\in L_1=\{ (x,x) : x\in[0,1] \}$ if and only if $\ell_1\ge \ell_2$, i.e., $X_1\Lo X_2$.
    \item $\I_*(\ell_1,\ell_2)\in L_2=\{ (-x,x) : x\in[0,1] \}$ if and only if $\ell_1 \le \ell_2$, i.e., $X_2\Lo X_1$.
    \item $\I_*(\ell_1,\ell_2)\in L_3=[-1,1]\times\{1\}$ if and only if $(\ell_1,\ell_2)$ is extremal, i.e., $d_{\rm L}(\ell_1,\ell_2)=M(G(\ell_1),G(\ell_2))$, where $M$ is given in \eqref{Mab} and \eqref{Mab.expression}.
  \end{enumerate}
    \item \textsl{Value at extreme points of $T$:}
    \begin{enumerate}[(1)]
    \item $\I_*(\ell_1,\ell_2)=(0,0)$ if and only if $\ell_1=\ell_2$. Therefore, the value at the origin means that the associated distributions satisfy that $X_1\st c X_2$, where $c>0$ is a constant. %That is, the value at the origin is equivalent to saying that $X$ and $Y$ distribute the wealth exactly in the same manner up to a size factor.
    \item $\I_*(\ell_1,\ell_2)=(1,1)$ if and only if $\ell_1=\ell_{\rm pe}$ and $\ell_2=\ell_{\rm pi}$.
    \item $\I_*(\ell_1,\ell_2)=(-1,1)$ if and only if $\ell_1=\ell_{\rm pi}$ and $\ell_2=\ell_{\rm pe}$.
  \end{enumerate}
    \item \textsl{Continuity:} If $\{\ell_{1,n_1}\}_{n_1\ge 1}\subset \mathcal{L}$ and $\{\ell_{2,n_2}\}_{n_2\ge 1}\subset \mathcal{L}$ are sequences such that $\ell_{1,n_1}\to \ell_1$ and $\ell_{2,n_2}\to \ell_2$  pointwise as $n_1,n_2\to\infty$, then
         $\I_*(\ell_{1,n_1},\ell_{2,n_2})\to \I_*(\ell_1,\ell_2)$.
           \end{enumerate}
The same properties hold for the index $\I^*$ in \eqref{I*} by changing in (iii) (3) `extremal' for `super-extremal' and `$M(G(\ell_1),G(\ell_2))$' for `$M^*(G(\ell_1)-G(\ell_2))$'.
\end{proposition}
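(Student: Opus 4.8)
The plan is to treat both indices uniformly by noting that each is the composition of the ``data map'' $(\ell_1,\ell_2)\mapsto(G(\ell_1),G(\ell_2),d_{\rm L}(\ell_1,\ell_2))$ with an explicit vertical stretching onto $T$ in \eqref{T}: the map $t^*$ in \eqref{t} for the index of \eqref{I}, and the map $t_*$ in \eqref{t*} for the index of \eqref{I*}. The decisive structural feature of both $t_*$ and $t^*$ is that they leave the horizontal coordinate---the signed Gini difference $G(\ell_2)-G(\ell_1)$---untouched, send the lower boundary $\{z=|y-x|\}$ of the range to the legs $L_1\cup L_2$ of $T$, and send the upper boundary to the hypotenuse $L_3$. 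Consequently, properties (i), (iii) and (iv) all reduce to identifying the boundary and vertices of the range regions $\Delta$ and $\Delta^*$ furnished by Propositions \ref{Proposition:rangeDelta} and \ref{Proposition-range-Delta*}, and then translating those descriptions back into statements about $\ell_1$ and $\ell_2$.

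\textbf{Normalization and symmetry.} Property (i) is immediate: surjectivity onto $T$ is exactly the assertion that $t^*$ (resp.\ $t_*$) maps $\Delta^*$ (resp.\ $\Delta$) onto $T$, which is built into Propositions \ref{Proposition-range-Delta*} and \ref{Proposition:rangeDelta}; and since the first coordinate is the signed Gini difference, its sign encodes the comparison of $G(\ell_1)$ and $G(\ell_2)$. Property (ii) follows because $d_{\rm L}$ is symmetric, the horizontal coordinate changes sign under swapping $\ell_1$ and $\ell_2$, and the vertical rescaling depends on that coordinate only through $|y-x|$---using that $M$ in \eqref{Mab.expression} is symmetric and $M^*$ in \eqref{M*} is even---so the second coordinate is left invariant.

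\textbf{The boundary characterization (heart of (iii)) and the vertices (iv).} The statements (iii)(3) are immediate from Definitions \ref{Definition.extremal} and \ref{Definition.super-extremal}, since the second coordinate of $t^*$ (resp.\ $t_*$) equals $1$ precisely when $z=M(x,y)$ (resp.\ $z=M^*(y-x)$). The real work is (iii)(1)--(2), for which I would prove the lemma that, for $\ell_1,\ell_2\in\mathcal{L}$,
\[
 d_{\rm L}(\ell_1,\ell_2)=\bigl|G(\ell_1)-G(\ell_2)\bigr|
 \iff
 \ell_1\ge\ell_2 \ \text{ or } \ \ell_2\ge\ell_1 \ \text{ pointwise,}
\]
with the sign of $G(\ell_2)-G(\ell_1)$ selecting the direction. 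Writing $d_{\rm L}(\ell_1,\ell_2)=2\int_0^1|\ell_1-\ell_2|$ and, via \eqref{Gini-Lorenz}, $G(\ell_2)-G(\ell_1)=2\int_0^1(\ell_1-\ell_2)$, this is exactly the equality case of $\bigl|\int_0^1(\ell_1-\ell_2)\bigr|\le\int_0^1|\ell_1-\ell_2|$; since every $\ell\in\mathcal{L}$ is continuous on $[0,1)$, equality forces $\ell_1-\ell_2$ to have constant sign there (the point $t=1$ being negligible), which is precisely the Lorenz ordering $X_1\Lo X_2$ or $X_2\Lo X_1$. For (iv) I would use that the vertices of $T$ pull back to the pinch points of the range: the corner $(\pm1,1)$ forces $|G(\ell_1)-G(\ell_2)|=1$ and, through the squeezing $|y-x|\le z\le M(x,y)$, also $d_{\rm L}=1$; then $G=0\Leftrightarrow\ell=\ell_{\rm pe}$ and $G=1\Leftrightarrow\ell=\ell_{\rm pi}$ (from \eqref{Gini-Lorenz} and \eqref{inequalities-Lorenz}) pin down the two curves, while $(0,0)$ forces $d_{\rm L}=0$ and hence $\ell_1=\ell_2$.

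\textbf{Continuity and the main obstacle.} For (v), pointwise convergence of Lorenz curves, which are uniformly bounded by $1$, yields $L^1$-convergence by dominated convergence; hence $G$ and $d_{\rm L}$ are continuous, and so is the data map. The stretchings $t^*$ and $t_*$ are manifestly continuous on the interior of their domains, so the only issue---and the step I expect to be the main technical obstacle---is continuity at the corners $x=\pm1$, where the denominator $M(x,y)-|y-x|$ (resp.\ $M^*(x)-|x|$) degenerates to $0$. There I would argue by squeezing: as the argument approaches a corner the admissible interval for $z$ collapses (both $|y-x|$ and the relevant maximum tend to $1$), so the rescaled second coordinate is trapped in $[\,|y-x|,1\,]$ and must converge to $1$, giving joint continuity up to the boundary. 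Finally, the properties for the index of \eqref{I*} follow verbatim from the same arguments, replacing $M(G(\ell_1),G(\ell_2))$ by $M^*(G(\ell_1)-G(\ell_2))$ and ``extremal'' by ``super-extremal'' in (iii)(3), since that index uses $M^*$ in place of $M$.
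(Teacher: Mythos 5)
Your proposal is correct and follows essentially the same route as the paper: the paper's own proof is a short reduction to Lemma \ref{lemma:ordered} (which is exactly your key lemma, proved there too via the equality case of $\bigl|\int(\ell_1-\ell_2)\bigr|\le\int|\ell_1-\ell_2|$ plus continuity), to the definitions of extremal/super-extremal pairs, and to dominated convergence for continuity. You supply more detail than the paper does---notably the squeezing argument for continuity of $t_*$ and $t^*$ at the degenerate corners $x=\pm1$, which the paper leaves implicit---but the underlying argument is the same.
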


Figure \ref{fi:super} summarizes graphically the properties in Proposition \ref{Proposition.Properties}.

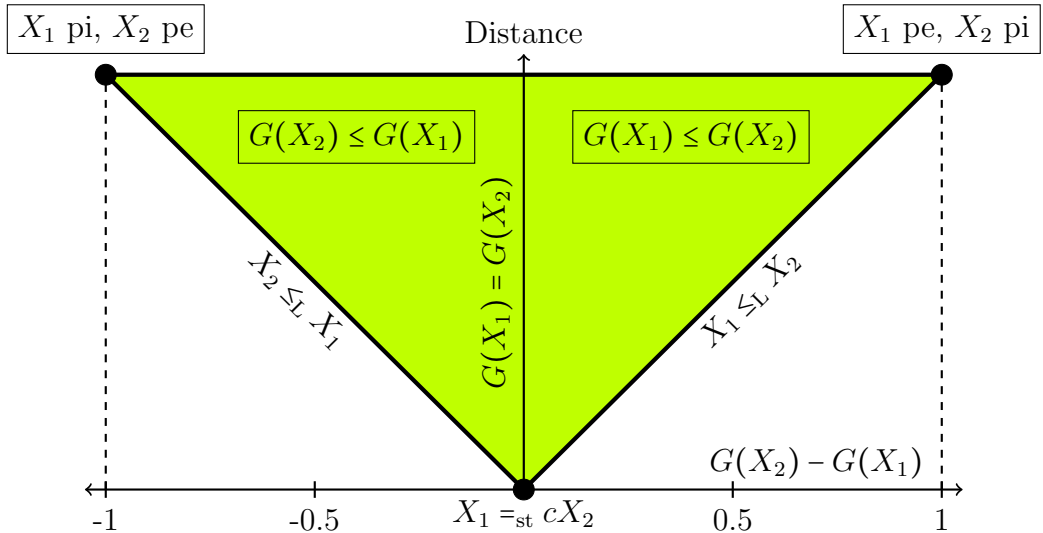
\begin{figure}[h]\centering{
\begin{tikzpicture}[scale=5.5]
\draw [black, fill=lime] (0,0) -- (1,1) -- (-1,1) -- (0,0);
\draw[fill] (1,1) circle [radius=0.025];
\draw[fill] (0,0) circle [radius=0.025];
\draw[fill] (-1,1) circle [radius=0.025];
\draw [ultra thick] (0,0) -- (1,1);
\draw [ultra thick] (0,0) -- (-1,1);
\draw [ultra thick] (1,1) -- (-1,1);
\draw (0,0) to node [sloped,below] {$X_1\Lo X_2$} (1,1);
\draw (0,0) to node [sloped,below] {$X_2\Lo X_1$} (-1,1);
%\draw [ultra thick] (1,1) to node [sloped,left,above] {Maximal $d_{\rm L}$-distance} (-1,1);
\draw [thick] [<->] (0,0) -- (-1.05,0);
\draw [thick] [<->] (0,0) -- (1.05,0);
\draw [thick] [<->] (0,0) -- (0,1.05);
\node [above] at (0,1.05) {Distance};
%\node [right] at (1.05,0) {$G(Y)-G(X)$};
\node [above] at (0.7,0) {$G(X_2)-G(X_1)$};
\node [below] at (0,0) {$X_1=_{\rm st} c X_2$};
\node[draw] [above] at (1,1.05) {$X_1$ pe, $X_2$ pi};
\node[draw] [above] at (-1,1.05) {$X_1$ pi, $X_2$ pe};
%\node [left] at (-1,1) {$\begin{rcases} X &\text{ pi}\\ Y& \text{ pe}\\ \end{rcases}$};
\draw [dashed, thick] (1,0) -- (1,1);
\draw [dashed, thick] (-1,0) -- (-1,1);
\draw [thick] (1,-0.02) node[below]{1} -- (1,0.02);
\draw [thick] (.5,-0.02) node[below]{0.5} -- (.5,0.02);
\node[draw]  at (.4,.85) {$G(X_1)\le G(X_2)$};
\node[draw]  at (-.4,.85) {$G(X_2)\le G(X_1)$};
\draw [thick] (-.5,-0.02) node[below]{-0.5} -- (-.5,0.02);
\draw [thick] (-1,-0.02) node[below]{-1} -- (-1,0.02);
\draw (0,0) to node [sloped,above] {$G(X_1)=G(X_2)$} (0,1);
\end{tikzpicture}}
\caption{The triangle $T$ (in green) in which the indices $\I_*$ and $\I^*$ take values. The $x$-axis represent the difference between the Gini indices of two variables and the $y$-axis the normalization of the Lorenz distance through $t_*$ or $t^*$ in \eqref{t*} and \eqref{t}, respectively.}
\label{fi:super}
\end{figure}

%\begin{center}
%\begin{tikzpicture}[scale=5.5]
%\draw [black, fill=lime] (0,0) -- (1,1) -- (1,-1) -- (0,0);
%\draw[fill] (1,1) circle [radius=0.025];
%\draw[fill] (0,0) circle [radius=0.025];
%\draw[fill] (1,-1) circle [radius=0.025];
%\draw [ultra thick] (0,0) -- (1,1);
%\draw [ultra thick] (0,0) -- (1,-1);
%\draw [ultra thick] (1,1) -- (1,-1);
%\draw [dashed, thick] (0,0) -- (1,0);
%\draw (0,0) to node [sloped,above] {$X\Lo Y$} (1,1);
%\draw (0,0) to node [sloped,below] {$Y\Lo X$} (1,-1);
%\draw [ultra thick] (1,1) to node [sloped,left,above] {Maximal $d_{\rm L}$-distance} (1,-1);
%\draw [thick] [<->] (0,0) -- (0,1.1);
%\draw [thick] [<->] (0,0) -- (0,-1.1);
%\node [right] at (1,1) {$\begin{cases} X \text{ perfect equality}\\ Y \text{ perfect inequality}\\ \end{cases}$};
%\node [right] at (1,-1) {$\begin{cases} X \text{ perfect inequality}\\ Y \text{ perfect equality}\\ \end{cases}$};
%\node [above] at (0.5,-0.02) {$G(X)=G(Y)$};
%\node [left] at (0,0) {$X=_{\rm st} c Y$};
%\draw [thick] (-0.02,1) node[left]{1} -- (0.02,1);
%\draw [thick] (-0.02,.5) node[left]{0.5} -- (0.02,.5);
%\draw [thick] (-0.02,-.5) node[left]{-0.5} -- (0.02,-.5);
%\draw [thick] (-0.02,-1) node[left]{-1} -- (0.02,-1);
%\draw [dashed, thick] (0,1) -- (1,1);
%\draw [dashed, thick] (0,-1) -- (1,-1);
%\draw[fill] (1,0) circle [radius=0.01];
%\end{tikzpicture}
%\end{center}

\section{Estimation and asymptotic properties}\label{Section-Asymptotics}

In this section we prove that the plug-in estimators of the indices defined in the previous section are strongly consistent. Moreover, we determine their asymptotic distributions and obtain necessary and sufficient conditions so that the estimator of $\I$ in \eqref{I0} is asymptotically normal. To finish this section, we have included the conclusions of a small simulation study with generalized beta-type distributions of the second kind to evaluate the behaviour of the asymptotic results in finite samples.

%Strong consistency can be easily deduced from past results regarding the (almost surely) uniform convergence of the empirical Lorenz curve to its theoretical counterpart; see \cite{Goldie-1977} and \cite{Csorgo-Yu-1999}. However, to compute the asymptotic distributions we combine various recent works. First, we use the (joint) weak convergence of Lorenz processes obtained by \citet[Lemma 2.1]{Sun-Beare-2021}. This interesting result was obtained by an appropriate application the functional delta method plus a new result regarding the convergence of the quantile process in $L^1$ due to \cite{Kaji-2019}; see also \cite{Kaji-2018}. Finally, we show the (directional) Hadamard differentiability of the map \eqref{I0}, which essentially follows from \citet[Lemma 4]{Carcamo}, and apply the (extended) functional delta method (see \citet[Theorem 2.1]{Shapiro-1990}) to derive the asymptotic distributions. We also obtain necessary and sufficient conditions so that the estimator of $\I$ in \eqref{I0} is asymptotically normal.

\subsection{Estimators of the indices and strong consistency}

%As in \cite{Barrett-Donald-Bhattacharya-2014}, we consider two possible sampling schemes to estimate the Lorenz curves.

Let $X_1$, $X_2$ be two random variables with distribution functions $F_1$ and $F_2$ and Lorenz curves $\ell_1$ and $\ell_2$, respectively. For $j=1,2$,  we consider random samples from $X_j$, $\{X_{j,i}\}_{i=1}^{n_j}$, $n_j \in\mathbb{N}$. For simplicity, we will assume that both samples are mutually independent. However, similar convergence results can be obtained when we observe ``matched pairs", $\{(X_{1,i},X_{2,i})\}_{i=1}^n$, drawn from a bivariate distribution $(X_1,X_2)$ with copula $C$ satisfying that its maximal correlation is strictly less than one; see \cite{Beare-2010}. As pointed out in \cite{Barrett-Donald-Bhattacharya-2014} and \cite{Sun-Beare-2021}, this second setting is more reasonable when we have one sample of individuals and two measures of welfare.

To simplify the notation, in the sequel all estimated quantities are denoted with a ``hat", and it will be implicitly understood the dependence on the corresponding sample sizes.
To estimate the inequality indices introduced in Section \ref{Section.Index}, the starting point is the natural estimator of the distribution function of the sample. Namely, for $j=1,2$, we denote by $\hat{F}_j$ the \emph{empirical distribution functions} of the samples, i.e.,
$$\hat{F}_j(x)=\frac{1}{n_j}\sum_{i=1}^{n_j} 1_{\{X_{j,i}\le x\}}, \quad x\in [0,\infty),$$
where $1_A$ stands for the indicator function of the set $A$.
The corresponding \emph{empirical quantile functions} are $\hat{F}^{-1}_j(x)=\inf\{ y \ge 0 : \hat F_j (y)\ge x \}$ ($0<x<1$) and the \emph{empirical Lorenz curves} are
\begin{equation*}%\label{Lorenz-empirical}
\hat\ell_j(t)=\frac{1}{\hat \mu_j }\int_{0}^{t} \hat{F}_j^{-1} (x)\, \d x, \quad t\in[0,1],
\end{equation*}
where $\hat \mu_j=\frac{1}{n_j}\sum_{i=1}^{n_j} X_{j,i}$ are the sample means. Therefore, the plug-in estimator of the indices  $\I$, $\I_*$ and $\I^*$ defined in equations \eqref{I0}, \eqref{I*} and \eqref{I} are respectively given by
\begin{equation}\label{estimators-indices}
\hat \I (\ell_1,\ell_2)= \I (\hat\ell_1 ,\hat \ell_2  ),\quad \hat \I_* (\ell_1,\ell_2)= \I_* (\hat\ell_1 ,\hat \ell_2  ),\quad \text{and} \quad \hat \I^* (\ell_1 , \ell_2)= \I^* (\hat\ell_1,\hat\ell_2).
\end{equation}

The next proposition shows the strong consistency of these estimators.

\begin{proposition}\label{Proposition.Strong.Consistency}
As $n_1,n_2 \to \infty$, we have that
\begin{equation*}%\label{strong-consistency}
  \I (\hat \ell_1,\hat \ell_2) \to \I (\ell_1,\ell_2),\quad   \I_* ( \hat \ell_1,\hat \ell_2) \to \I_* (\ell_1,\ell_2)\quad \text{and}\quad  \I^* (\hat \ell_1,\hat \ell_2) \to \I^* (\ell_1,\ell_2)\quad  \text{a.s.}
\end{equation*}
\end{proposition}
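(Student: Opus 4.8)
The plan is to reduce the almost-sure convergence of all three plug-in indices to a single underlying fact: the empirical Lorenz curves converge to the true Lorenz curves in the $L^1$-norm, almost surely. Concretely, the first step is to establish that $\|\hat\ell_j - \ell_j\| \to 0$ a.s.\ as $n_j\to\infty$, for $j=1,2$. This follows from the strong law of large numbers applied at two levels. By Glivenko--Cantelli, $\hat F_j \to F_j$ uniformly a.s., which transfers (via the standard argument for generalized inverses) to a.s.\ convergence of the empirical quantile functions $\hat F_j^{-1}\to F_j^{-1}$ at all continuity points; simultaneously $\hat\mu_j\to\mu_j>0$ a.s.\ by the SLLN, using integrability of $X_j$. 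Since $\ell_j(t)=\frac{1}{\mu_j}\int_0^t F_j^{-1}$, dominated convergence (with the integrable envelope provided by integrability of $X_j$) yields pointwise convergence $\hat\ell_j(t)\to\ell_j(t)$ a.s.\ for each $t$, and because all the $\hat\ell_j$ and $\ell_j$ lie in $\mathcal L$ (convex, bounded in $[0,1]$), pointwise convergence of convex functions on $[0,1]$ upgrades to $L^1$-convergence by the bounded convergence theorem. Thus $\|\hat\ell_j-\ell_j\|\to 0$ a.s.

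The second step is to observe that both ingredients of the indices are $L^1$-continuous functionals of the Lorenz curves. The Gini index $G(\ell)=1-2\|\ell\|$ in \eqref{Gini-Lorenz} is $1$-Lipschitz in the $L^1$-norm, since $|G(\hat\ell_j)-G(\ell_j)|=2\,|\,\|\hat\ell_j\|-\|\ell_j\|\,|\le 2\|\hat\ell_j-\ell_j\|$. Likewise the Lorenz distance $d_{\rm L}(\ell_1,\ell_2)=2\|\ell_1-\ell_2\|$ in \eqref{Distance} satisfies $|d_{\rm L}(\hat\ell_1,\hat\ell_2)-d_{\rm L}(\ell_1,\ell_2)|\le 2\|\hat\ell_1-\ell_1\|+2\|\hat\ell_2-\ell_2\|$ by the triangle inequality. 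Combining these with Step~1 gives $G(\hat\ell_j)\to G(\ell_j)$ a.s.\ and $d_{\rm L}(\hat\ell_1,\hat\ell_2)\to d_{\rm L}(\ell_1,\ell_2)$ a.s., which is precisely the almost-sure convergence of $\I(\hat\ell_1,\hat\ell_2)\to\I(\ell_1,\ell_2)$, since $\I$ in \eqref{I0} is just the pair of these two quantities.

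The third step handles the normalized indices $\I_*$ and $\I^*$. These are obtained by composing $\I$ (respectively the triple $(G(\ell_1),G(\ell_2),d_{\rm L})$) with the maps $t_*$ in \eqref{t*} and $t^*$ in \eqref{t}. Since a.s.\ convergence is preserved under continuous maps, it suffices to verify that $t_*$ and $t^*$ are continuous on their domains $\Delta$ and $\Delta^*$. The only potential difficulty is the apparent division by $M^*(x)-|x|$ in $t_*$ (and by $M(x,y)-|y-x|$ in $t^*$), which vanishes on the boundary segments where $|x|=M^*(x)$ (respectively $|y-x|=M(x,y)$). Here the main obstacle is to confirm that the maps extend continuously to these degenerate edges. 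The resolution is that on the lower faces of $\Delta$ and $\Delta^*$ the numerators $(y-|x|)$ and $(z-|y-x|)$ vanish at the same rate as the denominators, so the ratio has a well-defined limit; indeed each $t_*,t^*$ is a homeomorphism onto $T$ by construction, hence continuous. Continuity of $t_*$ and $t^*$ together with the a.s.\ convergence of their arguments established in Steps~1--2 yields $\I_*(\hat\ell_1,\hat\ell_2)\to\I_*(\ell_1,\ell_2)$ and $\I^*(\hat\ell_1,\hat\ell_2)\to\I^*(\ell_1,\ell_2)$ a.s., completing the proof.
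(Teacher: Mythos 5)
Your proposal follows essentially the same route as the paper: reduce everything to almost-sure $L^1$-convergence of the empirical Lorenz curves, then use the triangle inequality (equivalently, $1$-Lipschitz continuity of $\ell\mapsto\|\ell\|$ and $(\ell_1,\ell_2)\mapsto\|\ell_1-\ell_2\|$) for $\I$, and continuity of $t_*$ and $t^*$ for the normalized indices. The only difference is that the paper simply cites \cite{Goldie-1977} for the a.s.\ uniform convergence $\hat\ell_j\to\ell_j$, whereas you sketch it from first principles; in that sketch, the appeal to dominated convergence ``with the integrable envelope provided by integrability of $X_j$'' is imprecise, since no deterministic integrable function dominates all the empirical quantile functions $\hat F_j^{-1}$ (near $1$ they equal the sample maximum). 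The clean fix is Scheff\'e's lemma: $\hat F_j^{-1}\to F_j^{-1}$ a.e.\ on $(0,1)$ together with $\int_0^1\hat F_j^{-1}=\hat\mu_j\to\mu_j=\int_0^1 F_j^{-1}$ gives $L^1(0,1)$-convergence of the quantile functions, hence uniform convergence of $t\mapsto\int_0^t\hat F_j^{-1}$, after which your argument goes through; alternatively, just invoke Goldie's theorem as the paper does. Your remark that $t^*$ is ``a homeomorphism onto $T$ by construction'' is not quite right either (the paper explicitly notes $t^*$ is not injective), but the paper itself only asserts continuity of $t_*$ and $t^*$ without further comment, so this is a shared, minor imprecision rather than a defect of your argument relative to the paper's.
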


The proof of Proposition \ref{Proposition.Strong.Consistency} (see the Appendix) shows that strong consistency of the estimators of the indices follows from the (almost surely) uniform convergence of the empirical Lorenz curves to its theoretical counterparts. We observe that this convergence can be derived under weaker assumptions regarding the samples of $X_1$ and $X_2$. For instance, in \citet[Theorem 2.1]{Csorgo-Yu-1999} strong uniform consistency of $\hat \ell_j$ to $\ell_j$ is obtained under very general conditions.

%Strong consistency of the estimators  can be easily deduced from past results regarding the (almost surely) uniform convergence of the empirical Lorenz curve ; see \cite{Goldie-1977} and \cite{Csorgo-Yu-1999}.

\subsection{Convergence of empirical Lorenz processes}

%In the first framework, we have two independent

The computation of the asymptotic distribution of the indices relies on the convergence of the \emph{empirical Lorenz processes} (associated with $X_j$  with Lorenz curves $\ell_j$, respectively, for $j=1,2$) given by
\begin{equation}\label{Lorenz-processes}
\sqrt{n_j} \left(\hat \ell_j(t)-\ell_j(t)\right),\quad t\in[0,1],\quad j=1,2.
\end{equation}

The analysis of the convergence of Lorenz processes can be traced back to \cite{Goldie-1977}. However, we will use a recent result by \citet{Sun-Beare-2021} in which the weak joint convergence of the processes in \eqref{Lorenz-processes} is obtained by using a new result regarding the convergence of the quantile process in $L^1$ (see \cite{Kaji-2018} and \cite{Kaji-2019}) together with the functional delta method; see \citet[Section 3.9]{van der Vaart-Wellner}. Finally, we show the (directional) Hadamard differentiability of the map \eqref{I0}, which essentially follows from \citet[Lemma 4]{Carcamo}, and apply the (extended) functional delta method (see \citet[Theorem 2.1]{Shapiro-1990}) to derive the asymptotic distributions. Therefore, we need to impose various conditions on the variables so that the associated Lorenz processes converge in $L^1$.

%For $j=1,2$, the variables $X_j$ with distribution functions $F_j$ satisfy the following two conditions:

\begin{assumption}[\textbf{Integrability condition}] \label{Assumption-integrability}
For $j=1,2$, it holds that $\Lambda_{2,1}(X_j)<\infty$, where
  \begin{equation}\label{Delta-21}
  \Lambda_{2,1}(X_j)=\int_{0}^{\infty} \sqrt{1-F_j(x)}\,\d x.
\end{equation}
\end{assumption}
\begin{assumption}[\textbf{Regularity condition}]\label{Assumption-regularity} For $j=1,2$, we have that $F_j(0)=0$ and $F_j$ has at most finitely many jumps and is continuously differentiable elsewhere with
strictly positive density.
\end{assumption}

Assumption \ref{Assumption-integrability} amounts to saying that the variables $X_j$ belong to the Lorentz space $\mathcal{L}^{2,1}$; see \citet[Section 1.4]{Grafakos}. This condition is equivalent to the convergence of the classical empirical process (associated with $F_j$) in the space $L^1$; see \citet[Theorem 2.1]{del Barrio}.
Condition $\Lambda_{2,1}(X_j)<\infty$ is slightly stronger than $\E X_j^2<\infty$: it holds for example when $\E X_j^{2+\epsilon}<\infty$, for some $\epsilon>0$.
The smoothness condition in Assumption \ref{Assumption-regularity} is necessary to conclude the convergence of the quantile process in $L^1$
through the differentiability of the inverse map plus the convergence of the empirical process in $L^1$; see \citet{Kaji-2019}.

Under these assumptions, and with the functional delta method, \citet[Lemma 2.1]{Sun-Beare-2021} obtained the asymptotic behaviour of the empirical Lorenz process in $C([0,1])\equiv$ the space of continuous real-valued functions on $[0,1]$. This result is collected in the following lemma where we use the arrow `$\rightsquigarrow$' to denote the weak convergence of probability measures in the sense of Hoffmann-J{\o}rgensen; see \cite{van der Vaart-Wellner}. Further, for $j=1,2$, $\B_j$  will denote two independent standard \textit{Brownian bridges} on $[0,1]$.%, that is, centered Gaussian processes with covariance function $\gamma(s,t)=s\wedge t-st$ and continuous paths, with probability 1.

\begin{lemma}\label{Lemma-Lorenz-convergence}
For $j=1,2$, let us assume that $X_j$ satisfy Assumptions \ref{Assumption-integrability} and \ref{Assumption-regularity}. As $n_j \to\infty$, we have that $\sqrt{n_j} \left(\hat \ell_j-\ell_j\right) \rightsquigarrow     \mathbb{L}_j $ in $C([0,1])$, where $\mathbb{L}_j$  are (independent) centered Gaussian processes with continuous trajectories a.s. that can be expressed as
\begin{equation}\label{Lorenz-convergence}
   \mathbb{L}_j(t)  =  \ell_j(t)\int_{0}^{1}  \ell^{\prime\prime}_j(x)\mathbb{B}_j(x)\, \d x  - \int_{0}^{t}  \ell^{\prime\prime}_j(x)\mathbb{B}_j(x)\, \d x,\quad t\in[0,1],  \\
\end{equation}
where $\mathbb{B}_j$ are independent standard Brownian bridges.
\end{lemma}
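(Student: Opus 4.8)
The plan is to realise the Lorenz process as the image of the quantile process under the functional delta method, working throughout in $L^1$ and upgrading to $C([0,1])$ only at the last step, through the smoothing effect of integration. The first ingredient is the $L^1$-convergence of the empirical process: by \citet[Theorem 2.1]{del Barrio}, the integrability bound $\Lambda_{2,1}(X_j)<\infty$ of Assumption~\ref{Assumption-integrability} is exactly the condition guaranteeing that $\sqrt{n_j}(\hat F_j - F_j)$ converges weakly in $L^1(\R)$ to the $F_j$-Brownian bridge $\mathbb{B}_j\circ F_j$.

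Next I would pass to the quantile side. Under Assumption~\ref{Assumption-regularity} (finitely many jumps and continuous differentiability with strictly positive density elsewhere), the inverse map $F\mapsto F^{-1}$ is Hadamard differentiable as a map into $L^1([0,1])$, by the results of \citet{Kaji-2018} and \citet{Kaji-2019}. Applying the functional delta method (\citet[Section 3.9]{van der Vaart-Wellner}) then yields convergence of the quantile process $\sqrt{n_j}(\hat F_j^{-1} - F_j^{-1})\rightsquigarrow \mathbb{Q}_j$ in $L^1([0,1])$, where $\mathbb{Q}_j(x) = -\mathbb{B}_j(x)/f_j(F_j^{-1}(x))$.

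The core step is to identify the Lorenz map and differentiate it. I would write $\ell_j=\Phi(F_j^{-1})$, where $\Phi$ sends a quantile function $Q$ to $t\mapsto \int_0^t Q \big/ \int_0^1 Q$. This factors as the bounded linear map $Q\mapsto(\int_0^{\cdot} Q,\int_0^1 Q)\in C([0,1])\times\R$ followed by the division $(g,m)\mapsto g/m$, which is differentiable at every point with $m=\mu_j=\int_0^1 F_j^{-1}>0$. The chain rule gives the Hadamard derivative
\[
 \Phi'_{F_j^{-1}}(q)(t) = \frac{1}{\mu_j}\left( \int_0^t q(x)\,\d x - \ell_j(t)\int_0^1 q(x)\,\d x \right),
\]
which is continuous in $t$ for every $q\in L^1$; this is precisely where $L^1$-convergence of the quantile process is promoted to $C([0,1])$-convergence of the Lorenz process. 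A second application of the functional delta method gives $\sqrt{n_j}(\hat\ell_j-\ell_j)\rightsquigarrow \mathbb{L}_j:=\Phi'_{F_j^{-1}}(\mathbb{Q}_j)$ in $C([0,1])$. To match \eqref{Lorenz-convergence}, recall $\ell_j'(x)=F_j^{-1}(x)/\mu_j$, hence $\ell_j''(x)=1/(\mu_j f_j(F_j^{-1}(x)))$, so that $\mathbb{Q}_j(x)/\mu_j=-\ell_j''(x)\mathbb{B}_j(x)$; substituting into the derivative formula reproduces \eqref{Lorenz-convergence} up to a global sign, which is immaterial since $\mathbb{B}_j$ is symmetric. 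Finally, $\mathbb{L}_j$ is a continuous linear image of the Gaussian process $\mathbb{B}_j$, hence centred Gaussian with a.s.\ continuous trajectories, and independence of $\mathbb{L}_1$ and $\mathbb{L}_2$ is inherited from the independence of the two samples.

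The main obstacle I anticipate is the $L^1$-Hadamard differentiability of the inverse map: the strictly positive density and the control of the finitely many jumps in Assumption~\ref{Assumption-regularity} are exactly the hypotheses that make the derivative $-\mathbb{B}_j\circ F_j/(f_j\circ F_j^{-1})$ well defined and the map tangentially differentiable in the correct space. This is the technical heart of \citet{Kaji-2019} and of \citet[Lemma 2.1]{Sun-Beare-2021}, on which the present statement rests, so in the write-up I would lean on those references rather than reprove the differentiability from scratch.
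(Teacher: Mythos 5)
Your proposal is correct and follows essentially the same route the paper takes: it invokes \citet[Theorem 2.1]{del Barrio} for the $L^1$-convergence of the empirical process under the Lorentz-space condition, \citet{Kaji-2019} for the Hadamard differentiability of the inverse map and hence the $L^1$-convergence of the quantile process, and the functional delta method applied to the (smooth) normalized integration map, exactly as in \citet[Lemma 2.1]{Sun-Beare-2021}, which is the source the paper cites rather than reproving. (A minor remark: substituting $\mathbb{Q}_j/\mu_j=-\ell_j''\mathbb{B}_j$ into your derivative formula reproduces \eqref{Lorenz-convergence} exactly, with no residual sign discrepancy.)
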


Some comments should be made regarding the previous key lemma. First, we have opted for the less restrictive assumptions given in \citet[Proposition 4.2]{Kaji-2019} instead of those considered in \citet[Assumption 2.1]{Sun-Beare-2021} or the more demanding in \citet[Assumption 1]{Barrett-Donald-Bhattacharya-2014}. However, for simplicity, we assume that $X_1$ and $X_2$ are independent. If this is not the case, a similar result can be stated (see \citet[Lemma 2.1]{Sun-Beare-2021}): the joint limit distribution of $(\sqrt{n_1} \left(\hat \ell_1-\ell_1\right),\sqrt{n_2} \left(\hat \ell_2-\ell_2\right) )$ is again $(\mathbb{L}_1,\mathbb{L}_2)$, but in this case the Brownian bridges $\mathbb{B}_1$ and $\mathbb{B}_2$ in \eqref{Lorenz-convergence} are correlated.

\subsection{Asymptotic distribution of the estimator of $\I$}

The computation of the asymptotic distribution of the estimator of $\I$ follows from Lemma \ref{Lemma-Lorenz-convergence} together with the functional delta method. Traditionally, to apply this latter tool it is usually assumed that the considered maps are Hadamard differentiable. However, as showed by \cite{Shapiro-1991} (see also \cite{Dumbgen}) it is enough to have Hadamard \emph{directional} differentiability. We recall this concept in the following definition.

\begin{definition}\label{Definition.Hadamard}
Let $\mathcal{D}$ and $\mathcal{E}$ be real Banach spaces with norms $\Vert \cdot\Vert_\mathcal{D}$ and $\Vert \cdot\Vert_\mathcal{E}$, respectively. A map $\phi:\mathcal{D}\longrightarrow \mathcal{E}$ is said to be \textit{Hadamard directionally differentiable} at $\theta\in \mathcal{D}$ tangentially to a set $\mathcal{D}_0\subset \mathcal{D}$ if there exists a map $\phi^\prime_\theta: \mathcal{D}_0 \longrightarrow \mathcal{E}$ such that
\begin{equation}\label{Hadamard}
\left\Vert  \frac{\phi(\theta+t_n h_n)-\phi(\theta)}{t_n}-\phi^\prime_\theta(h)\right\Vert_\mathcal{E}\to 0,
\end{equation}
for all $h\in \mathcal{D}_0$ and all sequences $\{h_n\}\subset \mathcal{D}$, $\{t_n\}\subset \R$ such that $t_n\downarrow 0$ and $\Vert h_n-h\Vert_\mathcal{D}\to 0$.
\end{definition}

The main difference between full and directional Hadamard differentiability is that the derivative $\phi^\prime_\theta$ is not necessarily linear in Definition \ref{Definition.Hadamard}. However, if equation (\ref{Hadamard}) is satisfied, then $\phi^\prime_\theta$ is continuous and positive homogeneous of degree 1; see \citet[Proposition 3.1]{Shapiro-1990}.

The proof of following lemma follows from \citet[Lemma 4]{Carcamo}.

\begin{lemma}\label{Lemma-Hadamard}
The map $\delta:L^1 \to \R$ defined by $\delta(f)=\Vert f\Vert$ is Hadamard directionally differentiable at every $f\in L^1$. For $g\in L^1$, its derivative is given by
\begin{equation}\label{delta-derivative}
  \delta_f'(g)=\int_{\{ f = 0 \}} |g(x)|\, \d x + \int_{\{ f \ne 0 \}} g(x)\cdot \sgn(f(x))\, \d x,
\end{equation}
where $\sgn(\cdot)$ is the sign function.

In particular, if the Lebesgue measure of the set $\{ f = 0 \}=\{x\in[0,1] : f(x)=0\}$ is zero, we have that $\delta$ is (fully) Hadamard differentiable with (linear) derivative
\begin{equation*}%\label{delta-derivative-linear}
  \delta_f'(g)= \int_{0}^1 g(x)\cdot \sgn(f(x))\, \d x.
\end{equation*}
\end{lemma}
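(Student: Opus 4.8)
The plan is to verify Definition~\ref{Definition.Hadamard} directly, controlling the difference quotient pointwise and then passing to the limit under the integral sign. Fix $f\in L^1$ and a direction $h\in L^1$, and let $t_n\downarrow 0$ and $h_n\to h$ in $L^1$. Writing
\[
\frac{\delta(f+t_n h_n)-\delta(f)}{t_n}=\int_0^1 g_n(x)\,\d x,\qquad g_n(x):=\frac{|f(x)+t_n h_n(x)|-|f(x)|}{t_n},
\]
the whole task reduces to showing that $\int_0^1 g_n\,\d x\to \delta_f'(h)$, with $\delta_f'$ as in \eqref{delta-derivative}.

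First I would analyse the pointwise behaviour of $g_n$. On the set $\{f=0\}$ one has $g_n(x)=|h_n(x)|$ exactly (since $t_n>0$), while on $\{f\ne 0\}$, as soon as $t_n|h_n(x)|<|f(x)|$ the number $f(x)+t_n h_n(x)$ keeps the sign of $f(x)$, so that $g_n(x)=\sgn(f(x))\,h_n(x)$. Hence at every point where $h_n(x)\to h(x)$ with $h(x)$ finite, $g_n(x)$ converges to $|h(x)|\,1_{\{f=0\}}(x)+\sgn(f(x))\,h(x)\,1_{\{f\ne 0\}}(x)$, whose integral is precisely $\delta_f'(h)$. Moreover, the reverse triangle inequality yields the uniform domination $|g_n|\le|h_n|$.

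The main obstacle is that $h_n\to h$ only in $L^1$, so pointwise convergence of $h_n$, and hence of $g_n$, is available only along subsequences. I would handle this with the standard subsequence principle: given any subsequence of $\{g_n\}$, extract a further subsequence along which $h_n\to h$ almost everywhere; along it $g_n$ converges a.e.\ to the integrand of $\delta_f'(h)$ by the previous paragraph. Since $|g_n|\le|h_n|$ and $|h_n|\to|h|$ both a.e.\ (along the sub-subsequence) and in $L^1$ (because $\bigl||h_n|-|h|\bigr|\le|h_n-h|$, so $\int_0^1|h_n|\to\int_0^1|h|$), the generalized dominated convergence theorem gives $\int_0^1 g_n\,\d x\to\delta_f'(h)$ along that sub-subsequence. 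As the limit $\delta_f'(h)$ is independent of the chosen subsequence, the full sequence converges, which establishes both the Hadamard directional differentiability and formula \eqref{delta-derivative}.

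Finally, for the last assertion, if the Lebesgue measure of $\{f=0\}$ is zero, then the first summand in \eqref{delta-derivative} vanishes and $\delta_f'(g)=\int_0^1 g(x)\,\sgn(f(x))\,\d x$ is linear and continuous in $g$. Since a Hadamard directionally differentiable map whose directional derivative is linear and continuous is (fully) Hadamard differentiable in the usual sense, this yields the stated conclusion.
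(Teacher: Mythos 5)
Your proof is correct. The pointwise decomposition of the difference quotient $g_n$ on $\{f=0\}$ versus $\{f\ne 0\}$ is exact, the domination $|g_n|\le |h_n|$ via the reverse triangle inequality is right, and the combination of the a.e.-convergent sub-subsequence trick with the generalized dominated convergence theorem (using $\int_0^1|h_n|\to\int_0^1|h|$) correctly handles the fact that $h_n\to h$ only in $L^1$; the subsequence principle then upgrades the convergence to the full sequence. The closing observation that a linear, continuous directional derivative promotes directional to full Hadamard differentiability is also standard and correct. The paper itself does not carry out this argument: it simply cites Lemma 4 of C\'arcamo (2017), where the directional differentiability of the $L^p$ norm is established. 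So your contribution is a self-contained, elementary verification of what the paper treats as an imported black box; the trade-off is a longer argument in exchange for not having to consult the external reference, and the substance of what you prove matches exactly the statement being cited.
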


In the following proposition we establish the asymptotic behaviour of the normalized estimator of the index $\I$ in \eqref{I0}. We impose the following condition on the sample sizes.

\begin{assumption}[\textbf{Sampling condition}]\label{Assumption-sizes}
  The sample sizes $n_1$ and $n_2$ are (weakly) balanced, that is, as $n_1,n_2\to \infty$,  $n_1/(n_1+n_2)\to \lambda$, where $\lambda\in[0,1]$.
%\begin{equation}\label{sample-sizes}
%\frac{n_1}{n_1+n_2}\to \lambda,\quad \text{as }n_1,n_2\to \infty.
%\end{equation}
\end{assumption}

\begin{proposition}\label{Proposition-Asymptotics}
Let Assumptions \ref{Assumption-integrability}, \ref{Assumption-regularity} and \ref{Assumption-sizes} be fulfilled.
Then, as $n_1,n_2\to\infty$,
\begin{equation}\label{I0-normalized}
   \sqrt{\frac{n_1 n_2}{n_1+n_2}} \left( \I (\hat \ell_1,\hat \ell_2) -   \I ( \ell_1, \ell_2)        \right) \rightsquigarrow I = 2\,\left( \int_{0}^{1}  \mathbb{L}(t)\,\d t, \,      \delta^\prime_{\ell_1-\ell_2} (   \mathbb{L}  )   \right),
\end{equation}
where
\begin{equation}\label{L}
  \mathbb{L}= \sqrt{1-\lambda}\, \mathbb{L}_1-  \sqrt{\lambda}\,   \mathbb{L}_2,
\end{equation}
with $\mathbb{L}_j$ the independent centered Gaussian processes in \eqref{Lorenz-convergence} ($j=1,2$) and the derivative $\delta^\prime$ in \eqref{delta-derivative}.
\end{proposition}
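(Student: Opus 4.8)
The plan is to write the index $\I$ as a Hadamard directionally differentiable image of the difference of the two Lorenz curves, and then apply the extended functional delta method of \citet{Shapiro-1990} to the joint Lorenz limit furnished by Lemma \ref{Lemma-Lorenz-convergence}. First I would assemble the joint weak limit of the two rescaled Lorenz processes. Since the samples are independent, the pair $\bigl(\sqrt{n_1}(\hat\ell_1-\ell_1),\sqrt{n_2}(\hat\ell_2-\ell_2)\bigr)$ converges jointly in $C([0,1])$ to the independent pair $(\mathbb{L}_1,\mathbb{L}_2)$ of \eqref{Lorenz-convergence}. Setting $r_n=\sqrt{n_1 n_2/(n_1+n_2)}$ and factoring $r_n(\hat\ell_1-\ell_1)=\sqrt{n_2/(n_1+n_2)}\,\sqrt{n_1}(\hat\ell_1-\ell_1)$ (symmetrically for the second coordinate), Assumption \ref{Assumption-sizes} forces the deterministic prefactors to tend to $\sqrt{1-\lambda}$ and $\sqrt{\lambda}$. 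A Slutsky argument and the continuous linear difference map then give
\[
 r_n\bigl((\hat\ell_1-\hat\ell_2)-(\ell_1-\ell_2)\bigr)\rightsquigarrow\sqrt{1-\lambda}\,\mathbb{L}_1-\sqrt{\lambda}\,\mathbb{L}_2=\mathbb{L},
\]
with $\mathbb{L}$ as in \eqref{L}. Because the inclusion $C([0,1])\hookrightarrow L^1$ is continuous and $\mathbb{L}$ has continuous paths almost surely, this convergence also holds in $L^1$, the space on which the delta method will act.

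The algebraic heart of the argument is that all Lorenz curves involved are non-negative, so $\Vert\ell_j\Vert=\int_0^1\ell_j$ and likewise for $\hat\ell_j$; hence $G(\ell_2)-G(\ell_1)=2(\Vert\ell_1\Vert-\Vert\ell_2\Vert)=2\int_0^1(\ell_1-\ell_2)$. This lets me factor $\I$ through the map $\Phi:L^1\to\R^2$, $\Phi(h)=2\bigl(\int_0^1 h,\,\Vert h\Vert\bigr)$, so that both $\I(\ell_1,\ell_2)=\Phi(\ell_1-\ell_2)$ and $\I(\hat\ell_1,\hat\ell_2)=\Phi(\hat\ell_1-\hat\ell_2)$. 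Writing $f=\ell_1-\ell_2$ and $g_n=(\hat\ell_1-\hat\ell_2)-(\ell_1-\ell_2)$, the increment to be studied becomes exactly $\Phi(f+g_n)-\Phi(f)$. The first coordinate of $\Phi$ is a bounded linear functional, hence fully Hadamard differentiable with derivative equal to itself, while the second coordinate is the norm map $\delta$ of Lemma \ref{Lemma-Hadamard}, Hadamard directionally differentiable at every point with derivative $\delta^\prime_f$ from \eqref{delta-derivative}. Therefore $\Phi$ is Hadamard directionally differentiable at $f$ tangentially to $C([0,1])$, with $\Phi^\prime_f(h)=2\bigl(\int_0^1 h,\,\delta^\prime_f(h)\bigr)$.

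Combining these pieces through the extended functional delta method of \citet[Theorem 2.1]{Shapiro-1990}---applicable precisely because $\Phi^\prime_f$ need only be continuous and positively homogeneous, not linear---yields
\[
 r_n\bigl(\I(\hat\ell_1,\hat\ell_2)-\I(\ell_1,\ell_2)\bigr)=r_n\bigl(\Phi(f+g_n)-\Phi(f)\bigr)\rightsquigarrow\Phi^\prime_f(\mathbb{L})=2\Bigl(\int_0^1\mathbb{L}(t)\,\d t,\;\delta^\prime_{\ell_1-\ell_2}(\mathbb{L})\Bigr),
\]
which is the claimed limit $I$ in \eqref{I0-normalized}.

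I expect the main obstacle to be conceptual rather than computational: the difference $f=\ell_1-\ell_2$ generally vanishes on a set of positive Lebesgue measure (for instance whenever the curves coincide on a subinterval, as for Lorenz-ordered or extremal pairs), so on $\{f=0\}$ the functional $\delta$ is only directionally differentiable. This is exactly why the classical delta method fails and the extended version is indispensable, and it is the source of the non-Gaussian term $\int_{\{f=0\}}\lvert\mathbb{L}\rvert$ appearing in the second coordinate of the limit (so that asymptotic normality holds if and only if $\{f=0\}$ is Lebesgue-null). The care required is to verify that the tangency set $C([0,1])$ carries the limit law and that $\delta^\prime_f$ is evaluated along a random element of that set.
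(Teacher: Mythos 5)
Your proof is correct and follows the same strategy as the paper: joint weak convergence of the rescaled empirical Lorenz processes (Lemma \ref{Lemma-Lorenz-convergence}), Hadamard directional differentiability of the $L^1$-norm (Lemma \ref{Lemma-Hadamard}), and the extended functional delta method of \citet[Theorem 2.1]{Shapiro-1990}. The only real difference is the decomposition: the paper applies the delta method to the two-argument map $\phi(f,g)=2\left(\Vert f\Vert-\Vert g\Vert,\,\Vert f-g\Vert\right)$ at $(\ell_1,\ell_2)$, whose derivative is $2\left(\delta^\prime_{\ell_1}(h_1)-\delta^\prime_{\ell_2}(h_2),\,\delta^\prime_{\ell_1-\ell_2}(h_1-h_2)\right)$, whereas you first pass to the difference $\hat\ell_1-\hat\ell_2$ and exploit the non-negativity of (empirical) Lorenz curves to write the first coordinate as the bounded linear functional $2\int_0^1 h$. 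Your version is marginally cleaner on that coordinate: to reconcile the paper's derivative with the stated limit $2\int_0^1\mathbb{L}(t)\,\d t$ one still has to note that $\delta^\prime_{\ell_j}(\mathbb{L}_j)=\int_0^1\mathbb{L}_j$ (which holds because $\mathbb{L}_j$ vanishes on $\{\ell_j=0\}$), a step left implicit in the paper, while your linearization makes this identification automatic. Your closing remark that Lorenz-ordered pairs coincide on a subinterval is not accurate in general, but it plays no role in the argument.
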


The proof of Proposition \ref{Proposition-Asymptotics} (see the Appendix) relies on the joint convergence of the underlying Lorenz processes. Therefore, any sampling scheme ensuring this joint convergence is enough to derive the asymptotic distribution of the normalized estimator.

The following corollary provides necessary and sufficient conditions for the limit distribution in \eqref{I0-normalized} to be bivariate normal.

\begin{corollary}\label{Corollary-normal}
Under the conditions of Proposition \ref{Proposition-Asymptotics}, the following three assertions are equivalent:
\begin{enumerate}[(a)]
  \item The set $\{\ell_1=\ell_2\} = \{ x\in[0,1] : \ell_1(x)=\ell_2(x) \}$ has zero Lebesgue measure.

  \item The limit distribution in \eqref{I0-normalized} can be expressed as
  \begin{equation}\label{limit-distribution-2}
    I= 2\,\left( \int_{0}^{1}  \mathbb{L}(t)\,\d t, \,      \int_{0}^{1}  \mathbb{L}(t) \cdot \sgn(\ell_1(t)-\ell_2(t))  \,\d t   \right),
  \end{equation}
  where $\mathbb{L}$ is defined in \eqref{L}.
  \item The limit in \eqref{I0-normalized} has a centered bivariate normal distribution.
\end{enumerate}
\end{corollary}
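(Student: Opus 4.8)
The plan is to establish the cyclic chain of implications (a)$\Rightarrow$(b)$\Rightarrow$(c)$\Rightarrow$(a). The first implication is essentially a restatement of the ``in particular'' clause of Lemma~\ref{Lemma-Hadamard}: if the contact set $A=\{\ell_1=\ell_2\}$ has zero Lebesgue measure, then the first integral in the directional derivative \eqref{delta-derivative} (taken over the null set $A$) vanishes, so $\delta'_{\ell_1-\ell_2}$ reduces to the linear functional $g\mapsto\int_0^1 g(x)\,\sgn(\ell_1(x)-\ell_2(x))\,\d x$. Substituting this into the limit of Proposition~\ref{Proposition-Asymptotics} yields precisely the expression \eqref{limit-distribution-2}, which is (b).

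For (b)$\Rightarrow$(c), I would observe that both coordinates of the limit in \eqref{limit-distribution-2} are of the form $\int_0^1 \mathbb{L}(t)\,\varphi(t)\,\d t$ for a bounded deterministic weight $\varphi$ (namely $\varphi\equiv 1$ for the first coordinate and $\varphi=\sgn(\ell_1-\ell_2)$ for the second). Each such integral is a continuous linear functional of the Gaussian process $\mathbb{L}$ in \eqref{L}, and since a finite family of continuous linear images of a Gaussian element is jointly Gaussian, the pair is bivariate normal. As $\mathbb{L}$ is centered, both coordinates have zero mean, so the limit is \emph{centered} bivariate normal, which is (c).

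The crux is the remaining implication (c)$\Rightarrow$(a), which I would prove by contraposition: assuming $A$ has \emph{positive} measure, I will show that the limit fails to be centered, hence cannot be centered bivariate normal. The first coordinate $2\int_0^1\mathbb{L}(t)\,\d t$ always has zero mean because $\mathbb{L}$ is centered. For the second coordinate I would take expectations in \eqref{I0-normalized}, using $\E\mathbb{L}(x)=0$ to kill the linear part of \eqref{delta-derivative} and Tonelli's theorem (the integrand is nonnegative) to write
\begin{equation*}
 \E\big[\delta'_{\ell_1-\ell_2}(\mathbb{L})\big]=\int_{A}\E\,|\mathbb{L}(x)|\,\d x .
\end{equation*}
Since $\mathbb{L}(x)$ is Gaussian, $\E\,|\mathbb{L}(x)|=\sqrt{2/\pi}\,\sigma(x)$ with $\sigma^2(x)=\Var(\mathbb{L}(x))$, so the mean of the second coordinate is strictly positive (the factor $2$ being irrelevant for the sign) as soon as $\sigma(x)>0$ on a positive-measure subset of $A$. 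This contradicts the centering in (c) and closes the cycle.

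The main obstacle is therefore the non-degeneracy claim: I must verify that $\Var(\mathbb{L}(x))>0$ for almost every $x\in A$. Writing $\Var(\mathbb{L}(x))=(1-\lambda)\Var(\mathbb{L}_1(x))+\lambda\Var(\mathbb{L}_2(x))$ from the independence built into \eqref{L}, this reduces to showing that at least one of the Lorenz limit processes of \eqref{Lorenz-convergence} has strictly positive variance on $(0,1)$. Under Assumption~\ref{Assumption-regularity} the densities are strictly positive, so $\ell_j''>0$ a.e.\ on $(0,1)$, and I expect the covariance structure of the Gaussian process in \eqref{Lorenz-convergence} to be non-degenerate throughout $(0,1)$; since $A\cap(0,1)$ inherits the positive measure of $A$, the displayed mean is strictly positive. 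Treating the boundary weights $\lambda\in\{0,1\}$ separately, where a single surviving process governs the variance, completes this step.
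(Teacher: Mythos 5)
Your proposal is correct and follows essentially the same route as the paper: (a)$\Leftrightarrow$(b) via the derivative formula \eqref{delta-derivative}, (b)$\Rightarrow$(c) because linear functionals of the centered Gaussian process $\mathbb{L}$ are jointly centered Gaussian, and (c)$\Rightarrow$(a) by exploiting that $\int_{\{\ell_1=\ell_2\}}|\mathbb{L}|$ is a non-negative, non-degenerate contribution when the contact set has positive measure. Your explicit mean computation $\E\,\delta^\prime_{\ell_1-\ell_2}(\mathbb{L})=\int_{\{\ell_1=\ell_2\}}\E|\mathbb{L}(x)|\,\d x>0$ is in fact a cleaner finish than the paper's terse remark, and the non-degeneracy $\Var(\mathbb{L}(x))>0$ on $(0,1)$ that you flag as the remaining obstacle does hold under Assumption \ref{Assumption-regularity} (it reduces to $\ell_j''$ not vanishing a.e., which the strictly positive density guarantees on a set of positive measure), so this is not a genuine gap.
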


We observe that $\{ \ell_1= \ell_2 \}$ is the set of \emph{crossing points} of the two Lorenz curves. The case when this set has zero Lebesgue measure (Corollary \ref{Corollary-normal} (a)) is actually a reasonable assumption when we consider two different populations in practice; for instance, when comparing the anual household income of two different countries. In this scenario, the asymptotic distribution of the index $\I$ is normal, which simplifies implementing the usual inferential procedures (confidence intervals, hypothesis testing). Otherwise, if the $\{ \ell_1= \ell_2 \}$  does have positive Lebesgue measure, the limit distribution in \eqref{I0-normalized} could be complicated to handle. Further, as the derivative appearing in the limit is not linear, the corresponding map $\delta$ is not fully Hadamard differentiable and, consequently, the standard bootstrap scheme fails. \cite{Fang-Santos} propose several methodologies to correct the bootstrap scheme in this situation.

%Let us further assume that $\ell_1\ge \ell_2$, that is, $X_1\Lo X_2$, and $\{\ell_1=\ell_2\}$ has zero Lebesgue measure. In other words, $\ell_1(t)>\ell_2(t)$ a.e. $t\in[0,1]$ ($X_1$ is \emph{strictly} less that $X_2$ in the Lorenz ordering). In this situation, by Corollary \ref{Corollary-normal}, we have that the limit distribution in \eqref{limit-distribution-2} is
%\begin{equation*}
%  I_0=2\, \left(   \int_{0}^{1}  \mathbb{L}(t)\,\d t, \,    \int_{0}^{1}  \mathbb{L}(t)\,\d t      \right),
%\end{equation*}
%supported on the diagonal $\{(x,x): x \in \R \}$. Conversely, if $\ell_1(t)\le \ell_2(t)$ a.e. $t\in[0,1]$, i.e., $X_2<_{\text{L}}X_1$, then the limit distribution in \eqref{limit-distribution-2} is
%\begin{equation*}
%  I_0=2\, \left(   \int_{0}^{1}  \mathbb{L}(t)\,\d t, \,  -  \int_{0}^{1}  \mathbb{L}(t)\,\d t      \right),
%\end{equation*}
%concentrated on $\{(x,-x): x\in\R \}$.

\subsection{Asymptotic distributions of the estimators of $\I_*$ and $\I^*$}

Once the asymptotic distribution of the estimator of $\I$ has been established in the previous section, the corresponding distributions for the indices $\I_*$ and $\I^*$ can be derived thanks to the (traditional) delta method; see for instance \citet[Theorem 3.1]{van der Vaart}.

In the case of the index $\I_*$ in \eqref{I*}, we have that $\I_*(\ell_1,\ell_2) = t_* (\I_0  (\ell_1,\ell_2) ) $, where $t_*$ in \eqref{t*} is (by construction) a smooth map. Hence, we can state the following result.

\begin{proposition}\label{Proposition-I*}
Let $\I_*$ be the index defined in \eqref{I*}. In the conditions of Proposition \ref{Proposition-Asymptotics} we have that
\begin{equation}\label{asymptotic-I*}
 \sqrt{\frac{n_1 n_2}{n_1+n_2}} \left( \I_* (\hat \ell_1,\hat \ell_2) -   \I_* ( \ell_1, \ell_2)        \right)  \rightsquigarrow (t_*)^\prime_{\I ( \ell_1, \ell_2) } (I),
\end{equation}
where $I$ is the limit distribution in \eqref{I0-normalized}, $t_*=(t_{*1},t_{*2})$ is in \eqref{t*} and
\begin{equation*}
(t_*)^\prime_{( x, y ) } =
\begin{pmatrix}
\frac{\partial t_{*1}}{\partial x } ( x , y ) & \frac{\partial t_{*1}}{\partial y } ( x, y ) \\
\frac{\partial t_{*2}}{\partial x } (x, y  ) & \frac{\partial t_{*2}}{\partial y } (x, y )
\end{pmatrix}
=\begin{pmatrix}
1 &   0 \\
\frac{\partial t_{*2}}{\partial x }( x, y)  & \frac{1-|x|}{M^*(x)-|x|}
\end{pmatrix}.
\end{equation*}
\end{proposition}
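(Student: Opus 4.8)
The plan is to write $\I_*$ as the composition $\I_* = t_* \circ \I$ and to obtain \eqref{asymptotic-I*} from Proposition \ref{Proposition-Asymptotics} by a single application of the (finite-dimensional) delta method. By \eqref{I*}, $\I_*(\hat\ell_1,\hat\ell_2) = t_*(\I(\hat\ell_1,\hat\ell_2))$ and $\I_*(\ell_1,\ell_2) = t_*(\I(\ell_1,\ell_2))$, so the left-hand side of \eqref{asymptotic-I*} equals $\sqrt{n_1 n_2/(n_1+n_2)}\,\bigl(t_*(\I(\hat\ell_1,\hat\ell_2)) - t_*(\I(\ell_1,\ell_2))\bigr)$. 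Since Proposition \ref{Proposition-Asymptotics} already establishes that $\sqrt{n_1 n_2/(n_1+n_2)}\bigl(\I(\hat\ell_1,\hat\ell_2) - \I(\ell_1,\ell_2)\bigr) \rightsquigarrow I$ as an $\R^2$-valued weak limit, the whole problem reduces to transporting this limit through the fixed map $t_*$; because the inner limit is already finite-dimensional, the relevant tool is the ordinary multivariate delta method rather than its functional version.

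First I would verify that $t_*$ is continuously differentiable at $\theta = \I(\ell_1,\ell_2)$ and compute its Jacobian. Using that $M^*$ in \eqref{M*} is even, smooth, and satisfies $M^*(x) > |x|$ for $x \in (-1,1)$, the denominator $M^*(x)-|x|$ in \eqref{t*} is strictly positive on the relevant range, so $t_*$ is well defined and smooth at every $\theta$ whose first coordinate $x$ obeys $0 < |x| < 1$. Differentiating \eqref{t*} directly then yields the matrix in the statement: the first row is $(1,0)$ because $t_{*1}(x,y)=x$, the entry $\partial t_{*2}/\partial y$ equals $(1-|x|)/(M^*(x)-|x|)$, and $\partial t_{*2}/\partial x$ is the (routine though lengthy) partial derivative of the second coordinate, which I would leave in the symbolic form displayed.

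With $t_*$ of class $C^1$ near $\theta$, I would then invoke \citet[Theorem 3.1]{van der Vaart} to conclude that $\sqrt{n_1 n_2/(n_1+n_2)}\bigl(\I_*(\hat\ell_1,\hat\ell_2) - \I_*(\ell_1,\ell_2)\bigr) \rightsquigarrow (t_*)'_\theta(I)$, where $(t_*)'_\theta$ is the linear map represented by the Jacobian above; this is precisely \eqref{asymptotic-I*}. Everything up to this point is mechanical once differentiability is secured.

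The hard part will be the boundary behaviour of $t_*$. The $|x|$ terms in \eqref{t*} make $t_{*2}$ only one-sidedly differentiable in $x$ at $x=0$ (the left and right $x$-derivatives of the second coordinate disagree), so $t_*$ is not differentiable there in the classical sense and \citet[Theorem 3.1]{van der Vaart} does not literally apply. Because $x=0$ corresponds exactly to $G(\ell_1)=G(\ell_2)$, I would handle this either by stating the clean matrix form under the generic hypothesis $G(\ell_1)\neq G(\ell_2)$, or, to cover $x=0$, by replacing the classical delta method with the extended (directional) delta method already used for $\I$ (\citet[Theorem 2.1]{Shapiro-1990}), so that the limit is $(t_*)'_\theta(I)$ with $(t_*)'_\theta$ read as the positively homogeneous Hadamard directional derivative. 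The remaining degenerate case $|x|=1$, where the denominator $M^*(x)-|x|$ vanishes, corresponds only to the extreme pair $(\ell_{\rm pe},\ell_{\rm pi})$ and can simply be excluded.
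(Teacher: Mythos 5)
Your proposal follows essentially the same route as the paper: write $\I_* = t_*\circ\I$ and push the limit $I$ from Proposition \ref{Proposition-Asymptotics} through $t_*$ via the finite-dimensional delta method of \citet[Theorem 3.1]{van der Vaart}; the paper's own justification is exactly this one-line composition argument, asserting that $t_*$ is ``by construction'' smooth. Your extra caveat is in fact a point the paper glosses over: because of the $|x|$ terms, $t_{*2}$ is even in $x$ with a generically non-vanishing one-sided derivative at $x=0$, hence not classically differentiable there, so when $G(\ell_1)=G(\ell_2)$ one must either assume $G(\ell_1)\ne G(\ell_2)$ or invoke the directional delta method of \citet[Theorem 2.1]{Shapiro-1990} and read $(t_*)'_{\I(\ell_1,\ell_2)}$ as a positively homogeneous directional derivative --- a refinement of, not a gap in, your argument.
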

The expression of $\frac{\partial t_{*2}}{\partial x }( x, y)$ can be easily computed, but it is too long to be included here.

We observe that the evaluation of the derivative of $t_*$ in \eqref{asymptotic-I*} is understood as a product of matrices. At least theoretically, Proposition \ref{Proposition-I*} provides the asymptotic distribution of $\hat\I_*$. Nevertheless, we point out that even when the distribution of $I$ is bivariate normal (see Corollary \ref{Corollary-normal}), the second component of the asymptotic distribution in \eqref{asymptotic-I*} could be complicated as it is expressed as a non-linear transformation of $I$.

The asymptotic distribution of the estimator of $\I^*$ in \eqref{I} can be computed by following the same steps as in the proof of Proposition \ref{Proposition-Asymptotics}. First, we consider the map $\psi : C([0,1])\times C([0,1]) \to\R^3$ given by
\begin{equation*}%\label{psi}
  \psi(f,g)=(   1-2\Vert f\Vert, 1-2\Vert g\Vert, 2\Vert f-g \Vert ),\quad  f,g\in C([0,1]).
\end{equation*}
Let $\ell_1,\ell_2\in\mathcal{L}$. We observe that
\begin{equation}\label{I-psi}
  \psi(\ell_1,\ell_2)=(G(\ell_1),G(\ell_2),d_{\rm L}(\ell_1,\ell_2))\quad \text{and}\quad \I^*(\ell_1,\ell_2)=t^*(\psi(\ell_1,\ell_2)),
\end{equation}
where $t^*$ is in \eqref{t}. Again, it can be checked that $\psi$ is Hadamard directionally differentiable at $(\ell_1,\ell_2)$ with derivative given by
\begin{equation}\label{psi-derivative}
\psi^\prime_{(\ell_1,\ell_2)}(h_1,h_2)=2\left(    -\delta^\prime_{\ell_1}(h_1), -  \delta^\prime_{\ell_2}(h_2),\, \delta^\prime_{\ell_1-\ell_2}(h_1-h_2)            \right),\quad \text{for } h_1,h_2\in C([0,1]),
\end{equation}
where $\delta^\prime$ is defined in \eqref{delta-derivative}. Therefore, from \eqref{I-psi} and by the chain rule, we obtain the following result.

\begin{proposition}
Let $\I^*$ be the index defined in \eqref{I}. In the conditions of Proposition \ref{Proposition-Asymptotics} we have that
\begin{equation*}%\label{asymptotic-I}
 \sqrt{\frac{n_1 n_2}{n_1+n_2}} \left( \I^* (\hat \ell_1,\hat \ell_2) -   \I^* ( \ell_1, \ell_2)        \right)  \rightsquigarrow (t^*)^\prime_{\psi(\ell_1,\ell_2) } \left( \psi^\prime_{(\ell_1,\ell_2)}  \left(     \sqrt{1-\lambda}\,\mathbb{L}_1,\sqrt{\lambda}\,  \mathbb{L}_2     \right)     \right),
\end{equation*}
where $ \psi^\prime$ is in \eqref{psi-derivative}, $t^*=(t_1^*,t_2^*)$ is in \eqref{t} and
\begin{equation*}
t^\prime_{( x, y, z ) } =
\begin{pmatrix}
\frac{\partial t_1}{\partial x } ( x , y ,z) & \frac{\partial t_1}{\partial y } ( x, y ,z) & \frac{\partial t_1}{\partial z } ( x, y , z) \\
\frac{\partial t_2}{\partial x } (x, y ,z ) & \frac{\partial t_2}{\partial y } (x, y , z) & \frac{\partial t_2}{\partial z } (x, y , z)
\end{pmatrix}
=\begin{pmatrix}
-1 & 1  &  0 \\
\frac{\partial t_2}{\partial x }( x, y ,z )&  \frac{\partial t_2}{\partial y }( x, y ,z ) & \frac{1-|y-x|}{M(x,y)-|y-x|}
\end{pmatrix}.
\end{equation*}
\end{proposition}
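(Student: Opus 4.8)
The plan is to write $\I^*$ as the composition $t^*\circ\psi$ exhibited in \eqref{I-psi} and to transport the joint weak limit of the two empirical Lorenz processes through this composition by means of the extended functional delta method together with the chain rule for Hadamard directionally differentiable maps. Thus the argument mirrors the proof of Proposition \ref{Proposition-Asymptotics}, with one additional (outer) finite-dimensional map applied at the end.

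First I would record the joint convergence of the normalized Lorenz processes. By Lemma \ref{Lemma-Lorenz-convergence}, for $j=1,2$ we have $\sqrt{n_j}(\hat\ell_j-\ell_j)\rightsquigarrow\mathbb{L}_j$ in $C([0,1])$, and the two limits are independent because the samples are. Under Assumption \ref{Assumption-sizes} the ratios satisfy $n_2/(n_1+n_2)\to 1-\lambda$ and $n_1/(n_1+n_2)\to\lambda$, whence, writing $\sqrt{n_1 n_2/(n_1+n_2)}=\sqrt{n_2/(n_1+n_2)}\,\sqrt{n_1}=\sqrt{n_1/(n_1+n_2)}\,\sqrt{n_2}$, one obtains
\[
\sqrt{\frac{n_1 n_2}{n_1+n_2}}\bigl(\hat\ell_1-\ell_1,\hat\ell_2-\ell_2\bigr)\rightsquigarrow\bigl(\sqrt{1-\lambda}\,\mathbb{L}_1,\sqrt{\lambda}\,\mathbb{L}_2\bigr)
\]
in $C([0,1])\times C([0,1])$, exactly as in Proposition \ref{Proposition-Asymptotics}.

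Next I would apply the extended functional delta method (\citet[Theorem 2.1]{Shapiro-1990}) to $\psi$. Its Hadamard directional differentiability at $(\ell_1,\ell_2)$, with derivative \eqref{psi-derivative}, follows from Lemma \ref{Lemma-Hadamard} applied to each component $f\mapsto 1-2\Vert f\Vert$, $g\mapsto 1-2\Vert g\Vert$ and $(f,g)\mapsto 2\Vert f-g\Vert$, using that Hadamard directional differentiability is stable under affine precomposition and finite Cartesian products. This gives the weak limit $\psi^\prime_{(\ell_1,\ell_2)}(\sqrt{1-\lambda}\,\mathbb{L}_1,\sqrt{\lambda}\,\mathbb{L}_2)$ for the normalized increments of $\psi(\hat\ell_1,\hat\ell_2)$. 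Since $\I^*=t^*\circ\psi$ by \eqref{I-psi} and $\psi(\hat\ell_1,\hat\ell_2)$ already lives in $\R^3$, the chain rule for Hadamard directional derivatives---namely $(\I^*)^\prime_{(\ell_1,\ell_2)}=(t^*)^\prime_{\psi(\ell_1,\ell_2)}\circ\psi^\prime_{(\ell_1,\ell_2)}$---combined with a final application of the delta method to the finite-dimensional map $t^*$ yields the asserted limit. The Jacobian $(t^*)^\prime_{(x,y,z)}$ is then read off by differentiating \eqref{t}: the first component $y-x$ contributes the row $(-1,1,0)$, and the $z$-derivative of the second component is $\frac{1-|y-x|}{M(x,y)-|y-x|}$.

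The main obstacle is that neither the inner nor the outer derivative is globally linear or smooth, so the classical (linear) delta method does not apply directly. On the one hand $\psi^\prime$ fails to be linear precisely when the crossing set $\{\ell_1=\ell_2\}$ has positive Lebesgue measure, since then $\delta^\prime_{\ell_1-\ell_2}$ in \eqref{delta-derivative} is merely positively homogeneous; on the other hand $t^*$ is non-differentiable along the diagonal $\{x=y\}$ (that is, when $G(\ell_1)=G(\ell_2)$) because of the $|y-x|$ term in \eqref{t}. Handling both simultaneously requires the chain rule in the directional setting and a check that $t^*$ is Hadamard directionally differentiable at $\psi(\ell_1,\ell_2)$ tangentially to the relevant cone; away from these degenerate configurations both derivatives are genuinely linear and the Jacobian representation in the statement is exact.
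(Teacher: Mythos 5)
Your proposal is correct and follows essentially the same route as the paper: the decomposition $\I^*=t^*\circ\psi$ from \eqref{I-psi}, the joint convergence of the normalized Lorenz processes, the Hadamard directional differentiability of $\psi$ with derivative \eqref{psi-derivative}, and the chain rule combined with the extended functional delta method. Your additional caveat about the non-smoothness of $t^*$ along $\{x=y\}$ is a sensible observation that the paper itself leaves implicit, but it does not change the argument.
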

The expressions of $\frac{\partial t_2}{\partial x }( x, y ,z )$ and $\frac{\partial t_2}{\partial y }( x, y ,z )$ can be easily computed, but they are too long to be included here.

\subsection{Simulation study}

We illustrate here some of the previous asymptotic results through a small simulation study. We focus on the index $\I$ in \eqref{I0} since the other proposals, $\I_*$ in \eqref{I*} and $\I^*$ in \eqref{I}, are smooth transformations of $\I$.

To carry out the simulations there are two options to generate the data: to consider parametric families of Lorenz curves (see for instance \cite{Sarabia-2008}) or to use parametric probability density functions. We have chosen to simulate the data from probability densities that are used in practice to model income distributions. Many families of probability distributions have two parameters that represent changes in location and scale; Weibull or Lognormal distributions are examples of such type of families. When this happens, in the Lorenz curve \eqref{Lorenz-curve} the location parameter disappears (because of the normalization) and only the scale parameter remains. Typically, by moving this dispersion parameter, distributions of the same family are ordered in the Lorenz sense. Therefore, the value of the index between two variables of the same (two-parameter) family is usually located on one of the 2 diagonals of the region $\Delta$ in \eqref{Delta-estrellita}. Consequently, to look for examples whose index lies within the triangle we resort to distributions with more than 2 parameters.%, which is relatively frequent in economics since it can be interesting to model bimodal populations.

Here we propose to use generalized beta distributions of the second kind (GB2). This family has been previously considered as a model for the distribution of income; see \cite{Chotikapanich-et-al-2018} and \cite{McDonald-Ransom-2008}. The probability density of the GB2 distribution depends on 4 positive parameters: $ a, b, p, q $ and is given by
\begin{equation}\label{eq.densidad.GB2}
f(x|a,b,p,q) = \frac{a x^{ap-1}}{b^{ap} \text{B}(p,q) \left(  1+\left( {x}/{b}\right)^{a}    \right)^{p+q}},\quad x>0,
\end{equation}
where $ \text{B}(\cdot,\cdot)$ is the Euler beta function. We will denote $X\sim\text{GB2}(a,b,p,q)$ a random variable with this density.

From the expression of the density of $X\sim \text{GB2}(a,b,p,q)$ in \eqref{eq.densidad.GB2}, we see that
$ f(x | a,b,p,q)$ behaves as ${1}/{x^{aq+1}}$, as $x\to\infty$. In particular, if $\alpha>0$, we have that $\E X^\alpha<\infty$ if and only if $aq>\alpha$. Consequently, GB2 variables are integrable whenever $aq>1$, and, in such a case, we can compute their Lorenz curves. Further, to apply Proposition \ref{Proposition-Asymptotics} or Corollary \ref{Corollary-normal} it is sufficient that $aq>2$.

%Observe that a GB2 distribution is a transformed beta distribution. Let $Y$ be a variable with beta distribution of parameters $p,q>0$, denoted $Y\sim\text{B}(p,q)$, that is, $Y$ has density
%$$g(y)=\frac{1}{\text{B}(p,q)}y^{p-1}(1-y)^{q-1},\quad 0<y<1.$$
%It is readily checked that if $Y\sim\text{B}(p,q)$, then
%$$h(Y)\sim \text{GB2}(a,b,p,q),\quad \text{where}\quad h(y)=b\left( \frac{y}{1-y} \right)^{1/a}.$$
%In other words, if $ X\sim\text{GB2}(a,b,p,q)$, then
%\begin{equation}\label{eq.GB2-beta}
%Y =\frac{(X/b)^a}{1+(X/b)^a}\sim \text{B}(p,q).
%\end{equation}
%This relationship allows us to simulate GB2 variables from beta distributions.

%As $b$ is essentially a scale parameter,
It can be checked that the Lorenz curve of $X\sim \text{GB2}(a,b,p,q)$ is given by
\begin{equation}\label{eq.GB2.Lorenz}
\ell(t) = \beta(  \beta^{-1} (t| p,q ) | p+1/a,q-1/a    ),\quad 0<t<1,
\end{equation}
where
$$\beta(x|p,q)=\frac{1}{\text{B}(p,q)}\int_0^x t^{p-1}(1-x)^{q-1}\, \d t,\quad 0<x<1,$$
is the incomplete beta function.%, that is, the distribution function of the variable $Y\sim\text{B}(p,q)$.

The Lorenz curve of $X\sim \text{GB2}(a,b,p,q)$ in \eqref{eq.GB2.Lorenz} does \textit{not} depend on $b$ as it is essentially a scale parameter. However, it is not convenient to select $b=1$ for the considered examples, as the mean of $X$ is
$$\E X = b  \frac{\text{B}(p+1/a,q-1/a)}{\text{B}(p,q)},$$
which also depends on the rest of the parameters. Therefore, to compare better the densities of the considered models in all the GB2 distributions we fix the value
$$ b = \frac{\text{B}(p,q)}{\text{B}(p+1/a,q-1/a)}.$$
In this way, the variables always have expectation 1.

We have considered 5 models of pairs of GB2 variables to reflect a wide range of possible situations regarding the value of the index $\I$ in \eqref{I0}.
\medskip

\textbf{Model 1:} For $i=1,2$, we consider $X_i\sim\text{GB2}(a_i,b_i,p_i,q_i)$ with
\begin{align*}
  a_1&=1.45, & b_1&=1.01,  & p_1& = 0.75, & q_1 & = 1.45, \\
  a_2&=10, &  b_2&=4.86, & p_2& = 0.035, & q_2 & = 7.
\end{align*}
In this example, we have that
$$\quad G(X_1)\approx 0.5886,\quad G(X_2)\approx 0.5923,\quad d_{\rm L}(X_1,X_2) \approx 0.0858.$$
Therefore, $\I(X_1,X_2)\approx (-0.0037 ,  0.0858 )$. Hence, the value of the index lies almost on the vertical line of equality of Gini indexes and the distance between the distribution is intermediate. %Note that the maximum distance between distribution with equal Gini index is $0.343146$ and this value is achieved by extremal distributions with Lorenz curves $\ell_a^+$ y $\ell_a^-$, with $a=2-\sqrt{2}$.
We also note that $a_iq_i>2$ and this means that the integrability condition given in \eqref{Delta-21} is satisfied. In particular, we can apply Corollary \ref{Corollary-normal} to obtain that the (normalized) plug-in estimator of the index is asymptotically normal.
\medskip

\textbf{Model 2:} For $i=1,2$, we consider $X_i\sim\text{GB2}(a_i,b_i,p_i,q_i)$ with
\begin{align*}
  a_1&=1, & b_1& =1.375&  p_1& = 0.8, & q_1 & = 2.1, \\
  a_2&=2.6, &  b_2&= 0.469 & p_2& = 3, & q_2 & = 1.
\end{align*}
We can compute
$$ G(X_1)\approx 0.6828  ,\quad G(X_2)\approx 0.3318,\quad d_{\rm L}(X_1,X_2) \approx 0.3510.$$
Therefore, $\I(X_1,X_2)\approx ( -0.3510, 0.3510  )$. That is, the variables are ordered with respect to the Lorenz dominance. Further, again $a_iq_i>2$ and the integrability condition in \eqref{Delta-21} holds. By Corollary \ref{Corollary-normal} we see that the (normalized) plug-in estimator of the index is a asymptotically normal and the limit distribution is concentrated on the diagonal $L_2$.

\medskip

\textbf{Model 3:} For $i=1,2$, we consider $X_i\sim\text{GB2}(a_i,b_i,p_i,q_i)$ with
\begin{align*}
  a_1&=4, & b_1& =0.7&  p_1& = 0.8, & q_1 & = 0.6, \\
  a_2&=3, &  b_2&= 1.38 & p_2& = 0.5, & q_2 & = 1.3.
\end{align*}
We obtain that
$$G(X_1)\approx 0.3547,\quad G(X_2)\approx 0.3723,\quad d_{\rm L}(X_1,X_2) \approx 0.041.$$
Then, $\I(X_1,X_2)\approx ( -0.0176 , 0.0414)$. The two variables have a similar Gini index and small distance between them. The parameters satisfy $a_iq_i>2$ and we can use  Corollary \ref{Corollary-normal} to conclude that the (normalized) plug-in estimator of the index is asymptotically normal.

\medskip

\textbf{Model 4:} For $i=1,2$, we consider $X_i\sim\text{GB2}(a_i,b_i,p_i,q_i)$ with
\begin{align*}
  a_1&=2, & b_1& =0.072&  p_1& = 5, & q_1 & = 0.6, \\
  a_2&=2, &  b_2&= 13.18 & p_2& = 0.1, & q_2 & = 5.
\end{align*}
We have that
$$G(X_1)\approx 0.7546,\quad G(X_2)\approx 0.7553,\quad d_{\rm L}(X_1,X_2) \approx 0.1369.$$
We hence obtain that, $\I(X_1,X_2)\approx ( -0.0007, 0.1369  )$. We see that the variables have similar Gini index and a high distance between them. In this case, $a_1q_1=1.2$ and we cannot apply Proposition \ref{Proposition-Asymptotics} or Corollary \ref{Corollary-normal} because $X_1$ does not satisfy the integrability condition \eqref{Delta-21}. Only the consistency of the estimator is guaranteed by Proposition \ref{Proposition.Strong.Consistency}.

\medskip

\textbf{Model 5:} In this example $X_1=X_2$ (in distribution). We consider $X_i\sim\text{GB2}(a_i,b_i,p_i,q_i)$ with
\begin{equation*}
  a_1=a_2 = 2,\quad b_1=b_2 =\frac{4}{\pi},\quad  p_1=p_2 = 1,\quad  q_1  = q_2=2.
\end{equation*}
The index takes the value $(0,0)$ and as $a_iq_i=4$, we can apply Proposition \ref{Proposition.Strong.Consistency} to conclude that the (normalized) estimator of the index converges in distribution to a non-Gaussian random vector.

\medskip

Additional details of these simulations can be found in the Supplementary Material.
The main conclusions are the following: In Models 1--3, the asymptotic distribution of the estimator of the index is normally distributed. However, if the corresponding Lorenz curves are close to each other, then larger sample sizes are needed to observe a Gaussian distribution. This is reasonable because when the variables coincide (Model 5), the limit distribution is not normal (it has a second positive component). In the case of Model 4 (there is no convergence), we observe that we can estimate the index reasonably well as the estimator is consistent  (see Proposition \ref{Proposition.Strong.Consistency}).

%%%%%%%%%%%%%%%%%%%%%%%%%%%%%%%%%%%%%%%%%%%%%%%%%%%%%%%%%%%%%%%%%%%%%%%%%%%%%%%%%%%
% SECTION: APPLICATION TO INCOME REAL DATA
%%%%%%%%%%%%%%%%%%%%%%%%%%%%%%%%%%%%%%%%%%%%%%%%%%%%%%%%%%%%%%%%%%%%%%%%%%%%%%%%%%%

\section{Application to EU-SILC income data} \label{Section.RealData}

EU-SILC (European Union Statistics on Income and Living Conditions) is the reference source for comparable longitudinal and cross-sectional microdata on income, living conditions, poverty and social inclusion in Europe.
As part of its objective of monitoring poverty and social inclusion in the EU, the EU-SILC project releases  statistics and reports on income and living conditions, for instance, indicators on the distribution of income.
The microdata are separately provided to EU-SILC by each country participating in the project, as collected by the administrative organism in charge of compiling the official statistics of that state.
% https://ec.europa.eu/eurostat/web/microdata/european-union-statistics-on-income-and-living-conditions
In this section we compute the plug-in estimator (\ref{estimators-indices}) of the bidimensional inequality index introduced in (\ref{I0}) for cross-sectional income microdata (at the household level) obtained from EU-SILC collection.

The random variable \(X\) under consideration is the {\em equivalised disposable income}, the total disposable income of a private household divided by the equivalised household size. The total disposable income represents the total income of a household which is available for saving or spending.
% It incorporates the net interest and dividends received and the payment of taxes and social contributions.
The equivalised household size is the number of household members converted into ``equivalised'' adults by the modified OECD (Organisation for Economic Co-operation and Development) equivalence scale: the first household member aged 14 years or more counts as 1 person, each other household member aged 14 years or more counts as 0.5 person, each household member aged 13 years or less counts as 0.3 person.
The equivalised disposable income is one of the variables describing income at the household level which is used by Eurostat (the statistical office of the EU) to compute Gini coefficients and other inequality indicators.
%(see \url{https://ec.europa.eu/eurostat/statistics-explained})
%In all cases, we have taken into account the design weight assigned by the survey to each household.

We use the bidimensional inequality index $\mathcal{I}$ in (\ref{I0}) to compare the empirical Lorenz curves derived from the equivalised disposable income of two populations, \(X_1\) and \(X_2\). These are generated in two different ways:
\begin{enumerate}
\item[(i)] {\bf Cross-temporal comparisons within countries:}
Fix a \textit{reference year} and country. Analyze the evolution of the index in that country (relative to the initial year) over a period of time.
%The same country in two different years (one of them anchored in a reference year marking the start of a period).
This gives insight into the evolution of income inequality in a specific country along the period under study.
\item[(ii)] {\bf Longitudinal comparisons between two countries:} Two different countries in the same year (and letting the year evolve along an available  period). This allows comparing the relative evolution of inequality in the two countries.
\end{enumerate}
EU-SILC offers microdata from several European countries on a span of more than a decade. Due to obvious proximity reasons, as an example of (i), we have chosen to focus on the evolution of Spain along the 2008--2019 period. Even though there are income microdata from Spain available for the years previous to 2008, the Spanish Instituto Nacional de Estad\'{\i}stica (INE), in charge of the official statistics in Spain and actual source of the Spanish EU-SILC microdata, clearly states that income data before 2008 are not comparable to those after 2008, owing to a methodological change in the data collection process. Regarding the case (ii), we illustrate the comparison between two countries with various examples: we first consider two countries which are almost ordered with respect to the Lorenz dominance; this is the case of Finland and Greece. The last two examples correspond to countries which are not ordered, i.e., their Lorenz curves cross each other. Examples of this situation are Spain vs.\ Portugal and France vs.\ Germany.

\subsection{Yearly evolution of inequality in Spain with respect to 2008} \label{Subsection.RealData.Spain}

Year 2008 was the onset of a severe Spanish financial and economic crisis, officially ending in 2014. It was triggered by the world financial crisis of 2007--08, but one of its main causes was the heavy dependency of the Spanish economy and labour market on low-productivity activities such as construction and services (see, e.g., \citet[Chapter 4]{Royo-2020}). The existence of a housing bubble and a record level of family indebtedness had a snowball effect. There was a first recession period between 2008 and 2010 and a second one between 2011 and 2013. In 2012, Spain had to apply for a 100 billion rescue package provided by the European Stability Mechanism. Due to the resulting steep rise of unemployment rate, in 2013 more than half a million immigrants returned from Spain to their countries of origin. Figure~\ref{fi:Spain0819IIncomeGini} plots the evolution of the mean equivalised disposable income and the Gini index from 2008 to 2019 and clearly reflects this abrupt crisis. Figure~\ref{fi:Spain0819IIncomeGini} also shows the hard climb towards a recovery of the pre-crisis level, which has taken more than 9 years (\citet{Economist-2018}, \citet{IMF-2017}) and has been abruptly ended by the COVID-19 pandemic (\citet{IMF-2020}). Although in 2017 Spanish GDP went beyond its pre-crisis peak of 2007 and many indicators reflected the impressive recovery, it was generally agreed that the country was more unequal than in 2008 (\citet{Economist-2018}, \citet{IMF-2017}). Thus, it is interesting to analyze the evolution of inequality in Spanish society from 2008 onwards (see \citet{Blavier-2017}), in particular to compare the distribution of income between 2008 (held fixed) and the following years. Microdata from INE and EU-SILC cover up to year 2019 (included). Income data from 2020 are not yet available, but they will undoubtedly reflect the severe economic contraction induced by the pandemic and an increase of socio-economic disparities in the population.

\begin{figure}[h]
\begin{center}
\includegraphics[width=11cm]{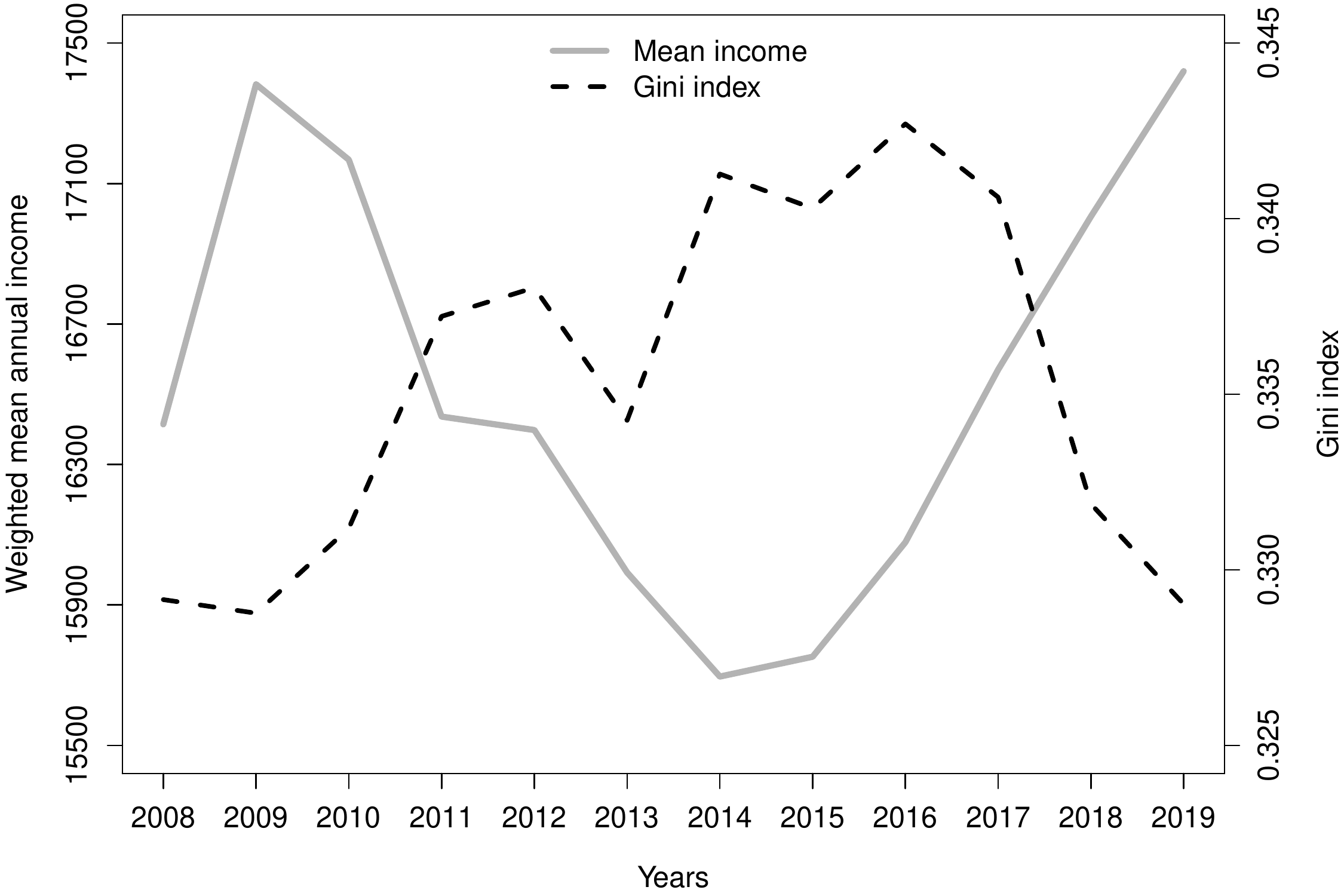}
\end{center}
\caption{The evolution of the mean income and the Gini coefficient in Spain between the years 2008 and 2019.}
\label{fi:Spain0819IIncomeGini}
\end{figure}

To this end, we have computed the estimation \(\hat{\mathcal I}\) of the bidimensional inequality index (\ref{I0}) and one of its normalized versions \(\hat{\mathcal I}_*\) for the equivalised disposable income in Spain in 2008 (\(X_1\)) and in any of the years in the span 2009--2019 (\(X_2\)). Observe that the index \(\hat{\mathcal I}_*\) separates the points more than \(\hat{\mathcal I}\), especially those with similar Gini coefficient (near the vertical axis). The resulting indices (see Figure~\ref{fi:Spain0819I}) show the devastating effects of the crisis on the distribution of income. From 2011 to 2017 the Lorenz curves of the corresponding years were either on the frontier $L_1$ (years 2011 and 2016) or very near it, meaning the curves were strictly ordered $\ell_1\geq\ell_2$ (or almost so) and income was distributed more equitably in 2008 than in 2011 or 2016. The curve $\ell_2$ for the rest of the years from 2011 to 2017 is below $\ell_1$ except for a rightmost interval contained in [0.8,1] where $\ell_1(t)<\ell_2(t)$ (see the plots of all the Lorenz curves and their scaled differences with respect to that of 2008 in the Supplementary Material).
Income distribution in 2008 is therefore {\em almost} more equitable than that of the years 2012, 2013\ldots (in the sense defined by \cite{Zheng2018}).
This would support the generalized social perception that the 2008 crisis in Spain stroke not only the lowest income class but also the middle class (see \cite{Alonso-etal-2017}), broadening the gap between both groups and the richest (last income decile).
% and merging a proportion of the middle class into the low income one
The bidimensional indices corresponding to years 2018 and 2019 are remarkably near the index of 2009, meaning that income distribution was slowly approaching the pre-crisis level. Note that, although the Gini indices are nearly the same in 2008 and 2019, the Lorenz curves are still not coincident as the value of the index (between 2008 and 2019) does not lie in $(0,0)$.
In 2019, the poorest half of society has a lower proportion of the total income than in 2008 (see the Supplementary Material).

% \begin{figure}[h]
% \begin{center}
% \includegraphics[width=11cm]{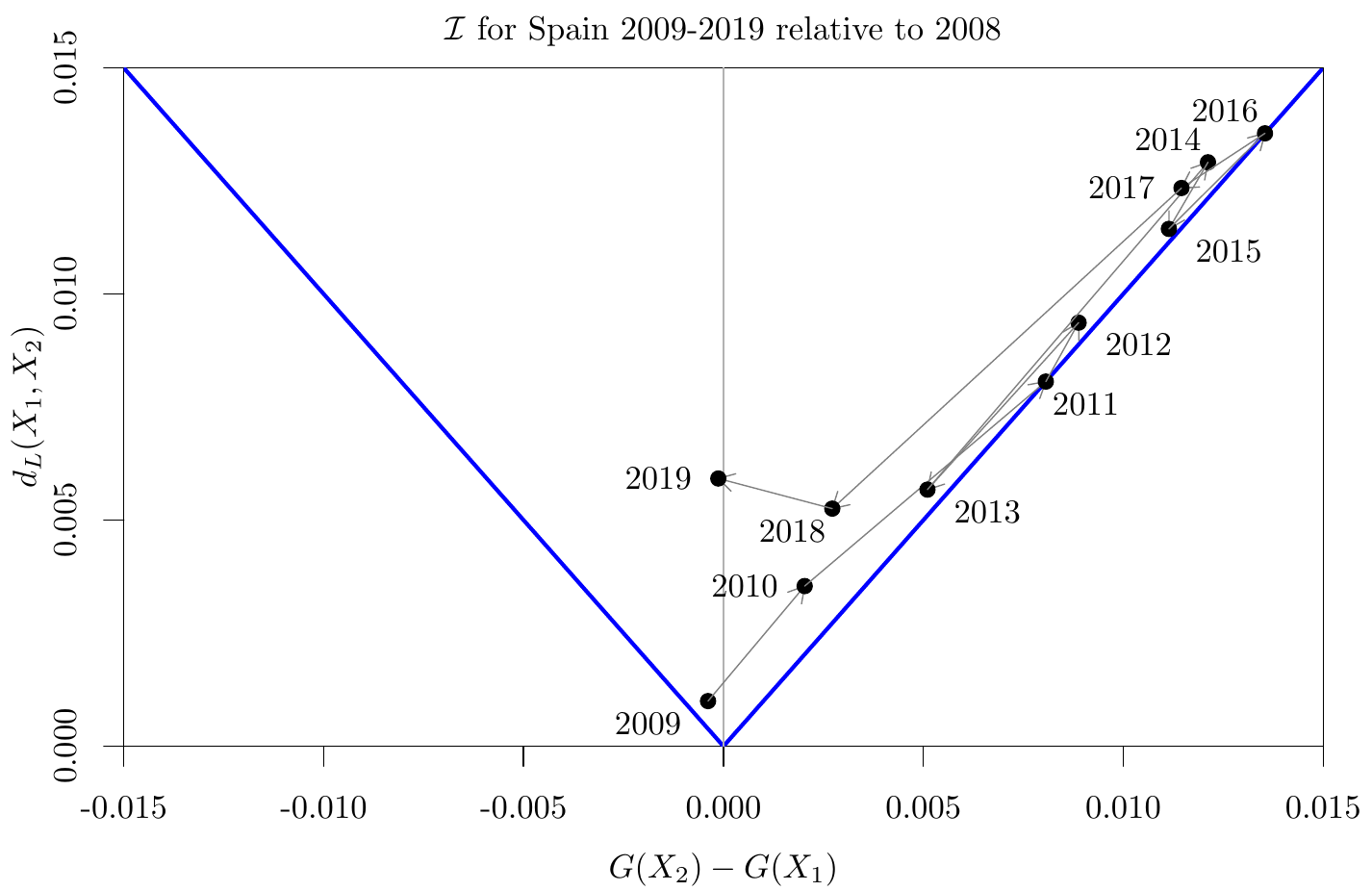}
% \end{center}
% \caption{The bidimensional inequality index \(\mathcal I\) based on Spanish income data for the period 2009--2019 compared to 2008.}
% \label{fi:Spain0819I}
% \end{figure}
%
% \begin{figure}[h]
% \begin{center}
% \includegraphics[width=11cm]{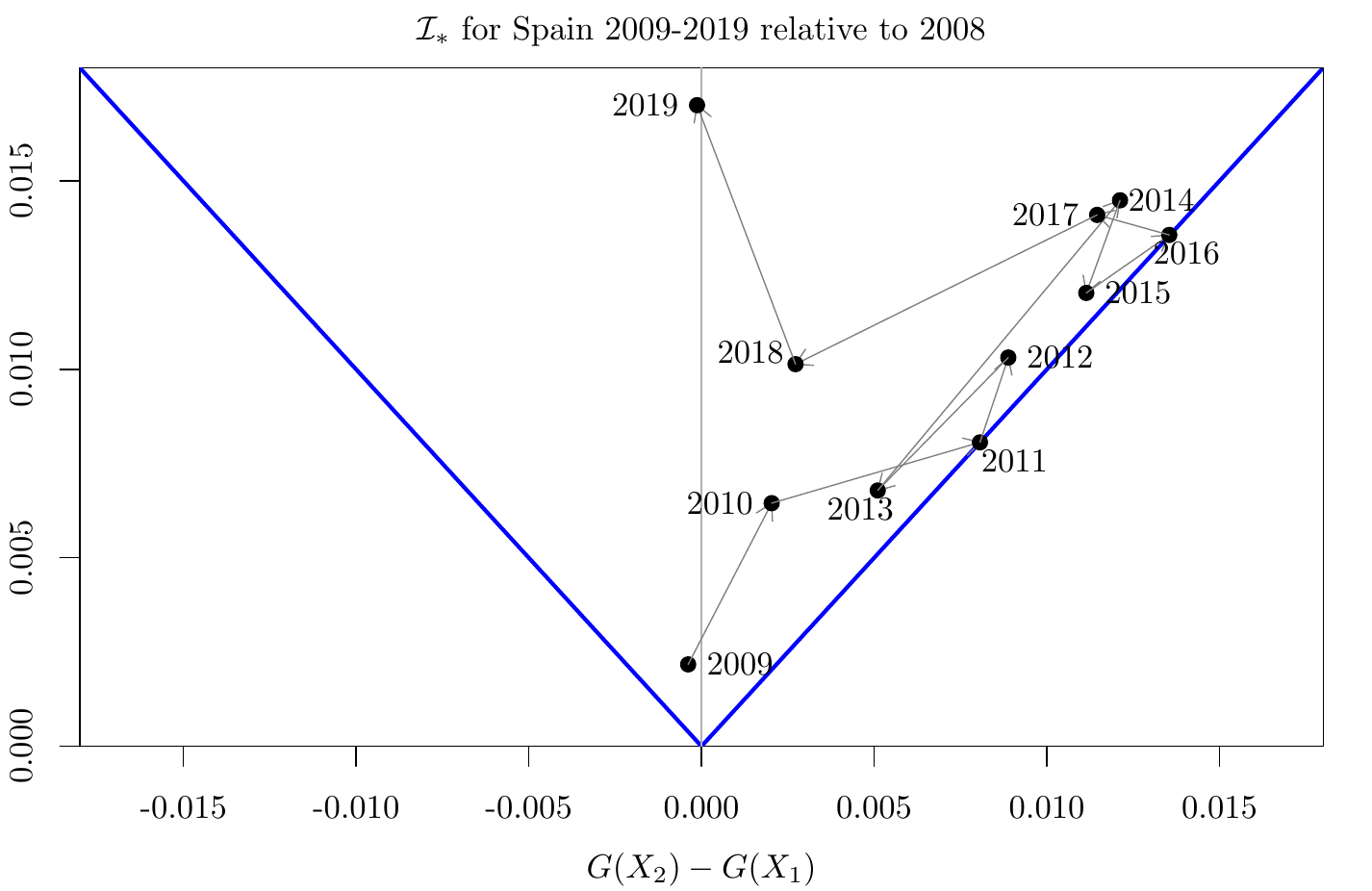}
% \end{center}
% \caption{The bidimensional inequality index \(\mathcal{I}_*\) based on Spanish income data for the period 2009--2019 compared to 2008.}
% \label{fi:Spain0819Istar}
% \end{figure}

\begin{figure}[h]


\begin{center}
\begin{tabular}{c}
\includegraphics[width=11cm]{EU_SILC_ES0819_MagnifiedI.pdf} \\
\includegraphics[width=11cm]{EU_SILC_ES0819_MagnifiedIstar.pdf}
\end{tabular}
\end{center}
\caption{The bidimensional inequality index \(\mathcal I\) (top figure) and \(\mathcal{I}_*\) (bottom) based on Spanish income data for the period 2009--2019 compared to 2008.}
\label{fi:Spain0819I}
\end{figure}

\subsection{Comparing inequality between Finland and Greece} \label{Subsection.RealData.FinGr}

Another collateral effect of the 2007--08 world financial crisis was the abrupt deterioration of the Greek sovereign-debt crisis. In 2009 the newly elected Greek government announced that its predecessor had underreported national debt levels and deficits. The consequent loss of confidence in the Hellenic economy, its structural weaknesses and other problems such as tax evasion triggered a chain reaction: the increase of national bond yields and the recession resulted in the downgrading of Greek bonds to junk status and a threat of sovereign default in 2010. Successive international bailout loan programs (in 2010, 2012 and 2015) came at the cost of severe austerity measures in Greece. As a result, a huge number of businesses were bankrupt and the unemployment rate rose without control, thus entering a spiral of economic implosion and population impoverishment. Surprisingly, the effects of this deep and prolonged crisis on inequality in Greece were not as dramatic as one could expect (see the evolution of the Greek Gini index in Figure~\ref{IncomeGiniELFI0419} and the Supplementary Material). As noted by \cite{Mitrakos-2014}, Greece already entered the crisis with a high level of income inequality and, also, the thousands of people who ended up homeless as a result of the crisis were not part of the Household Budget Surveys. The effects of the austerity measures are noticeable in the sharp decline of the household disposable income (see Figure~\ref{IncomeGiniELFI0419} and the Supplementary Material). In 2018 Greece exited the last of the bailouts, still owing a debt-to-GDP ratio of more than 150\%.

In contrast, Finnish economy has been growing steadily since the country joined the euro zone and is stable, diversified and competitive. The negative effects of the 2007--08 world crisis on Finnish economy were not severe. Income inequality is among the lowest in the EU (see Figure~\ref{IncomeGiniELFI0419}).
We compare the evolution of the bidimensional inequality index \(\mathcal I\) between these two extreme countries of the EU. Our aim is to check the ability of the index to reflect that the Greek and Finnish income distributions are ordered (or almost so) in all the years of the period. Indeed, in Figure~\ref{fi:GreeceFinland0419I} we can see that the index $\mathcal I$ is mainly on the left frontier $L_2$ of the region $\Delta$ or extremely close to it, that is, Finland (almost) uniformly distributes income more evenly than Greece. However, observe that in 2018 and 2019 the distance between the two countries has greatly diminished indicating an improvement in the Greek distribution of income.

In the Supplementary Material we have compared Greece with Portugal, a country facing economic problems well before the world crisis of 2007--08 and whose income inequality was even greater than that of Greece at the start of the crisis. In this case, the bidimensional index is also lying on the $L_1$ frontier of $\Delta$ (or very near it).

\begin{figure}[h]
\begin{center}
\includegraphics[width=12cm]{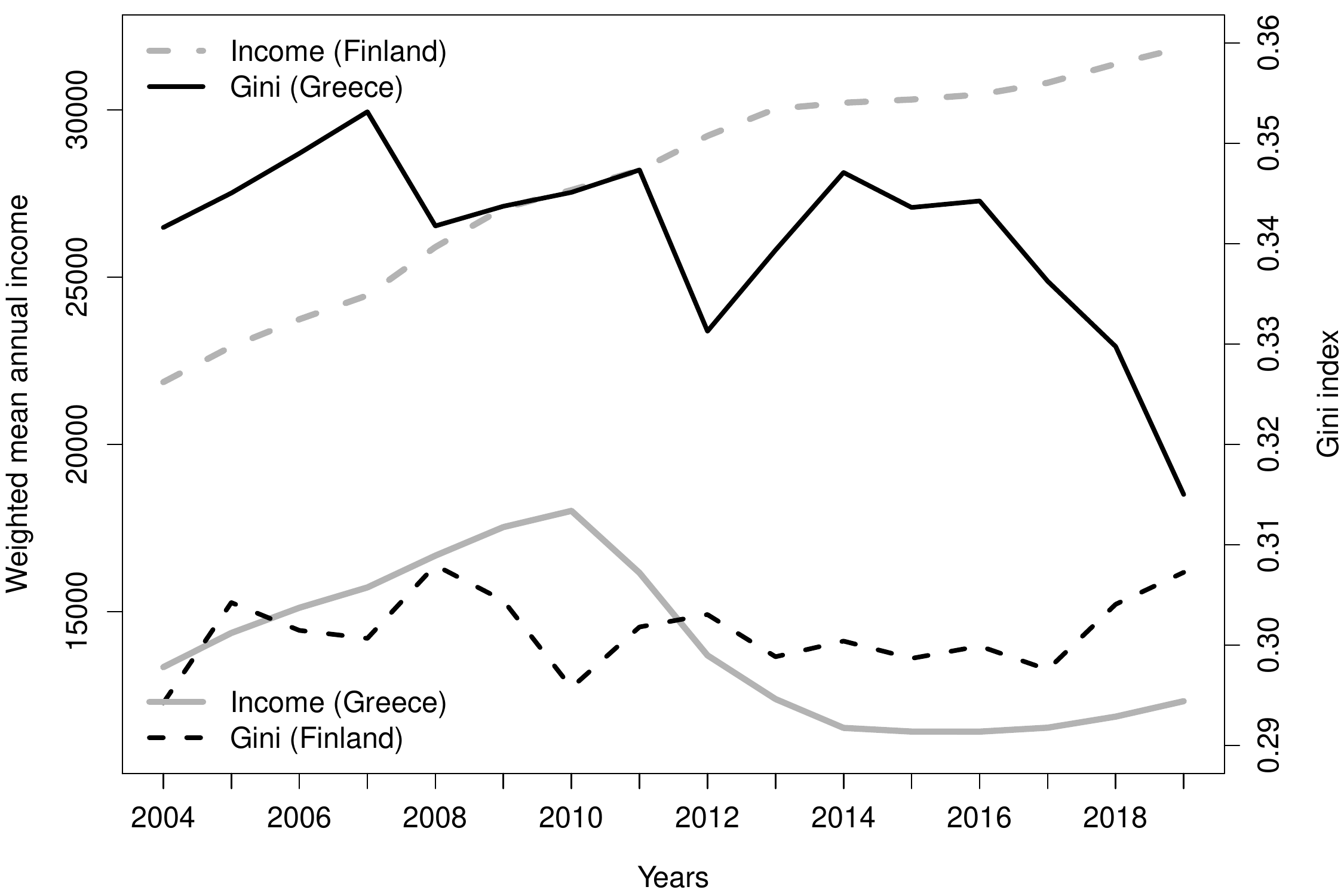}
\end{center}
\caption{Weighted mean annual income and Gini index in Greece and Finland from 2004 to 2019.}
\label{IncomeGiniELFI0419}
\end{figure}

\begin{figure}[h]
\begin{center}
\includegraphics[width=13cm]{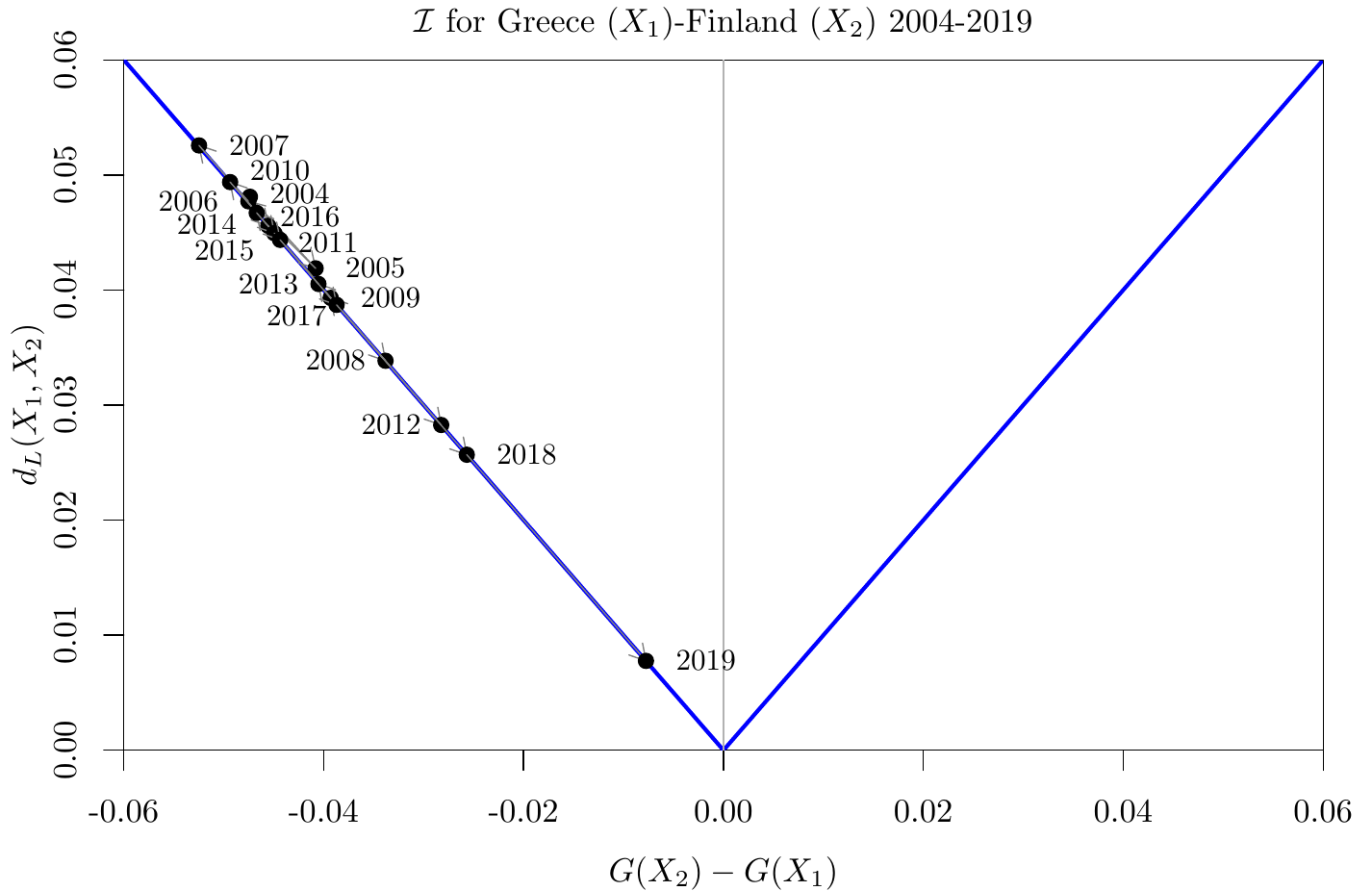}
\end{center}
\caption{Bidimensional inequality index $\mathcal I$ for income data from Greece ($X_1$) and Finland ($X_2$).}
\label{fi:GreeceFinland0419I}
\end{figure}

\subsection{Relative inequality between Spain and Portugal} \label{Subsection.RealData.ESPT}

In Figures \ref{IncomeGiniESPT0819} and \ref{fi:SpainPortugal0819I} we can see the relative evolution of inequality in Spain and Portugal from 2008 to 2018. Observe that in 2008 the inequality index \(\hat{\mathcal I}\) was almost on the right frontier of the region $\Delta$, indicating that income distribution in Portugal was nearly ordered with respect to that of Spain (Spain uniformly distributed income better than Portugal).
But, as the crisis struck in Spain, the Portuguese economy cut the distance with the Spanish one and the index \(\hat{\mathcal I}\) moved towards the vertical line of equal Gini coefficients, reaching it in 2018. Further, in 2019 the index moves to the left of this line, implying a fairer distribution of income in Portugal than in Spain according to the Gini index (see also the Lorenz curves of both countries in 2008 and 2019 in the Supplementary Material).

\begin{figure}[h]
\begin{center}
\includegraphics[width=12cm]{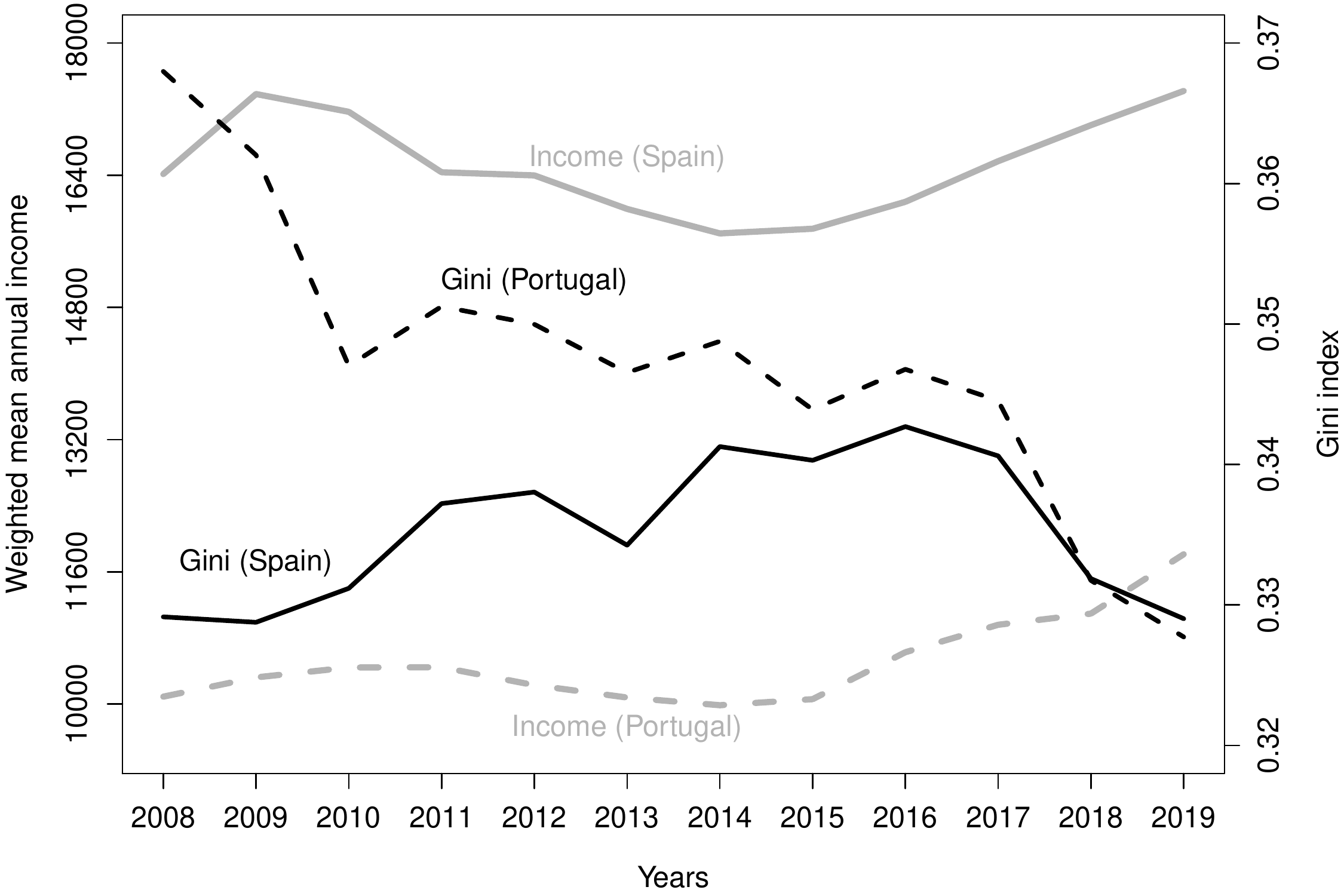}
\end{center}
\caption{Weighted mean annual income and Gini index in Spain and Portugal from 2008 to 2019.}
\label{IncomeGiniESPT0819}
\end{figure}

\begin{figure}[h]
\begin{center}
\includegraphics[width=13cm]{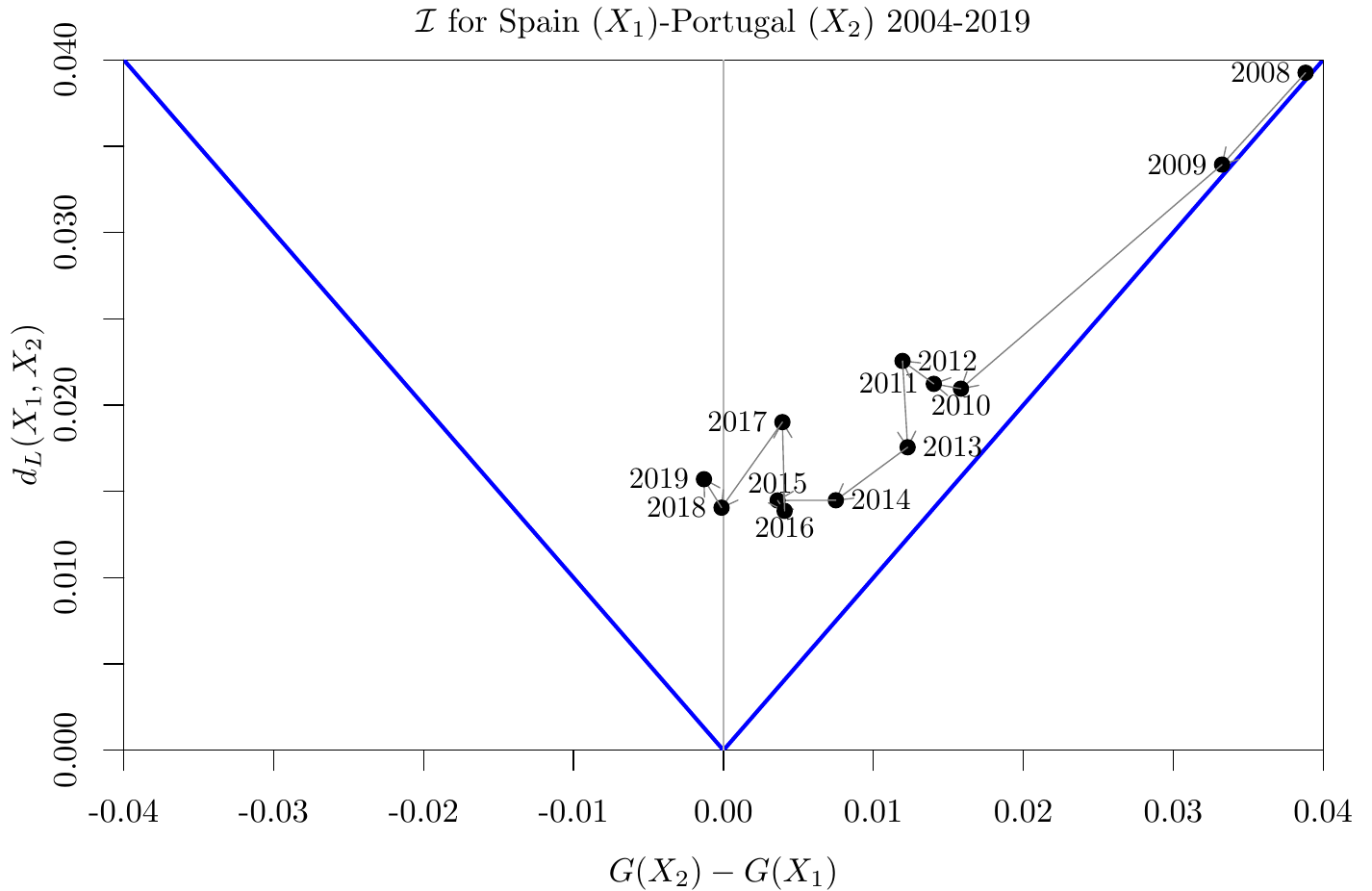}
\end{center}
\caption{Bidimensional inequality index $\mathcal I$ for income data from Spain ($X_1$) and Portugal ($X_2$).}
\label{fi:SpainPortugal0819I}
\end{figure}

\subsection{Relative inequality between France and Germany} \label{Subsection.RealData.FRDE}

We examine an example of two countries, Germany and France, whose relative inequality has great variations as reflected in the position of the bidimensional index \(\mathcal I\) (see Figure~\ref{fi:GermanyFrance0519I}). The value and the evolution of the Gini index is heavily dependent on the variable (e.g., disposable equivalised household income or personal labour income) under study (see \citet{Battisti-Felbermayr-Lehwald-2014}). The subject of growing inequality in Germany has been a matter of interesting discussions (see, e.g., \citet{Dao-2020}): corporate investments and assets revenue benefit the richest and have widened top income inequality; a decrease of unemployment has increased the variability and range of wages. The German reunification caused an inequality increase, but the Gini index has been stable since the mid-2000s (see Figure~\ref{IncomeGiniDEFR0519}). Inequality in France is a matter of great concern. Extensive redistribution of wealth and income through taxes and social transfers is carried out with the aim of correcting poverty and reducing income disparities. This explains the tendency of the bidimensional index $\mathcal I$ to be in the left half of the region $\Delta$, indicating a more unequal distribution of income in Germany than in France.

\begin{figure}[h]
\begin{center}
\includegraphics[width=12cm]{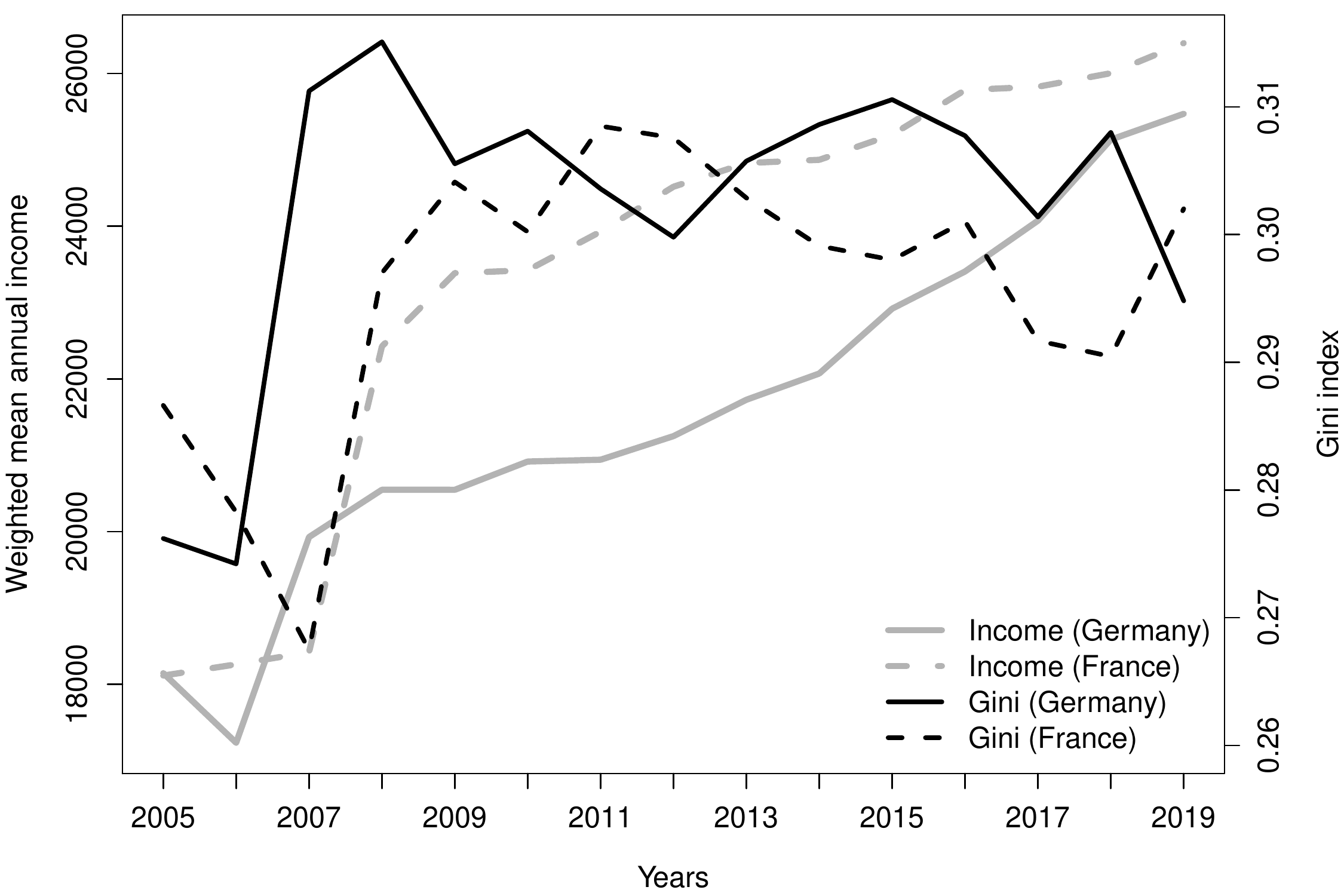}
\end{center}
\caption{Weighted mean annual income and Gini index in Germany and France from 2005 to 2019.}
\label{IncomeGiniDEFR0519}
\end{figure}

\begin{figure}[h]
\begin{center}
\includegraphics[width=13cm]{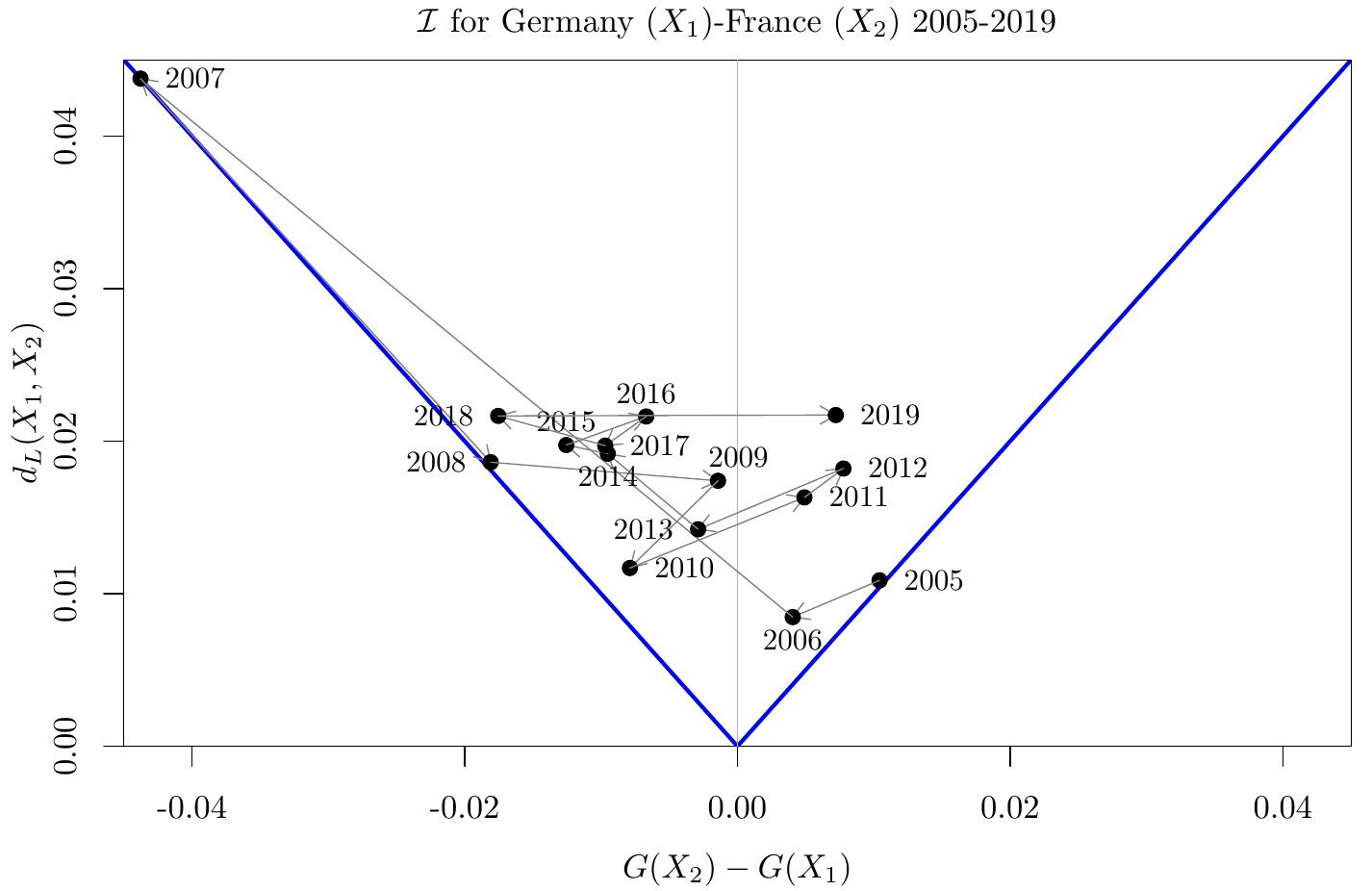}
\end{center}
\caption{Bidimensional inequality index $\mathcal I$ for income data from Germany ($X_1$) and France ($X_2$).}
\label{fi:GermanyFrance0519I}
\end{figure}

\section{Appendix}\label{Section.Appendix}

Here we collect the proofs of the main results stated in Sections \ref{Section.Extremal}--\ref{Section-Asymptotics}. First, we enumerate below some regularity properties of the functions in the set $\mathcal{L}$ defined in \eqref{L} that will be useful throughout the appendix.

\subsection{Regularity properties and compactness of $\mathcal{L}_a$}

We start with a slight change in the definition of the functions in the set $\mathcal{L}$ of \eqref{eq:L}.
Given $\ell \in \mathcal{L}$ we redefine the value of $\ell$ at $1$ as $\ell (1) = \sup_{[0,1)} \ell$.
This redefinition is motivated by the fact that, as shown in the following proposition, functions in $\mathcal{L}$ become continuous in $[0,1]$.
In addition, the convexity of $\mathcal{L}$ and $\mathcal{L}_a$ remains true.
With this definition, $\mathcal{L}$ becomes the set of convex $\ell:[0,1]\to[0,1]$ such that $\ell(0)=0$ and $\ell (1) = \sup_{[0,1)} \ell$.

In the following proposition we denote by $W^{1,1} (0,1)$ the Sobolev space $W^{1,1}$ in the interval $(0,1)$, which is equivalent to the set of absolutely continuous functions in $[0,1]$; it is endowed with the norm
\[
 \left\| \ell \right\|_{W^{1,1} (0,1)} = \left\| \ell \right\| + \left\| \ell' \right\| ,
\]
where $\ell'$ is the distributional derivative of $\ell$, which coincides a.e.\ with the derivative of $\ell$.
For $\alpha \in (0,1)$ we denote by $W^{1,\infty} (0, \alpha)$ the Sobolev space $W^{1,\infty}$ in the interval $(0,\alpha)$, which is equivalent to the set of Lipschitz continuous functions in $[0, \alpha]$; it is endowed with the norm
\[
 \left\| \ell \right\|_{W^{1,\infty} (0,\alpha)} = \sup_{(0,\alpha)} \left| \ell \right| + \esssup_{(0,\alpha)} \left| \ell' \right| .
\]
See, e.g., \citet[Chapter 8]{Brezis-2011} or \citet[Chapter 4]{Evans-Gariepy-1992} for the definition and properties of these spaces.

\begin{proposition}\label{pr:LaContinuous}
Let $a \in [0,1]$ and $\ell \in \mathcal{L}_a$.
Then
\begin{enumerate}[(a)]

\item The function $\ell$ is non-decreasing, absolutely continuous in $[0,1]$ and Lipschitz in $[0, \alpha]$ for each $\alpha \in (0,1)$.
Moreover,
\[
 \| \ell \|_{W^{1,1} (0,1)} \leq 1 + \frac{1-a}{2} ,
\]
and for each $\alpha \in (0,1)$,
\[
 \| \ell \|_{W^{1,\infty} (0, \alpha)} \leq 1 + \frac{1-a}{(1-\alpha)^2} .
\]

\item The function $\ell'$ is locally of bounded variation, the right derivative $\ell'(x^+)$ exists for all $x \in [0, 1)$ and is non-decreasing. Moreover, $\ell' (0^+) \geq 0$.

\item $\ell''$ is a non-negative Radon measure.

\end{enumerate}
\end{proposition}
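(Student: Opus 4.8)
The plan is to read off every assertion from two inputs: the classical first- and second-order structure of finite convex functions, and the defining constraints of $\mathcal{L}_a$ --- that $\ell(0)=0$, $\ell$ is valued in $[0,1]$, and $\|\ell\|=(1-a)/2$ (i.e.\ $G(\ell)=a$). Parts (b) and (c) are essentially structural. Recall that a finite convex $\ell$ on $[0,1]$ has a right derivative $\ell'(x^+)$ at every interior point, that $x\mapsto\ell'(x^+)$ is non-decreasing and right-continuous, and that $\ell$ is locally Lipschitz on each $[0,\alpha]$, $\alpha<1$. To fix the sign I would use $\ell(0)=0$ with $\ell\ge 0$: the quotients $\ell(h)/h$ are non-negative and decrease to $\ell'(0^+)$ as $h\downarrow 0$, so $\ell'(0^+)\ge 0$; monotonicity of $x\mapsto\ell'(x^+)$ then gives $\ell'\ge 0$ a.e., so $\ell$ is non-decreasing. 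As a monotone function $\ell'$ is locally of bounded variation, and its distributional derivative is the associated non-negative Lebesgue--Stieltjes measure, which is exactly $\ell''$ in (c).

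I would then promote absolute continuity to all of $[0,1]$, the only spot where the endpoint $1$ requires care. On each $[0,\alpha]$ we have $\ell(t)=\int_0^t\ell'$; letting $\alpha\uparrow 1$ and invoking monotone convergence (as $\ell'\ge 0$) gives $\int_0^1\ell'=\lim_{t\uparrow 1}\ell(t)=\sup_{[0,1)}\ell=\ell(1)$, the last equality being the redefinition of $\ell(1)$. Hence $\ell(t)=\int_0^t\ell'$ up to $t=1$, so $\ell$ is absolutely continuous on $[0,1]$ and $\|\ell'\|=\int_0^1\ell'=\ell(1)\le 1$. Together with $\|\ell\|=(1-a)/2$ this yields $\|\ell\|_{W^{1,1}(0,1)}=\|\ell\|+\|\ell'\|\le 1+\frac{1-a}{2}$.

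The crux --- and the step I expect to be the main obstacle --- is the $W^{1,\infty}$ bound, where the area constraint must be played against convexity. Fix $\alpha\in(0,1)$ and set $m=\ell'(\alpha^+)$. The support line at $\alpha$ gives $\ell(t)\ge\ell(\alpha)+m(t-\alpha)\ge m(t-\alpha)$ for $t\in[\alpha,1]$ (using $\ell(\alpha)\ge 0$), so
\[
 \frac{1-a}{2}=\int_0^1\ell\ge\int_\alpha^1\ell\ge m\int_\alpha^1(t-\alpha)\,\d t=m\,\frac{(1-\alpha)^2}{2},
\]
whence $\ell'(\alpha^+)\le\frac{1-a}{(1-\alpha)^2}$. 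Since $x\mapsto\ell'(x^+)$ is non-decreasing, $\esssup_{(0,\alpha)}|\ell'|\le\ell'(\alpha^+)\le\frac{1-a}{(1-\alpha)^2}$, and with $\sup_{(0,\alpha)}|\ell|\le 1$ this is the claimed estimate (and reconfirms the Lipschitz property on $[0,\alpha]$). The delicate point is extracting the sharp quadratic factor $(1-\alpha)^2$ from the triangle under the support line: the naive bound $\int_\alpha^1\ell'\le 1-\ell(\alpha)$ only delivers the weaker $(1-\alpha)^{-1}$, and it is precisely the $L^1$ constraint $\|\ell\|=(1-a)/2$, rather than the crude pointwise bound $\ell\le 1$, that produces the correct dependence on $a$.
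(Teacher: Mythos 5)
Your proof is correct and follows essentially the same route as the paper's: the classical first- and second-order structure of convex functions for (b) and (c), the sign of $\ell'(0^+)$ from $\ell(0)=0$ and $\ell\ge 0$, and for the $W^{1,\infty}$ bound the identical supporting-line argument at $\alpha$ integrated against the $L^1$ constraint $\|\ell\|=(1-a)/2$ to extract the factor $(1-\alpha)^2$. Your explicit monotone-convergence treatment of the endpoint $1$ is a slightly more detailed justification of the absolute continuity on all of $[0,1]$ than the paper's appeal to the fundamental theorem of calculus for convex functions, but it is not a different approach.
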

\begin{proof}
Convex functions are locally Lipschitz (see \citet[Theorem 6.3.1]{Evans-Gariepy-1992} or \citet[Theorem 1.19]{Simon-2011}), have a first derivative locally of bounded variation (see \citet[Theorem 6.3.3]{Evans-Gariepy-1992}) and have a second derivative in the sense of distributions (see \citet[Theorem 6.3.2]{Evans-Gariepy-1992} or \citet[Theorem 1.29]{Simon-2011}), which in fact is a non-negative Radon measure.
Further, the right derivative $\ell' (x^+)$, exists for all $x\in[0,1)$ and is non-decreasing (see \citet[Theorem 1.26]{Simon-2011}).
As $\ell (0) = 0$ and $\ell \geq 0$, we necessarily have that $\ell' (0^+)\geq 0$, so $\ell'(x^+) \geq 0$ for all $x \in [0, 1)$.
By the version of the fundamental theorem of calculus for convex functions (see  \citet[Theorem 1.28]{Simon-2011}), $\ell$ is non-decreasing.
In addition, the derivative of $\ell$ exists a.e.\ and coincides a.e.\ with the right derivative, so $\ell' \geq 0$ a.e.
In particular,
\[
 \left\| \ell' \right\|= \int_{0}^{1}  \ell' (t)\, \d t = \ell(1)-\ell(0)\leq 1.
\]
We conclude that $\| \ell \|_{W^{1,1} (0,1)} \leq 1 + \frac{1-a}{2}$.

On the other hand, we observe that the affine function $s : [0,1] \to \R$ given by $s(x)=\ell'(\alpha^+) (x-\alpha)+\ell(\alpha)$ is a supporting line of $\ell$ at the point $(\alpha,\ell(\alpha))$.
By convexity, we hence have that $s\leq \ell$ and then,
\begin{align*}
   \frac{1-a}{2} = \left\| \ell \right\| & \geq \int_{\alpha}^1 \ell (t) \, \d t\geq \int_{\alpha}^{^1}  s(t)\,\d t
      = \ell'(\alpha^+)\frac{(1-\alpha)^2}{2} +\ell(\alpha) (1-\alpha)
       \geq   \ell'(\alpha^+)\frac{(1-\alpha)^2}{2}.
  \end{align*}
 We conclude that
\[
 \esssup_{(0, \alpha)} \ell' \le \ell'(\alpha^+) \le \frac{1-a}{(1-\alpha)^2} .
\]
Consequently, $\| \ell \|_{W^{1,\infty} (0, \alpha)} \leq 1 + \frac{1-a}{(1-\alpha)^2}$.
\end{proof}

%It is also true that if $\ell \in \mathcal{L}_a$ the $\esssup \ell' \geq 1-a$ and, for any $\alpha \in (0,1)$,
%\[
% \frac{(1-\alpha)^2}{2}\esssup_{(0, \alpha)} \ell' \leq \frac{1-a}{2} - (2 - \alpha) \ell (\alpha) .
%\]

We are now ready to prove that $\mathcal{L}_a$ is a compact set of $L^1$.

\medskip

\textbf{Proof of Proposition \ref{pr:LaCompact} in Section \ref{Section.Extremal} (Compactness of $\mathcal{L}_a$).}
The convexity of $\mathcal{L}_a$ is straightforward. Let us prove its compactness.
 Let $\{ \ell_n \}_{n \in \N}$ be a sequence in $\mathcal{L}_a$.
By Proposition \ref{pr:LaContinuous}, $\{ \ell_n \}_{n \in \N}$ is bounded in $W^{1,1} (0,1)$, so by the Rellich--Kondrachov theorem (see \citet[Theorem 8.8]{Brezis-2011}), there exists a subsequence (not relabelled) and an $\ell \in L^1$ such that $\ell_n \to \ell$ in $L^1$ as $n \to \infty$. This also implies that $G(\ell)=a$.

On the other hand, for each $\alpha \in (0, 1)$ we have by Proposition \ref{pr:LaContinuous} (a) that $\{ \ell_n \}_{n \in \N}$ is bounded in $W^{1, \infty} (0, \alpha)$, so by the Ascoli--Arzel\`a theorem (see \citet[Theorems 4.25 and 8.8]{Brezis-2011}), for a further subsequence, $\ell_n \to \ell$ uniformly in $[0, \alpha]$ as $n \to \infty$.
In particular, $\ell (0) = 0$ and $0 \leq \ell \leq 1$ in $[0, \alpha]$.
As the pointwise limit of convex function is a convex function, we obtain that $\ell$ is convex in $[0, \alpha]$.
Therefore, $0 \leq \ell \leq 1$ in $[0, 1)$ and $\ell$ is convex in $[0,1)$.
We redefine $\ell(1)$ as $\ell(1) = \sup_{[0,1)}$, so that $\ell$ becomes continuous in $[0,1]$.
We also obtain that $0 \leq \ell \leq 1$ in $[0, 1]$ and $\ell$ is convex in $[0,1]$. Therefore, $\ell\in \mathcal{L}_a$ and the proof is finished.
\hfill $\square$

\subsection{Proof of Theorem \ref{th:extLa} in Section \ref{Section.Extremal} (Extreme points of $\mathcal{L}_a$)}

We will use an alternative description of the elements in $\mathcal{L}_a$ in terms of positive measures concentrated on the interval $(0,1)$. The main idea is based on the following fact: any curve $\ell \in \mathcal{L}_a$ is univocally determined by its second derivative, $\ell''$, together with the conditions $\ell(0) = 0$ and $G(\ell)=a$ (or, equivalently, $\| \ell \| = \frac{1-a}{2}$).

Given $a \in [0,1]$, we denote by $\mathcal{M}_a$ be the set of non-negative Radon measures $\mu$ concentrated on the interval $(0,1)$ and such that
\begin{equation}\label{eq:restrictionsMa}
 \int_0^1 (1-s)^2 \, \d \mu (s) \leq 1-a \quad \text{and} \quad \int_0^1 s (1-s) \, \d \mu (s) \leq a .
\end{equation}

\begin{proposition}\label{pr:iso}
For $a \in [0,1]$, the map $T_a : \mathcal{L}_a \to \mathcal{M}_a$ defined by $T_a (\ell) = \ell''$ is an affine isomorphism with inverse $T_a^{-1} : \mathcal{M}_a \to \mathcal{L}_a$ given by
\begin{equation}\label{eq:MaLa}
T_a^{-1} \mu (x)= \left[ 1 - a - \int_0^1 (1-s)^2 \, \d \mu (s) \right] x + \int_0^x (x-s) \, \d \mu (s) , \quad x \in [0,1] .
\end{equation}
\end{proposition}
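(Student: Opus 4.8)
The plan is to realize both $T_a$ and its candidate inverse through the elementary representation of a convex function as the double integral of its second-derivative measure, and then to read off the two defining inequalities of $\mathcal{M}_a$ from the two defining constraints of $\mathcal{L}_a$, namely $\ell'(0^+)\ge 0$ and $\ell\le 1$, together with the Gini normalization $\|\ell\|=(1-a)/2$.

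First I would fix $\ell\in\mathcal{L}_a$ and set $\mu=\ell''$, which by Proposition \ref{pr:LaContinuous} is a non-negative Radon measure concentrated on $(0,1)$. Writing $c=\ell'(0^+)\ge 0$ and using $\ell'(x^+)=c+\mu((0,x])$, integrating twice (with Tonelli's theorem to swap the order of integration, legitimate since the integrand is non-negative) yields
\begin{equation*}
 \ell(x)=c\,x+\int_0^x (x-s)\,\d\mu(s),\qquad x\in[0,1].
\end{equation*}
Integrating this identity over $[0,1]$ and applying Tonelli again gives $\|\ell\|=\tfrac{c}{2}+\tfrac12\int_0^1(1-s)^2\,\d\mu(s)$; imposing $\|\ell\|=(1-a)/2$ then forces $c=1-a-\int_0^1(1-s)^2\,\d\mu(s)$, exactly the constant in \eqref{eq:MaLa}. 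Since $c\ge0$, the first inequality in \eqref{eq:restrictionsMa} follows. For the second, I would let $x\uparrow1$ (monotone convergence) to get $\ell(1)=c+\int_0^1(1-s)\,\d\mu(s)$, and combine it with the formula for $c$ to obtain $\ell(1)=1-a+\int_0^1 s(1-s)\,\d\mu(s)$; the bound $\ell\le1$ then yields $\int_0^1 s(1-s)\,\d\mu(s)\le a$. This shows that $T_a$ maps $\mathcal{L}_a$ into $\mathcal{M}_a$ and that $T_a^{-1}\circ T_a=\mathrm{id}$.

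Conversely, I would start from $\mu\in\mathcal{M}_a$, define $c$ and $\ell=T_a^{-1}\mu$ by \eqref{eq:MaLa}, and check $\ell\in\mathcal{L}_a$. The key algebraic observation is the identity $(1-s)^2+s(1-s)=1-s$, which together with \eqref{eq:restrictionsMa} gives $\int_0^1(1-s)\,\d\mu(s)\le 1$; moreover the first inequality in \eqref{eq:restrictionsMa} bounds $\mu$ on each $(0,\delta)$ with $\delta<1$ (there $(1-s)^2\ge(1-\delta)^2$), so $\int_0^x(x-s)\,\d\mu(s)$ is finite for every $x<1$ and $\ell$ is a well-defined real function. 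Then $\ell(0)=0$ is immediate; $\ell$ is convex because differentiating \eqref{eq:MaLa} twice gives $\ell'(x^+)=c+\mu((0,x])$ and $\ell''=\mu\ge0$ (which also proves $T_a\circ T_a^{-1}=\mathrm{id}$); $\ell$ is non-decreasing, hence non-negative, since $c\ge0$; and $\ell\le 1$ follows from $\ell(1)=1-a+\int_0^1 s(1-s)\,\d\mu(s)\le1$. Reversing the Tonelli computation of the previous paragraph shows $\|\ell\|=(1-a)/2$, i.e.\ $G(\ell)=a$, and continuity at $1$ gives $\ell(1)=\sup_{[0,1)}\ell$, so $\ell\in\mathcal{L}_a$.

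Finally, affinity is immediate: $T_a$ is the restriction of the linear map $\ell\mapsto\ell''$, and the inverse formula \eqref{eq:MaLa} depends affinely on $\mu$ (the constant $c$ is affine in $\mu$ and $\int_0^x(x-s)\,\d\mu(s)$ is linear in $\mu$), so both maps preserve convex combinations; together with the two identities $T_a^{-1}\circ T_a=\mathrm{id}$ and $T_a\circ T_a^{-1}=\mathrm{id}$, this makes $T_a$ an affine isomorphism. I expect the main obstacle to be the endpoint analysis of $\mu$: since $\mu$ may carry infinite total mass accumulating at $s=1$, one must argue finiteness of $\ell$ and the bound $\ell\le1$ purely through the two weighted integrals in \eqref{eq:restrictionsMa}, the crucial devices being the identity $(1-s)^2+s(1-s)=1-s$ and the systematic use of Tonelli and monotone convergence rather than Fubini.
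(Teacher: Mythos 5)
Your proposal is correct and follows essentially the same route as the paper's proof: both represent $\ell$ as the double integral of $\ell''$, derive the first constraint of \eqref{eq:restrictionsMa} from $\ell'(0^+)\ge 0$ via the Gini normalization and the second from $\ell(1)\le 1$, and verify the converse direction by differentiating \eqref{eq:MaLa}. Your extra care about finiteness of $\mu$ near $s=1$ (via the identity $(1-s)^2+s(1-s)=1-s$ and Tonelli) is a welcome refinement of a point the paper passes over silently, but it does not change the argument.
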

\begin{proof}
First we see that the map $T_a$ is well defined.
Given $\ell \in \mathcal{L}_a$, we have from Proposition \ref{pr:LaContinuous} that $\ell''$ is a non-negative Radon measure.
As $\ell'$ is locally of bounded variation,
\[
 \ell'(t) = \ell'(0^+) + \int_0^t \, \d \ell'' (s) , \qquad \text{a.e. } t \in (0,1) .
\]
As $\ell$ is locally Lipschitz and $\ell (0) = 0$,
for $x \in [0,1]$, we have that
\begin{equation}\label{eq:ellx1}
\begin{split}
    \ell (x)  & = \int_0^x \ell'(t) \, \d t = \int_0^x \left[ \ell'(0^+) + \int_0^t \, \d \ell'' (s) \right] \d t  \\
       & = \ell'(0^+) x + \int_0^x (x-s) \, \d \ell'' (s),
\end{split}
\end{equation}
where for the last equality we have used Fubini's theorem.
Integrating in $x \in (0,1)$ equality \eqref{eq:ellx1} (and by Fubini's theorem again) we obtain the restriction
\begin{equation}\label{eq:ellx2}
 \frac{1-a}{2} = \| \ell \| =  \frac{\ell'(0^+)}{2} + \frac{1}{2} \int_0^1 (1-s)^2 \, \d \ell''(s).
\end{equation}
As $\ell'(0^+) \geq 0$, from \eqref{eq:ellx2} we directly obtain the first inequality of \eqref{eq:restrictionsMa}.
On the other hand, \eqref{eq:ellx1} and \eqref{eq:ellx2} show that
\begin{equation}\label{eq:ellell''}
 \ell (x) = \left[ 1 - a - \int_0^1 (1-s)^2 \, \d \ell''(s) \right] x + \int_0^x (x-s) \, \d \ell'' (s) , \quad x \in [0,1].
\end{equation}
Hence, imposing $\ell(1) \leq 1$ we have the second inequality of \eqref{eq:restrictionsMa}.

Now, for $\mu \in \mathcal{M}_a$, we define $\ell$ as in the right-hand side of \eqref{eq:MaLa} and we will check that $\ell \in \mathcal{L}_a$. First, $\ell (0) = 0$ and
\[
 \int_0^1 \ell (x) \, \d x = \frac{1}{2} \left[ 1 - a - \int_0^1 (1-s)^2 \, \d \mu (s) \right] + \int_0^1 \int_s^1 (x-s) \, \d x \, \d \mu (s) = \frac{1-a}{2} .
\]
Further, thanks to the second inequality of \eqref{eq:restrictionsMa}, we obtain that
\[
 \ell(1) = 1 - a + \int_0^1 s (1-s) \, \d \mu (s) \leq 1.
\]
By Leibniz integral rule (differentiation under the integral sign), it can also be checked that
\begin{equation}\label{eq:ell'x}
 \ell' (x) =  1 - a - \int_0^1 (1-s)^2 \, \d \mu (s) + \int_0^x \, \d \mu (s) , \quad \text{a.e. } x \in (0,1)
\end{equation}
and hence
\begin{equation}\label{eq:ell''mu}
 \ell'' = \mu \quad \text{as measures.}
\end{equation}
As $\mu$ is positive, from \eqref{eq:ell'x} we have that $\ell'$ is essentially non-decreasing, $\ell$ is convex and
\[
 \ell' (x) \geq 1 - a - \int_0^1 (1-s)^2 \, \d \mu (s) \geq 0 , \quad \text{a.e. } x \in (0,1) ,
\]
by the first inequality of \eqref{eq:restrictionsMa}, so $\ell$ is non-decreasing.
In particular, $\ell \geq 0$.
This shows that $\ell \in \mathcal{L}_a$.

Finally, we prove that the maps $T_a$ and \eqref{eq:MaLa} are mutually inverse.
Given $\ell \in \mathcal{L}_a$, if we apply first $T_a$ and then \eqref{eq:MaLa} we get $\ell$ back thanks to \eqref{eq:ellell''}.
Conversely, given $\mu \in \mathcal{M}_a$, if we apply first \eqref{eq:MaLa} and then $T_a$ we recover $\mu$ by \eqref{eq:ell''mu}. Since $T_a$ is affine, the proof is concluded.
\end{proof}

%As an immediate consequence of Proposition \ref{pr:iso} we have that $T_a (\mathcal{E} (\mathcal{L}_a)) = \mathcal{E} (\mathcal{M}_a)$.
Next we calculate $\ext (\mathcal{M}_a)$. We denote by $\delta_x$ the Dirac measure at $x \in [0,1]$.

\begin{proposition}\label{pr:extMa}
For $a \in [0,1]$, we have that
\begin{align*}
 \ext (\mathcal{M}_a) = & \{ 0 \} \cup \left\{ \frac{1-a}{(1-x_1)^2} \delta_{x_1} : x_1 \in (0, a] \right\} \cup \left\{ \frac{a}{x_2 (1-x_2)} \delta_{x_2} : x_2 \in (a, 1)  \right\} \\
 & \cup \left\{ \frac{x_2-a}{(1-x_1) (x_2-x_1)} \delta_{x_1} + \frac{a-x_1}{(x_2-x_1)(1-x_2)} \delta_{x_2} : \, x_1 \in (0,a) , \, x_2 \in (a,1) \right\}.
\end{align*}
\end{proposition}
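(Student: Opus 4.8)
The plan is to work entirely inside the measure set $\mathcal{M}_a$, exploiting that it is the intersection of the cone of non-negative Radon measures on $(0,1)$ with two half-spaces. Writing $g_1(s)=(1-s)^2$ and $g_2(s)=s(1-s)$, we have $\mathcal{M}_a=\{\mu\ge 0:\int_0^1 g_1\,\d\mu\le 1-a,\ \int_0^1 g_2\,\d\mu\le a\}$, and for $\mu\in\mathcal{M}_a$ I call a constraint \emph{active} if it holds with equality. The working criterion I would use is the standard perturbation characterization: $\mu$ fails to be extreme exactly when there is a signed measure $\nu\neq 0$ with $-\mu\le\nu\le\mu$ (so that $\mu\pm\nu\ge 0$) and $\mu\pm\nu\in\mathcal{M}_a$. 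Since $\int g_i\,\d(\mu\pm\nu)\le c_i$ for an active constraint forces simultaneously $\int g_i\,\d\nu\le 0$ and $\int g_i\,\d\nu\ge 0$, every admissible perturbation must satisfy $\int g_i\,\d\nu=0$ for each active $i$; conversely any nonzero $\nu\ll\mu$ of bounded density with $\int g_i\,\d\nu=0$ on the active indices yields a nontrivial splitting after rescaling, because inactive constraints and positivity survive small perturbations. Thus $\mu$ is extreme iff the only such $\nu$ is $\nu=0$.

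The first genuine step is to reduce to finitely supported measures. Since there are at most two active constraints, an extreme $\mu$ cannot charge more than two points: if $\operatorname{supp}\mu$ contained three distinct points, or if $\mu$ had a diffuse part (which can be split into three disjoint pieces of positive mass), the homogeneous system $\int g_i\,\d\nu=0$ over the active indices would impose at most two linear conditions on a perturbation space of dimension at least three, hence admit a nonzero solution $\nu\ll\mu$ of bounded density, contradicting extremality. This diffuse-to-discrete reduction is the step I expect to be the main obstacle; I would make it rigorous by invoking the characterizations of extreme points of moment sets in \cite{Winkler-1988} and \cite{Pinelis-2016}, which apply precisely to sets cut out by finitely many integral inequalities.

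With $\mu$ now supported on at most two points, I would classify by the number of atoms and active constraints, using that $\{g_1,g_2\}$ is a Chebyshev system on $(0,1)$, i.e. for distinct $x_1,x_2\in(0,1)$,
\[
\det\begin{pmatrix} g_1(x_1) & g_1(x_2)\\ g_2(x_1) & g_2(x_2)\end{pmatrix}=(1-x_1)(1-x_2)(x_2-x_1)\neq 0 .
\]
Empty support gives $\mu=0$, the apex of the cone, which is trivially extreme. A single atom $c\delta_x$ is extreme iff at least one constraint is active: then $g_i(x)>0$ pins $c$ to the unique value making that constraint an equality and annihilates any perturbation $b\delta_x$; requiring the remaining constraint to hold as an inequality gives $c=(1-a)/(1-x)^2$ with $x\in(0,a]$ when $g_1$ is active, and $c=a/(x(1-x))$ with $x\in[a,1)$ when $g_2$ is active (the two families meeting at $x=a$). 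A genuine two-atom measure $\alpha\delta_{x_1}+\beta\delta_{x_2}$ is extreme iff both constraints are active: the nonvanishing determinant above then forces the only perturbation on the two atoms to be trivial and, simultaneously, determines $\alpha,\beta$ uniquely; solving the $2\times 2$ system yields $\alpha=(x_2-a)/((1-x_1)(x_2-x_1))$ and $\beta=(a-x_1)/((x_2-x_1)(1-x_2))$, and positivity of both masses is equivalent to $x_1\in(0,a)$ and $x_2\in(a,1)$. Conversely, a two-atom measure with at most one active constraint is not extreme, since the single (or absent) equation $\int g_i\,\d\nu=0$ leaves a nonzero $\nu$ supported on its two atoms.

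Collecting the cases reproduces exactly the four pieces in the statement. Finally I would check that the degenerate endpoints $a=0$ and $a=1$ are consistent: because $g_1,g_2>0$ on $(0,1)$, one of the two constraints then forces $\mu=0$, the parameter ranges $(0,a]$, $(a,1)$ and $(0,a)\times(a,1)$ collapse, and the displayed set correctly reduces to $\{0\}$.
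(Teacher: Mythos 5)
Your proof is correct and follows essentially the same route as the paper: the reduction to measures supported on at most two points via a three-piece splitting against the two linear constraints is exactly the paper's Step 8, and your case analysis by number of atoms and active constraints (with the nonvanishing $2\times 2$ determinant pinning down the masses) reproduces the paper's Steps 1--7, which it carries out with explicitly constructed perturbing measures rather than your unified perturbation criterion. The appeal to Winkler and Pinelis is unnecessary -- your sketched three-set argument is already the complete proof, and the paper cites those works only as related literature.
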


\begin{proof}
The proof is divided into several smaller results.

\smallskip

\emph{Step 1: The null measure $\mu\equiv 0 \in \ext (\mathcal{M}_a)$.}

This is direct as all the measures in $\mathcal{M}_a$ are non-negative.

\smallskip

\emph{Step 2: For all $x_1 \in (0, a]$, the measure $\mu = \frac{1-a}{(1-x_1)^2} \delta_{x_1} \in \ext(\mathcal{M}_a)$.}

Clearly, $\mu \in \mathcal{M}_a$.
Assume that $\mu =  t_1 \mu_1 + t_2 \mu_2$, for some $t_1, t_2 >0$ with $t_1 + t_2 = 1$ and $\mu_1, \mu_2 \in \mathcal{M}_a$.
Then $\mu_i = \beta_i \delta_{x_1}$ and, due to \eqref{eq:restrictionsMa},
\[
 \beta_i \geq 0, \quad \beta_i (1-x_1)^2 \leq 1-a , \quad \beta_i x_1 (1-x_1) \leq a , \qquad i=1,2 .
\]
Thus,
\[
 \frac{1-a}{(1-x_1)^2} = t_1 \beta_1 + t_2 \beta_2 \leq t_1 \frac{1-a}{(1-x_1)^2} + t_2 \frac{1-a}{(1-x_1)^2} = \frac{1-a}{(1-x_1)^2} ,
\]
so, necessarily, $\beta_1 = \beta_2 = \frac{1-a}{(1-x_1)^2}$, and, hence, $\mu_1 = \mu_2$.
Therefore, $\mu \in \ext (\mathcal{M}_a)$.

\smallskip

\emph{Step 3: For all $x_2 \in (a, 1)$, the measure $\frac{a}{x_2 (1-x_2)} \delta_{x_2} \in \ext(\mathcal{M}_a)$.}

The proof is similar to the one of the previous step and it is therefore omitted.

\smallskip

\emph{Step 4: If for some $x_1 \in (0, a]$ and $\alpha \in \R \setminus \{ 0, \frac{1-a}{(1-x_1)^2}\}$, $\mu= \alpha \delta_{x_1} \in \mathcal{M}_a$, then $\mu \notin \ext (\mathcal{M}_a)$.}

The fact $\mu \in \mathcal{M}_a$ implies that $0 < \alpha < \frac{1-a}{(1-x_1)^2}$.
Therefore, for $\varepsilon >0$ small enough, we have that $\mu \pm \varepsilon \delta_{x_1} \in \mathcal{M}_a$ since both are positive measures and the restrictions \eqref{eq:restrictionsMa} are satisfied; indeed,
\[
  \int_0^1 (1-s)^2 \, \d (\mu \pm \varepsilon \delta_{x_1}) (s) = (\alpha \pm \varepsilon) (1 - x_1)^2 < 1-a
\]
and
\[
 \int_0^1 s (1-s) \, \d (\mu \pm \varepsilon \delta_{x_1}) = (\alpha \pm \varepsilon) x_1 (1 - x_1) < \frac{1-a}{1-x_1}x_1 \leq a .
\]
Finally, we can write $\mu = \frac{1}{2} (\mu + \varepsilon \delta_{x_1}) + \frac{1}{2} (\mu - \varepsilon \delta_{x_1})$, and, hence, $\mu \notin \ext (\mathcal{M}_a)$.

\smallskip

\emph{Step 5: If for some $x_2 \in (a,1)$ and $\alpha \in \R \setminus \{ 0, \frac{a}{x_2 (1-x_2)}\}$, $\mu = \alpha \delta_{x_2} \in \mathcal{M}_a$, then $\mu \notin \ext (\mathcal{M}_a)$.}

The proof is similar to that of Step 4 and it is left to the reader.

\smallskip

\emph{Step 6: For all $x_1 \in (0,a)$ and $x_2 \in (a,1)$, $\mu = \frac{x_2-a}{(1-x_1) (x_2-x_1)} \delta_{x_1} + \frac{a-x_1}{(x_2-x_1)(1-x_2)} \delta_{x_2} \in \ext(\mathcal{M}_a)$.}

It is immediate to check that
\[
 \int_0^1 (1-s)^2 \, \d \mu (s) = 1-a \quad \text{and} \quad \int_0^1 s (1-s) \, \d \mu (s) = a.
\]
Therefore, $\mu \in \mathcal{M}_a$. Moreover, if $\mu =  t_1 \mu_1 + t_2 \mu_2$ for some $t_1, t_2 >0$ with $t_1 + t_2 = 1$ and $\mu_1, \mu_2 \in \mathcal{M}_a$, then
\[
 \int_0^1 (1-s)^2 \, \d \mu_i (s) = 1-a \quad \text{and} \quad \int_0^1 s (1-s) \, \d \mu_i (s) = a , \qquad i=1,2 .
\]
Furthermore, as $\mu_i = \sum_{j=1}^2 \beta_{ij} \delta_{x_j}$ for some $\beta_{ij} \geq 0$ (for $i,j = 1,2$), we have that
\[
 \sum_{j=1}^2 \beta_{ij} (1-x_j)^2 = 1-a \quad \text{and} \quad \sum_{j=1}^2 \beta_{ij} x_j (1-x_j) = a , \qquad i=1,2 ,
\]
and, hence,
\[
 \beta_{i1} = \frac{x_2-a}{(1-x_1) (x_2-x_1)} , \quad \beta_{i2} = \frac{a-x_1}{(x_2-x_1)(1-x_2)} , \qquad i=1,2 .
\]
Therefore, $\mu_1 = \mu_2$.
Consequently, $\mu \in \ext (\mathcal{M}_a)$.

\smallskip

\emph{Step 7: If for some $\alpha_1, \alpha_2 >0$ and $0 < x_1 < x_2 < 1$ with
\[
 \alpha_1 \neq \frac{x_2 - a}{(1-x_1) (x_2 - x_1)} \quad \text{or} \quad \alpha_2 \neq \frac{a - x_1}{(x_2 - x_1) (1-x_2)}
\]
the measure $\mu = \sum_{i=1}^2 \alpha_i \delta_{x_i}\in \mathcal{M}_a$,  then $\mu \notin \ext (\mathcal{M}_a)$.}

By \eqref{eq:restrictionsMa}, we have that
\begin{equation}\label{eq:rest2deltas}
 \sum_{i=1}^2  \alpha_i (1-x_i)^2 \leq 1-a \quad \text{and} \quad \sum_{i=1}^2 \alpha_i x_i (1-x_i) \leq a .
\end{equation}
If both inequalities in \eqref{eq:rest2deltas} were equalities, we necessarily have that
\[
 \alpha_1 = \frac{x_2 - a}{(1-x_1) (x_2 - x_1)} \quad \text{and} \quad \alpha_2 = \frac{a - x_1}{(x_2 - x_1) (1-x_2)} ,
\]
against our assumption.
Therefore, at least one of the two inequalities of \eqref{eq:rest2deltas} is strict.
If $\sum_{i=1}^2  \alpha_i (1-x_i)^2 < 1-a$, then we consider the signed measure defined by
\[
 \mu_0 = x_2 (1-x_2) \delta_{x_1} - x_1 (1-x_1) \delta_{x_2} .
\]
Then, it is straightforward to check that, for small enough $\varepsilon >0$, $\mu \pm \varepsilon \mu_0 \in \mathcal{M}_a$  and
\begin{equation}\label{eq:mumu0}
 \mu = \frac{1}{2} (\mu + \varepsilon \mu_0) + \frac{1}{2} (\mu - \varepsilon \mu_0) .
\end{equation}
Therefore, $\mu \notin \ext(\mathcal{M}_a)$.

If, instead, $\sum_{i=1}^2  \alpha_i x_i (1-x_i)^2 < a$, we  then consider the signed measure
\[
 \mu_0 = (1-x_2)^2 \delta_{x_1} - (1-x_1)^2 \delta_{x_2} .
\]
Again, we have that, for small enough $\varepsilon >0$, $\mu \pm \varepsilon \mu_0 \in \mathcal{M}_a$ and equality \eqref{eq:mumu0} holds. We conclude that $\mu \notin \ext (\mathcal{M}_a)$.

\smallskip

\emph{Step 8: If $\mu \in \mathcal{M}_a$ is supported in more than two points, then $\mu \notin \ext (\mathcal{M}_a)$.}

In this case, there exist Borel disjoint sets $A_i \subset (0,1)$ such that $\mu|_{A_i} \neq 0$, for $i=1,2,3$.
Let $(\alpha_1, \alpha_2, \alpha_3) \in \R^3 \setminus \{ (0,0,0) \}$ be such that
\[
 \sum_{i=1}^3 \alpha_i \int_{A_i} (1-s)^2 \, \d \mu (s) = 0 \quad \text{and} \quad \sum_{i=1}^3 \alpha_i \int_{A_i} s (1-s)^2 \, \d \mu (s) = 0 .
\]
We consider the signed measure $\mu_0 = \sum_{i=1}^3 \alpha_i \mu|_{A_i}$.
For $\varepsilon >0$ small enough, define $\mu^+$ and $\mu^-$ as $\mu^{\pm} = \mu \pm \varepsilon \mu_0$.
Then $\mu = \frac{1}{2} \mu^+ + \frac{1}{2} \mu^-$ with $\mu^{\pm} \neq \mu$.
Moreover, $\mu^{\pm}$ are positive measures since

$$\mu^{\pm} = \mu|_{A_1^c \cap A_2^c\cap A_3^c} + \sum_{i=1}^3 (1\pm \varepsilon \alpha_i) \mu|_{A_i},$$
where $A^c$ stands for the complement of the set $A$ in $(0,1)$. In fact, $\mu^{\pm} \in \mathcal{M}_a$ since
\[
 \int_0^1 (1-s)^2 \, \d \mu^{\pm} (s) = \int_0^1 (1-s)^2 \, \d \mu (s) \leq 1-a \text{ and } \int_0^1 s (1-s) \, \d \mu^{\pm} (s) = \int_0^1 s (1-s) \, \d \mu (s) \leq a .
\]
Therefore, we conclude that $\mu \notin \ext (\mathcal{M}_a)$.
\smallskip

The eight steps above complete the proof.
\end{proof}

Step 8 of the previous proof is related to the works by \cite{Winkler-1988} and \cite{Pinelis-2016}, where they analyze the set of extreme points of subset of measures defined through some inequalities.

\textbf{Proof of Theorem \ref{th:extLa} in Section \ref{Section.Extremal} (Extreme points of $\mathcal{L}_a$).}  By Proposition \ref{pr:iso}, we have the equality
$$\ext(\mathcal{L}_a) = T_a^{-1} (\ext (\mathcal{M}_a)).$$
Now, we can use Proposition \ref{pr:extMa} to determine the set $\ext(\mathcal{L}_a)$. By \eqref{eq:MaLa} and Proposition \ref{pr:extMa}, we obtain three families of extreme curves in $\mathcal{L}_a$.  First, for $x_1 \in (0,a]$, let $\ell^a_{x_1}=T_a^{-1} \left(\frac{1-a}{(1-x_1)^2} \delta_{x_1}\right)$ and $\ell^a_0=T_a^{-1} \left(0 \right)$. More explicitly, we have that, for $x_1 \in [0,a]$
\[
\ell^a_{x_1} (x)   = \frac{1-a}{(1-x_1)^2} \max \{ 0 , x - x_1 \} , \quad x \in [0,1] .
\]
Second, for $x_2 \in (a, 1)$, we set $m^a_{x_2}=T_a^{-1} \left(\frac{a}{x_2 (1-x_2)} \delta_{x_2}\right)$. We obtain that
\[
 m^a_{x_2} (x)   = \frac{1}{x_2} \left[ (x_2 - a) x + \frac{a}{1-x_2} \max \{ 0 , x - x_2 \} \right], \quad x \in [0,1] .
\]
Finally, for $x_1 \in (0,a)$ and $x_2 \in (a, 1)$, let $n^a_{x_1, x_2} = T_a^{-1} \left( \frac{x_2-a}{(1-x_1) (x_2-x_1)} \delta_{x_1} + \frac{a-x_1}{(x_2-x_1)(1-x_2)} \delta_{x_2}\right)$. In this case we have that
\begin{equation*}
n^a_{x_1, x_2} (x)  =  \frac{1}{x_2 - x_1} \left[ \frac{x_2 - a}{1 - x_1} \max \{ 0 , x - x_1 \} + \frac{a - x_1}{1 - x_2} \max \{ 0 , x - x_2 \} \right] , \quad x \in [0,1] .
\end{equation*}
These curves admit the characterization as piecewise affine functions given in \eqref{eq:extremePW}.

Therefore, the proof of Theorem \ref{th:extLa} is complete. \hfill $\square$

\subsection{Proof of Theorem \ref{Theorem.Mab} in Section \ref{Section.Maximum} (Maximal distance)}

Here we compute the exact value of $M(a,b)$ in \eqref{Mab}. The proof of this theorem is long and we have divided it into several results. It is based on following proposition.

\begin{proposition}\label{pr:Mabextreme}
For $a, b \in [0,1]$, let $M (a,b)$ be defined in \eqref{Mab}. We have that
\begin{equation}\label{eq:Mab-ext}
M(a,b)=\max \{ d_{\rm L} (\ell_1, \ell_2)  : \ell_1 \in\ext (\mathcal{L}_a) \text { and }\ell_2 \in\ext (\mathcal{L}_b) \}.
\end{equation}
\end{proposition}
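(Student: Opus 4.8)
The plan is to recognize Proposition \ref{pr:Mabextreme} as a direct application of Bauer's maximum principle, precisely along the lines anticipated in the discussion following \eqref{Mab}. First I would invoke Proposition \ref{pr:LaCompact} to record that $\mathcal{L}_a$ and $\mathcal{L}_b$ are compact and convex in $L^1$; consequently their product $\mathcal{L}_a \times \mathcal{L}_b$ is a compact and convex subset of the Banach space $L^1 \times L^1$ (endowed, say, with the sum norm), which is in particular locally convex. Since $d_{\rm L}$ will be seen to be continuous, the supremum in \eqref{Mab} is genuinely attained, and it remains only to verify the two hypotheses of Bauer's principle and to identify the extreme points of the product.

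For the convexity of the objective, I would use that $(\ell_1, \ell_2) \mapsto \ell_1 - \ell_2$ is linear and that the norm is convex, so by the triangle inequality, for any two pairs and $\lambda \in [0,1]$,
\[
 d_{\rm L}\bigl(\lambda(\ell_1,\ell_2) + (1-\lambda)(\ell_1',\ell_2')\bigr) = 2\bigl\| \lambda(\ell_1-\ell_2) + (1-\lambda)(\ell_1'-\ell_2')\bigr\| \le \lambda\, d_{\rm L}(\ell_1,\ell_2) + (1-\lambda)\, d_{\rm L}(\ell_1',\ell_2') .
\]
Continuity (hence upper semicontinuity) of $d_{\rm L}$ follows from continuity of subtraction and of the norm in $L^1$. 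Bauer's maximum principle then yields that the maximum of $d_{\rm L}$ over $\mathcal{L}_a \times \mathcal{L}_b$ is attained at some extreme point of $\mathcal{L}_a \times \mathcal{L}_b$.

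Finally I would establish the standard product identity $\ext(\mathcal{L}_a \times \mathcal{L}_b) = \ext(\mathcal{L}_a) \times \ext(\mathcal{L}_b)$. For the inclusion into the product, if $(\ell_1,\ell_2)$ is extreme in $\mathcal{L}_a \times \mathcal{L}_b$ and $\ell_1 = \lambda \ell_1' + (1-\lambda)\ell_1''$ in $\mathcal{L}_a$ with $\lambda \in (0,1)$, writing $(\ell_1,\ell_2) = \lambda(\ell_1',\ell_2) + (1-\lambda)(\ell_1'',\ell_2)$ forces $\ell_1' = \ell_1''$, so $\ell_1 \in \ext(\mathcal{L}_a)$, and symmetrically $\ell_2 \in \ext(\mathcal{L}_b)$; the reverse inclusion follows by comparing components in any convex decomposition of a pair of extreme points. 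Combining this with the previous step gives \eqref{eq:Mab-ext}. I should stress that this proposition is the genuinely easy reduction step: every hypothesis is routine, and the only point deserving a moment's care is confirming that the ambient space is locally convex and the feasible set compact so that Bauer's principle applies verbatim, both of which are already supplied by Proposition \ref{pr:LaCompact}. The real difficulty lies downstream, in evaluating the resulting finite-dimensional maximum over $\ext(\mathcal{L}_a) \times \ext(\mathcal{L}_b)$ via the explicit description in Theorem \ref{th:extLa}, rather than in the present statement.
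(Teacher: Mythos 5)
Your proof is correct and follows essentially the same route as the paper: both invoke Proposition \ref{pr:LaCompact} for compactness and convexity of $\mathcal{L}_a\times\mathcal{L}_b$, note that $d_{\rm L}$ is convex and continuous, and apply Bauer's maximum principle together with the identity $\ext(\mathcal{L}_a\times\mathcal{L}_b)=\ext(\mathcal{L}_a)\times\ext(\mathcal{L}_b)$. You merely spell out the routine verifications (convexity via the triangle inequality, the product rule for extreme points) that the paper leaves implicit.
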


\begin{proof}
The distance $d_{\rm L} : \mathcal{L}_a \times \mathcal{L}_b \to \R$ is a convex and continuous functional in $L^1$. Further, by Proposition \ref{pr:LaCompact}, the convex sets $\mathcal{L}_a$ and $\mathcal{L}_b$ are  compact in $L^1$. Therefore, by Bauer's maximum principle  (see, e.g., \citet[Proposition 16.6]{Phelps-2001}), the maximum of $d_{\rm L}$ is attained at the set $\ext (\mathcal{L}_a \times \mathcal{L}_b) =  \ext (\mathcal{L}_a) \times \ext (\mathcal{L}_b)$.
\end{proof}

Proposition \ref{pr:Mabextreme} together with Theorem \ref{th:extLa} reduce the calculation of $M (a,b)$ to a finite-dimensional problem, in fact, to several problems of dimension at most $4$. Although in principle these problems can be solved using elementary analytic techniques, the computations are extremely cumbersome.
For this reason, in the following we present several auxiliary results to simplify the calculations.

\subsubsection{Proof of Proposition \ref{Proposition.Extremal.Third} in Section \ref{Section.Maximum} (Maximality property)}

Let $a\in[0,1]$ and $\ell\in\mathcal{L}_a$. To prove the inequalities in \eqref{Extremality-la}, we will first show that there exist $c^+, c^- \in[0,1]$ (depending on $\ell$) such that
\begin{align}
  \ell(x) \le  \ell_a^+(x) \text{ if } 0\le x \le c^+\quad\text{and}\quad & \ell(x) \ge  \ell_a^+(x) \text{ if } c^+\le x \le 1,\label{inequalities-1}  \\
  \ell_a^-(x) \le  \ell(x) \text{ if } 0\le x \le c^-\quad\text{and}\quad & \ell_a^- (x) \ge  \ell (x) \text{ if } c^-\le x \le 1.\label{inequalities-2}
\end{align}
To check \eqref{inequalities-1}, we note that the right derivative of $\ell$ at $0$, $\ell'(0^+)$, is necessarily less than or equal to $1-a$. Further, $\ell'(0^+)=1-a$ if and only if $\ell=\ell_a^+$, as in this case $\ell_a^+$ is a supporting line of $\ell$ at $0$. If $\ell'(0^+)<1-a$, then $\ell-\ell_a^+$ is a continuous and convex function in $[0,1)$, starting at $0$, with negative derivative at $0$, and zero integral. Hence, there exists $c^+$ satisfying \eqref{inequalities-1}. To prove \eqref{inequalities-2}, observe that $\ell_a^--\ell\le 0$ in $[0,a)$. Also, $\ell(a)=0$ if and only if $\ell=\ell_a^-$. If $\ell(a)>0$, then $\ell_a^--\ell$ is a continuous and concave function in $[a,1)$ such that $\ell_a^-(1)-\ell(1^-)\ge 0$. As $\ell_a^--\ell$ has zero integral, there exists $c^-\in (a,1)$ satisfying \eqref{inequalities-2}.

Now, for $0\le t\le c^+$, by \eqref{inequalities-1}, we directly have that $\int_{0}^{t} \ell_a^+(x)\,\d x \ge \int_{0}^{t} \ell(x)\,\d x$. For  $c^+\le t\le 1$, as $\ell$ and $\ell_a^+$ have the same integral in $(0,1)$, we have that
\begin{align*}
  \int_{0}^{t} \ell_a^+(x)\,\d x & =  \int_{0}^{1} \ell_a^+(x)\,\d x - \int_{t}^{1} \ell_a^+(x)\,\d x \\
   & \ge  \int_{0}^{1} \ell(x)\,\d x - \int_{t}^{1} \ell(x)\,\d x \\
   & = \int_{0}^{t} \ell(x)\,\d x.
\end{align*}
An analogous reasoning shows the second inequality in \eqref{Extremality-la} and the proof is complete.
\hfill $\square$

%Moreover, we will split this subsection into smaller ones, according to the steps of the proof.

\subsubsection{The case in which $\ell-m$ has a most one sign change}

Firstly, we carry out a detailed analysis of the maximum value of the distance $d_{\rm L} (\ell,m)$ (for $\ell\in\mathcal{L}_a$ and $m\in\mathcal{L}_b$) when the number of crossing points of the functions $\ell$ and $m$ is less than one. In the following lemma we consider the simpler case in which the Lorenz curves are ordered.

\begin{lemma}\label{lemma:ordered}
For $a,b\in[0,1]$, let $\ell \in \mathcal{L}_a$ and $m \in \mathcal{L}_b$. We have that $d_{\rm L}(\ell,m) = |a-b| $ if and only if $\ell\le m$ or $m\le \ell$. In addition,
$$\min\{ d_{\rm L} (\ell, m)  : \ell \in\mathcal{L}_a \text { and }m \in\mathcal{L}_b \} =| a - b|.$$
\end{lemma}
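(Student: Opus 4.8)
The plan is to deduce the entire statement from the elementary fact that for $f \in L^1$ one has $\int_0^1 |f| \ge \left| \int_0^1 f \right|$, with equality precisely when $f$ does not change sign a.e. First I would record that, by the definitions of $\mathcal{L}_a$ and $\mathcal{L}_b$ (recall $\mathcal{L}_a = \{\ell \in \mathcal{L} : \|\ell\| = (1-a)/2\}$), every $\ell \in \mathcal{L}_a$ and $m \in \mathcal{L}_b$ satisfy $\int_0^1 \ell = \|\ell\| = (1-a)/2$ and $\int_0^1 m = (1-b)/2$. Hence
\[
 d_{\rm L}(\ell,m) = 2\int_0^1 |\ell-m| \ge 2\left| \int_0^1 (\ell-m) \right| = |(1-a)-(1-b)| = |a-b| ,
\]
which already yields the lower bound $\min\{ d_{\rm L}(\ell,m) : \ell \in \mathcal{L}_a,\ m \in \mathcal{L}_b \} \ge |a-b|$.

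For the two implications and the attainment of the minimum I would argue as follows. If $\ell \le m$ (or $m \le \ell$) on $[0,1]$, then $\ell-m$ has constant sign, so the triangle inequality above is an equality and $d_{\rm L}(\ell,m) = |a-b|$; this proves the ``if'' direction. Taking, for instance, the pair $\ell_a^+, \ell_b^+$ in \eqref{la+-la-}, which on $[0,1)$ are the proportional functions $(1-a)t$ and $(1-b)t$ and are therefore ordered, exhibits an ordered pair in $\mathcal{L}_a \times \mathcal{L}_b$ attaining the value $|a-b|$; together with the lower bound, this shows that the minimum equals $|a-b|$. For the ``only if'' direction, suppose $d_{\rm L}(\ell,m) = |a-b|$. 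Then equality holds in $\int_0^1 |\ell-m| = \left| \int_0^1 (\ell-m) \right|$, which forces $\ell-m \ge 0$ a.e.\ or $\ell-m \le 0$ a.e.

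The remaining point is to upgrade this almost-everywhere statement to the pointwise ordering asserted in the lemma. Here I would invoke the continuity of the representatives of elements of $\mathcal{L}_a$ and $\mathcal{L}_b$ (Proposition \ref{pr:LaContinuous}(a), together with the continuous redefinition of functions in $\mathcal{L}$ at $t=1$ adopted at the start of the appendix): a continuous function on $[0,1]$ that is $\ge 0$ almost everywhere is $\ge 0$ everywhere, so $\ell \ge m$ on all of $[0,1]$ (and symmetrically in the other case). I do not expect any genuine obstacle here: the only points requiring care are the precise equality case of the integral triangle inequality and this passage from ``almost everywhere'' to ``everywhere'', both of which are routine once continuity is in hand.
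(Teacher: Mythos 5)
Your proof is correct and follows essentially the same route as the paper's: both obtain the lower bound $d_{\rm L}(\ell,m)\ge|a-b|$ from the reverse triangle inequality $2\left|\|\ell\|-\|m\|\right|\le 2\|\ell-m\|$ (which, since $\ell,m\ge 0$, is exactly your $\int_0^1|\ell-m|\ge\left|\int_0^1(\ell-m)\right|$), characterize equality via the constant-sign/continuity argument, and exhibit the ordered pair $(\ell_a^+,\ell_b^+)$ to show the minimum is attained. Your explicit treatment of the equality case and of the passage from almost-everywhere to pointwise ordering via continuity is a slightly more careful rendering of the same argument.
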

\begin{proof}
By \eqref{Gini-Lorenz} and the triangular inequality, we always have that
\begin{equation*}
  |a-b| =  2 |\Vert \ell_1 \Vert -\Vert \ell_2 \Vert   |\le 2 \Vert\ell_1-\ell_2 \Vert = d_{\rm L} (\ell_1, \ell_2)\le M(a,b).
\end{equation*}
Moreover, if the inequality above is an equality, we conclude that $\ell$ and $m$ are ordered since $|\ell-m|-||\ell|-|m||$ is a continuous and non-negative function on $(0,1)$. The last part follows from the fact that $ d_{\rm L} (\ell_a^+, \ell_b^+)=|a-b|$, where $\ell_a^+$ and $\ell_b^+$ are defined as in \eqref{la+-la-}.
\end{proof}

Observe that the minimal distance is attained when the curves $\ell \in \mathcal{L}_a$ and $m \in \mathcal{L}_b$ are (pointwise) ordered, i.e., when the underlying variables are ordered in the Lorenz sense.

The second result (Lemma \ref{le:signswitch} below) analyzes the maximum distance between pairs of curves with only one sign switch. We need the following auxiliary lemma.

%\textbf{Proof of Proposition \ref{Proposition.M}.} Parts (a) and (b) are immediate from the definition of $M$. To show (c) it is enough to observe that $\mathcal{L}_0=\{ \ell_{\rm pe} \}$ and $\mathcal{L}_1=\{ \ell_{\rm pi} \}$. %; the perfect equality and inequality Lorenz curves in \eqref{pe-pi-curve}.
%Finally, if $\ell_1\in \mathcal{L}_a$ and $\ell_2\in\mathcal{L}_b$, by \eqref{Gini-Lorenz} and the triangular inequality, we have that
%\begin{equation*}
%  |a-b| =  2 |\Vert \ell_1 \Vert -\Vert \ell_2 \Vert   |\le 2 \Vert\ell_1-\ell_2 \Vert\le M(a,b).
%\end{equation*}
%\hfill $\square$

%\medskip

\begin{lemma}\label{le:convexorder}
For any convex function $\varphi : [0, 1] \to \R$, we have that
\begin{equation}\label{convex-order}
 \int_0^1 \varphi \, \d \ell_a^- \leq \int_0^1 \varphi \, \d \ell \leq \int_0^1 \varphi \, \d \ell_a^+ .
\end{equation}
\end{lemma}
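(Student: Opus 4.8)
The plan is to read the three inequalities in \eqref{convex-order} as an assertion of convex order between the probability measures $\d\ell$, $\d\ell_a^-$ and $\d\ell_a^+$, and to reduce that ordering to the ``head'' inequalities already established in Proposition \ref{Proposition.Extremal.Third}. First I would record the two scalar invariants shared by all three measures. Since each of $\ell$, $\ell_a^-$, $\ell_a^+$ is non-decreasing with value $0$ at $0$ and $1$ at $1$, the Stieltjes measures $\d\ell$, $\d\ell_a^-$, $\d\ell_a^+$ are probability measures on $[0,1]$; note that $\d\ell_a^+$ carries an atom of mass $a$ at the point $1$, coming from the jump of $\ell_a^+$. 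Integration by parts (which automatically accounts for that atom, since the boundary term uses $\ell(1)=1$) gives
\[
 \int_0^1 x\,\d\ell(x) = \ell(1) - \int_0^1 \ell(s)\,\d s = 1 - \frac{1-a}{2} = \frac{1+a}{2},
\]
and the same value for $\ell_a^\pm$, because every member of $\mathcal{L}_a$ has $\|\ell\| = \tfrac{1-a}{2}$. Thus all three measures have mass $1$ and common mean $\tfrac{1+a}{2}$.

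The key object is the stop-loss transform $\Pi_{\d\ell}(t) = \int_0^1 (x-t)^+\,\d\ell(x)$. Integrating by parts (again keeping the atom of $\ell_a^+$ at $1$) I would obtain the identity
\[
 \Pi_{\d\ell}(t) = \int_t^1 \bigl(1-\ell(s)\bigr)\,\d s, \qquad t\in[0,1].
\]
Because $\ell$, $\ell_a^+$ and $\ell_a^-$ all integrate to $\tfrac{1-a}{2}$ on $[0,1]$, the ``head'' inequalities of Proposition \ref{Proposition.Extremal.Third}, namely $\int_0^t \ell_a^+ \ge \int_0^t \ell \ge \int_0^t \ell_a^-$, are equivalent to the complementary ``tail'' inequalities $\int_t^1 \ell_a^+ \le \int_t^1 \ell \le \int_t^1 \ell_a^-$, and hence, through the identity above, to the ordering of stop-loss transforms
\[
 \Pi_{\d\ell_a^-}(t) \le \Pi_{\d\ell}(t) \le \Pi_{\d\ell_a^+}(t), \qquad t\in[0,1].
\]

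To finish I would invoke the integral representation of a convex function $\varphi:[0,1]\to\R$,
\[
 \varphi(x) = \varphi(0) + \varphi'(0^+)\,x + \int_{[0,1]} (x-t)^+\,\d\varphi''(t),
\]
where $\varphi''$ is the non-negative second-derivative (Radon) measure of $\varphi$. Integrating this against any of the three probability measures and using that they share mass $1$ and mean $\tfrac{1+a}{2}$, the affine contribution $\varphi(0) + \varphi'(0^+)\tfrac{1+a}{2}$ is identical in all three cases, so the differences reduce to the $\Pi$-terms:
\[
 \int_0^1\varphi\,\d\ell - \int_0^1\varphi\,\d\ell_a^- = \int_{[0,1]}\bigl(\Pi_{\d\ell}(t)-\Pi_{\d\ell_a^-}(t)\bigr)\,\d\varphi''(t) \ge 0,
\]
and likewise $\int_0^1\varphi\,\d\ell_a^+ - \int_0^1\varphi\,\d\ell \ge 0$, since $\varphi''\ge 0$ and the stop-loss transforms are ordered. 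This establishes \eqref{convex-order}. The step that requires the most care is the consistent treatment of the atom of $\d\ell_a^+$ at the endpoint $1$: it must be handled as a genuine point mass throughout (in the mass and mean computations, in the identity for $\Pi$, and in the representation integral, where $(x-t)^+$ vanishes at $t=1$ so the atom contributes nothing), rather than being replaced by an absolutely continuous density. Once this bookkeeping is consistent, the entire argument reduces to Proposition \ref{Proposition.Extremal.Third} combined with the convex representation and the non-negativity of $\varphi''$.
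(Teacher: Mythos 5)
Your proof is correct and follows essentially the same route as the paper: the paper identifies each $\ell\in\mathcal{L}_a$ with the distribution function of a random variable on $[0,1]$, notes that all such variables share the same mean, and deduces \eqref{convex-order} from Proposition \ref{Proposition.Extremal.Third} by citing the standard characterization of the convex order (Theorem 3.A.1 of Shaked--Shanthikumar), which is exactly the equal-means-plus-ordered-stop-loss-transforms criterion you verify. The only difference is that you unpack that citation by reproving the equivalence via the integral representation of convex functions, with correct bookkeeping of the atom of $\d\ell_a^+$ at $1$.
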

\begin{proof}
For this proof we consider the original definition of $\mathcal{L}$ given by \eqref{eq:L}, i.e., any $\ell \in \mathcal{L}$ satisfies $\ell (1) = 1$ instead of $\ell (1) = \sup_{[0,1)} \ell$. In this way, each $\ell\in\mathcal{L}_a$ is itself a distribution function of a random variable, say $X_\ell$, concentrated on the interval $[0,1]$. The inequalities in \eqref{Extremality-la} together with the fact that all variables $X_\ell$ ($\ell\in\mathcal{L}_a$) have the same expectation imply \eqref{convex-order}; see \cite[Theorem 3.A.1]{Shaked-Shanthikumar-2006}.
\end{proof}

We now establish the maximum value of the Lorenz distance for curves with one crossing point.

\begin{lemma}\label{le:signswitch}
Let $a, b \in [0,1]$ and consider $\ell \in \mathcal{L}_a$, $m \in \mathcal{L}_b$ and $\ell_a^\pm$, $\ell_b^\pm$ the curves defined in \eqref{la+-la-}. We have that
\begin{enumerate}[(a)]
\item\label{item:Asignswitch} If for some $t_0 \in (0,1)$ we have $\ell \leq m$ in $[0, t_0]$ and $\ell \geq m$ in $[t_0, 1]$, then $ d_{\rm L} (\ell, m)   \leq  d_{\rm L} (\ell_a^-, \ell_b^+)$.

\item\label{item:Bsignswitch}  If for some $t_0 \in (0,1)$ we have $\ell \geq m$ in $[0, t_0]$ and $\ell \leq m$ in $[t_0, 1]$, then $d_{\rm L} (\ell, m)  \leq  d_{\rm L} (\ell_a^+,  \ell_b^-)$.
\end{enumerate}
Moreover, it holds that
\begin{equation}\label{eq:maximal-distance}
d_{\rm L} (\ell_a^-, \ell_b^+)  =d_{\rm L} (\ell_a^+,  \ell_b^-) =\frac{(1-a)b^2 +(1-b)a^2}{a+b-ab}.
\end{equation}
\end{lemma}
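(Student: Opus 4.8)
The plan is to reduce everything to the single ``integrated'' comparison supplied by Proposition \ref{Proposition.Extremal.Third}. I introduce the primitives $H(t)=\int_0^t(\ell-m)$ and $\widetilde H(t)=\int_0^t(\ell_a^--\ell_b^+)$. Since $\ell\in\mathcal{L}_a$ and $m\in\mathcal{L}_b$, both satisfy $H(1)=\widetilde H(1)=(b-a)/2$. Adding the two halves of \eqref{Extremality-la}, namely $\int_0^t\ell\ge\int_0^t\ell_a^-$ and $\int_0^t m\le\int_0^t\ell_b^+$, gives the pointwise domination $H(t)\ge\widetilde H(t)$ for every $t\in[0,1]$; this is the only place where the earlier extremality result enters.

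For part \eqref{item:Asignswitch}, the hypothesis that $\ell-m$ is $\le 0$ on $[0,t_0]$ and $\ge 0$ on $[t_0,1]$ lets me split the absolute value and obtain $\int_0^1|\ell-m|=H(1)-2H(t_0)$, so that $d_{\rm L}(\ell,m)=(b-a)-4H(t_0)$. I would then check that the extremal pair has exactly this single-crossing shape: $\ell_a^-\le\ell_b^+$ to the left and $\ell_a^-\ge\ell_b^+$ to the right of the unique crossing point $s_0=a/(a+b-ab)$, which lies in $(a,1)$ because $a+b-ab=1-(1-a)(1-b)\le 1$. Consequently $\widetilde H$ decreases then increases, so it attains its global minimum precisely at $s_0$. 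Chaining the two facts, $H(t_0)\ge\widetilde H(t_0)\ge\widetilde H(s_0)$, yields $d_{\rm L}(\ell,m)=(b-a)-4H(t_0)\le(b-a)-4\widetilde H(s_0)=d_{\rm L}(\ell_a^-,\ell_b^+)$. Part \eqref{item:Bsignswitch} is the mirror image: now $H(t_0)\ge 0$ and $d_{\rm L}(\ell,m)=4H(t_0)-(b-a)$, and I use the reversed inequalities $\int_0^t\ell\le\int_0^t\ell_a^+$ and $\int_0^t m\ge\int_0^t\ell_b^-$ from \eqref{Extremality-la}, so that $H$ is dominated \emph{above} by the primitive of $\ell_a^+-\ell_b^-$, whose maximum is attained at the crossing point of $\ell_a^+$ and $\ell_b^-$; the same chaining produces the bound $d_{\rm L}(\ell_a^+,\ell_b^-)$.

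Finally, for the closed form \eqref{eq:maximal-distance}, I would evaluate $d_{\rm L}(\ell_a^-,\ell_b^+)=(b-a)-4\widetilde H(s_0)$ directly. Since $\ell_a^-$ vanishes on $[0,a]$ and equals $(t-a)/(1-a)$ afterwards while $\ell_b^+(t)=(1-b)t$, the integral $\widetilde H(s_0)$ is elementary; using $s_0-a=a(1-a)(1-b)/(a+b-ab)$ it collapses to $-a^2(1-b)/\big(2(a+b-ab)\big)$, and substituting back produces the stated $\big((1-a)b^2+(1-b)a^2\big)/(a+b-ab)$. The equality $d_{\rm L}(\ell_a^+,\ell_b^-)=d_{\rm L}(\ell_a^-,\ell_b^+)$ then follows either by the same computation or simply by noting the formula is symmetric under $a\leftrightarrow b$ together with $d_{\rm L}(\ell_a^+,\ell_b^-)=d_{\rm L}(\ell_b^-,\ell_a^+)$.

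The step I expect to be the genuine crux is recognizing that the two inequalities must be combined in a ``crossed'' way: the pointwise domination $H\ge\widetilde H$ moves from an arbitrary pair to the extremal one at the \emph{same} abscissa $t_0$, while the minimality of $\widetilde H$ at $s_0$ then transports the evaluation point from $t_0$ to the crossing point $s_0$ of the extremal pair. Neither inequality alone suffices, and it is exactly the monotone-then-monotone shape of $\widetilde H$ (a consequence of the single sign change of $\ell_a^--\ell_b^+$) that makes the second move possible.
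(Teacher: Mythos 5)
Your proof is correct, and it rests on the same engine as the paper's — the integrated comparison of Proposition \ref{Proposition.Extremal.Third} — but the execution is genuinely more elementary. The paper first converts \eqref{Extremality-la} into a convex-order statement (its Lemma on $\int\varphi\,\d\ell_a^-\le\int\varphi\,\d\ell\le\int\varphi\,\d\ell_a^+$ for convex $\varphi$, via Theorem 3.A.1 of Shaked and Shanthikumar), then tests against the specific convex $1$-Lipschitz function $s_{t_0}(x)=t_0-x+2\max\{0,x-t_0\}$, whose Stieltjes integral against $\d(m-\ell)$ reproduces $\|\ell-m\|$, and finally invokes the Kantorovich--Rubinstein duality for the Wasserstein distance on the line to pass to $\|\ell_b^+-\ell_a^-\|$. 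Your argument unpacks all of this: the identity $\|\ell-m\|=H(1)-2H(t_0)$ is exactly what integration by parts against $s_{t_0}$ produces, the pointwise domination $H\ge\widetilde H$ is the convex-order step specialized to this test function, and the transport from $t_0$ to $s_0$ via the unimodality of $\widetilde H$ replaces the duality step (which in the paper amounts to observing that $\sup_{t_0}\bigl(\widetilde H(1)-2\widetilde H(t_0)\bigr)$ is attained at the crossing point). What your route buys is self-containedness — no Wasserstein distance, no duality, no appeal to the stochastic-orders literature — at the cost of having to verify by hand the single-crossing shape of $\ell_a^--\ell_b^+$ and the location $s_0=a/(a+b-ab)$ of its sign change; your closed-form evaluation of $\widetilde H(s_0)$ also checks out against the paper's area computation. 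The only caveats are cosmetic: the claim $s_0\in(a,1)$ silently assumes $b>0$ and $a<1$ (the degenerate cases are trivial since then the curves are ordered and Lemma \ref{lemma:ordered} applies), and you correctly avoid any circularity since Proposition \ref{Proposition.Extremal.Third} is established independently of this lemma.
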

\begin{proof}
As in Lemma \ref{le:convexorder}, in this proof we consider the definition of $\mathcal{L}$ given by \eqref{eq:L}, i.e., any $\ell \in \mathcal{L}$ satisfies $\ell (1) = 1$. Thus, each function in $\mathcal{L}$ is a distribution function of a random variable concentrated on $[0,1]$.

We only show \emph{(\ref{item:Asignswitch})}, as the proof of \emph{(\ref{item:Bsignswitch})} is analogous.
Let us consider the function $s_{t_0} : [0,1] \to \R$ defined as
\[
 s_{t_0} (x) = t_0 - x + 2 \max \{ 0, x- t_0 \} .
\]
Clearly, the function $s_{t_0}$ is convex, Lipschitz and its distributional derivative is given by $s_{t_0}' = - 1_{[0, t_0]} + 1_{(t_0, 1]}$, where $1_A$ stands for the indicator function of the set $A$.
By integration by parts in the Lebesgue--Stieltjes integral, we obtain that
\[
 \int_0^1 s_{t_0} \, \d (m - \ell) = \int_0^1 (\ell - m) \, \d s_{t_0} = \int_0^{t_0} (m - \ell) + \int_{t_0}^1 (\ell - m) = \left\| \ell - m \right\| .
\]
As $s_{t_0}$ is convex, by Lemma \ref{le:convexorder}, we conclude that
\[
 \int_0^1 s_{t_0} \, \d (m - \ell) \leq \int_0^1 s_{t_0} \, \d \ell_b^+ - \int_0^1 s_{t_0} \, \d \ell_a^- .
\]
Now, from the expression of the Wasserstein distance between probability distributions on the line (see \cite{Vallender-1974}) and by virtue of the Kantorovich--Rubinstein duality (see \citet[eq.\ (6.3)]{Villani-2009}), we have that
\[
 \left\| \ell_b^+ - \ell_a^- \right\| = \sup \left\{ \int_0^1 f \, \d (\ell_b^+ - \ell_a^-) : \, f \in W^{1,\infty} (0,1) , \, \| f' \|_{\infty} \leq 1 \right\}.
\]

Putting together the relations above we obtain that
\[
 \left\| \ell - m \right\| = \int_0^1 s_{t_0} \, \d (m - \ell) \leq \int_0^1 s_{t_0} \, \d (\ell_b^+ - \ell_a^-) \leq \left\| \ell_b^+ - \ell_a^- \right\|.
\]
Therefore, part \emph{(\ref{item:Asignswitch})} of this lemma is proved.

To finish, we note that the curves $\ell_a^+$ and $\ell_b^-$ have one sign switch at the point $t_0=a/(a+b-ab)$. The Lorenz distance in \eqref{eq:maximal-distance} can be directly computed by elementary geometry (as twice the sum of the areas of two triangles; see Figure \ref{fi:maximales}),
\begin{equation*}
d_{\rm L} (\ell_a^-, \ell_b^+)  = a(1-b) t_0 + b (1-t_0)=\frac{(1-a)b^2 +(1-b)a^2}{a+b-ab},
\end{equation*}
which completes the proof of the lemma.
\end{proof}

\subsubsection{Symmetry properties of the set $\mathcal{L}_a$}

Another observation that simplifies to a great extend the calculations is a symmetry reasoning. Given the graph of a function in $\mathcal{L}_a$, its symmetry along the line $y = 1-x$ corresponds to the graph of another curve in $\mathcal{L}_a$.
The precise definition, statement and proof are as follows.
\begin{lemma}\label{le:tilde}
For $\ell \in \mathcal{L}$, define $\ell^{-1} : [0,1] \to [0,1]$ as in \eqref{eq:F-1}. Let us consider the function $\tilde{\ell} : [0,1] \to [0,1]$ given by
\[
 \tilde{\ell} (x) = 1 - \ell^{-1} (1-x), \quad x \in [0,1].
\]
We have that the map $\ell \mapsto \tilde{\ell}$ is a bijective isometry from $(\mathcal{L},d_{\rm L})$ to $(\mathcal{L},d_{\rm L})$ whose inverse is itself. That is, for any $\ell$, $m \in \mathcal{L}$, we have that $d_{\rm L} (\ell, m) = d_{\rm L} (\tilde{\ell} ,\tilde{m})$.

Moreover, for $a\in[0,1]$, the map $\ell \mapsto \tilde{\ell}$ is also a bijective isometry from $(\mathcal{L}_a,d_{\rm L})$ to $(\mathcal{L}_a,d_{\rm L})$.
\end{lemma}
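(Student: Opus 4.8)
The plan is to recognise $\tilde\ell$ as the reflection of the graph of $\ell$ across the anti-diagonal $\{y=1-x\}$ and to exploit that such a reflection is an affine involution of the plane that preserves convexity and Lebesgue area. Let $R:\R^2\to\R^2$ be the orthogonal reflection across the line $\{x+y=1\}$, namely $R(x,y)=(1-y,1-x)$. A direct computation gives $R(x,\ell(x))=(1-\ell(x),1-x)$, and setting the first coordinate equal to $s$ and inverting (using \eqref{eq:F-1}) shows that the image of the graph of $\ell$ under $R$ is precisely the graph of $\tilde\ell$; equivalently, $\tilde\ell$ is the (generalized) inverse of $x\mapsto 1-\ell(1-x)$.

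First I would verify that $\tilde\ell\in\mathcal L$. Since $\ell$ is convex and non-decreasing, its generalized inverse $\ell^{-1}$ is concave and non-decreasing, so $x\mapsto\ell^{-1}(1-x)$ is concave and non-increasing, whence $\tilde\ell(x)=1-\ell^{-1}(1-x)$ is convex and non-decreasing. For the boundary values, convexity forces $\ell<1$ on $[0,1)$, so $\ell^{-1}(1)=1$ and $\tilde\ell(0)=0$; and $\ell(0)=0$ gives $\ell^{-1}(0)=0$ and $\tilde\ell(1)=1$. Thus $\tilde\ell\in\mathcal L$. The involution property $\widetilde{\tilde\ell}=\ell$ is then immediate from $R\circ R=\mathrm{id}$; analytically it is the identity $(\ell^{-1})^{-1}=\ell$, which holds off the at most countable set of abscissae where $\ell$ has a plateau or a jump, hence as an equality in $\mathcal L$. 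In particular $\ell\mapsto\tilde\ell$ is a bijection of $\mathcal L$ onto itself, equal to its own inverse.

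The isometry is the core step. The key tool is the classical rearrangement identity for the $L^1$/Wasserstein distance on the line (see \cite{Vallender-1974}): regarding $\ell,m\in\mathcal L$ as distribution functions of variables supported on $[0,1]$ (as in the proof of Lemma~\ref{le:convexorder}), one has
\[
 \int_0^1 |\ell(x)-m(x)|\,\d x=\int_0^1 |\ell^{-1}(u)-m^{-1}(u)|\,\d u .
\]
Because $\tilde\ell(x)-\tilde m(x)=m^{-1}(1-x)-\ell^{-1}(1-x)$, the change of variable $u=1-x$ yields
\[
 \Vert\tilde\ell-\tilde m\Vert=\int_0^1|\ell^{-1}(u)-m^{-1}(u)|\,\d u=\Vert\ell-m\Vert ,
\]
that is, $d_{\rm L}(\tilde\ell,\tilde m)=d_{\rm L}(\ell,m)$. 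Geometrically this is merely the statement that the area enclosed between the two graphs is preserved by the reflection $R$.

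Finally, for the statement about $\mathcal L_a$ I would observe that $\ell_{\rm pe}$ is a fixed point of the reflection, since $\widetilde{\ell_{\rm pe}}(x)=1-\ell_{\rm pe}^{-1}(1-x)=1-(1-x)=x=\ell_{\rm pe}(x)$. Hence, using $G(\ell)=d_{\rm L}(\ell,\ell_{\rm pe})$ together with the isometry just established,
\[
 G(\tilde\ell)=d_{\rm L}(\tilde\ell,\ell_{\rm pe})=d_{\rm L}(\tilde\ell,\widetilde{\ell_{\rm pe}})=d_{\rm L}(\ell,\ell_{\rm pe})=G(\ell),
\]
so $\tilde\ell\in\mathcal L_a$ whenever $\ell\in\mathcal L_a$, and the bijective isometry of $\mathcal L$ restricts to one of $(\mathcal L_a,d_{\rm L})$. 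I expect the only genuinely delicate point to be the bookkeeping for the generalized inverse: both the rearrangement identity above and the involution $\widetilde{\tilde\ell}=\ell$ must be handled in the presence of the flat pieces and jumps of $\ell$. This is harmless precisely because $d_{\rm L}$ is an $L^1$ distance, so the discrepancies, confined to a measure-zero set (and to the single endpoint affected by the redefinition $\ell(1)=\sup_{[0,1)}\ell$), do not affect any of the integrals.
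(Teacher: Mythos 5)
Your proof is correct, and for the heart of the lemma---the isometry---it follows the same route as the paper: both reduce $\Vert \tilde{\ell}-\tilde{m}\Vert$ to $\Vert \ell^{-1}-m^{-1}\Vert$ by the change of variables $u=1-x$ and then invoke the Wasserstein/rearrangement identity of Vallender to conclude $\Vert \ell^{-1}-m^{-1}\Vert=\Vert \ell-m\Vert$; likewise, both establish $\tilde{\ell}\in\mathcal{L}$ from the fact that the (generalized) inverse of an increasing convex function is concave, the paper simply being more explicit about the structure of $\ell^{-1}$ via the set $\{\ell=0\}$ and the formula for $\tilde{\ell}$. Where you genuinely diverge is the final step, the invariance of $\mathcal{L}_a$: the paper computes $\Vert\tilde{\ell}\Vert=(1-a)/2$ directly using Laisant's formula for the integral of an inverse function, whereas you observe that $\ell_{\rm pe}$ is a fixed point of $\ell\mapsto\tilde{\ell}$ and deduce $G(\tilde{\ell})=d_{\rm L}(\tilde{\ell},\widetilde{\ell_{\rm pe}})=d_{\rm L}(\ell,\ell_{\rm pe})=G(\ell)$ from the isometry already proved. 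Your version is shorter and reuses the main estimate instead of introducing a new integration identity, at the mild cost that the statement about $\mathcal{L}_a$ becomes logically dependent on the isometry rather than being a self-contained computation; either way the conclusion is the same, and your acknowledged bookkeeping issues with flat pieces and the endpoint convention are indeed harmless for the $L^1$ statements involved.
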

\begin{proof}
As in the proof of Lemma \ref{le:signswitch}, we consider the original definition of $\mathcal{L}$ given by \eqref{eq:L}, i.e., any $\ell \in \mathcal{L}$ satisfies $\ell (1) = 1$.

Let $\ell\in \mathcal{L}$. To check the first assertion we will first show that $\tilde{\ell}\in \mathcal{L}$. Let us consider $x_{\ell} = \sup \{ x \in [0,1] : \ell (x) = 0 \}$.
We observe that $\ell : (x_{\ell}, 1 ) \to (0, \ell (1^-))$ is a bijection and
\begin{equation}\label{eq:ell-1}
 \ell^{-1} (y) = \begin{cases}
 0 & \text{if } y = 0 , \\
 \left( \ell|_{(x_{\ell}, 1)} \right)^{-1} (y) & \text{if } y \in (0, \ell (1^-)) , \\
 1 & \text{if } y \in [\ell (1^-) , 1 ] .
 \end{cases}
\end{equation}
Indeed, by Proposition \ref{pr:LaContinuous}, $\ell$ is non-decreasing.
We shall see that $\ell : [x_{\ell}, 1 ] \to \R$ is strictly increasing.
Assume, by contradiction, that there exist $x_1 , x_2$ with $x_{\ell} \leq x_1 < x_2 \leq 1$ and $\ell (x_1) = \ell (x_2)$.
Then, the constant function $\ell (x_2)$ is a supporting line of $\ell$ at $(x_2, \ell (x_2))$, so, by convexity, $\ell \geq \ell (x_2)$.
Since $x_2 > x_{\ell}$, by definition of $x_{\ell}$ we have that $\ell (x_2) > 0$.
In particular $\ell (0) > 0$, which is a contradiction. We conclude that $\ell : [x_{\ell}, 1 ] \to \R$ is strictly increasing. By Proposition \ref{pr:LaContinuous}, $\ell|_{[0,1)}$ is continuous and, in particular, $\ell : (x_{\ell}, 1 ) \to (0, \ell (1^-))$ is a bijection. Moreover, $\ell|_{[0, x_{\ell}]} = 0$ and $\ell (1) = 1$.
With these properties of $\ell$, it is immediate to check that $\ell^{-1}$ is given as in \eqref{eq:ell-1}.
Now, from \eqref{eq:ell-1}, we obtain that %Altogether, we have shown that $\tilde{\ell} \in \mathcal{L}_a$.
\begin{equation}\label{eq:tildel}
 \tilde{\ell} (x) = \begin{cases}
 0 & \text{if } x \in [0, 1 - \ell(1^-)] , \\
 1 - \left( \ell|_{(x_{\ell}, 1)} \right)^{-1} (1-x) & \text{if } x \in (1 - \ell(1^-), 1) , \\
 1 & \text{if } x = 1 .
 \end{cases}
\end{equation}
The inverse of an increasing convex function is concave (see \citet[Example 1.6]{Simon-2011}). Hence, the function $\ell^{-1}$ in \eqref{eq:ell-1} is concave, and this implies that $\tilde{\ell}$ in \eqref{eq:tildel} is convex.
In particular, it follows that $\tilde{\ell}\in \mathcal{L}$.

Next we will check that $\widetilde{\tilde{\ell}} = \ell$. It can be easily seen that $\tilde{\ell} (1^-) = 1 - x_{\ell}$ and $x_{\tilde{\ell}} = 1 - \ell (1^-)$. Moreover, $\left( \tilde{\ell}|_{(x_{\tilde{\ell}}, 1)} \right)^{-1} : (0, 1 - x_{\ell}) \to (x_{\tilde{\ell}}, 1)$ is given by $\left( \tilde{\ell}|_{(x_{\tilde{\ell}}, 1)} \right)^{-1} (y) = 1 - \ell (1-y)$. Therefore, by \eqref{eq:tildel}, we conclude that $\widetilde{\tilde{\ell}} = \ell$.
%\[
% \widetilde{\tilde{\ell}} (x) = \begin{cases}
% 0 & \text{if } x \in [0, 1 - \ell(1^-)] , \\
% 1 - \left( \ell|_{(x_{\ell}, 1)} \right)^{-1} (1-x) & \text{if } x \in (1 - \ell(1^-), 1) , \\
% 1 & \text{if } x = 1 ,
% \end{cases}
%\]

Next we need to prove that the map $\ell \mapsto \tilde{\ell}$ is an isometry. We consider $m \in \mathcal{L}$ and observe first that  $\| \tilde{\ell} - \tilde{m} \|=\| \ell^{-1} - m^{-1} \|$. Further, as $\ell$ and $m$ are distribution functions of random variables concentrated on the interval $[0,1]$, by using the expression of the Wasserstein distance for real-valued random variables (see \cite{Vallender-1974}), we have that $\| \ell^{-1} - m^{-1} \|= \| \ell - m \|$. Hence, we conclude that $d_{\rm L} (\ell, m) = d_{\rm L} (\tilde{\ell} ,\tilde{m})$.

Finally, to check the last assertion of the lemma it is enough to verify that, for $a\in[0,1]$ and $\ell\in\mathcal{L}_a$, one has $G( \tilde{\ell})=a$. Let us fix $\ell\in\mathcal{L}_a$. We will equivalently show that $\| \tilde{\ell} \|=(1-a)/2$. We apply Laisant's formula (see, for instance, \cite{Parker-1955}) for the integral of the inverse to obtain
% https://www.maa.org/sites/default/files/pdf/mathdl/CMJ/Key136-138.pdf
% https://arxiv.org/pdf/1312.3839.pdf
% https://en.wikipedia.org/wiki/Integral_of_inverse_functions
% https://math.stackexchange.com/questions/1467784/inverse-of-a-functions-integral/1467792
% https://calculus.subwiki.org/wiki/Inverse_function_integration_formula
% https://mathworld.wolfram.com/InverseFunctionIntegration.html
\[
 \int_0^{\ell (1^-)} \left( \ell|_{(x_{\ell}, 1)} \right)^{-1} (y) \, \d y = \ell (1^-) - \int_{x_{\ell}}^1 \ell (x) \, \d x = \ell (1^-) - \frac{1-a}{2} .
\]
Therefore,
\[
 \int_0^1 \ell^{-1} (y) \, \d y = \int_0^{\ell (1^-)} \left( \ell|_{(x_{\ell}, 1)} \right)^{-1} (y) \, \d y + 1 - \ell (1^-) = 1 - \frac{1-a}{2}.
\]
From \eqref{eq:tildel}, we finally obtain that
\[
 \int_0^1 \tilde{\ell} (y) \, \d y = 1 - \int_0^1 \ell^{-1} (y) \, \d y = \frac{1-a}{2},
\]
which completes the proof of the lemma.
\end{proof}

Lemma \ref{le:tilde} helps us to disregard some cases in the computation of the value of $M(a,b)$, as the next result shows.

\begin{lemma}\label{le:symmetry}
For $a, b \in [0,1]$, let $M (a,b)$ be defined in \eqref{Mab}. We have that
\begin{equation*}
M(a,b)=\max \{ d_{1} (a, b),\, d_{2} (a, b),\, d_{3} (a, b)\},
\end{equation*}
where
\begin{equation}\label{eq:d123}
\begin{split}
  d_{1} (a, b) &  = \max\{ d_{\rm L} (\ell^a_{x_1} , \ell^b_{y_1}) : x_1, y_1\in[0,a]  \}, \\
  d_{2} (a, b) & = \max\{ d_{\rm L} (m^a_{x_2} , m^b_{y_2}) : x_2, y_2\in (a,1)  \}, \\
  d_{3} (a, b) & = \max\{ d_{\rm L} (\ell^a_{x_1} , m^b_{x_2}) : x_1\in[0,a], x_2\in (a,1)  \}.
\end{split}
\end{equation}
\end{lemma}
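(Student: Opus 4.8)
The plan is to keep the finite-dimensional reduction already available and to exploit the reflection $\ell\mapsto\tilde\ell$ of Lemma~\ref{le:tilde} as the main tool. By Proposition~\ref{pr:Mabextreme}, $M(a,b)$ is the maximum of $d_{\rm L}$ over $\ext(\mathcal{L}_a)\times\ext(\mathcal{L}_b)$, and by Theorem~\ref{th:extLa} each factor is the union of the three families $\ell^{\cdot}_{\cdot}$, $m^{\cdot}_{\cdot}$ and $n^{\cdot}_{\cdot,\cdot}$; this produces nine families of pairs, of which $d_1$, $d_2$, $d_3$ are the maxima over $(\ell^a,\ell^b)$, $(m^a,m^b)$ and $(\ell^a,m^b)$. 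Since these three are maxima over subsets of extreme pairs, the inequality $\max\{d_1,d_2,d_3\}\le M(a,b)$ is immediate, and the whole content is to show that none of the remaining six families exceeds $\max\{d_1,d_2,d_3\}$.

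First I would record how the reflection acts on extreme curves. A direct computation from \eqref{eq:extremePW} and $\tilde\ell(x)=1-\ell^{-1}(1-x)$ gives $\widetilde{\ell^a_{x_1}}=\ell^a_{(a-x_1)/(1-x_1)}$ and $\widetilde{m^a_{x_2}}=m^a_{1+a-x_2}$, so the $\ell$- and $m$-families are each stable under $\tilde{\cdot}$. Because $\tilde{\cdot}$ is a $d_{\rm L}$-isometry of $\mathcal{L}_a$ and of $\mathcal{L}_b$, the map $(\ell_1,\ell_2)\mapsto(\tilde\ell_2,\tilde\ell_1)$ is a $d_{\rm L}$-preserving bijection (using also $d_{\rm L}(\ell_1,\ell_2)=d_{\rm L}(\ell_2,\ell_1)$) that sends the family $(m^a,\ell^b)$ onto the family $(\ell^b,m^a)$, i.e.\ identifies its maximum with the mixed quantity for the pair $(b,a)$. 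Together with $M(a,b)=M(b,a)$ this lets us retain only the single mixed family defining $d_3$ and removes $(m^a,\ell^b)$.

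The crux, which I expect to be the main obstacle, is to discard every family containing an $n$-curve. The decisive point is that the reflection is nonlinear, so although $n^a_{x_1,x_2}$ is extreme its image $\widetilde{n^a_{x_1,x_2}}$ need not be: transporting through the isomorphism $T_a$ of Proposition~\ref{pr:iso} one checks that $(\widetilde{n^a_{x_1,x_2}})''$ is a single Dirac mass $\beta\,\delta_{x_0}$, $x_0\in(0,1)$, whose weight $\beta$ is strictly below the extremal weight found in Proposition~\ref{pr:extMa}. Therefore $\beta\,\delta_{x_0}$ is a proper convex combination of the zero measure and the extremal atom at $x_0$, both extreme points of $\mathcal{M}_a$; since $T_a^{-1}$ is affine, $\widetilde{n^a_{x_1,x_2}}=(1-t)\,\ell^a_0+t\,\xi$ with $t\in(0,1)$ and $\xi\in\{\ell^a_{x_0},m^a_{x_0}\}$. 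Given any pair involving an $n$-curve, I would then reflect both coordinates (preserving $d_{\rm L}$), substitute this convex combination, and use the joint convexity of $(\ell_1,\ell_2)\mapsto 2\|\ell_1-\ell_2\|$ to bound the distance by a maximum of distances between $\ell$- and $m$-type curves; applying the substitution in each slot handles the $(n^a,n^b)$ case as well. Every bound so obtained falls into one of the families $(\ell^a,\ell^b)$, $(m^a,m^b)$, $(\ell^a,m^b)$, $(m^a,\ell^b)$, the last already reduced to $d_3$.

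Combining the two reductions yields $M(a,b)\le\max\{d_1,d_2,d_3\}$ and hence the claimed equality. The points requiring care are the explicit verification that $\widetilde{n^a_{x_1,x_2}}$ carries exactly one, sub-extremal, atom—so that the convex decomposition is genuinely proper—and the orientation bookkeeping in the mixed case. Should the convexity route be delicate for extreme parameter values, the ordering and one-sign-change Lemmas~\ref{lemma:ordered} and~\ref{le:signswitch} offer an alternative: a pair with at most one crossing is already bounded by $d_{\rm L}(\ell_a^{\mp},\ell_b^{\pm})$, which lies in $d_1$, so only genuinely two-crossing configurations—precisely those captured by $d_3$—would remain to be analysed.
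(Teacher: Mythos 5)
Your proposal is correct and follows essentially the same route as the paper: reduce to $\ext(\mathcal{L}_a)\times\ext(\mathcal{L}_b)$ via Proposition \ref{pr:Mabextreme}, compute the action of the reflection of Lemma \ref{le:tilde} on the three families of Theorem \ref{th:extLa}, observe that $\tilde{\ell}^a_{x_1}$ and $\tilde{m}^a_{x_2}$ stay within the $\ell$- and $m$-families while $\tilde{n}^a_{x_1,x_2}\notin\ext(\mathcal{L}_a)$, and discard the $n$-curves. In fact your write-up is more complete than the paper's at the decisive step: where the paper simply asserts that $\tilde{n}^a_{x_1,x_2}\notin\ext(\mathcal{L}_a)$ suffices to exclude the $n$-family, you make this precise by exhibiting $\tilde{n}^a_{x_1,x_2}$ as a proper convex combination of $\ell^a_0$ and a single-atom extreme point (via the affine isomorphism $T_a$ and the strictness of both constraints in \eqref{eq:restrictionsMa}, since $\tilde{n}'(0^+)>0$ and $\tilde{n}(1)=1-x_1<1$) and then invoking convexity of $d_{\rm L}$; your handling of the residual $(m^a,\ell^b)$ family by symmetry in $(a,b)$ matches the paper's implicit treatment, which is ultimately justified only by the later bound of Lemma \ref{le:d3} applied with the roles of $a$ and $b$ interchanged.
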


\begin{proof}
We describe first how the extreme points of Theorem \ref{th:extLa} are affected under the isometry $\ell \mapsto \tilde{\ell}$ defined in Lemma \ref{le:tilde}. From \eqref{eq:extremePW} and \eqref{eq:tildel}, we find that they are the piecewise affine functions such that
\[
 \tilde{\ell}^a_{x_1} : \begin{cases}
  0 \mapsto 0 \\
 1- \frac{1-a}{1-x_1} \mapsto 0 \\
 1 \mapsto 1-x_1 ,
 \end{cases} \qquad
 \tilde{m}^a_{x_2} : \begin{cases}
 0 \mapsto 0 \\
 1- (x_2 -a) \mapsto 1- x_2  \\
 1 \mapsto 1,
 \end{cases} \qquad
 \tilde{n}^a_{x_1,x_2} : \begin{cases}
 0 \mapsto 0 \\
 1-\frac{x_2 -a}{1-x_1} \mapsto 1- x_2  \\
 1 \mapsto 1 - x_1 .
 \end{cases}
\]
In other words,
\begin{equation*}%\label{eq-ext-symmetry}
\tilde{\ell}^a_{x_1} = \ell^a_{\frac{a-x_1}{1-x_1}} , \qquad \tilde{m}^a_{x_2} = m^a_{a + 1 - x_2}, \qquad \tilde{n}^a_{x_1,x_2} \notin \ext(\mathcal{L}_a).
\end{equation*}
As $\tilde{n}^a_{x_1,x_2} \notin \ext(\mathcal{L}_a)$, by Proposition \ref{pr:Mabextreme} and since $\ell \mapsto \tilde{\ell}$ is an isometry (see Lemma \ref{le:tilde}), we can exclude the functions ${n}^a_{x_1,x_2}$ in the computation of the maximum given in \eqref{eq:Mab-ext}.
\end{proof}

\subsubsection{Computation of $M(a,b)$}

Here, we will combine all the previous results to prove Theorem \ref{Theorem.Mab}. We start with the following lemma.
\begin{lemma}\label{le:d123}
 For $a, b \in [0,1]$, let $d_1(a,b)$ and $d_2(a,b)$ be defined in \eqref{eq:d123}. We have that
 $$d_1(a,b),\, d_2(a,b)\le d_{\rm L} (\ell_a^- , \ell_b^+)= d_{\rm L} (\ell_a^+ , \ell_b^-),$$
 where the curves $\ell_a^\pm$, $\ell_b^\pm$ are defined as in \eqref{la+-la-}.
\end{lemma}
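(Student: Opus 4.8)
The plan is to reduce each of the two maxima to the single-crossing situation already handled by Lemmas \ref{lemma:ordered} and \ref{le:signswitch}. Comparing \eqref{la+-la-} with \eqref{eq:extremePW}, note first that $\ell_a^-=\ell^a_a$ and $\ell_a^+=\ell^a_0$, so the family $\{\ell^a_{x_1}:x_1\in[0,a]\}$ entering $d_1(a,b)$ interpolates between these two extreme curves; also, by \eqref{eq:maximal-distance}, $d_{\rm L}(\ell_a^-,\ell_b^+)=d_{\rm L}(\ell_a^+,\ell_b^-)$, so it is enough to bound every relevant pair by either of these two equal quantities. Accordingly, I would show that every pair occurring in $d_1(a,b)$ and in $d_2(a,b)$ is either ordered or has exactly one sign change of its difference. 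In the ordered case Lemma \ref{lemma:ordered} gives $d_{\rm L}=|a-b|$, and the universal lower bound $|a-b|\le d_{\rm L}(\ell_1,\ell_2)$ proved there, applied to $(\ell_a^-,\ell_b^+)$, gives $|a-b|\le d_{\rm L}(\ell_a^-,\ell_b^+)$; in the one-crossing case Lemma \ref{le:signswitch} gives precisely the required bound.

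For $d_1(a,b)$ I would fix $\ell^a_{x_1}$ and $\ell^b_{y_1}$ and assume without loss of generality $x_1\le y_1$. By \eqref{eq:extremePW} each curve vanishes on an initial interval and is affine afterwards, so the difference $g=\ell^a_{x_1}-\ell^b_{y_1}$ equals $\ell^a_{x_1}\ge0$ on $[0,y_1]$ and is affine on $[y_1,1]$, starting from $g(y_1)\ge0$. An affine function with nonnegative initial value is either nonnegative on all of $[y_1,1]$ or crosses zero exactly once, so $g$ is either nonnegative throughout (the curves are ordered) or satisfies $g\ge0$ on $[0,t_0]$ and $g\le0$ on $[t_0,1]$ for a single $t_0\in(y_1,1)$, which is exactly the hypothesis of part \emph{(\ref{item:Bsignswitch})} of Lemma \ref{le:signswitch}. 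Either way $d_{\rm L}(\ell^a_{x_1},\ell^b_{y_1})\le d_{\rm L}(\ell_a^+,\ell_b^-)$, and maximizing over $x_1,y_1$ gives $d_1(a,b)\le d_{\rm L}(\ell_a^-,\ell_b^+)$.

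For $d_2(a,b)$ I would fix $m^a_{x_2}$ and $m^b_{y_2}$ and, using the symmetry of $d_{\rm L}$ together with \eqref{eq:maximal-distance}, assume $x_2\le y_2$. By \eqref{eq:extremePW} each curve is affine with a single kink and joins $(0,0)$ to $(1,1)$, so $g=m^a_{x_2}-m^b_{y_2}$ satisfies $g(0)=g(1)=0$ and is affine on the three intervals cut out by $x_2\le y_2$, with successive slopes $s_1,s_2,s_3$. The key structural fact is that $s_2-s_1$ and $s_2-s_3$ are exactly the (nonnegative) jumps of the slopes of $m^a_{x_2}$ and $m^b_{y_2}$ across their own kinks, by convexity; hence $s_1\le s_2$ and $s_3\le s_2$. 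Combining this unimodality of the slopes with the identities $g(x_2)=s_1x_2$ and $g(y_2)=-s_3(1-y_2)$, a short case split on the signs of $s_1$ and $g(y_2)$ shows that $g$ is either signed throughout (ordered) or negative-then-positive with a single crossing $t_0\in(x_2,y_2)$; the latter is the hypothesis of part \emph{(\ref{item:Asignswitch})} of Lemma \ref{le:signswitch}, giving $d_{\rm L}(m^a_{x_2},m^b_{y_2})\le d_{\rm L}(\ell_a^-,\ell_b^+)$, and hence $d_2(a,b)\le d_{\rm L}(\ell_a^-,\ell_b^+)$.

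I expect the one genuinely delicate step to be the sign analysis in the $d_2$ case: because both curves now carry a real kink, the difference has three affine pieces instead of two, and ruling out a second sign change requires exploiting the convexity of each $m^a_{x_2}$ and $m^b_{y_2}$ to see that the middle slope dominates. The $d_1$ case is comparatively routine, since one of the two vanishing initial segments immediately fixes the sign of the difference on an initial interval, leaving only an affine tail to analyze.
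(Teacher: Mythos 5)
Your proposal is correct and follows exactly the paper's route: the paper's own proof of this lemma simply asserts that the pairs $\ell^a_{x_1},\ell^b_{y_1}$ and $m^a_{x_2},m^b_{y_2}$ cannot cross twice and then invokes Lemmas \ref{lemma:ordered} and \ref{le:signswitch}, which is precisely the reduction you carry out. Your slope analysis (the vanishing initial segment for the $\ell$-family, and the unimodal slope pattern $s_1\le s_2\ge s_3$ for the $m$-family) correctly supplies the detail the paper leaves as "easy to see."
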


\begin{proof}
It is easy to see that the curves $\ell^a_{x_1}$ and $\ell^b_{y_1}$ cannot have two crossing points, and neither can $m^a_{x_2}$ and $m^b_{y_2}$. Therefore, the conclusion follows from Lemmas \ref{lemma:ordered} and \ref{le:signswitch}.
\end{proof}

We are then led to the computation of $d_3(a,b)$, i.e., the maximum of value of $d_{\rm L} (\ell^a_{x_1} , m^b_{x_2})$, when $x_1\in[0,a]$ and $x_2 \in (b,1)$. This question is addressed in the following result.
\begin{lemma}\label{le:d3}
For $a, b \in [0,1]$, let $d_3(a,b)$ be defined in \eqref{eq:d123}. We have that
 $$d_3(a,b)\le d_{\rm L} (\ell_a^- , \ell_b^+)= d_{\rm L} (\ell_a^+ , \ell_b^-).$$
\end{lemma}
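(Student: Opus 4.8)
The plan is to bound each pairwise distance $d_{\rm L}(\ell^a_{x_1}, m^b_{x_2})$ by controlling the number of sign changes of $g := \ell^a_{x_1} - m^b_{x_2}$ and then invoking Lemmas~\ref{lemma:ordered} and~\ref{le:signswitch}. The key structural observation is that $\ell^a_{x_1}$ vanishes on $[0,x_1]$ and is \emph{affine} on $[x_1,1]$; since $m^b_{x_2}$ is convex, this forces $g = -m^b_{x_2} \le 0$ on $[0,x_1]$ and $g$ to be concave on $[x_1,1]$. If $x_2 \le x_1$, then $g$ is in fact affine and nonpositive on $[x_1,1]$, so the curves are ordered and Lemma~\ref{lemma:ordered} gives $d_{\rm L}(\ell^a_{x_1}, m^b_{x_2}) = |a-b| \le d_{\rm L}(\ell_a^-, \ell_b^+)$ (the last inequality being the lower bound in the first display of the proof of Lemma~\ref{lemma:ordered}, applied to the pair $(\ell_a^-,\ell_b^+)$). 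If $x_1 < x_2$, the concavity of $g$ on $[x_1,1]$ together with $g \le 0$ on $[0,x_1]$ shows that $g$ is negative near $0$ and changes sign at most twice, necessarily in the pattern $-,+,-$.

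When $g$ changes sign at most once I am done immediately: either the curves are ordered (Lemma~\ref{lemma:ordered}), or they cross once with $\ell^a_{x_1} \le m^b_{x_2}$ on an initial interval and $\ell^a_{x_1} \ge m^b_{x_2}$ afterwards, so Lemma~\ref{le:signswitch}(a) yields $d_{\rm L}(\ell^a_{x_1}, m^b_{x_2}) \le d_{\rm L}(\ell_a^-, \ell_b^+)$.

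The substantive case is two sign changes, which the explicit slopes of $g$ show can occur only for interior $x_1 \in (0,a)$: at $x_1 = 0$ we have $\ell^a_0 = \ell_a^+$, for which $g$ is affine on each of $[0,x_2]$ and $[x_2,1]$ and hence crosses at most once; and at $x_1 = a$ we have $\ell^a_a = \ell_a^-$, where $g(1) = 0$ forces, by concavity on $[a,1]$, at most one crossing. I would therefore reduce the interior case to an endpoint by proving that $x_1 \mapsto d_{\rm L}(\ell^a_{x_1}, m^b_{x_2})$ attains its maximum over $[0,a]$ at $x_1 \in \{0,a\}$. Writing $d_{\rm L}(\ell^a_{x_1}, m^b_{x_2}) = (b-a) + 4\,N(x_1)$ with $N(x_1) = \int_0^1 (m^b_{x_2}-\ell^a_{x_1})^+$, only $N$ depends on $x_1$; differentiating and noting that the contributions from the moving crossing points cancel (there $g=0$), the derivative reduces to $-\int_{\{g<0\}} \partial_{x_1}\ell^a_{x_1}$, whose sign I would analyze to rule out an interior maximum. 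Once we are at an endpoint the curves cross at most once, and Lemma~\ref{le:signswitch}(a) or~(b) gives the desired bound $d_{\rm L}(\ell_a^-, \ell_b^+) = d_{\rm L}(\ell_a^+, \ell_b^-)$.

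The main obstacle is precisely this two-crossing regime. Unlike the single-crossing case, the alternating sign pattern $-,+,-$ cannot be reproduced by the derivative of any convex function, so the convex-order mechanism underlying Lemma~\ref{le:signswitch} (through Lemma~\ref{le:convexorder}) does not apply, and one cannot simply replace $\ell^a_{x_1}$ by an extreme curve. I expect the delicate point to be the sign analysis of $N'(x_1)$ needed to place the maximum in $x_1$ at $0$ or $a$; should this prove too unwieldy, the fallback is the direct, if cumbersome, evaluation of $d_{\rm L}(\ell^a_{x_1}, m^b_{x_2})$ as a sum of triangle areas determined by the crossing points and its explicit maximization over $(x_1,x_2)$, which the auxiliary lemmas were designed to circumvent but which remains available here.
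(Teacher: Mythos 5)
Your reduction is sound and coincides with the paper's: you correctly observe that $g=\ell^a_{x_1}-m^b_{x_2}$ is nonpositive on $[0,x_1]$ and concave on $[x_1,1]$, hence has at most two sign changes in the pattern $-,+,-$; the zero- and one-crossing cases are then dispatched by Lemmas~\ref{lemma:ordered} and~\ref{le:signswitch} exactly as in the paper, and you rightly identify the two-crossing regime as the only substantive case (and correctly note that the convex-order mechanism of Lemma~\ref{le:convexorder} cannot be reused there). However, the proposal stops at the point where the actual work begins. Your primary route---showing that, for fixed $x_2$, the map $x_1\mapsto d_{\rm L}(\ell^a_{x_1},m^b_{x_2})=(b-a)+4N(x_1)$ attains its maximum at $x_1\in\{0,a\}$ by analyzing the sign of $N'(x_1)=-\int_{\{g<0\}}\partial_{x_1}\ell^a_{x_1}$---is only announced, not executed, and it is not clear it can be executed as stated: since $\partial_{x_1}\ell^a_{x_1}(x)=\frac{1-a}{(1-x_1)^3}\,(2x-x_1-1)$ changes sign at $x=(1+x_1)/2$, and the set $\{g<0\}=(0,x_1^*)\cup(x_2^*,1)$ can straddle that point in either direction, $N'(x_1)$ has no evident sign, so ruling out an interior maximum requires a genuine further argument (or a second-derivative/quasiconvexity claim) that you do not supply.

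Your declared fallback is precisely what the paper does: it parametrizes the two crossing points $(x_1^*,y_1^*)$, $(x_2^*,y_2^*)$ explicitly, writes $A(x_1,x_2)=d_{\rm L}(\ell^a_{x_1},m^b_{x_2})$ as twice the sum of three triangle areas, and maximizes the resulting rational function over the constrained region in $(x_1,x_2)$ jointly---locating the unique interior critical point at $x_1=1-\sqrt{1-a}$, $x_2=(1+b)/2$, checking its value is below $d_{\rm L}(\ell_a^-,\ell_b^+)$, and then verifying monotonicity of $A$ along each boundary piece $x_1=0$, $x_1=a$, $x_2=b$, $x_2=1$ (with computer-algebra assistance, since the computations are long). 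Because neither your primary route nor the fallback is carried out, the lemma's essential content---the quantitative bound in the two-crossing case---remains unproved in your write-up; as it stands this is a correct proof strategy with the decisive step missing.
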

\begin{proof}
Let $x_1\in [0,a]$ and $x_2\in (b,1)$. Thanks to Lemmas \ref{lemma:ordered} and \ref{le:signswitch}, we can assume that the curves $\ell^a_{x_1}$ and $m^b_{x_2}$ cross each other twice; Figure \ref{fi:two-crossing} represents this situation. This happens if and only if the triangle with vertices
\[
 \left( x_1, 0 \right) , \quad \left( x_2, x_2 -b \right) , \quad \left( 1, \frac{1-a}{1-x_1} \right)
\]
has positive orientation; equivalently, if and only if
\[
 (1-a) (x_2-x_1) - (x_2-b) (1-x_1) + x_1 (1-x_1) (x_2-b) >0 .
\]
In this case, they cross each other at the points $(x_1^*, y_1^*)$ and $(x_2^*, y_2^*)$, with $x_1^* \leq x_2^*$ and
\begin{align*}
 & x_1^* = \frac{(1-a) x_1 x_2}{(1-a) x_2 - (1-x_1)^2 (x_2-b)} , \quad y_1^* = \frac{(1-a) x_1 (x_2 - b)}{(1-a) x_2 - (1-x_1)^2 (x_2-b)} , \\
 & x_2^* = \frac{(1-a) x_1 x_2 (1-x_2) - b (1-x_1)^2 x_2}{(1-a) x_2 (1-x_2) - (1-x_1)^2 (x_2-b) (1-x_2) - b (1-x_1)^2} ,  \\
 & y_2^* = \frac{(1-a) \left[ x_1 (x_2-b) (1-x_2) - b (x_2 - x_1) \right]}{(1-a) x_2 (1-x_2) - (1-x_1)^2 (x_2-b) (1-x_2) - b (1-x_1)^2} .
\end{align*}
\begin{figure}[h]\centering{
\begin{tikzpicture}[scale=8]
\draw [ultra thick] (0,0) -- (0.3,0) -- (1,1/1.4) -- (1,1);
\draw [blue, ultra thick] (0,0) -- (0.7,0.2) -- (1,1);
\draw [thick, black] (0,0) -- (1,0) -- (1,1) -- (0,1)-- (0,0);
\draw [black,dashed, thick] (0,0) -- (1,1);
\draw[fill] (5/12, 5/42) circle [radius=0.01];
\draw[fill] (100/121, 65/121) circle [radius=0.01];
\draw [thick] (.3,-0.02) node[below]{$x_1$} -- (.3,0.02);
\draw [thick] (.7,-0.02) node[below]{$x_2$} -- (.7,0.02);
\draw [thick] (1,-0.02) node[below]{1} -- (1,0.02);
\draw [thick] (0,-0.02) node[below]{0} -- (0,0.02);

\draw [black,dashed, thick] (0.7,0) -- (0.7,0.2);
\draw [black,dashed, thick] (0.7,0.2) -- (1,0.2);

\draw [thick] (0.98,0.20) -- (1.02 ,0.20);
\node [right] at (1.02,0.20) {$x_2-b$};

\draw [thick] (0.98,5/7) -- (1.02 ,5/7);
\node [right] at (1.02,5/7) {$\displaystyle \frac{1-a}{1-x_1}$};

\node [right] at (5/12-0.19 , 5/42+0.03) {$(x_1^*,y_1^*)$};
\node [right] at (100/121-0.19 , 65/121+0.03) {$(x_2^*,y_2^*)$};

\draw [thick] (5/12,-0.02) node[below]{$x_1^*$} -- (5/12,0.02);
\draw [black,dashed, thick] (5/12,0) -- (5/12,5/42);

\draw [thick] (100/121,-0.02) node[below]{$x_2^*$} -- (100/121,0.02);
\draw [black,dashed, thick] (100/121,0) -- (100/121,65/121);

\draw [thick] (0.98,5/42) -- (1.02 ,5/42);
\node [right] at (1.02,5/42) {$y_1^*$};
\draw [black,dashed, thick] (5/12,5/42) -- (1,5/42);

\draw [thick] (0.98,65/121) -- (1.02 ,65/121);
\node [right] at (1.02,65/121) {$y_2^*$};
\draw [black,dashed, thick] (100/121,65/121) -- (1,65/121);

\draw [thick] (0.98,0) -- (1.02 ,0);
\node [right] at (1.02,0) {$0$};

\draw [thick] (0.98,1) -- (1.02 ,1);
\node [right] at (1.02,1) {$1$};

\node [above] at (0.55 , 0.28) {$\ell_{x_1}^a$};
\node [above, blue] at (0.62 , 0.18) {$m_{x_2}^a$};

\end{tikzpicture}}
\caption{A graphical representation of the functions $\ell_{x_1}^a$ (in black) and $m_{x_2}^a$ (in blue), as well as their crossing points, $(x_1^*,y_1^*)$ and $(x_2^*,y_2^*)$.}
\label{fi:two-crossing}
\end{figure}
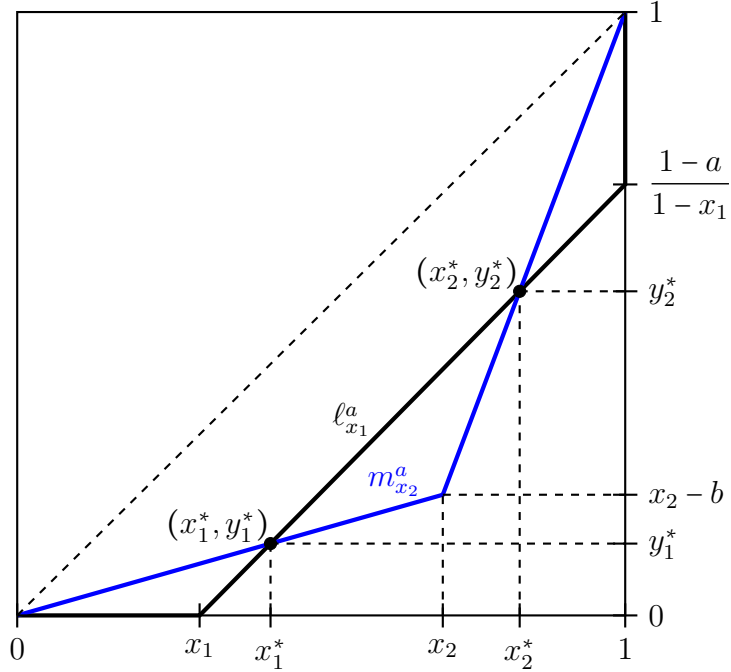

For simplicity, from now on we will call $A=A(x_1,x_2)=d_{\rm L} \left( \ell^a_{x_1} , m^b_{x_2} \right)$. The value of $A$ can be therefore calculated as twice the sum of the areas of the three triangles (see Figure \ref{fi:two-crossing})
\[
 (0, 0), (x_1, 0), (x_1^*, y_1^*) , \quad (x_1^*, y_1^*) , (x_2, x_2 - b) , (x_2^*, y_2^*) , \quad (x_2^*, y_2^*) , \left(1, \frac{1-a}{1-x_1}\right) , (1, 1) .
\]
Elementary geometry shows that
\[
 A = x_1 \, y_1^* + x_1^* (x_2 - b - y_2^*) + x_2 (y_2^* - y_1^*) + x_2^* (y_1^* - (x_2-b)) + \frac{(a - x_1) (1- x_2^*)}{1 - x_1} .
\]
So are led to the maximization of $A= A(x_1, x_2)$ under the constraints
\begin{equation}\label{eq:restrictions}
 0 \leq x_1 \leq a , \quad b < x_2 < 1 , \quad  (1-a) (x_2-x_1) - (x_2-b) (1-x_1) + x_1 (1-x_1) (x_2-b) >0 .
\end{equation}
Obviously, the supremum under \eqref{eq:restrictions} coincides with the maximum under the analogous inequalities of \eqref{eq:restrictions} but replacing the `$<$' with `$\leq$', and `$>$' with `$\geq$'.
Observe that $A$ is a rational function, and, hence, the computation of its maximum in region \eqref{eq:restrictions} can be done by elementary techniques.
Nevertheless, the computations are extremely long, so we have used the programme  \cite{Mathematica} in the rest of the proof to avoid unnecessary details.

We find that the only solution of $\frac{\p A}{\p x_1} = \frac{\p A}{\p x_2} = 0$ under \eqref{eq:restrictions} is
\begin{equation*}%\label{eq:zerosA}
 x_1 = 1 - \sqrt{1 - a}, \quad x_2 = \frac{1 + b}{2} ,
\end{equation*}
\begin{comment}
x1s = (1 - a) x1 x2 / ((1 - a) x2 - (1 - x1)^2 (x2 - b))
y1s = ((1 - a) x1 (x2 - b))/((1 - a) x2 - (1 - x1)^2 (-b + x2))
x2s = ((1 - a) x1 x2 (1 - x2) - b (1 - x1)^2 x2) / ((1 - a) x2 (1 - x2) - (1 - x1)^2 (x2 - b) (1 - x2) - b (1 - x1)^2)
y2s = (1 - a) (x1 ( x2 - b) (1 - x2) - b (x2 - x1)) / ((1 - a) x2 (1 - x2) - (1 - x1)^2 (x2 - b) (1 - x2) - b (1 - x1)^2)
A = Simplify[x1 y1s + x1s (x2 - b - y2s) + x2 (y2s - y1s) + x2s (y1s - (x2 - b)) + (a - x1) (1 - x2s)/(1 - x1)]
A1=Simplify[D[A, x1]]
N1=Numerator[A1]
A2=Simplify[D[A, x2]]
N2=Numerator[A2]
Simplify[Solve[N1==0 && N2==0 &&  0 < x1 < a < 1 && 0 < b < x2 < 1 && (1 - a) (x2 - x1) - (x2 - b) (1 - x1) + x1 (1 - x1) (x2 - b) > 0, {x1,x2}]]
\end{comment}
which in fact requires $a + b^2 < 2b$.
The corresponding value of $A$ is
\[
 A \left( 1 - \sqrt{1 - a}, \frac{1 + b}{2} \right) = \frac{4 \left(\sqrt{1-a}-1\right) b+a (b-2)-4 \sqrt{1-a}+b^2+4}{b} ,
\]
which is seen to be less than the value of $d_{\rm L} (\ell_a^- , \ell_b^+)$ (computed in Lemma \ref{le:signswitch}), thanks to the restriction $a + b^2 < 2b$.
\begin{comment}
Simplify[A /. x1-> 1 - Sqrt[1 - a]]
Simplify[% /. x2-> (1+b)/2]
Reduce[0<a<1 && 0<b<1 && a + b^2 < 2b && % < (a^2 (1 - b) + b^2 (1 - a))/(a + b - a b)]
\end{comment}

After having checked the value of $A$ at the critical points in the interior of region \eqref{eq:restrictions}, we analyze the
value of $A$ on the boundary.
Describing the boundary of region \eqref{eq:restrictions} is cumbersome since it involves several cases, according to the values of $a,b$.
\begin{comment}
Manipulate[ RegionPlot[ 0 < x1 < a && b < x2 < 1 && (1 - a) (x2 - x1) - (x2 - b) (1 - x1) + x1 (1 - x1) (x2 - b) > 0, {x1, 0, 1}, {x2, 0, 1}], {a, .1, .9}, {b, .1, .9}]
\end{comment}
In any case, the boundary is clearly contained in the set
\begin{align*}
 \bigl\{  (x_1, x_2) \in [0,a] \times [b,1] : \ & x_1 = 0 \text{ or } x_1 = a \text{ or } x_2 = b \text{ or } x_2 = 1 \\
 & \text{ or } (1 - a) (x_2 - x_1) - (x_2 - b) (1 - x_1) + x_1 (1 - x_1) (x_2 - b) = 0 \bigr\} .
\end{align*}
When $(1 - a) (x_2 - x_1) - (x_2 - b) (1 - x_1) + x_1 (1 - x_1) (x_2 - b) = 0$ the curves $\ell^a_{x_1}$ and $m^b_{x_2}$ do not have a proper crossing, so, by Lemma \ref{lemma:ordered}, $A = |b-a|$, which does not release a maximum.
Therefore, we are led to the maximization of $A(x_1,x_2)$ in the set
\[
 \bigl\{  (x_1, x_2) \in [0,a] \times [b,1] : \ x_1 = 0 \text{ or } x_1 = a \text{ or } x_2 = b \text{ or } x_2 = 1 \bigr\}
\]
The value of $A$ when $x_1 = 0$ is
\[
 A (0, x_2) = a+b -\frac{2 a b}{a+b-a x_2} ,
\]
which is decreasing in $x_2$, so the maximum is attained at $x_2 = b$ and equals
\[
A (0, b) = a+b- \frac{2ab}{a+b-ab} = \frac{(1-a)b^2 +(1-b)a^2}{a+b-ab}= d_{\rm L} (\ell_a^- , \ell_b^+).
\]

The value of $A$ when $x_1 = a$ is
\[
 A (a, x_2) = a+b-\frac{2 a b}{a x_2 + b - a b}
\]
which is increasing in $x_2$, so the maximum is attained at $x_2 = 1$ and equals
\[
 A (a, 1) =\frac{(1-a)b^2 +(1-b)a^2}{a+b-ab} = d_{\rm L} (\ell_a^- , \ell_b^+).
\]

The value of $A$ when $x_2 = b$ is
\[
 A (x_1, b) = \frac{a^2 b - a^2 + a b^2 - b^2 + \left( - 4 a b + 2 a + 2 b \right) x_1 + \left( a + b -2 \right) x_1^2}{a b-a-b +2 x_1 -x_1^2} ,
\]
\begin{comment}
Ax2b = Factor[Simplify[A /. x2 -> b]]
\end{comment}
which is decreasing in $x_1$, so the maximum is attained at $x_1 = 0$ and equals
\begin{comment}
Simplify[D[Ax2b,x1]]
Reduce[% <0 && 0 < x1 <a <1 && 0<b<1 && (1 - a) (b - x1) > 0]
\end{comment}
\[
 A (0, b) = \frac{(1-a)b^2 +(1-b)a^2}{a+b-ab} = d_{\rm L} (\ell_a^- , \ell_b^+).
\]
\begin{comment}
Simplify[Ax2b /. x1 -> 0]
\end{comment}

The value of $A$ when $x_2 = 1$ is
\[
 A (x_1, 1) = \frac{a^2 b - a^2 + a b - b^2 + \left( a  + 2 b^2 - 3 a b \right) x_1 +  \left( a  - b^2 \right) x_1^2+  \left( b - 1 \right) x_1^3}{a b - a - b +2 x_1 - x_1^2} ,
\]
\begin{comment}
Ax21 = Factor[Simplify[A /. x2 -> 1]]
\end{comment}
which is increasing in $x_1$.
\begin{comment}
Ax211 = Simplify[D[Ax21,x1]]
Reduce[Ax211 > 0 && 0 < x1 < a < 1 && 0 < b < 1 && (1 - a) (1 - x1) - (1 - b) (1 - x1) + x1 (1 - x1) (1 - b) > 0, a]
Reduce[0 < x1 < a < 1 && 0 < b < 1 && (1 - a) (1 - x1) - (1 - b) (1 - x1) + x1 (1 - x1) (1 - b) > 0, a]
\end{comment}
Therefore, the maximum is attained at $x_1 =a$ and equals
\[
 A (a, 1) = \frac{(1-a)b^2 +(1-b)a^2}{a+b-ab} .
\]
\begin{comment}
Simplify[(A /. x1 -> a) /. x2 -> 1]
\end{comment}
This concludes the proof.
\end{proof}

%\medskip

%\textbf{Proof of Theorem \ref{Theorem.Mab} in Section \ref{Section.Maximum} (Maximal distance).}
We finally observe that the proof of Theorem \ref{Theorem.Mab} directly follows from Lemmas \ref{le:signswitch}, \ref{le:symmetry}, \ref{le:d123}, and  \ref{le:d3}.

%\hfill $\square$

\subsubsection{Proof of Corollary \ref{Coro.range} in Section \ref{Section.Maximum} (Range of the distance)}

The minimum and maximum of $d_{\rm L} (\ell, m)$ among $\ell \in \mathcal{L}_a$ and $m \in\mathcal{L}_b$ are $\left| b-a \right|$ and $M (a,b)$, as calculated in Lemma \ref{lemma:ordered} and Theorem \ref{Theorem.Mab}, respectively. Now, the set $\{ d_{\rm L} (\ell, m) : \, \ell \in \mathcal{L}_a \text{ and } m  \in \mathcal{L}_b \}$
 is compact and connected as it is the image by the continuous function $d_{\rm L}$ of the compact and connected set $\mathcal{L}_a \times \mathcal{L}_b$ (in the space $L^1$). The result follows.   \hfill $\square$

\medskip

\color{black}
\subsection{Proofs of the results in Section \ref{Section.Index}}

\textbf{Proof of Proposition \ref{Proposition:rangeDelta} (Values of $\mathcal{I}$).} This result follows from Corollaries \ref{Coro.range} and \ref{Corollary.maximal}.

\medskip

\textbf{Proof of Proposition \ref{Proposition-range-Delta*} (The set $\Delta^*$).} This result follows from Corollary \ref{Coro.range}.

\medskip

\textbf{Proof of Proposition \ref{Proposition.Properties} (Properties of the indices).}
We will only consider the index $\mathcal{I}_*$ as the corresponding proofs for $\mathcal{I}^*$ are analogous. Parts (i) and (ii) are fulfilled by construction. Parts (1) and (2) of (iii) are consequences of Lemma \ref{lemma:ordered}, while (iii) (3) is fulfilled by the definition of extremal Lorenz curves (see Definition \ref{Definition.extremal}).
Part (iv) (1) is trivial while (iv) (2) and (3) follow from (iii). Finally, part (v) holds by dominated convergence.
\hfill $\square$

\subsection{Proofs of the results in Section \ref{Section-Asymptotics}}

\textbf{Proof of Proposition \ref{Proposition.Strong.Consistency} (Strong consistency).} By the triangle inequality, we obtain that
\begin{equation}\label{consistency-inequalities}
\left|  ( \| \hat{\ell}_1 \| - \| \hat{\ell}_2 \|)-( \|{\ell}_1 \|- \|{\ell_2} \| )    \right|,\, \left|   \| \hat{\ell}_1 -  \hat{\ell}_2 \| - \|{\ell}_1 - {\ell}_2 \|     \right|   \le \|  \hat{\ell}_1-\ell_1 \| +  \|  \hat{\ell}_2-\ell_2 \|.
\end{equation}
%$$\left|   \| \hat{\ell} -  \hat{m} \| - \|{\ell} - {m} \|     \right|\le \|  \hat{\ell}-\ell \| +  \|  \hat{m}-m \|.$$
From \cite{Goldie-1977}, we have that $\hat\ell_j \to \ell_j$ (uniform convergence) a.s., as $n_j\to \infty$ (for $j=1,2$). Using dominated convergence, we therefore have that the right-hand side of \eqref{consistency-inequalities} goes to 0 a.s. as $n_j\to \infty$, and, consequently, $\hat \I (\ell_1,\ell_2) \to \I (\ell_1,\ell_2)$ a.s. Moreover, as the maps $t_*$ in \eqref{t*} and $t^*$ in \eqref{t} are continuous, we also conclude that $\hat \I_* (\ell_1,\ell_2) \to \I_* (\ell_1,\ell_2)$ and $\hat \I^* (\ell_1,\ell_2) \to \I^* (\ell_1,\ell_2)$ a.s.
\hfill $\square$
\medskip

\textbf{Proof of Proposition \ref{Proposition-Asymptotics} (Asymptotic behaviour).} Let us consider the map $\phi: C([0,1])\times C([0,1])\to \R^2$ defined by
\begin{equation*}
  \phi(f,g)=2\, \left(   \Vert f \Vert - \Vert g \Vert ,\, \Vert f-g \Vert     \right),\quad  f,g\in C([0,1]).
\end{equation*}
From \eqref{I0}, we obviously have that
\begin{equation}\label{nornalized-index-representation}
 \sqrt{\frac{n_1 n_2}{n_1+n_2}} \left( \I (\hat \ell_1,\hat \ell_2) -   \I ( \ell_1, \ell_2)        \right)  = \sqrt{\frac{n_1 n_2}{n_1+n_2}} \left( \phi(\hat\ell_1,\hat\ell_2) -    \phi(\ell_1,\ell_2)      \right).
\end{equation}
Further, from Lemma \ref{Lemma-Hadamard}, it is easy to check that $\phi$ is Hadamard directionally differentiable at $(\ell_1,\ell_2)$ with derivative given by
  \begin{equation*}%\label{phi-derivative}
\phi^\prime_{(\ell_1,\ell_2)}(h_1,h_2)=2\left(    \delta^\prime_{\ell_1}(h_1)-  \delta^\prime_{\ell_2}(h_2),\, \delta^\prime_{\ell_1-\ell_2}(h_1-h_2)            \right),\quad \text{for } h_1,h_2\in C([0,1]).
\end{equation*}

On the other hand, by Lemma \ref{Lemma-Lorenz-convergence} (and the independence assumption of the samples) we also have that
\begin{equation}\label{convergence-delta}
  \sqrt{\frac{n_1 n_2}{n_1+n_2}} \left(  \hat\ell_1 - \ell_1 ,\, \hat\ell_2 -\ell_2      \right) \rightsquigarrow \left( \sqrt{1-\lambda}\, \mathbb{L}_1    ,\, \sqrt{\lambda} \, \mathbb{L}_2    \right)\quad \text{in } C([0,1])\times C([0,1]).
\end{equation}
Therefore, from \eqref{nornalized-index-representation}--\eqref{convergence-delta}, the application of the extended version of the functional delta method (see \citet[Theorem 2.1]{Shapiro-1990}) yields the result.% by noting that, by Assumption \ref{Assumption-regularity}, $N_{\ell_j}=\{x :  \ell_j(x)=0  \}=\{0\}$.
\hfill $\square$
\medskip

\textbf{Proof of Corollary \ref{Corollary-normal} (Asymptotic normality).} The equivalence between (a) and (b) follows from \eqref{delta-derivative}. As $\mathbb{L}$ is a centered Gaussian process, the distribution in \eqref{limit-distribution-2} is normally distributed with mean $(0,0)$, so (b) implies (c). Conversely, if (c) holds, we have that the variable
  \begin{equation*}
   \delta^\prime_{\ell_1-\ell_2} (   \mathbb{L}  ) =  \int_{\{\ell_1=\ell_2\}} |\mathbb{L}|+ \int_{\{\ell_1  \ne \ell_2\}} \mathbb{L}\cdot \sgn(\ell_1-\ell_2)
  \end{equation*}
  has zero mean normal distribution. Therefore, the set $\{\ell_1=\ell_2\}$ has zero Lebesgue measure, since otherwise, the first summand in the previous equation would have normal distribution, which is clearly not possible.
\hfill $\square$

%%%%%%%%%%%%%%%%%%%%%%%%%%%%%%%%%%%%%%%%%%%%%%%%%%%%%%%%%%%%%%%%%%%%%%%%%%%%%%%%%%%
% SECTION: ACKNOWLEDGEMENTS
%%%%%%%%%%%%%%%%%%%%%%%%%%%%%%%%%%%%%%%%%%%%%%%%%%%%%%%%%%%%%%%%%%%%%%%%%%%%%%%%%%%

\section*{Acknowledgements}

A. Ba\'illo and J. C\'{a}rcamo are supported by the Spanish MCyT grant PID2019-109387GB-I00. C. Mora-Corral is supported by the Spanish MCyT grant MTM2017-85934-C3-2-P.

This paper is based on data from Eurostat, European Union Statistics on Income and Living Conditions (EU-SILC), 2004--2019. The responsibility for all conclusions drawn from the data lies entirely with the authors.

%%%%%%%%%%%%%%%%%%%%%%%%%%%%%%%%%%%%%%%%%%%%%%%%%%%%%%%%%%%%%%%%%%%%%%%%%%%%%%%%%%%
% BIBLIOGRAPHY
%%%%%%%%%%%%%%%%%%%%%%%%%%%%%%%%%%%%%%%%%%%%%%%%%%%%%%%%%%%%%%%%%%%%%%%%%%%%%%%%%%%

%%%%%%%%%%%%%%%%%%%%%%%%%%%%%%%%%%%%%%%%%%%%%%%%%%%%%%%%%%%%%%%%%%%%%%%%%%%%%%%%%%%
% BIBLIOGRAPHY
%%%%%%%%%%%%%%%%%%%%%%%%%%%%%%%%%%%%%%%%%%%%%%%%%%%%%%%%%%%%%%%%%%%%%%%%%%%%%%%%%%%

\newpage

\begin{center}
\textbf{\Large Extremal points of Lorenz curves and} \\ [2 mm]
\textbf{\Large applications to inequality analysis} \\ [2 mm]
\textbf{\Large Supplementary material} \\ [2 mm]
Amparo Ba\'illo, Javier C\'{a}rcamo, and Carlos Mora-Corral \\  [2 mm]
Departamento de Matem\'{a}ticas, Universidad Aut\'{o}noma de Madrid, 28049 Madrid (SPAIN) \\  [2 mm]
\today
\end{center}

\setcounter{section}{6}

%%%%%%%%%%%%%%%%%%%%%%%%%%%%%%%%%%%%%%%%%%%%%%%%%%%%%%%%%%%%%%%%%%%%%%%%%%%%%%%%%%%
% SECTION: ESTIMATION AND ASYMPTOTIC PROPERTIES
%%%%%%%%%%%%%%%%%%%%%%%%%%%%%%%%%%%%%%%%%%%%%%%%%%%%%%%%%%%%%%%%%%%%%%%%%%%%%%%%%%%

\section{Estimation and asymptotic properties}\label{SuppMatSection.Asymptotics}

We have carried out a simulation study to illustrate how the asymptotic results (in Section 7 of the main manuscript) behave in practice with finite samples. Using the R package GB2, for each model we have generated 1000 Monte Carlo samples with sample size \(n\), that is, \(n_1=n_2=n\) from two variables \(X_1\sim\text{GB2}(a_1,b_1,p_1,q_1)\) and \(X_2\sim\text{GB2}(a_2,b_2,p_2,q_2)\) (such that \(\E X_1 =\E X_2 = 1\)). In each Monte Carlo run, we have computed the value of \(\hat{\mathcal I}(\ell_1,\ell_2) = \mathcal I (\hat{\ell}_1,\hat{\ell}_2) \), thus obtaining 1000 realizations of \(\hat{\mathcal I}\). We have chosen two values for \(n\), 10000 and 50000 in Models 1, 2, 3 and 5, and sizes $n=10^4$ and $n=10^5$ in the Monte Carlo samples of Model 4.

\newpage

\textbf{Model 1:}

\begin{figure}[H]
\begin{center}
\begin{tabular}{cc}
\includegraphics[width=0.47\textwidth]{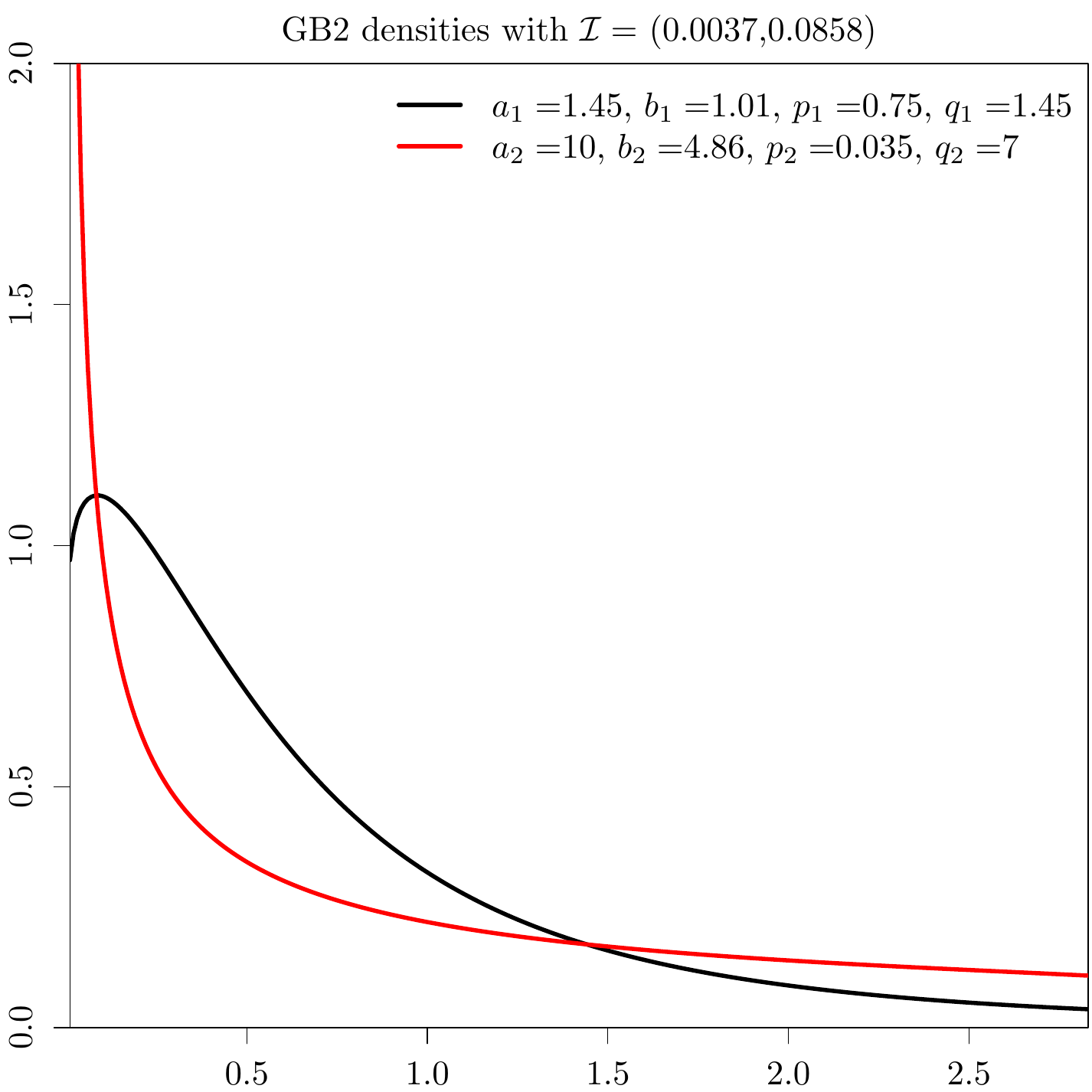} & \includegraphics[width=0.47\textwidth]{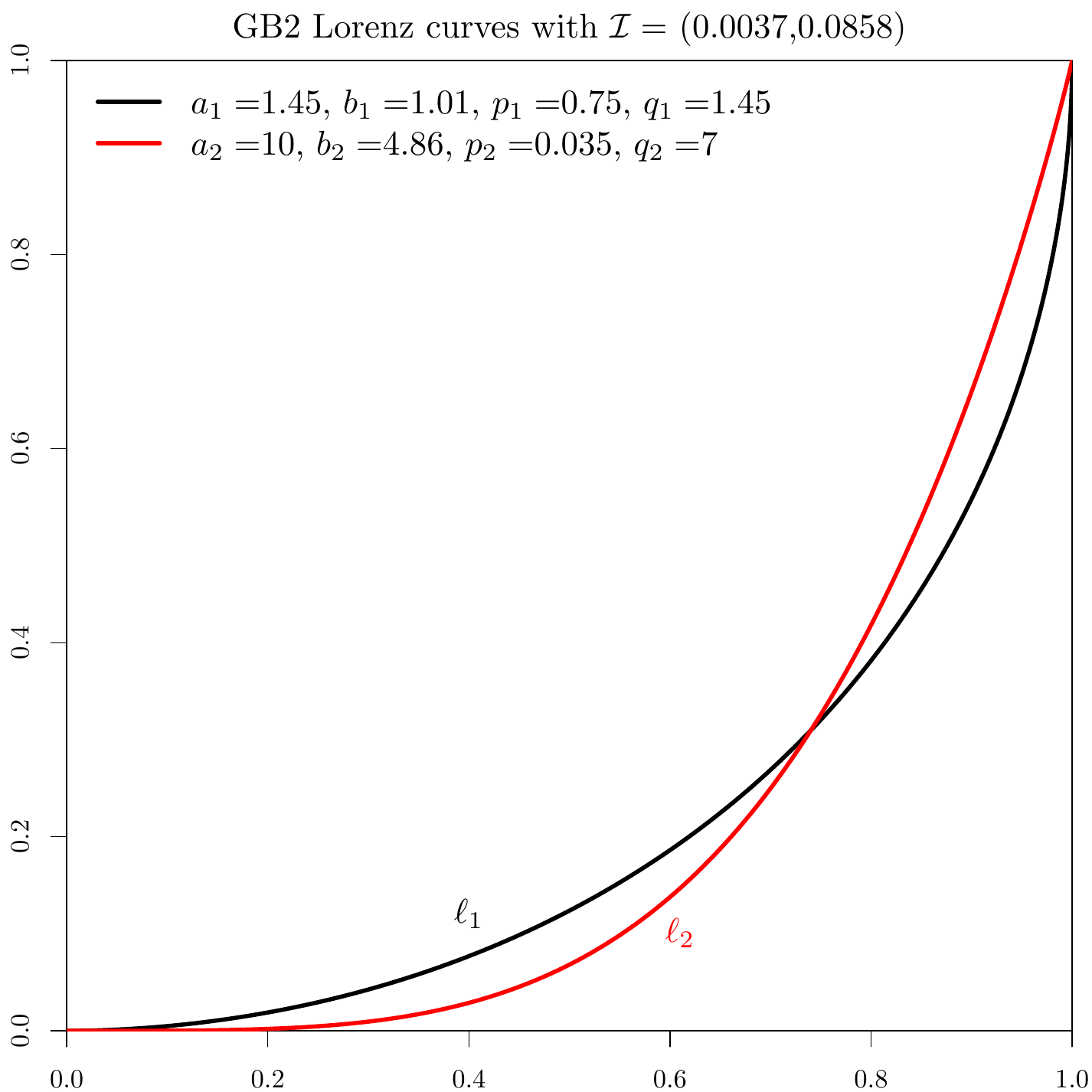}
\end{tabular}
\end{center}
\caption{Densities (left) and Lorenz curves (right) of the GB2 distributions in Model 1.}
\label{SuppMatfi:Mod1Simul}
\end{figure}

The simulated values of the normalized empirical index, \(\sqrt{\frac{n_1 n_2}{n_1+n_2}} \left( \I (\hat \ell_1,\hat \ell_2) - \I ( \ell_1, \ell_2) \right) \), appear in Figure~\ref{SuppMatfi:SimulMod1}. The green point is the origin (0,0). The level sets in red are those of a mixture of normal densities fit (with the R library mclust) to the 1000 simulations of the inequality index. The number of components in the mixture and the parameterization of the covariance matrices of the Gaussian components was chosen via the Bayesian Information Criterion (BIC). The points in the mixture components with weight lower than 10\% are colored in blue: we are thus able to detect the points in the data cloud which contribute most to the lack of normality. The convergence to normality is clear as \(n\) increases.

\begin{figure}[H]
\begin{center}
\begin{tabular}{cc}
\includegraphics[width=0.47\textwidth]{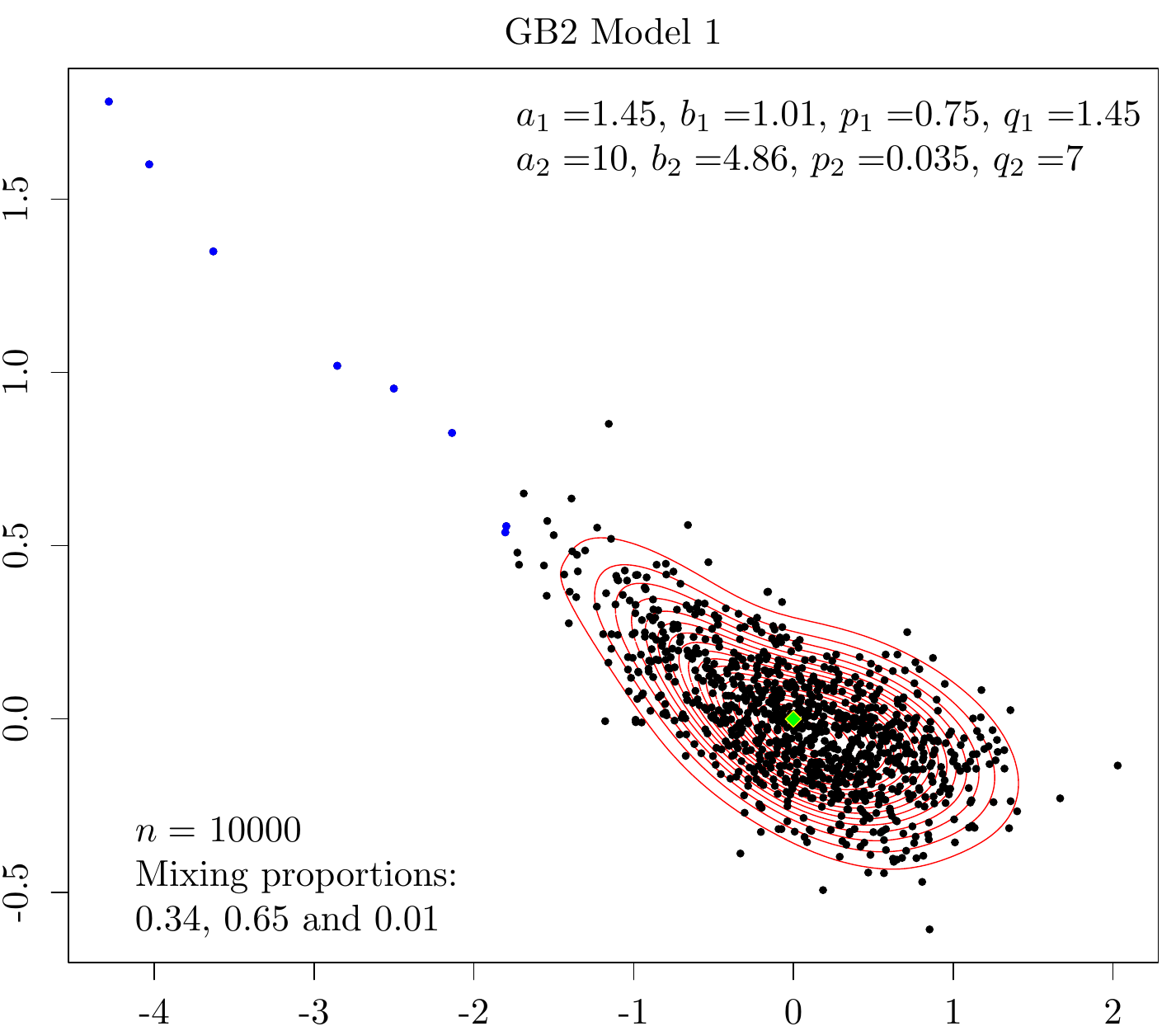} & \includegraphics[width=0.47\textwidth]{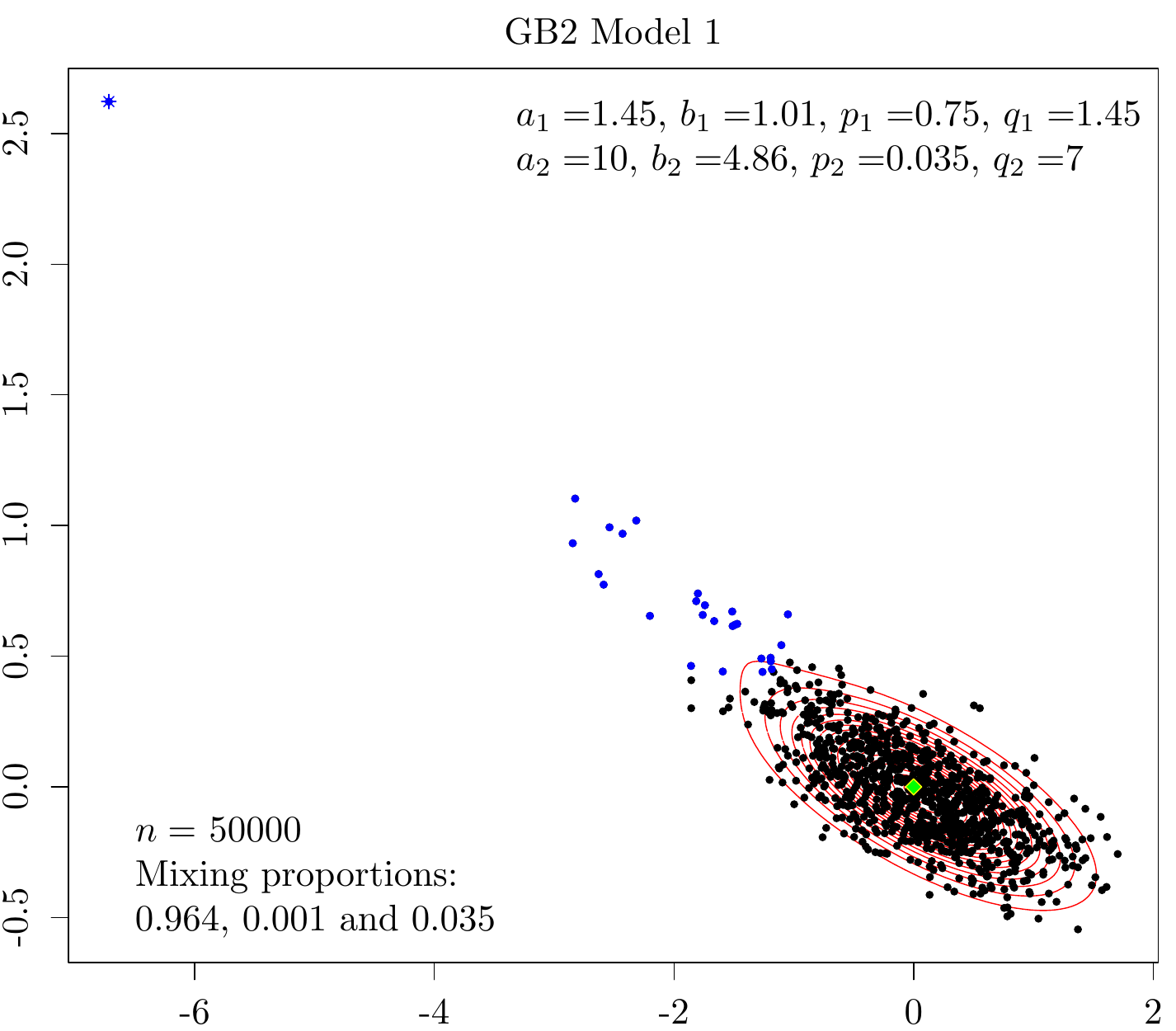}
\end{tabular}
\end{center}
\caption{Normalized values of \(\hat{\mathcal I}\) for simulated samples of Model 1 with \(n=10000\) (left) and \(n=50000\) (right).}
\label{SuppMatfi:SimulMod1}
\end{figure}

\newpage

\textbf{Model 2:} Since in this model the variables \(X_1\) and \(X_2\) are ordered (\(X_2 \Lo X_1\), see Figure~\ref{SuppMatfi:Mod2Simul}), the bidimensional index \(\mathcal I\) lies on \(L_2=\{ (-x,x) : x\in[0,1] \}\) and only one of the components of \(\mathcal I\) is of interest. We take the second component, \(d_{\rm L}(\ell_1,\ell_2)\), as it is positive.

\begin{figure}[H]
\begin{center}
\begin{tabular}{cc}
\includegraphics[width=0.47\textwidth]{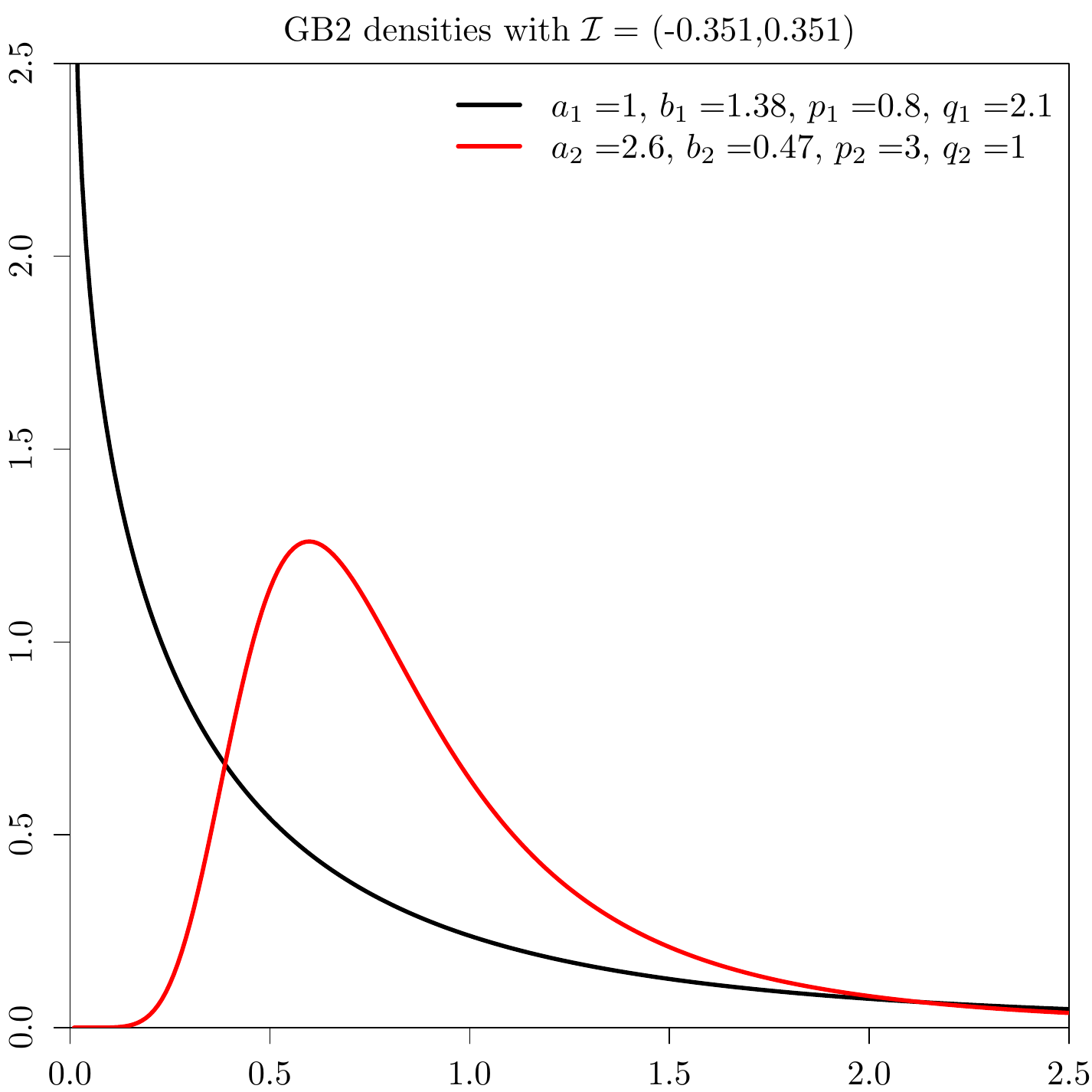} & \includegraphics[width=0.47\textwidth]{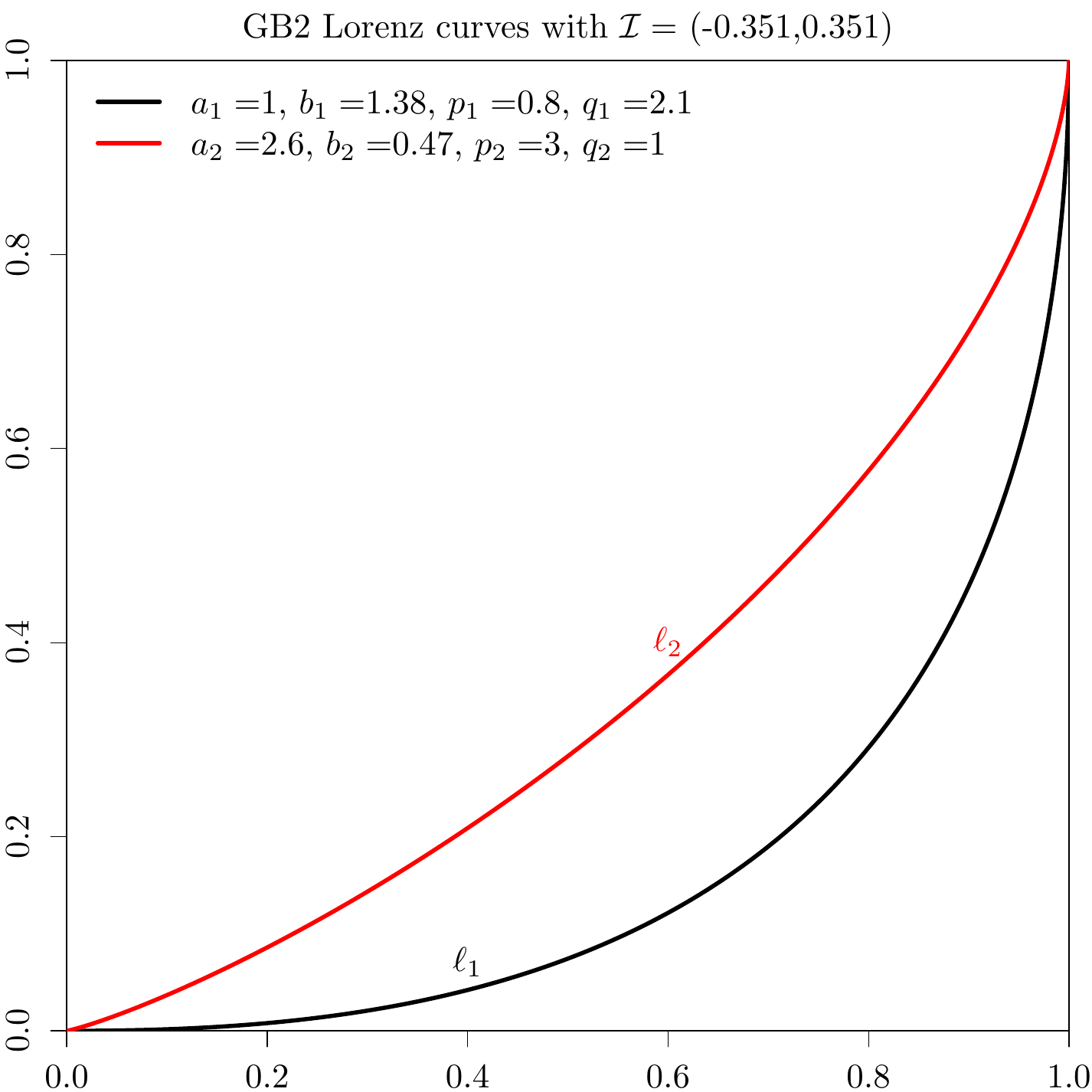}
\end{tabular}
\end{center}
\caption{Densities (left) and Lorenz curves (right) of the GB2 distributions in Model 2.}
\label{SuppMatfi:Mod2Simul}
\end{figure}

Figure~\ref{SuppMatfi:SimulMod2} displays the histograms of 1000 simulated values of \( \sqrt{n/2}(d_{\rm L}(\hat\ell_1,\hat\ell_2)-d_{\rm L}(\ell_1,\ell_2))\) for \(n=10000\) and \(n=50000\). We have superimposed a mixture of normal densities (red line with higher weight and blue line with lower weight), whose number of components and homo/heteroscedasticity were determined with the BIC. Even though the number of components was 2 for both sample sizes, the histograms have Gaussian appearance and the component of the mixture with the highest weight (black line) is almost coincident with the mixture density. In Model 2, as the index \(\mathcal I\) is one-dimensional, the convergence to normality is faster in \(n\).

\begin{figure}[H]
\begin{center}
\begin{tabular}{cc}
\includegraphics[width=0.47\textwidth]{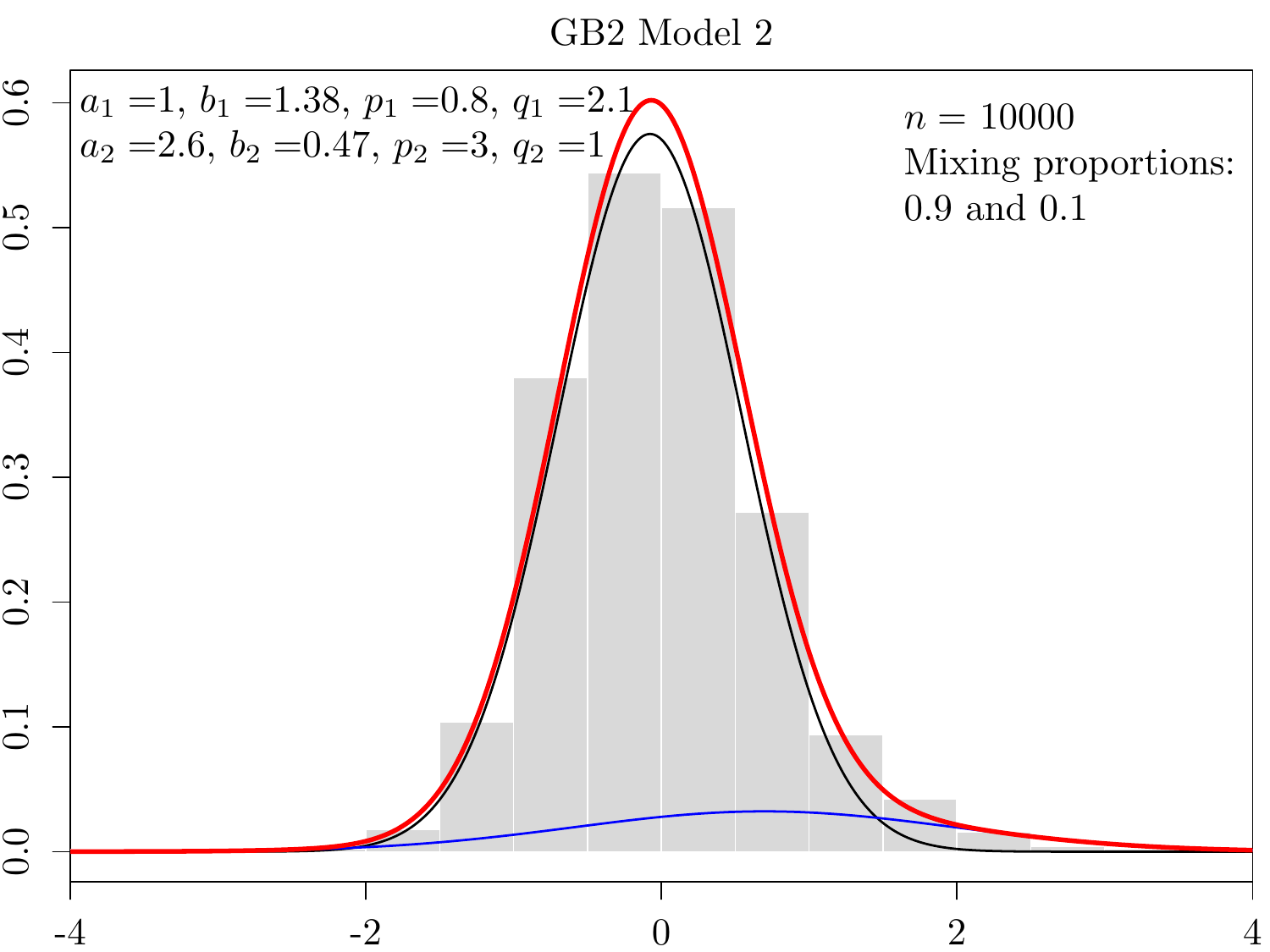} & \includegraphics[width=0.47\textwidth]{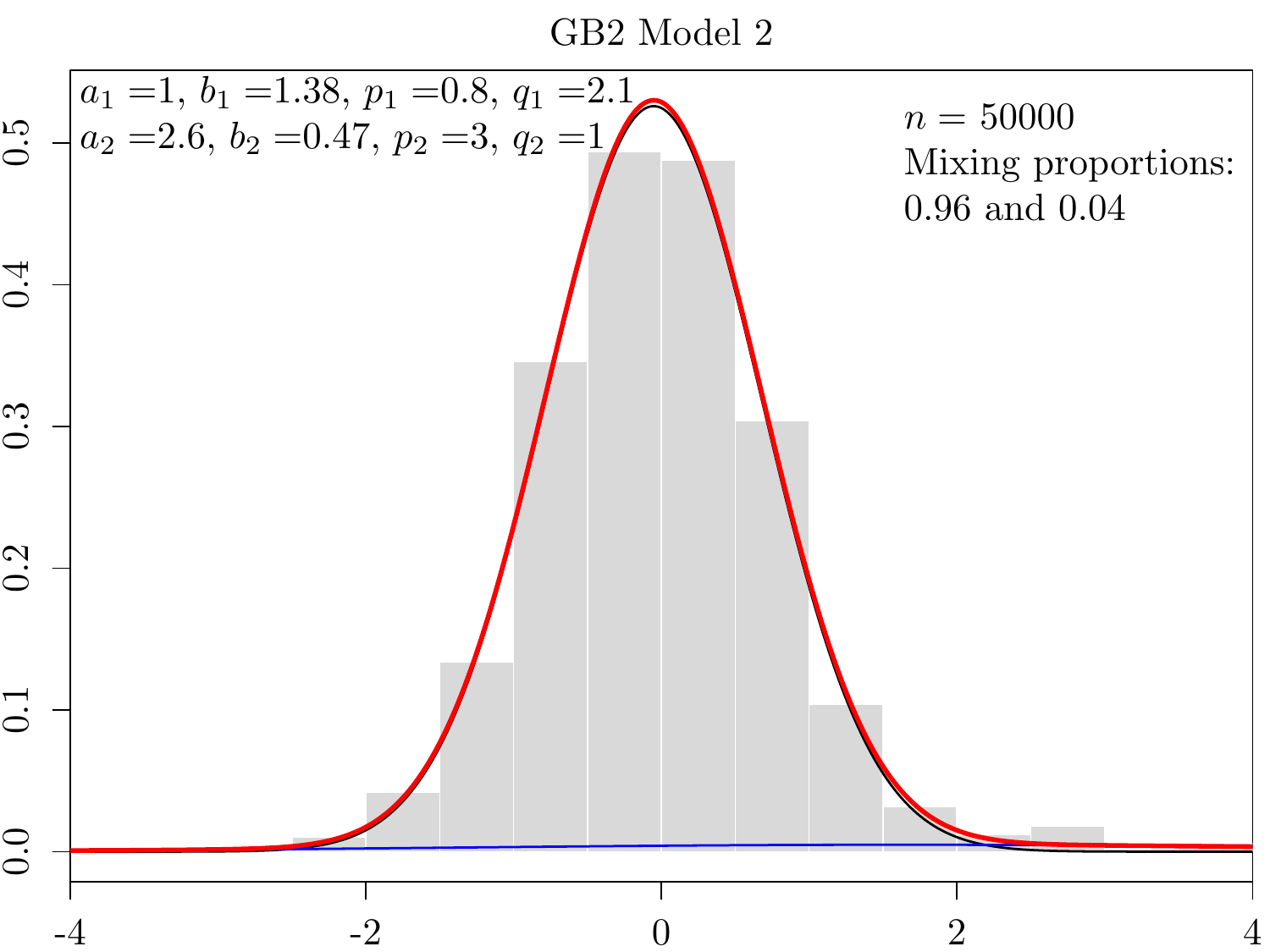}
\end{tabular}
\end{center}
\caption{Normalized values of \(d_{\rm L}(\ell_1,\ell_2)\) and level sets of normal mixture for simulated samples of Model 2 with \(n=10000\) (left) and \(n=50000\) (right).}
\label{SuppMatfi:SimulMod2}
\end{figure}

\newpage

\textbf{Model 3:}

\begin{figure}[H]
\begin{center}
\begin{tabular}{cc}
\includegraphics[width=0.47\textwidth]{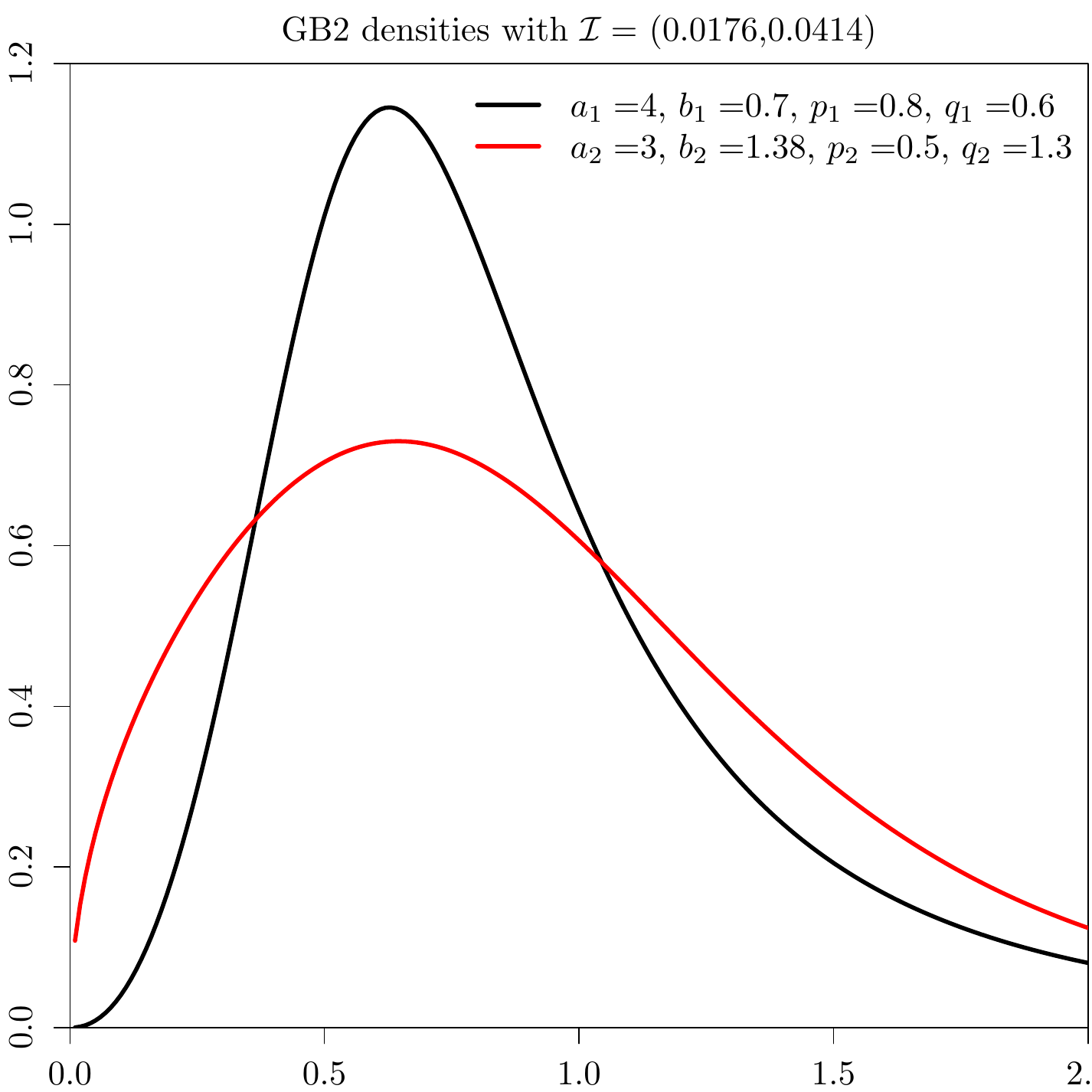} & \includegraphics[width=0.47\textwidth]{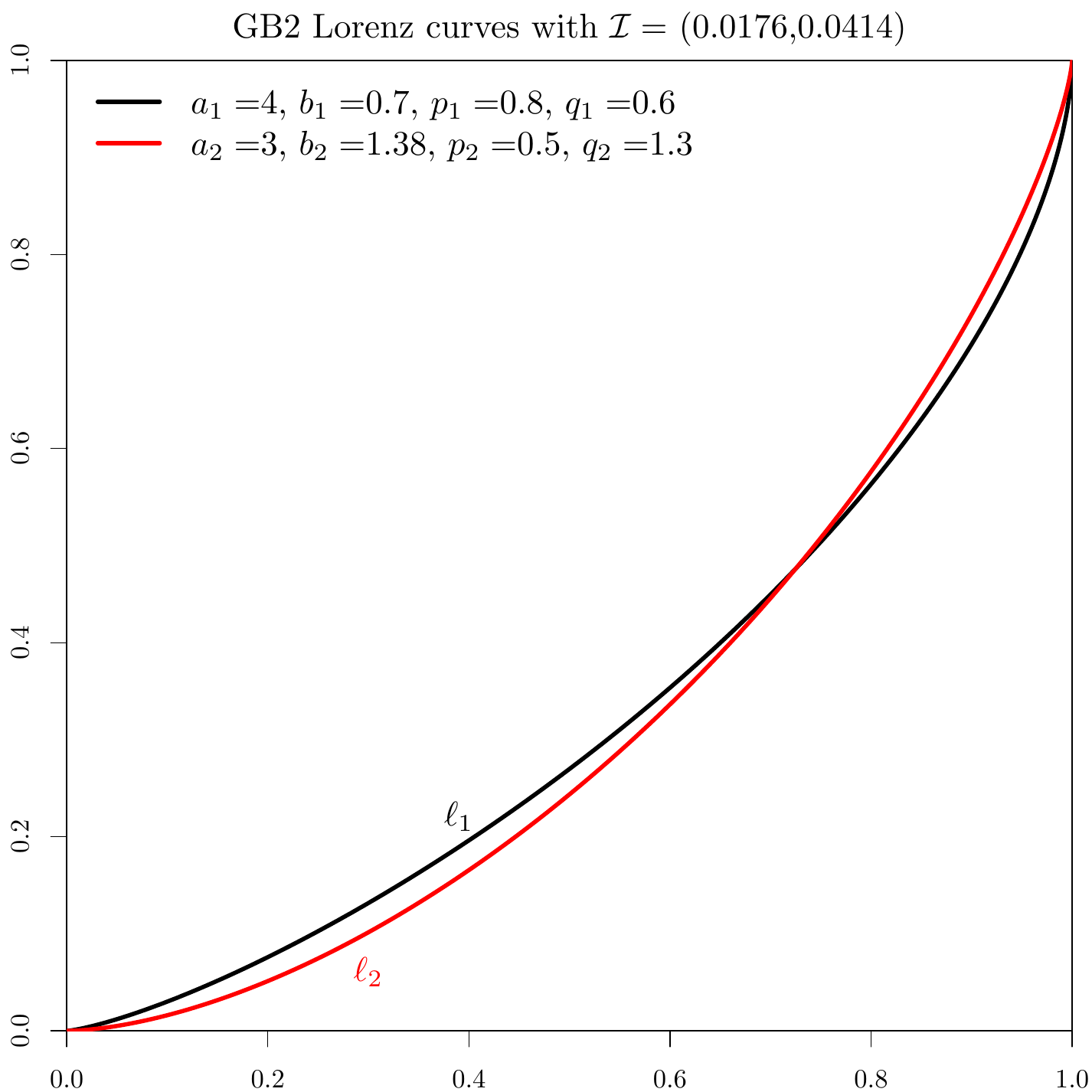}
\end{tabular}
\end{center}
\caption{Densities (left) and Lorenz curves (right) of the GB2 distributions in Model 3.}
\label{SuppMatfi:Mod3Simul}
\end{figure}

The structure of the graphics is analogous to those of Model 1.
The 1000 simulated values of the normalized \(\hat{\mathcal I}\) appear in Figure~\ref{SuppMatfi:SimulMod3}. We have superimposed the level sets (in red) of a mixture of Gaussian densities fit to these points. As before, BIC was used to choose the optimal parameters in the mixture. The points in the mixture component with weight lower than 10\% are colored in blue. The convergence to normality is clear as \(n\) increases, but it is slower than in Model 1 as the Lorenz curves in Model 3 are closer to each other than in Model 1.

\begin{figure}[H]
\begin{center}
\begin{tabular}{cc}
\includegraphics[width=0.47\textwidth]{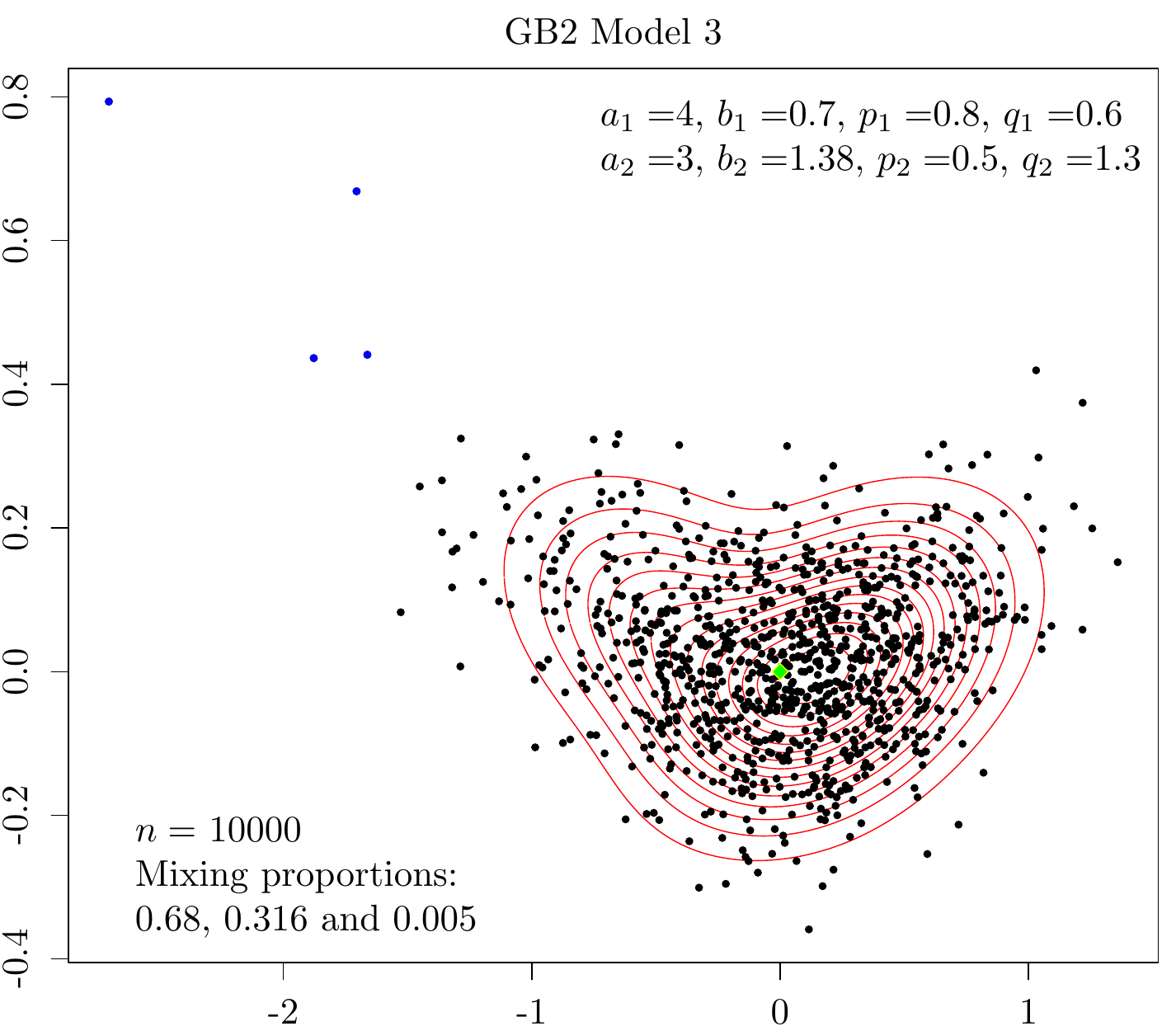} & \includegraphics[width=0.47\textwidth]{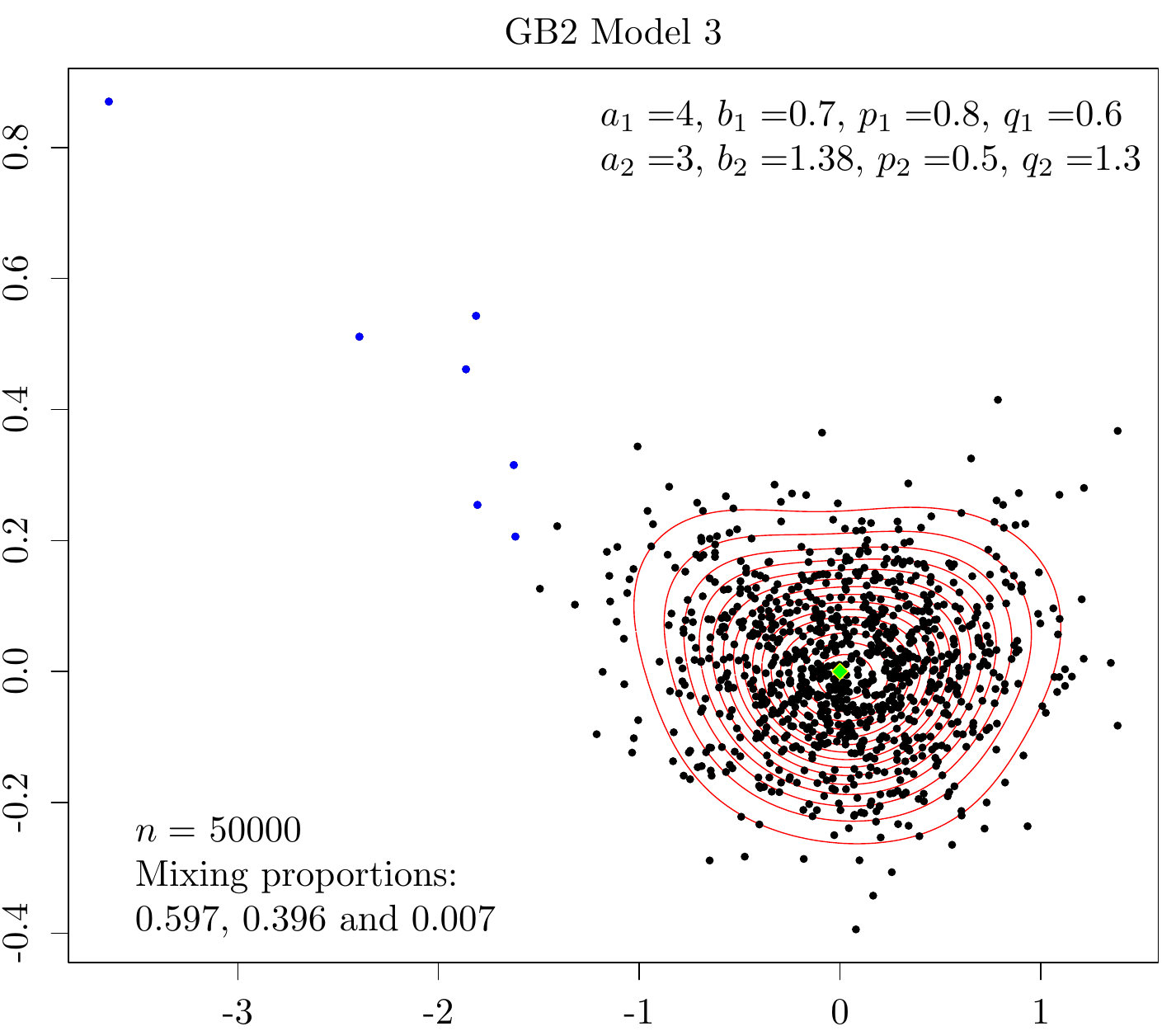}
\end{tabular}
\end{center}
\caption{Normalized values of \(\hat{\mathcal I}\) and level sets of normal mixture for simulated samples of Model 3 with \(n=10000\) (left) and \(n=50000\) (right).}
\label{SuppMatfi:SimulMod3}
\end{figure}

\newpage

\textbf{Model 4:}

\begin{figure}[H]
\begin{center}
\begin{tabular}{cc}
\includegraphics[width=0.47\textwidth]{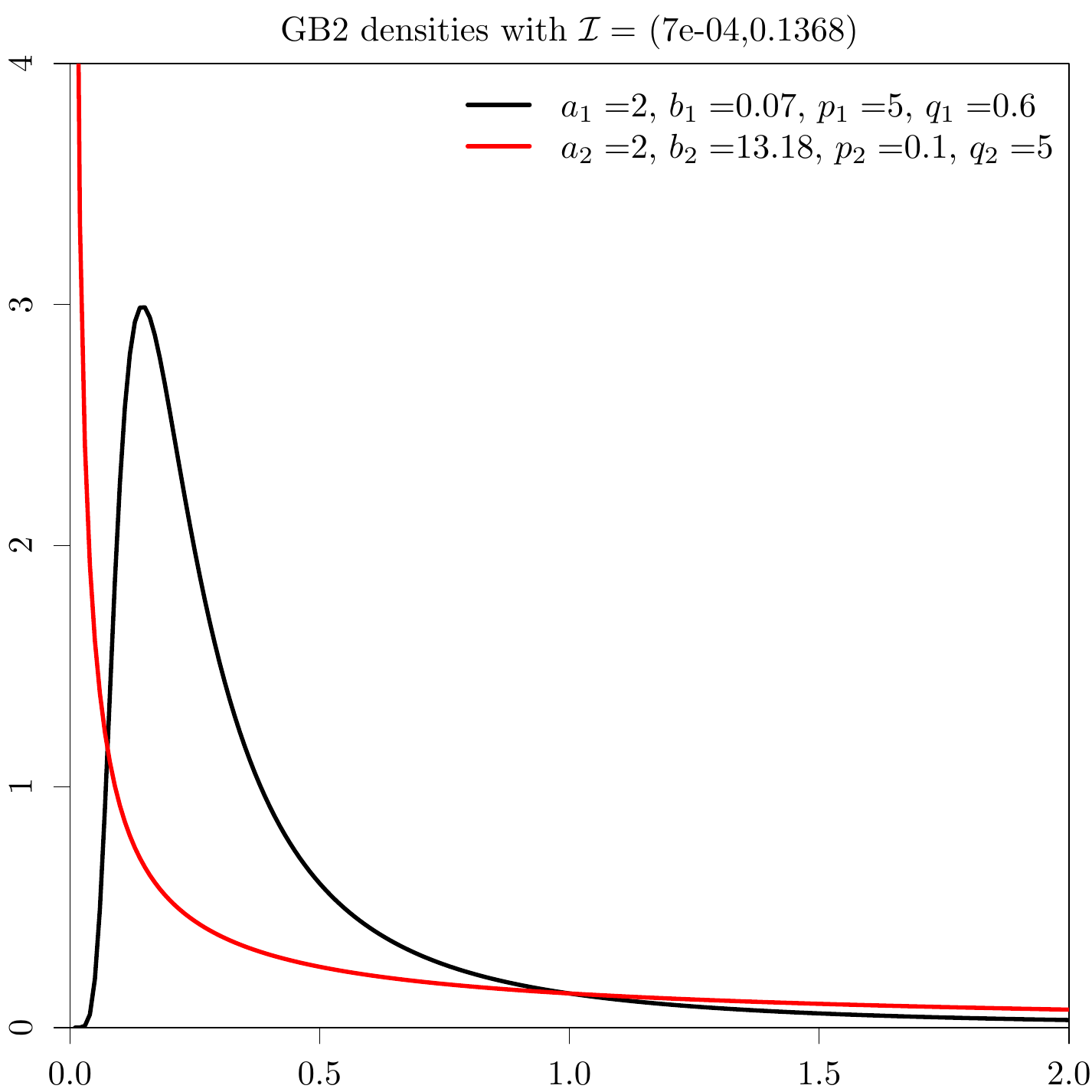} & \includegraphics[width=0.47\textwidth]{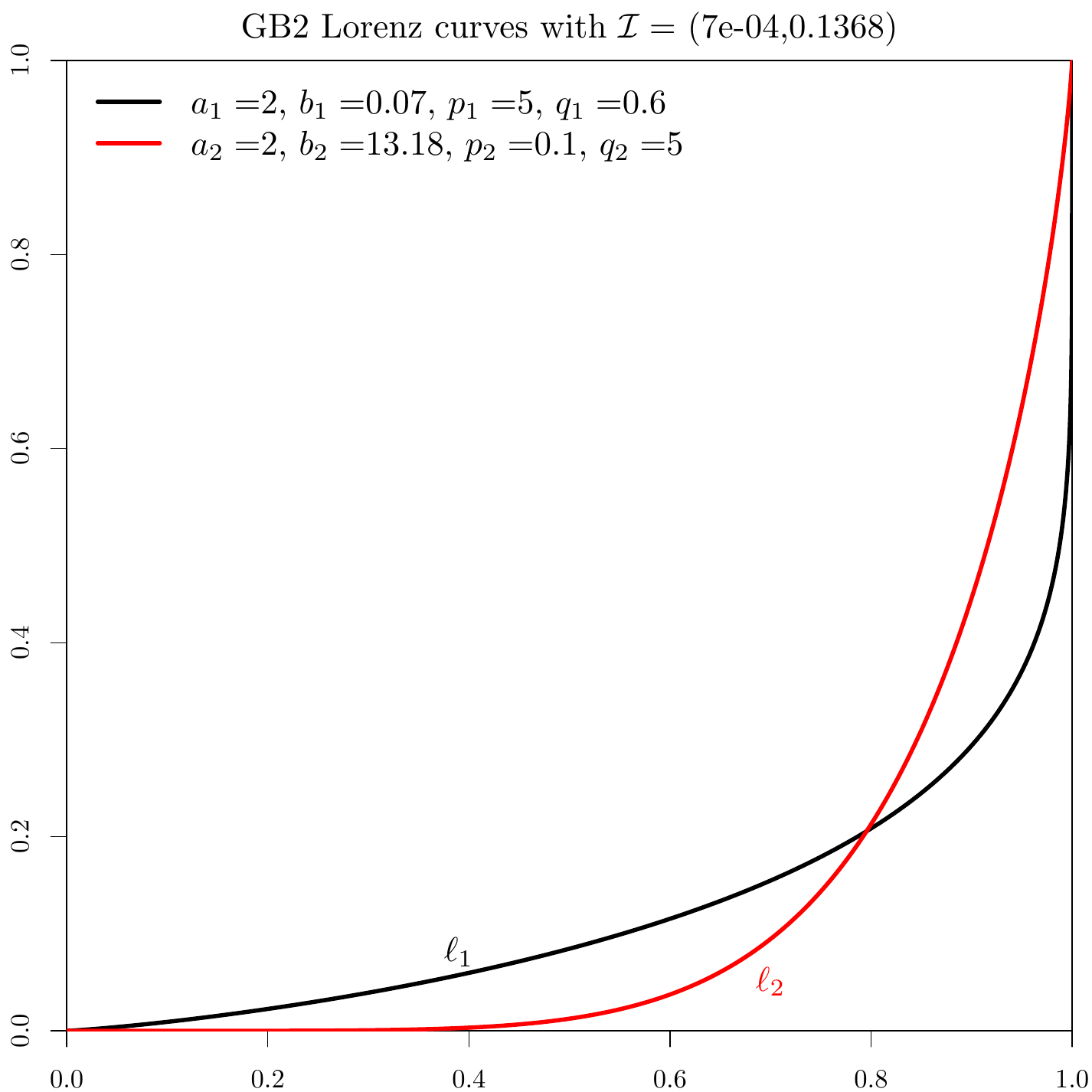}
\end{tabular}
\end{center}
\caption{Densities (left) and Lorenz curves (right) of the GB2 distributions in Model 4.}
\label{SuppMatfi:Mod4Simul}
\end{figure}

In this case we have taken the sample sizes equal to \(n=10^4\) and \(n=10^5\) to illustrate that there is no convergence in distribution to normality.
The 1000 simulated values of the empirical inequality index \(\hat{\mathcal I}\) appear in Figures~\ref{SuppMatfi:SimulMod4} and~\ref{SuppMatfi:SimulMod4ZOOM}. The green dot marks the population index \(\mathcal I\). We see that there is consistency of the empirical inequality index \(\hat{\mathcal I}\), but the rate of convergence seems slow.

\begin{figure}[H]
\begin{center}
\begin{tabular}{cc}
\includegraphics[width=0.47\textwidth]{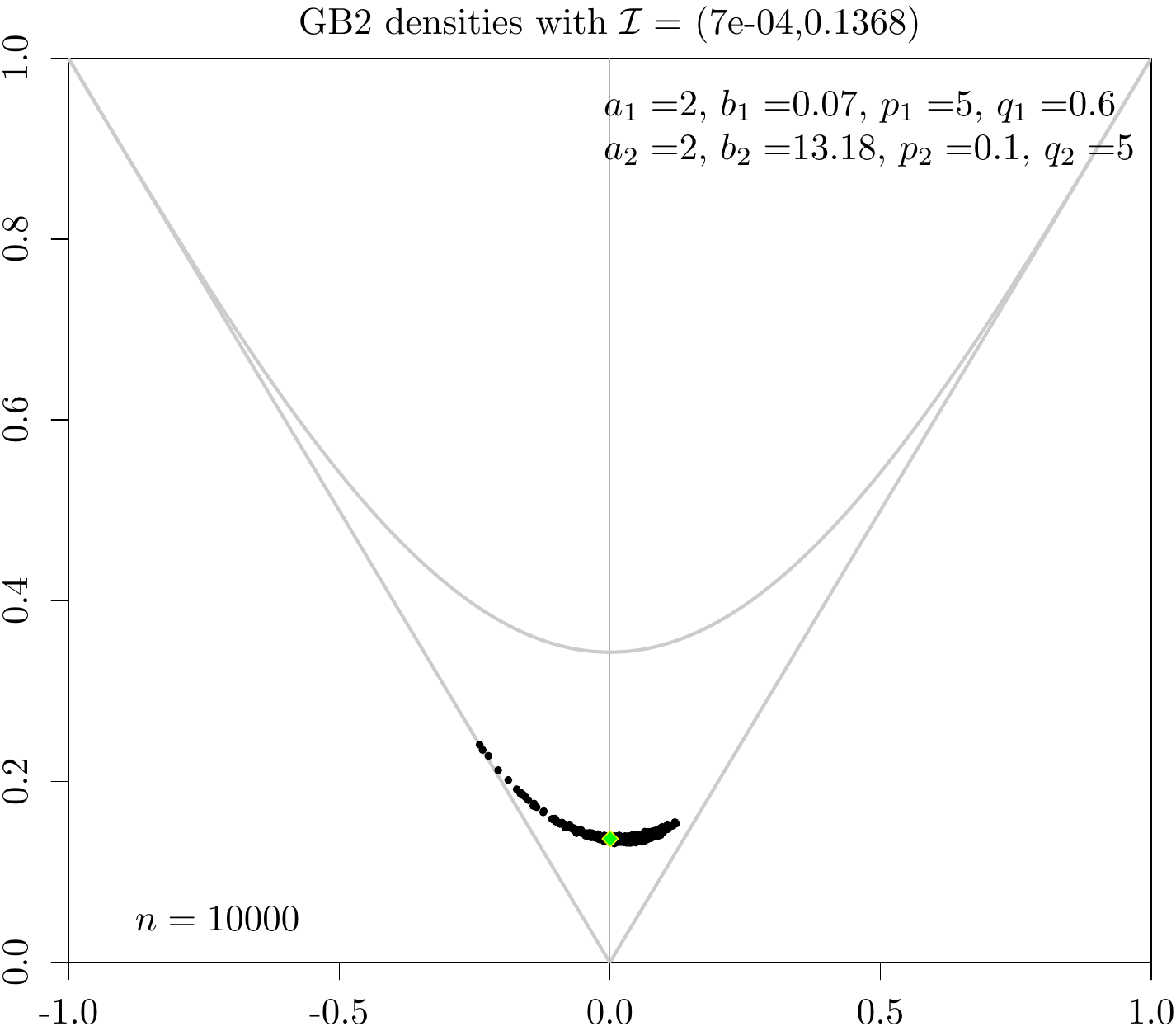} & \includegraphics[width=0.47\textwidth]{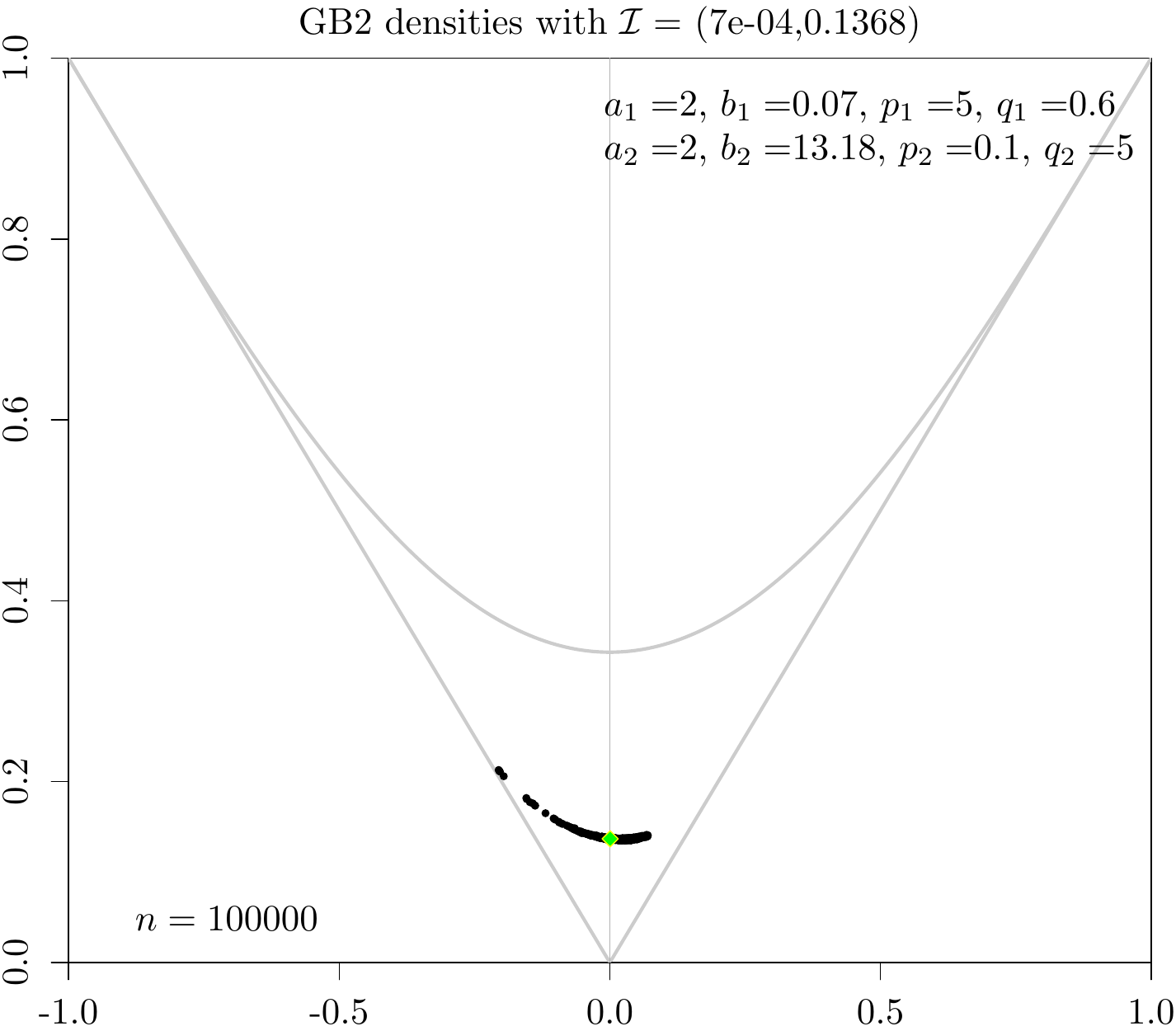}
\end{tabular}
\end{center}
\caption{Values of \(\hat{\mathcal I}\) for samples of Model 4 with \(n=10000\) (left) and \(n=100000\) (right).}
\label{SuppMatfi:SimulMod4}
\end{figure}

\begin{figure}[H]
\begin{center}
\begin{tabular}{cc}
\includegraphics[width=0.47\textwidth]{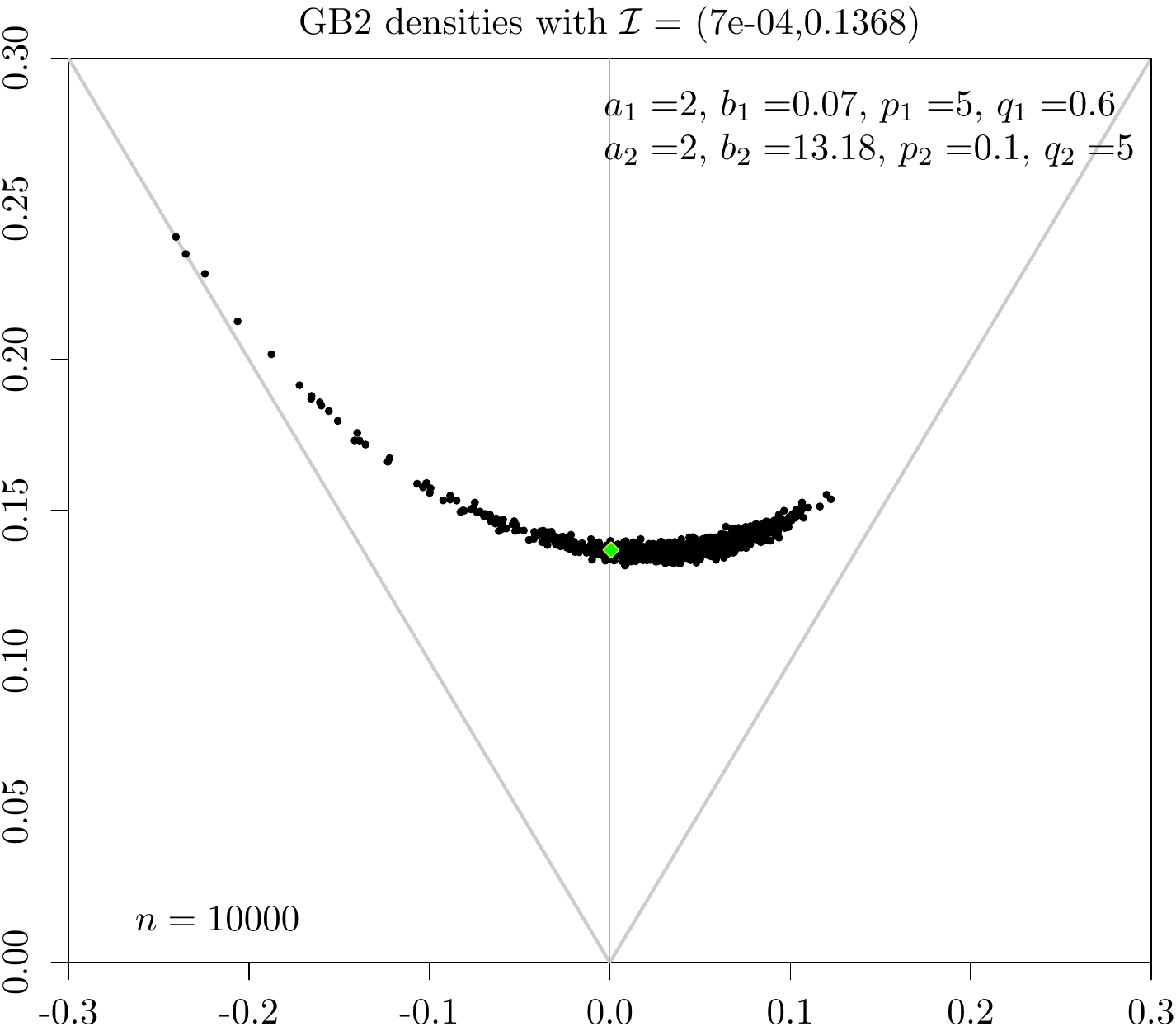} & \includegraphics[width=0.47\textwidth]{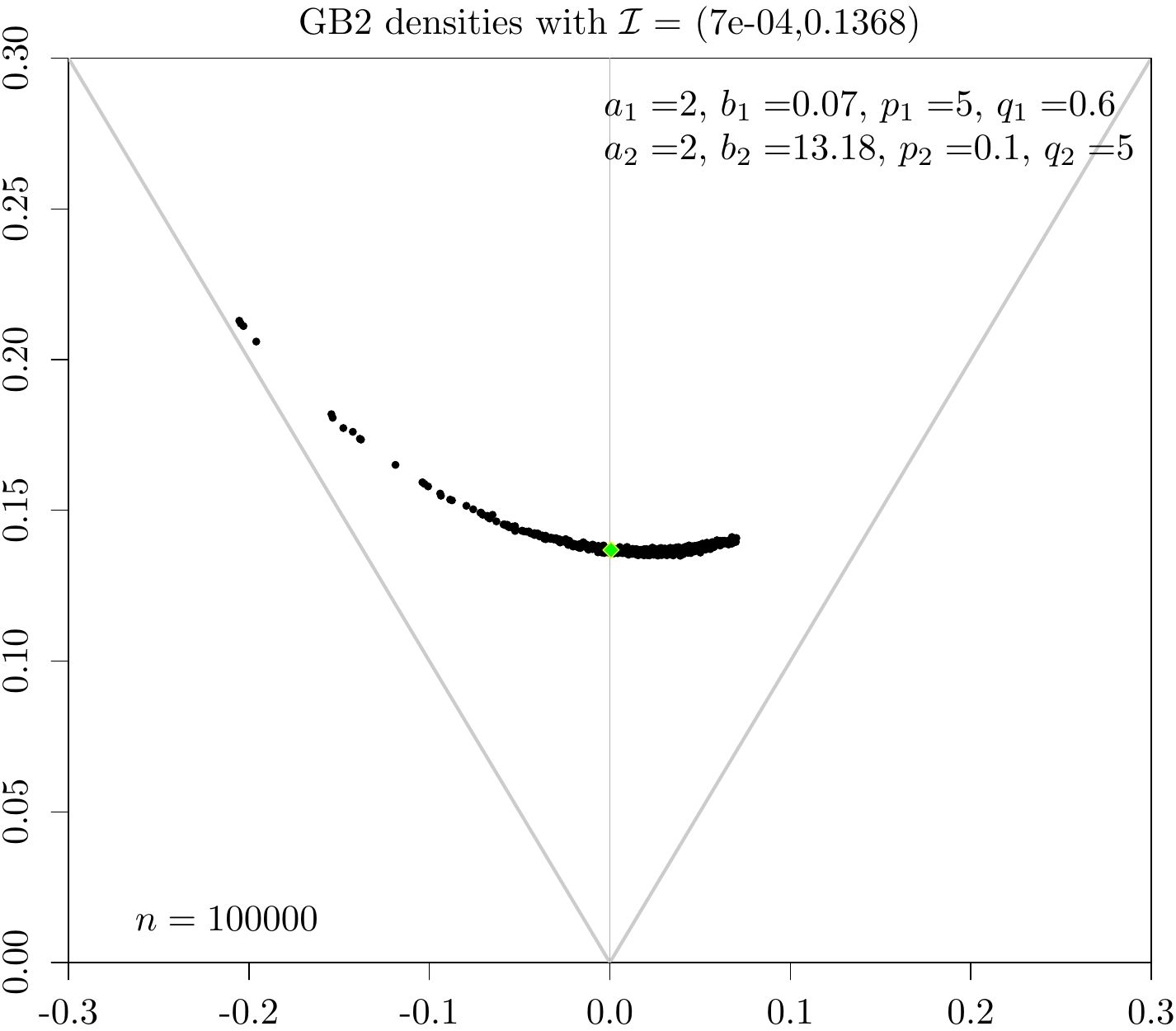}
\end{tabular}
\end{center}
\caption{Values of \(\hat{\mathcal I}\) for samples of Model 4 with \(n=10000\) (left) and \(n=100000\) (right). Enlarged version of Figure~\ref{SuppMatfi:SimulMod4}.}
\label{SuppMatfi:SimulMod4ZOOM}
\end{figure}

\newpage

\textbf{Model 5:}

Recall that in Model 5 the variables $X_1$ and $X_2$ follow the same distribution, so the real value of \(\mathcal I\) is (0,0).
The 1000 simulated values of the normalized empirical index \(\hat{\mathcal I}\) appear in Figure~\ref{SuppMatfi:SimulMod5}.

\begin{figure}[H]
\begin{center}
\begin{tabular}{cc}
\includegraphics[width=0.47\textwidth]{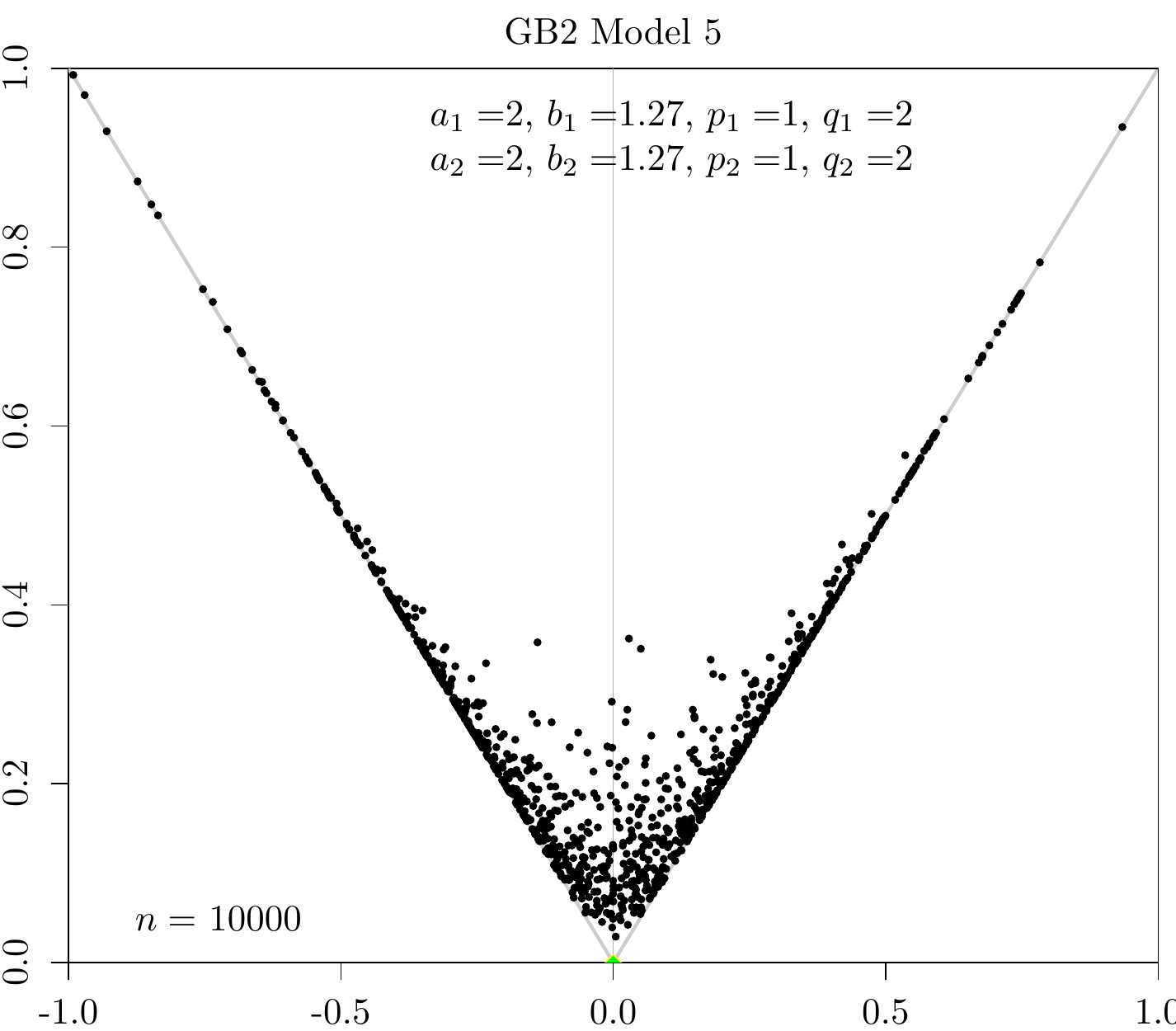} & \includegraphics[width=0.47\textwidth]{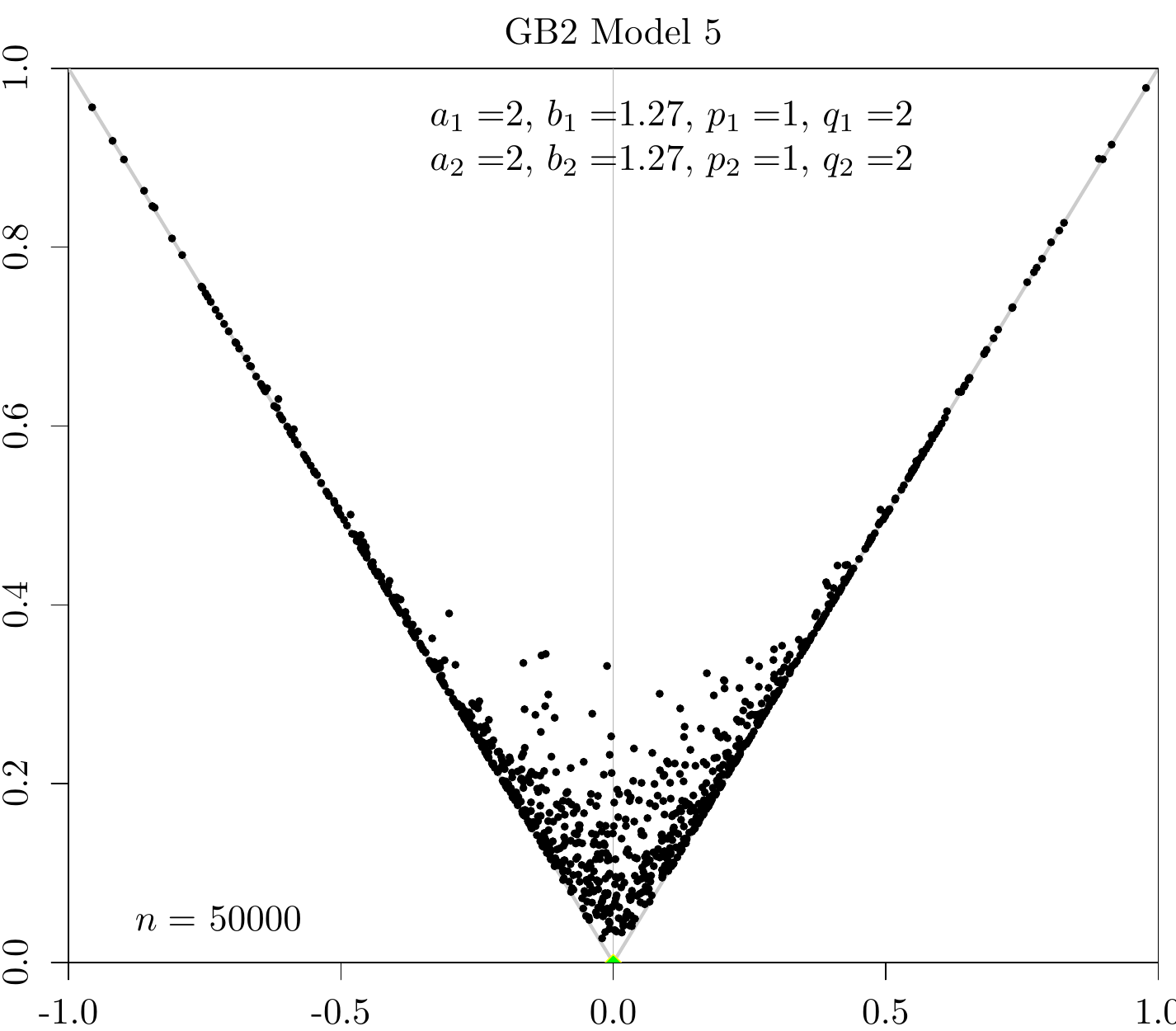}
\end{tabular}
\end{center}
\caption{Normalized values of \(\hat{\mathcal I}\)  for simulated samples of Model 5 with \(n=10000\) (left) and \(n=50000\) (right).}
\label{SuppMatfi:SimulMod5}
\end{figure}

\newpage

%%%%%%%%%%%%%%%%%%%%%%%%%%%%%%%%%%%%%%%%%%%%%%%%%%%%%%%%%%%%%%%%%%%%%%%%%%%%%%%%%%%
% SECTION: APPLICATION TO INCOME REAL DATA
%%%%%%%%%%%%%%%%%%%%%%%%%%%%%%%%%%%%%%%%%%%%%%%%%%%%%%%%%%%%%%%%%%%%%%%%%%%%%%%%%%%

\section{Application to EU-SILC income data} \label{SuppMatSection.RealData}

\subsection{Yearly evolution of inequality in Spain with respect to 2008}

In Figures~\ref{SuppMatfi:SpainLorenz1}--\ref{SuppMatfi:SpainLorenz4}, on the left we display the Lorenz curves, \(\hat\ell_1\) and \(\hat\ell_2\), of Spanish income for 2008 and a year between 2009 and 2019, respectively. Since the two Lorenz curves are always very similar and it is difficult to appreciate the change from 2008 to the other year, on the right we plot the difference of the two Lorenz curves, \(\hat\ell_1-\hat\ell_2\), scaled by the supremum norm of this difference, \(\|\hat\ell_1-\hat\ell_2\|_\infty\).

\begin{figure}[H]
\begin{center}
\begin{tabular}{cc}
\includegraphics[width=0.47\textwidth]{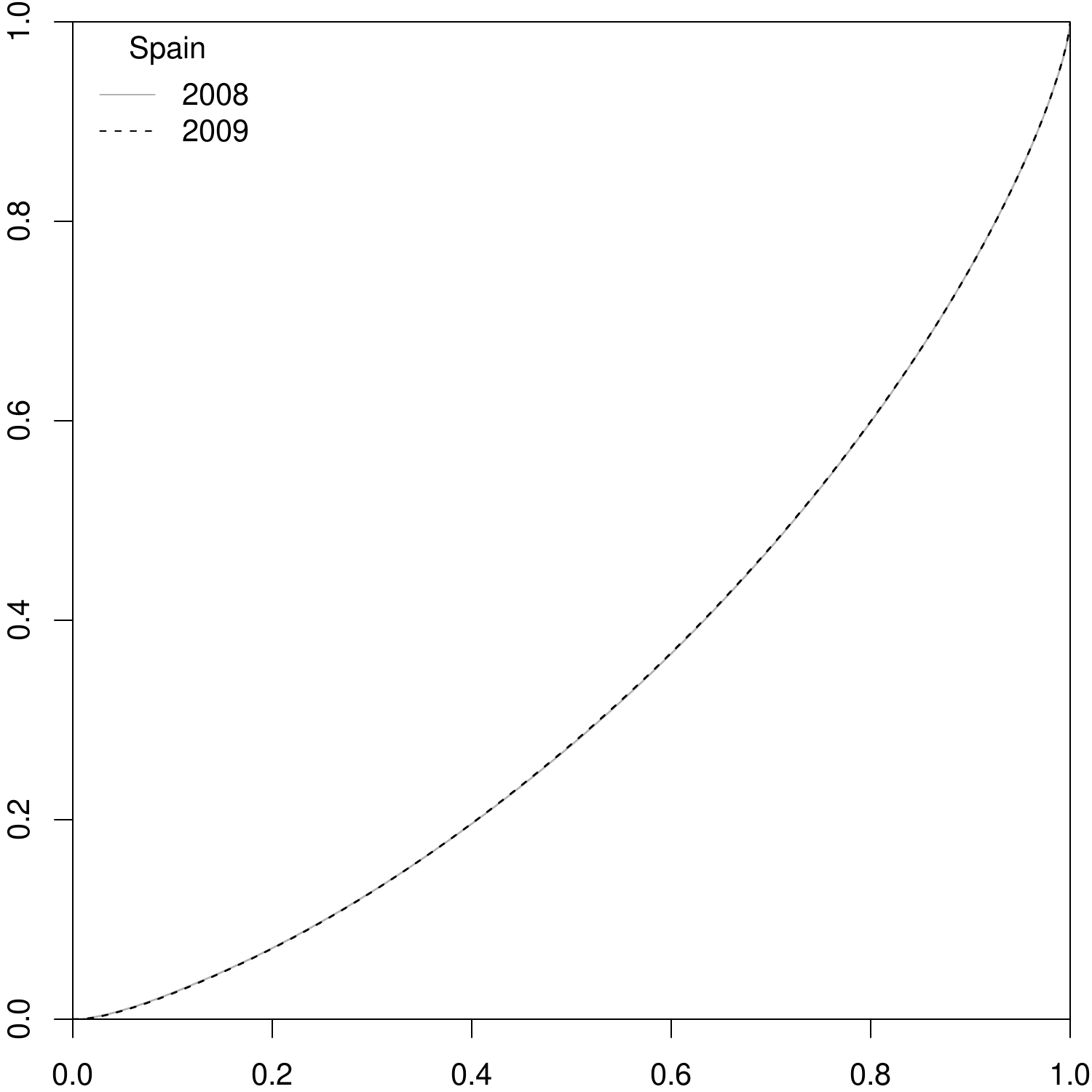} & \includegraphics[width=0.47\textwidth]{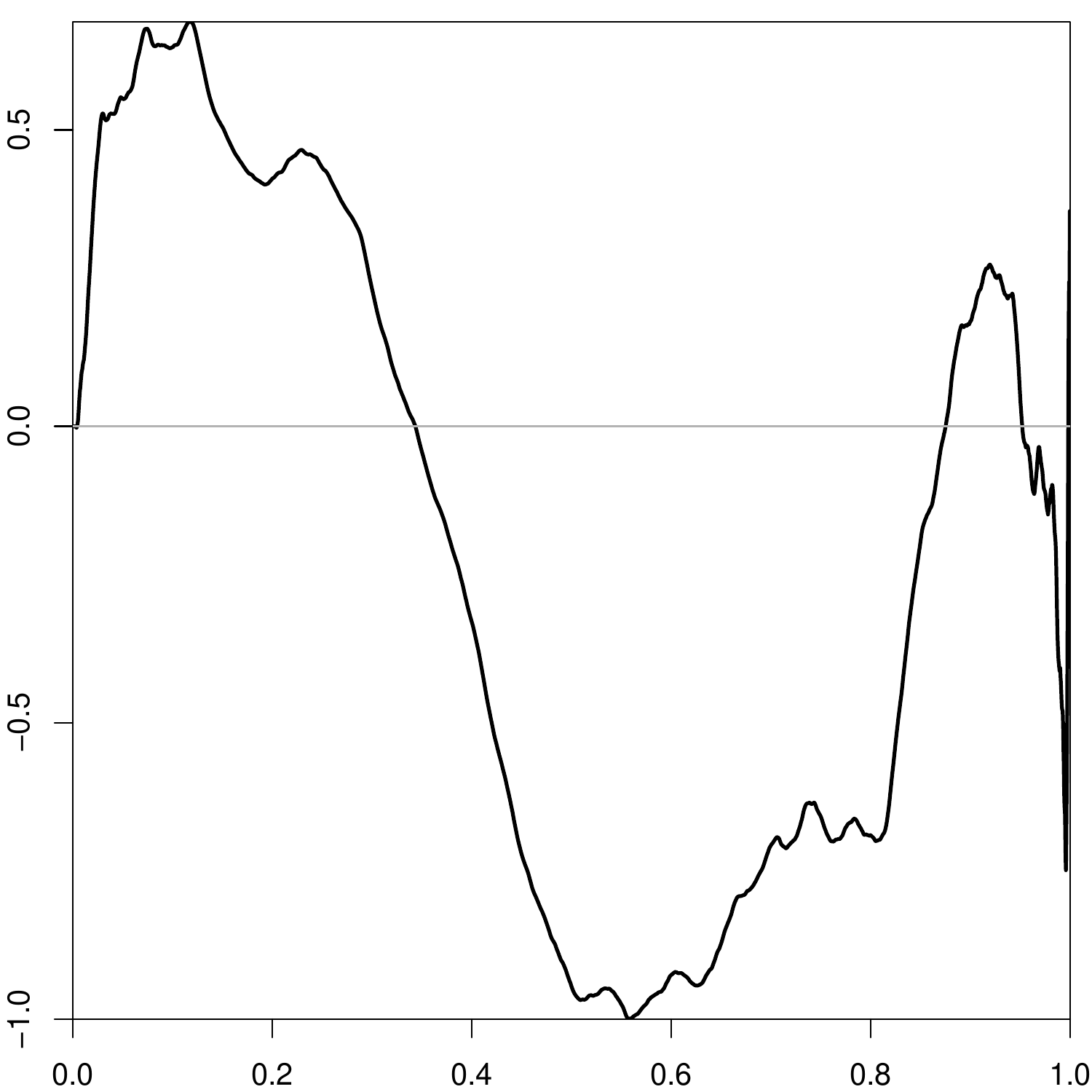}  \\
\includegraphics[width=0.47\textwidth]{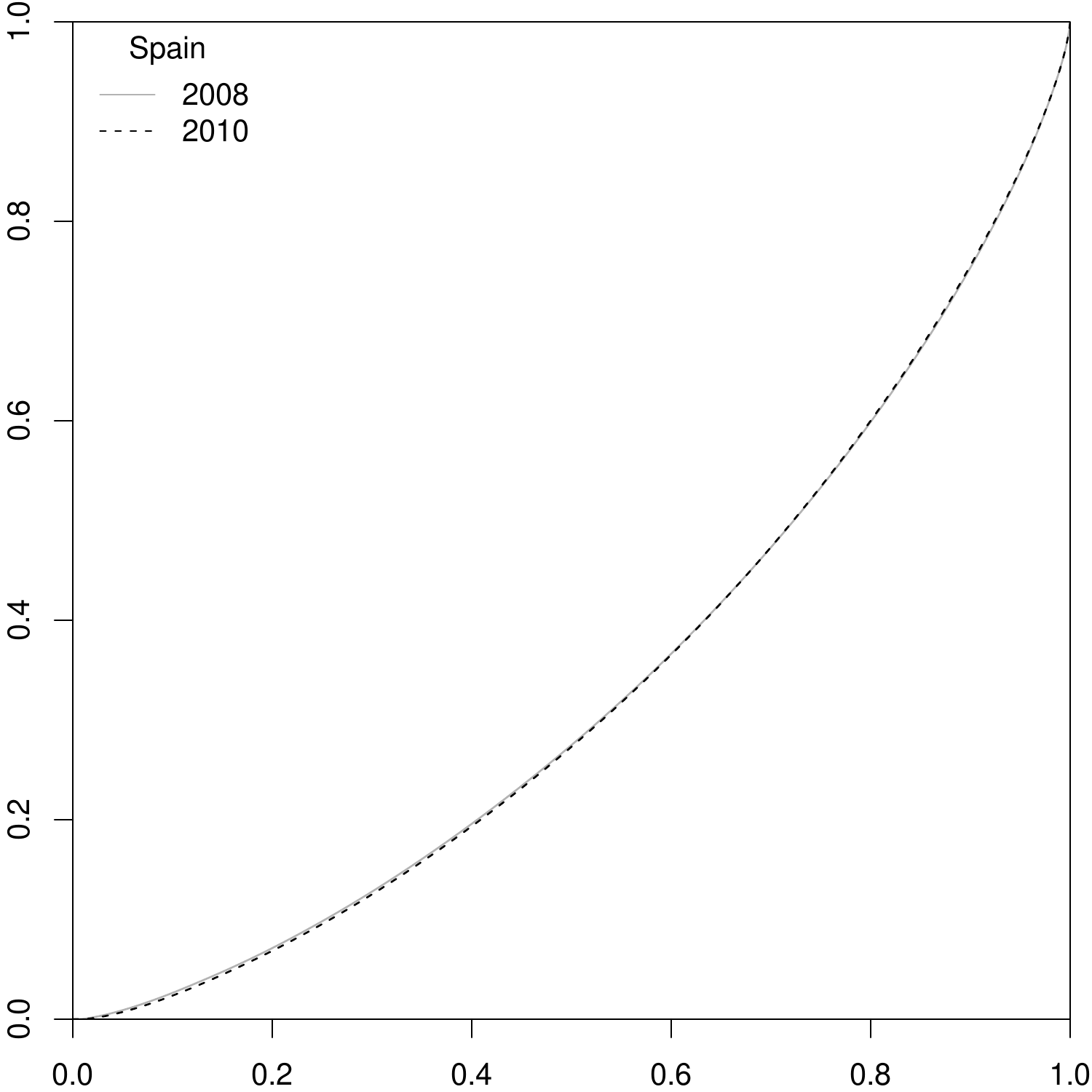} & \includegraphics[width=0.47\textwidth]{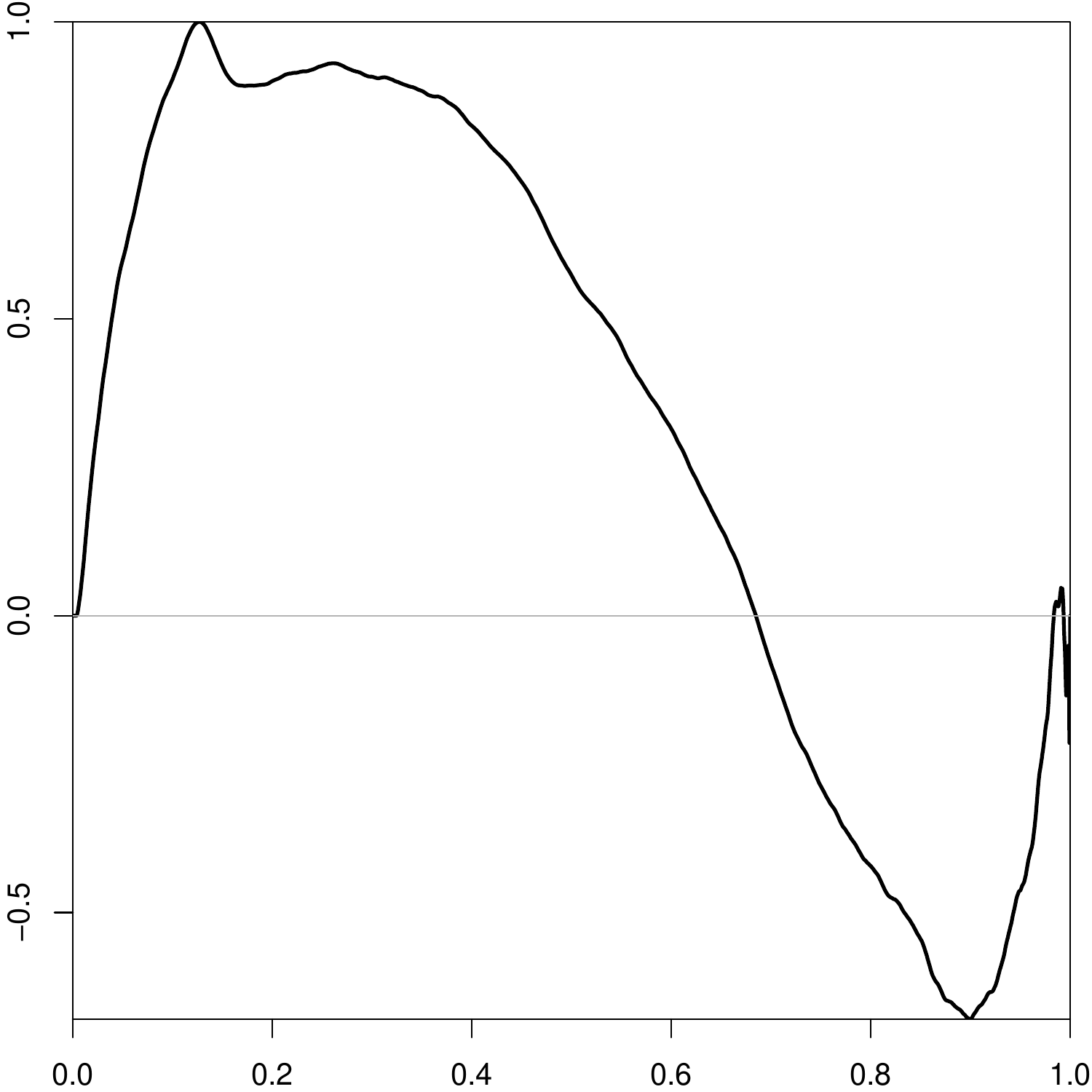}
\end{tabular}
\end{center}
\caption{The Lorenz curves $\hat\ell_1$ and $\hat\ell_2$ (left column) of income distribution and their difference scaled by the maximum absolute difference, $(\hat\ell_1-\hat\ell_2)/\|\hat\ell_1-\hat\ell_2\|_\infty$, (right column) corresponding to the pairs of years formed by 2008 ($X_1$) and one in the span 2009--2019.}
\label{SuppMatfi:SpainLorenz1}
\end{figure}

\begin{figure}[H]
\begin{center}
\begin{tabular}{cc}
\includegraphics[width=0.47\textwidth]{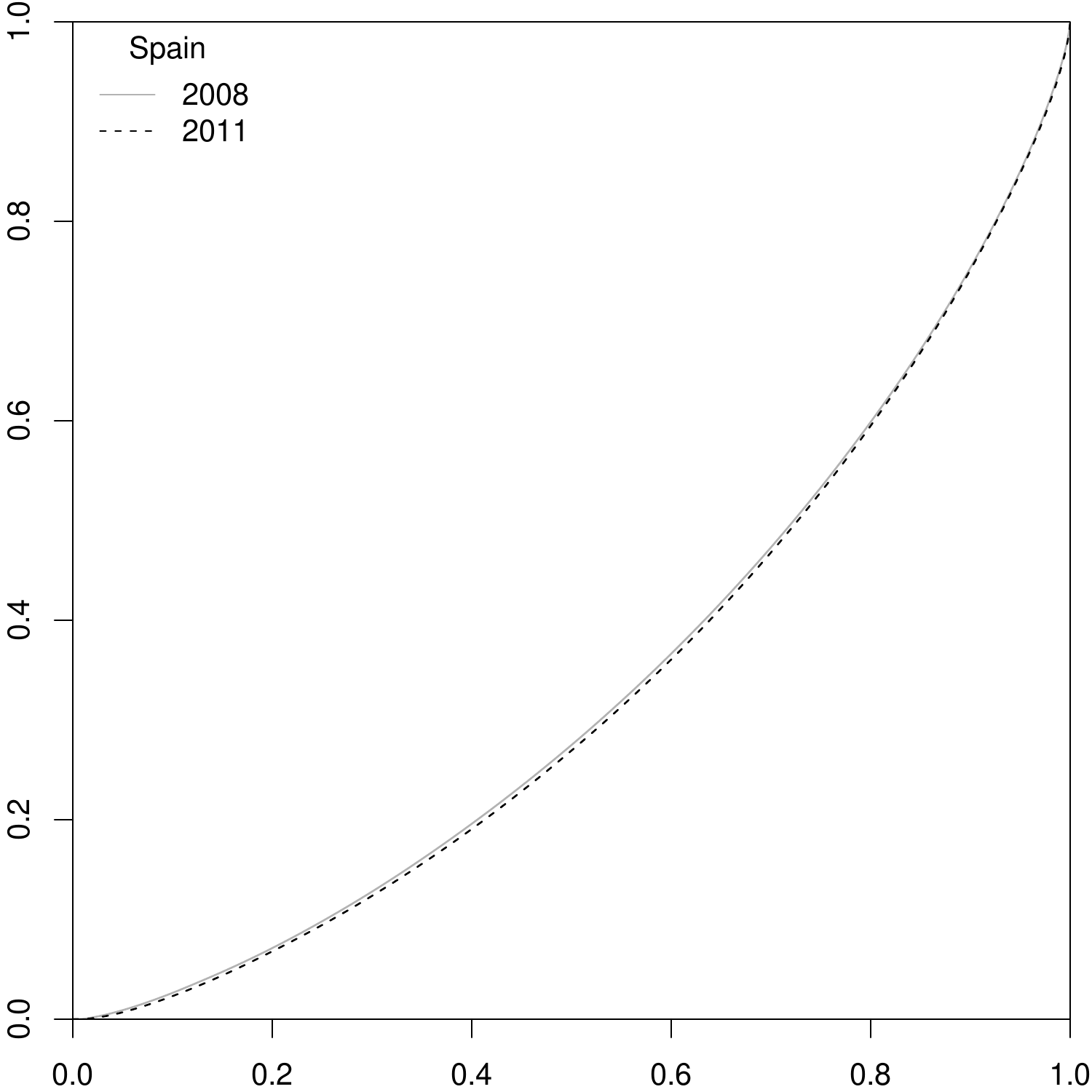} & \includegraphics[width=0.47\textwidth]{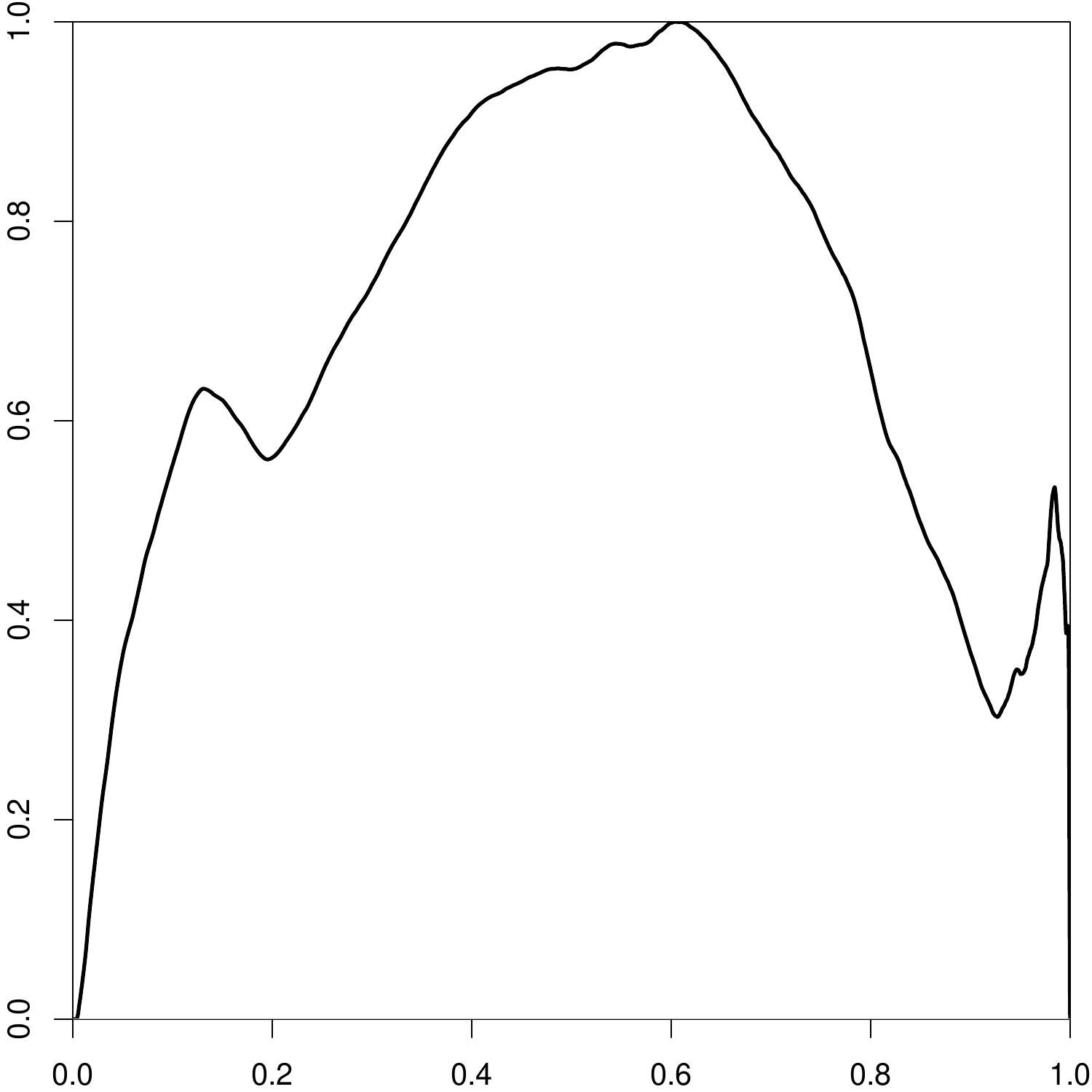} \\
\includegraphics[width=0.47\textwidth]{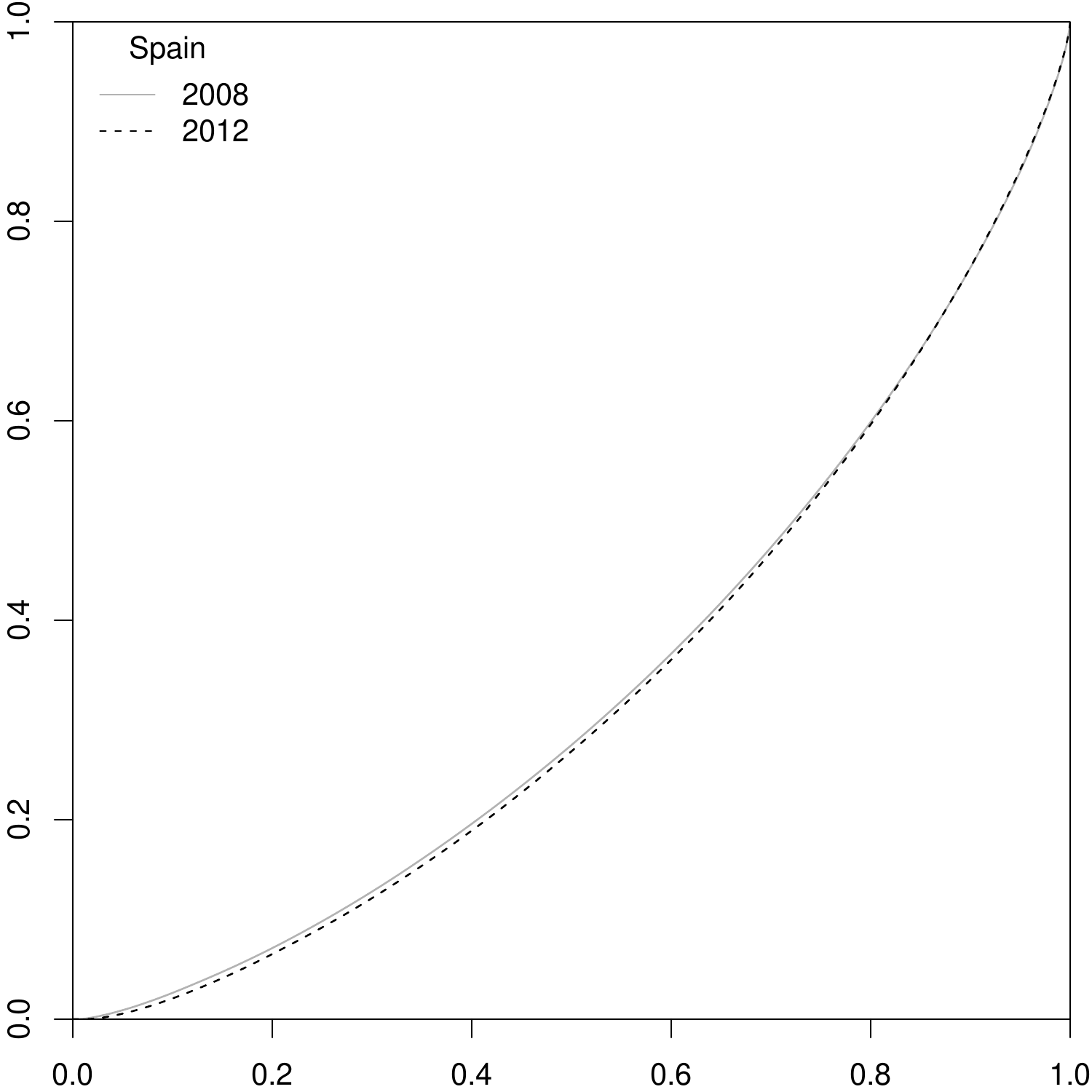} & \includegraphics[width=0.47\textwidth]{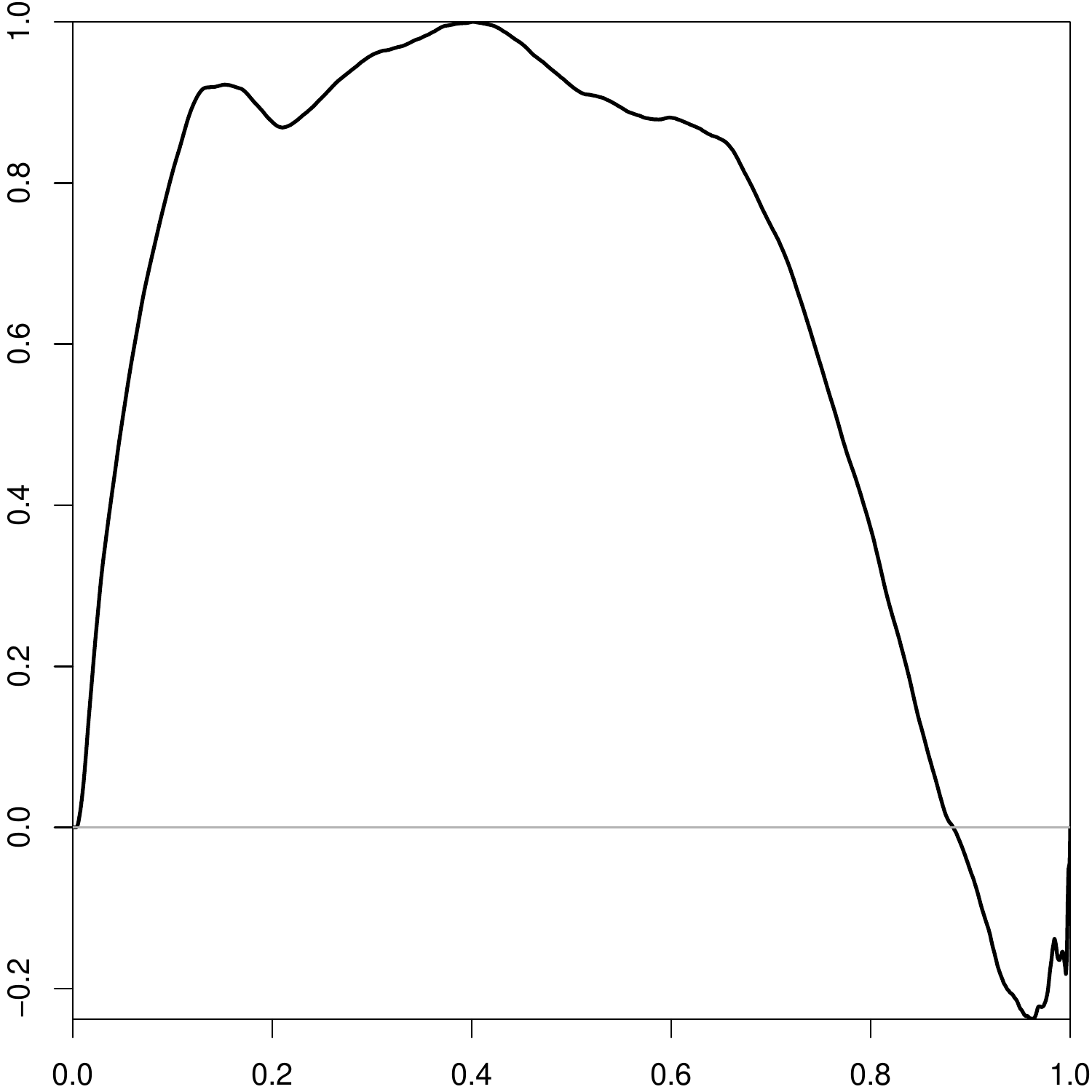} \\
\includegraphics[width=0.47\textwidth]{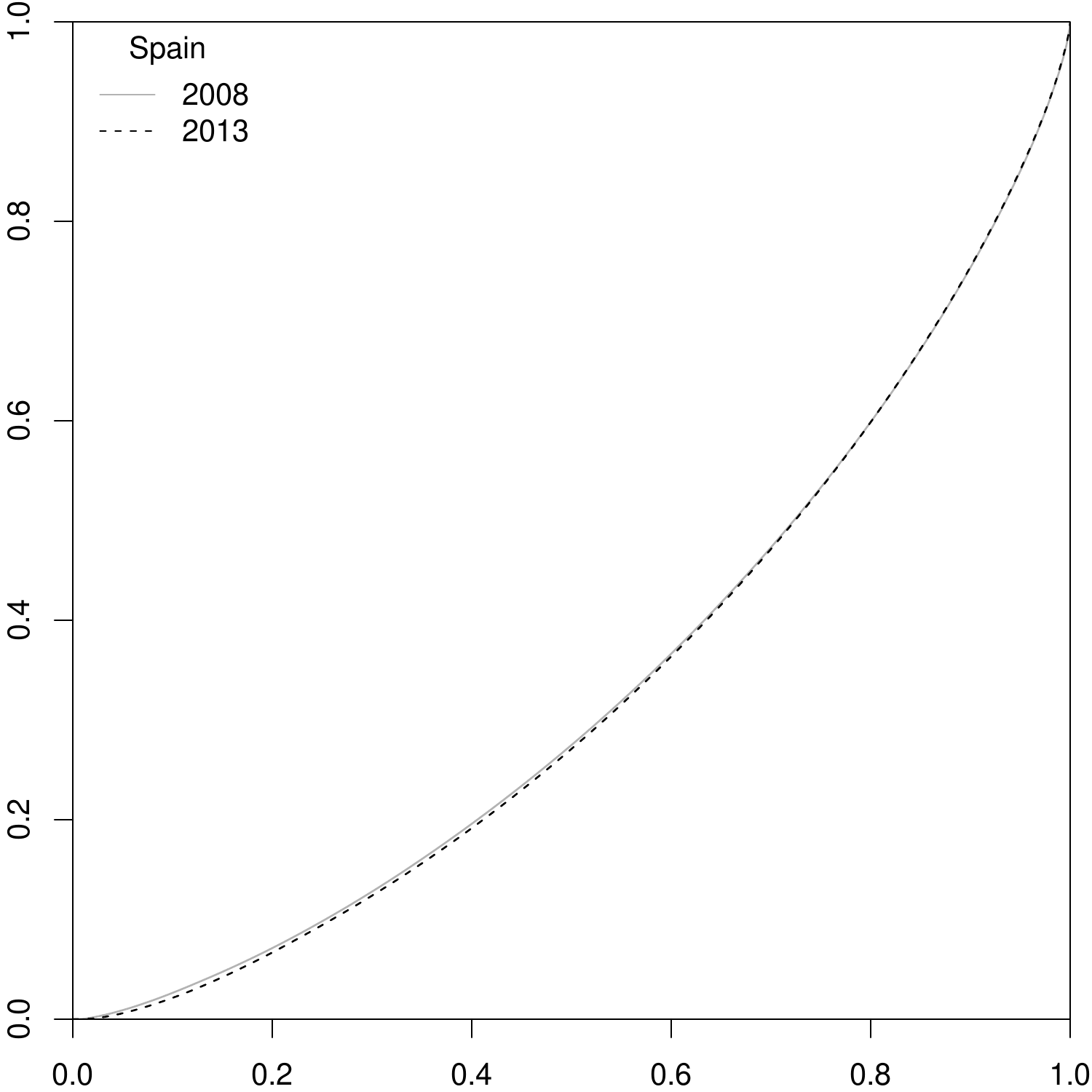} & \includegraphics[width=0.47\textwidth]{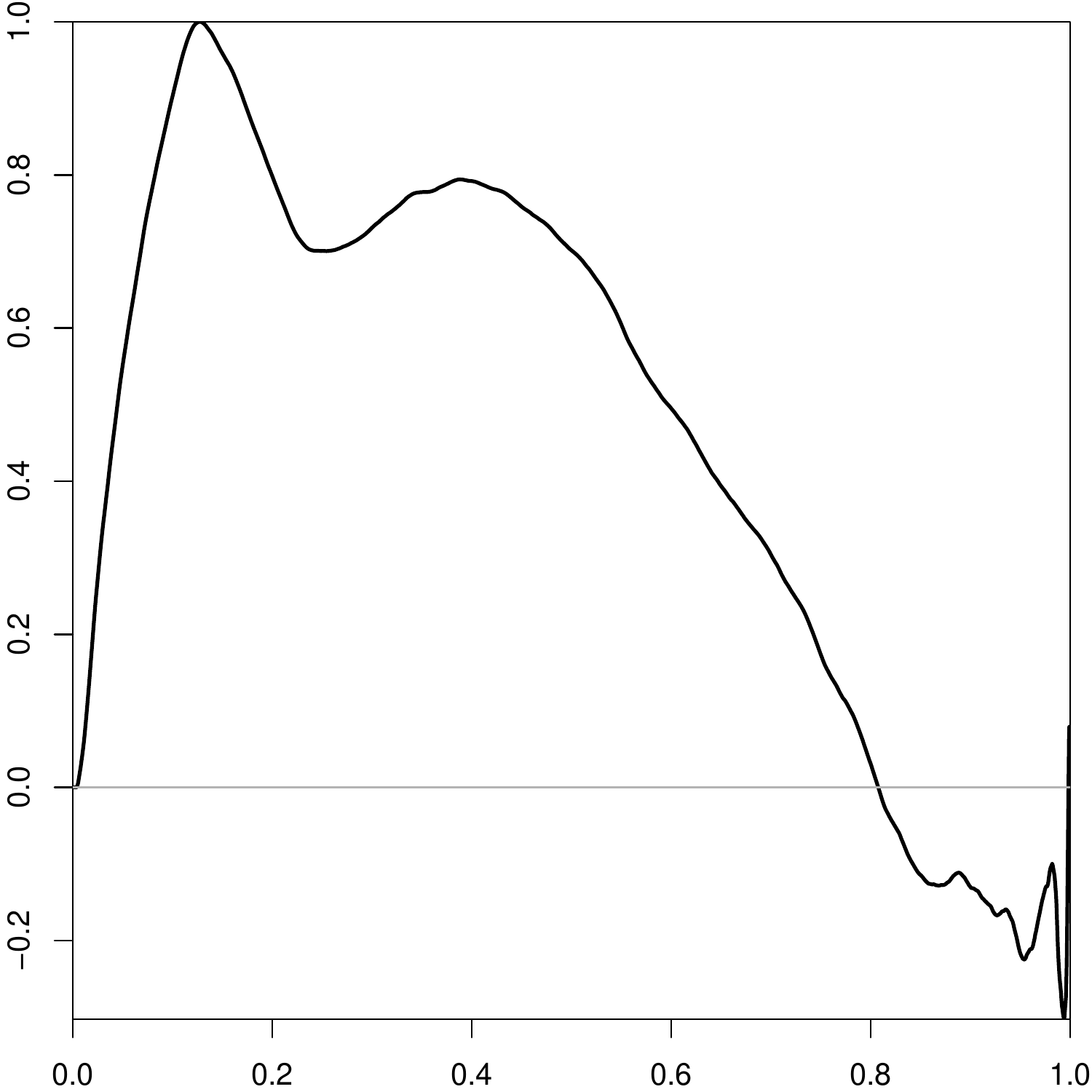}
\end{tabular}
\end{center}
\caption{The Lorenz curves $\hat\ell_1$ and $\hat\ell_2$ (left column) of income distribution and their difference scaled by the maximum absolute difference, $(\hat\ell_1-\hat\ell_2)/\|\hat\ell_1-\hat\ell_2\|_\infty$, (right column) corresponding to the pairs of years formed by 2008 ($X_1$) and one in the span 2009--2019.}
\label{SuppMatfi:SpainLorenz2}
\end{figure}

\begin{figure}[H]
\begin{center}
\begin{tabular}{cc}
\includegraphics[width=0.47\textwidth]{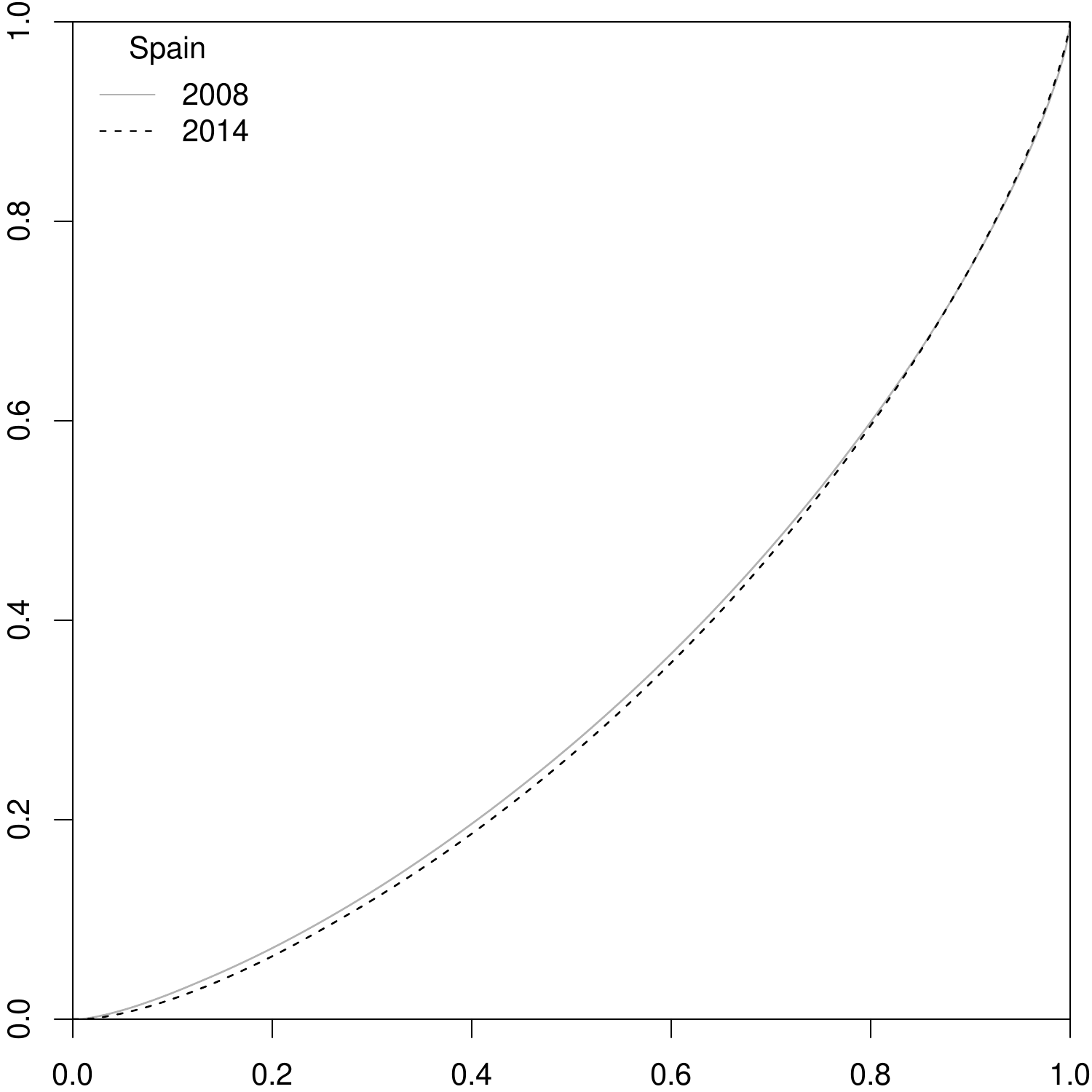} & \includegraphics[width=0.47\textwidth]{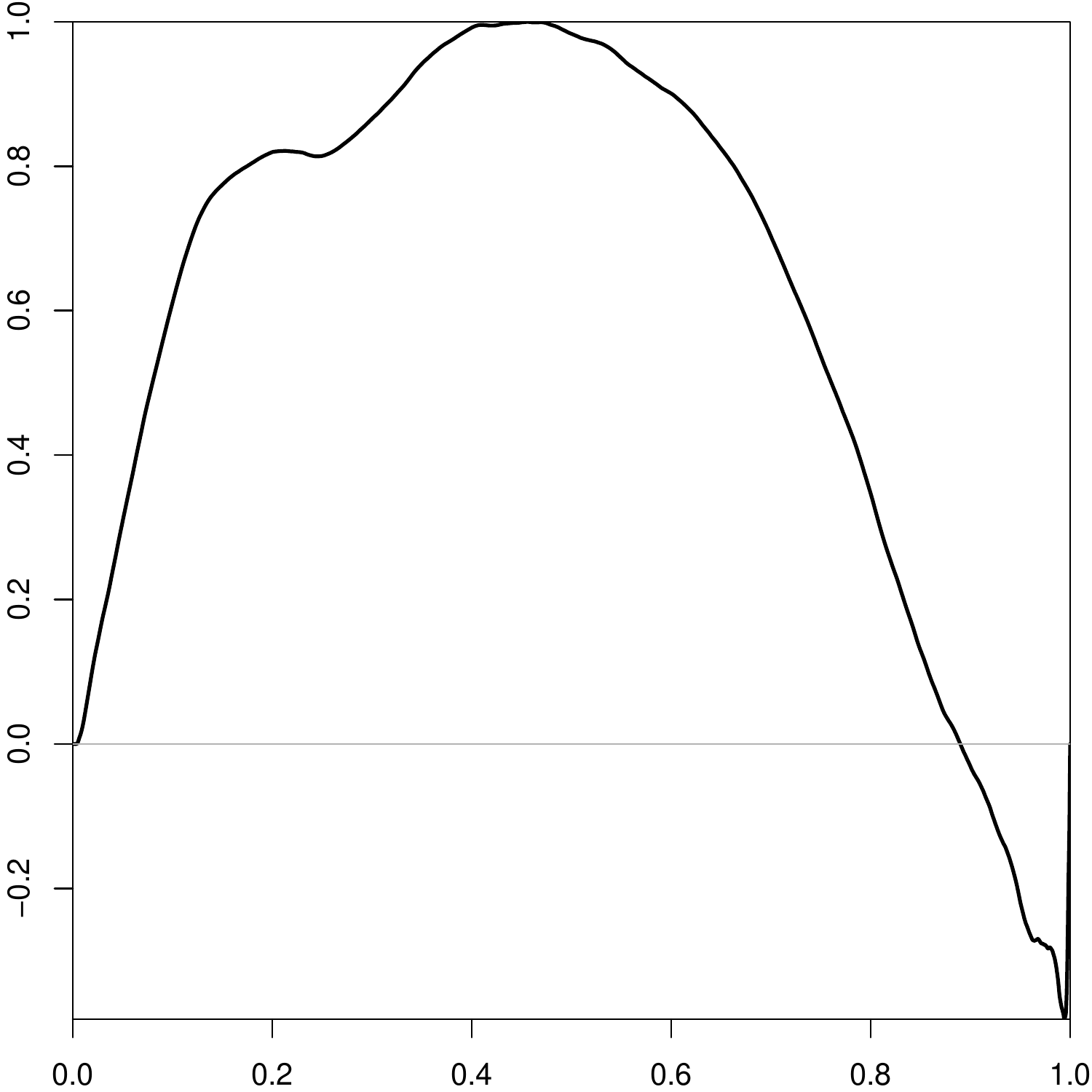} \\
\includegraphics[width=0.47\textwidth]{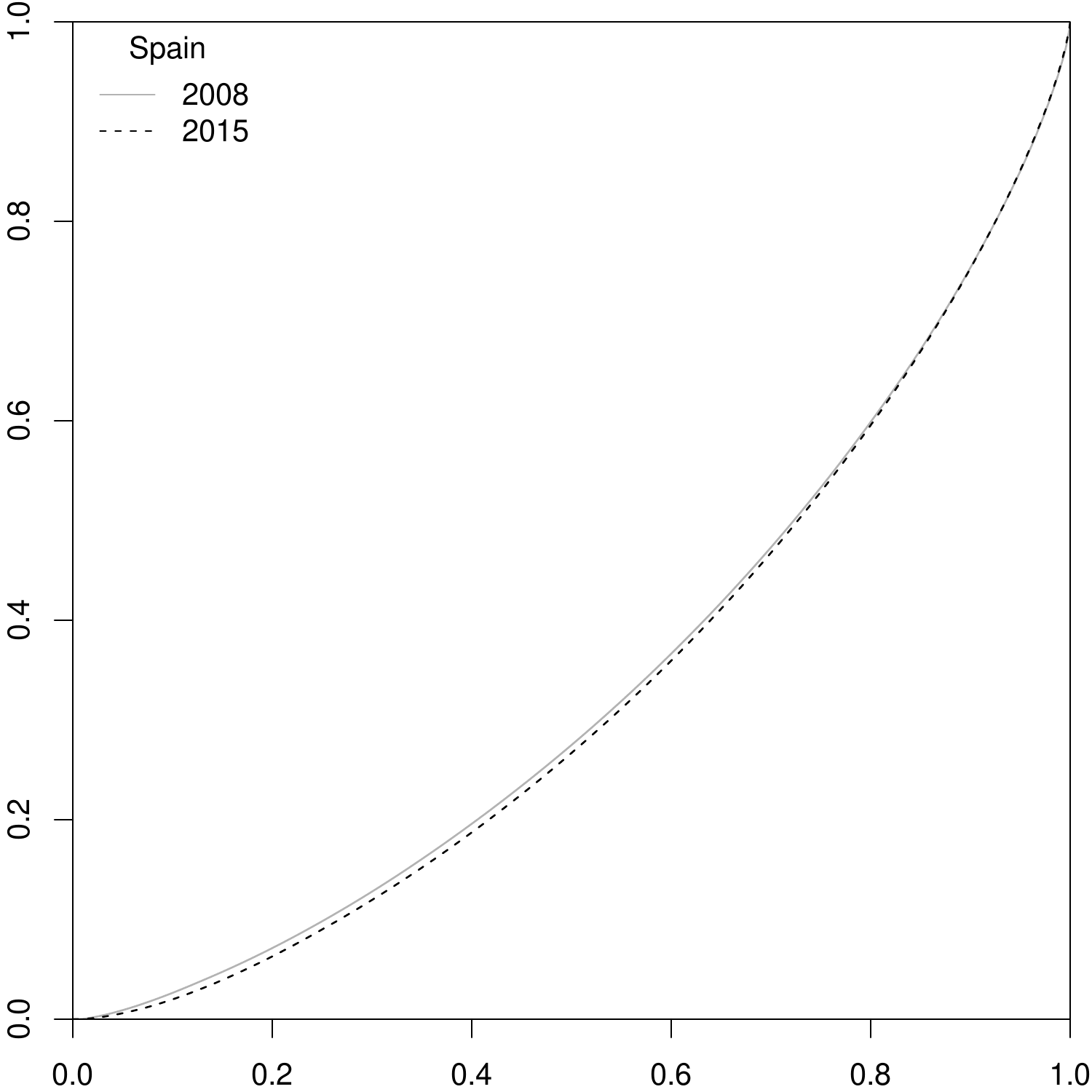} & \includegraphics[width=0.47\textwidth]{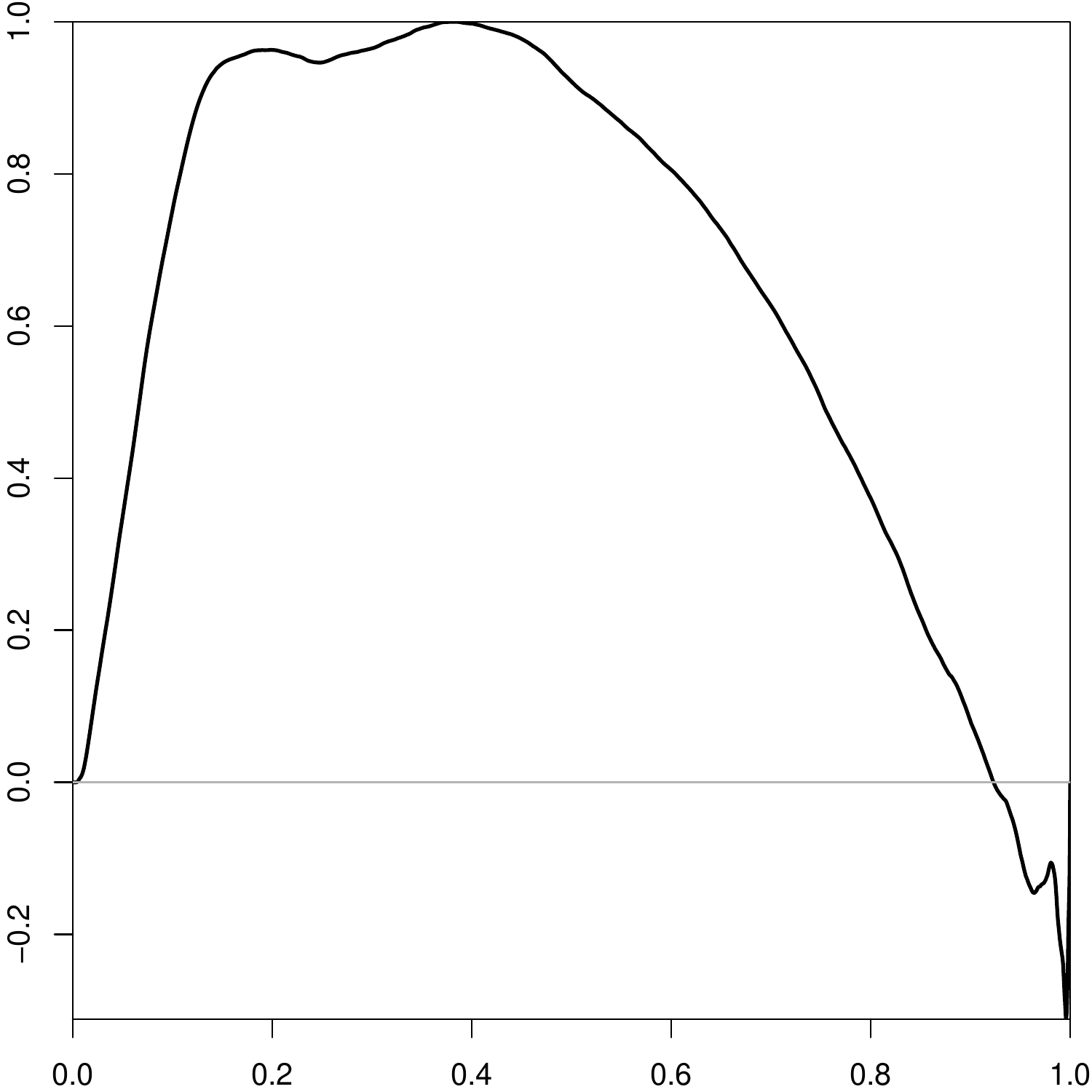} \\
\includegraphics[width=0.47\textwidth]{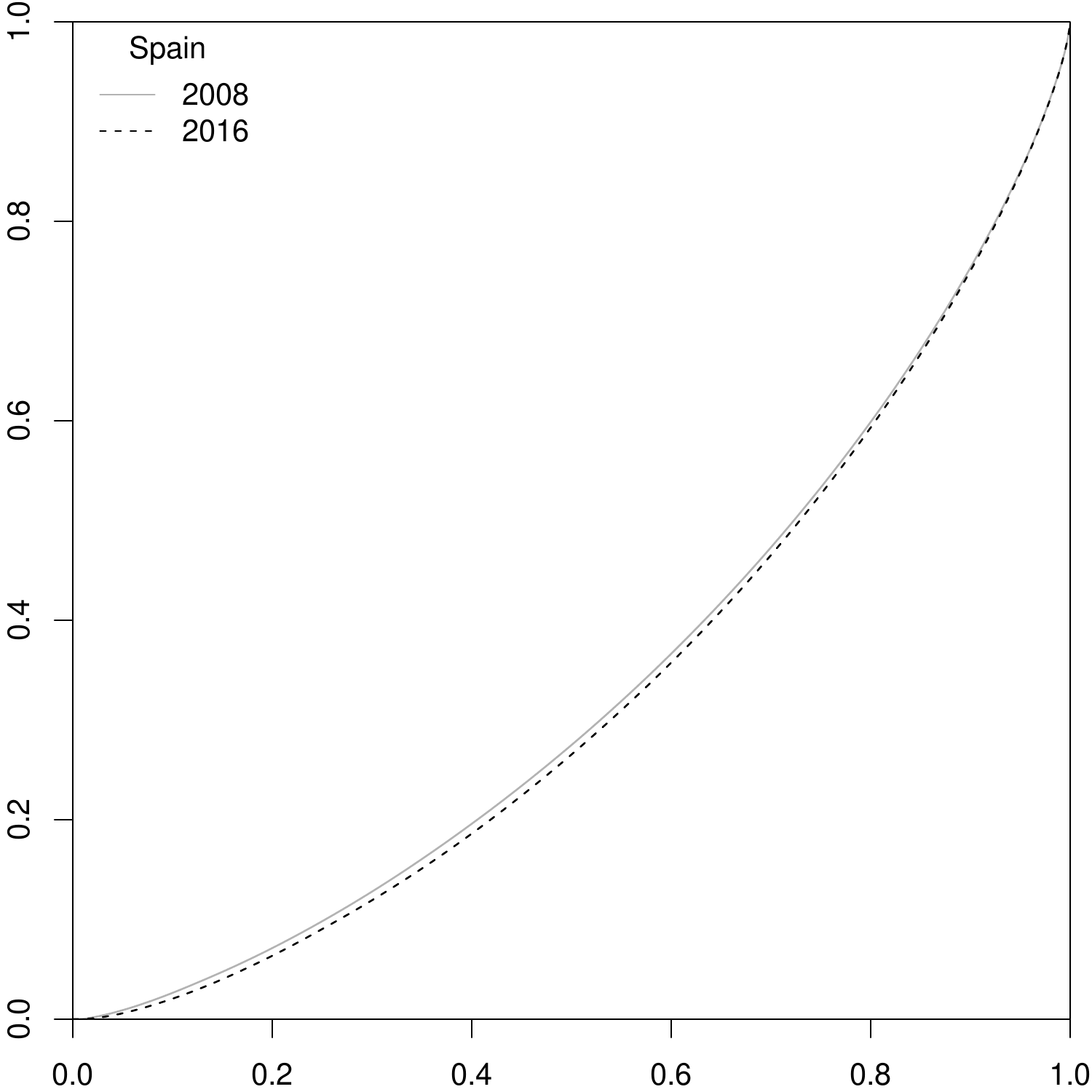} & \includegraphics[width=0.47\textwidth]{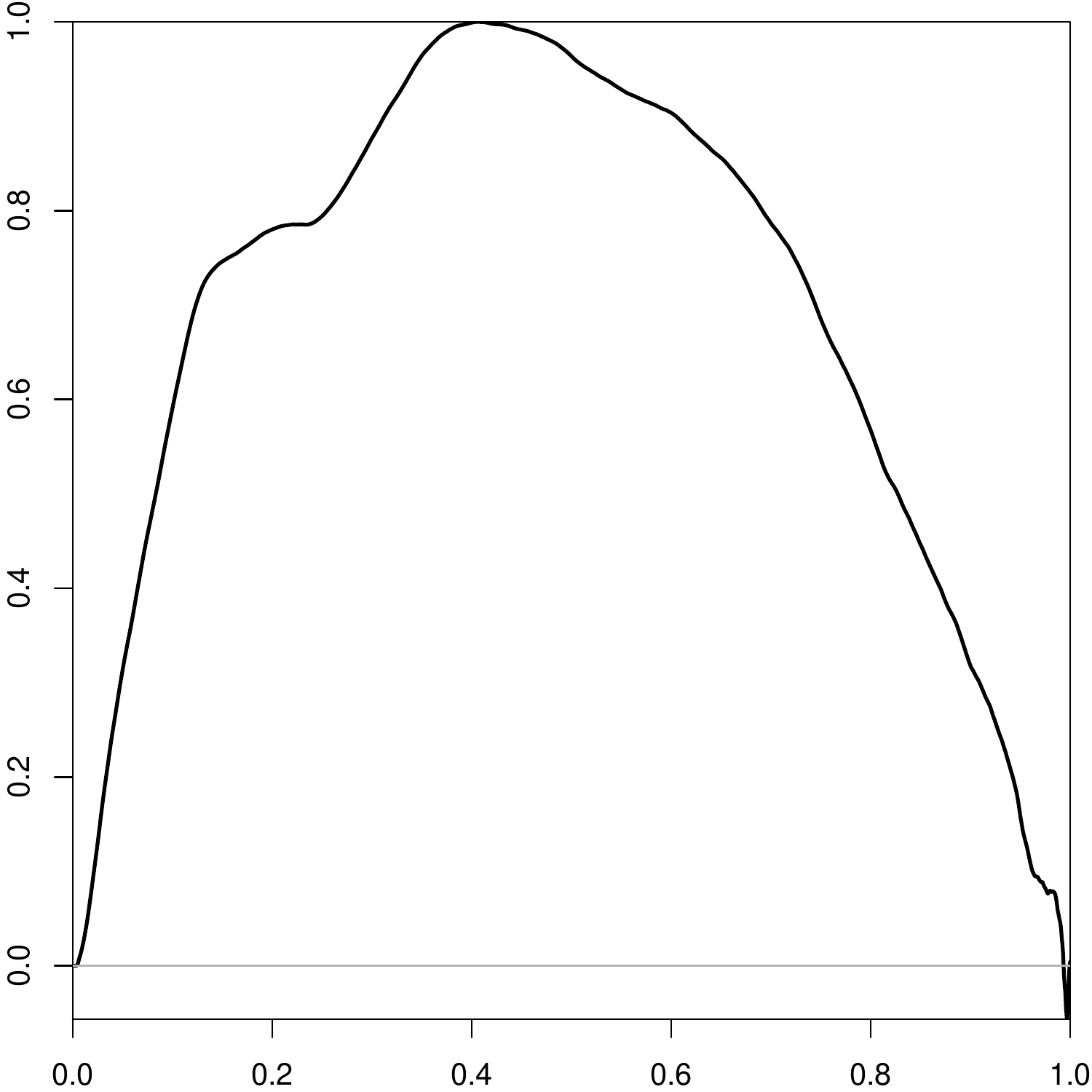}
\end{tabular}
\end{center}
\caption{The Lorenz curves $\hat\ell_1$ and $\hat\ell_2$ (left column) of income distribution and their difference scaled by the maximum absolute difference, $(\hat\ell_1-\hat\ell_2)/\|\hat\ell_1-\hat\ell_2\|_\infty$, (right column) corresponding to the pairs of years formed by 2008 ($X_1$) and one in the span 2009--2019.}
\label{SuppMatfi:SpainLorenz3}
\end{figure}

\begin{figure}[H]
\begin{center}
\begin{tabular}{cc}
\includegraphics[width=0.47\textwidth]{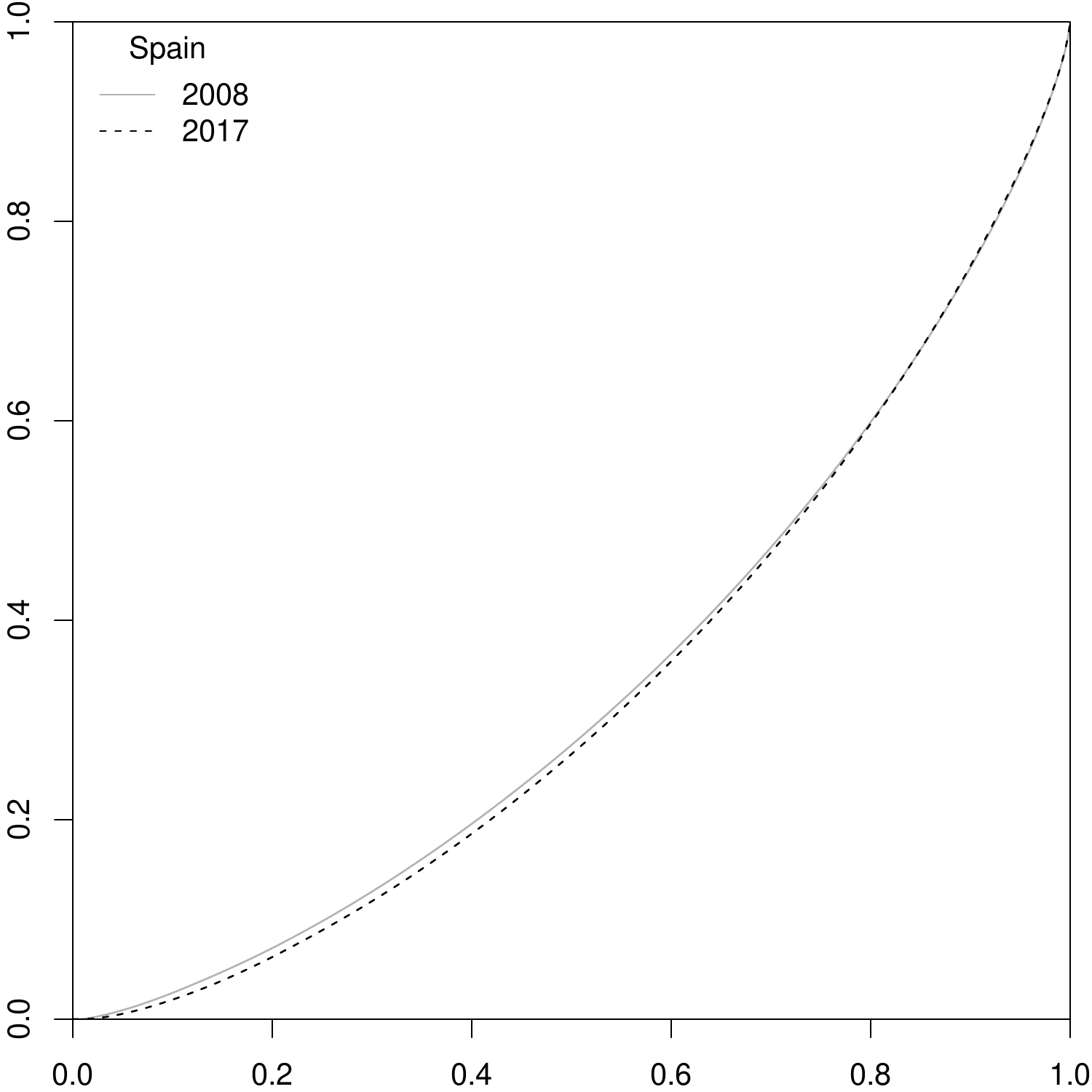} & \includegraphics[width=0.47\textwidth]{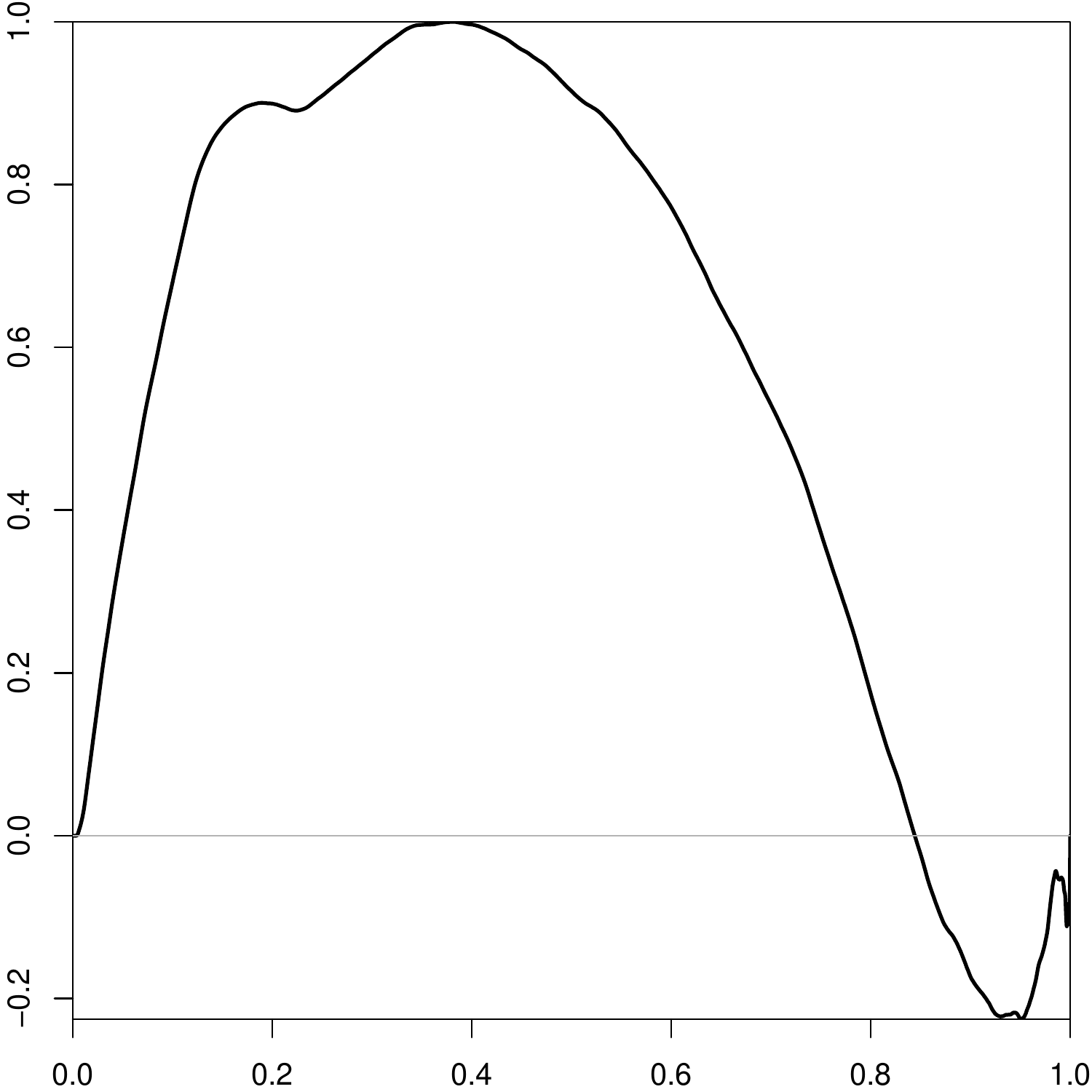} \\
\includegraphics[width=0.47\textwidth]{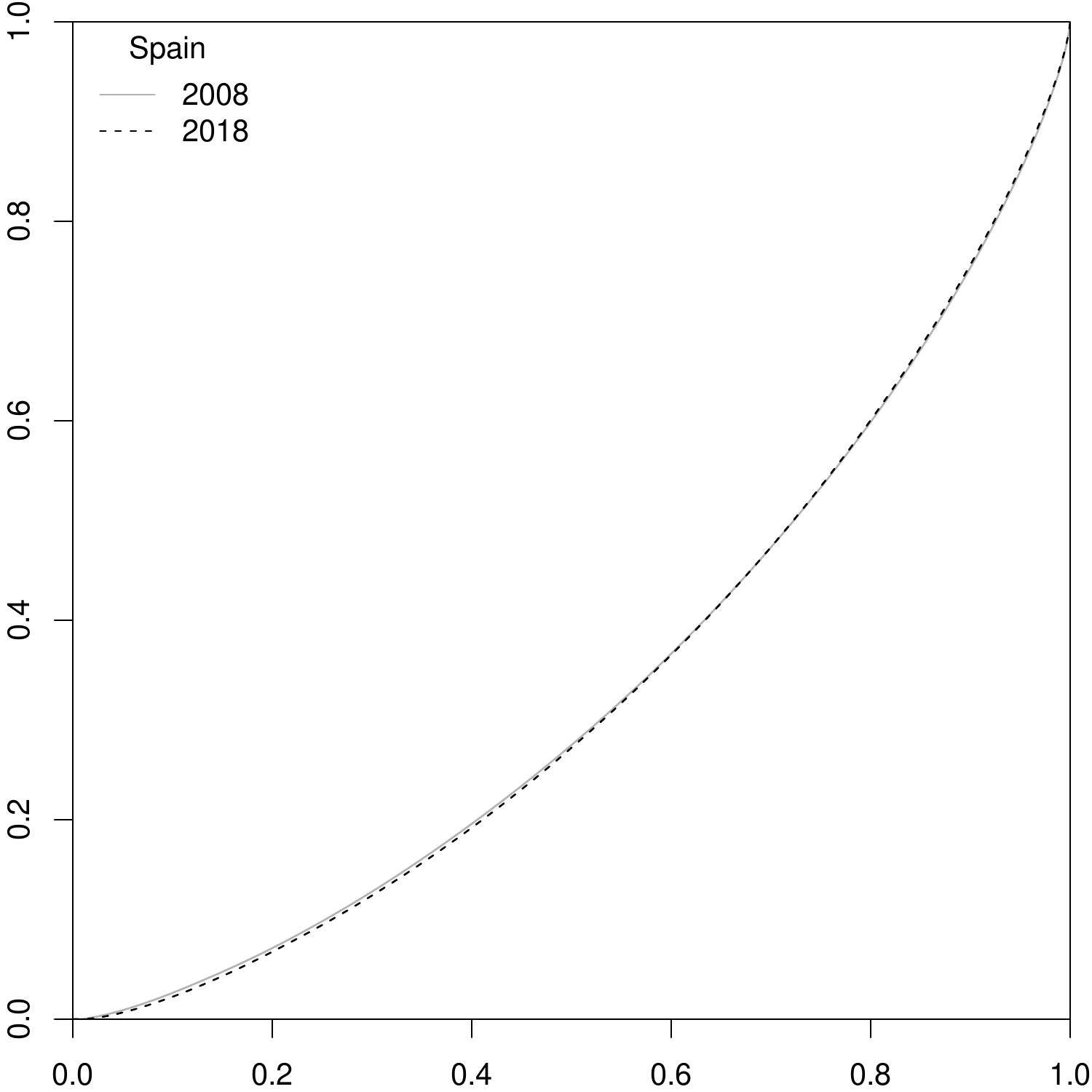} & \includegraphics[width=0.47\textwidth]{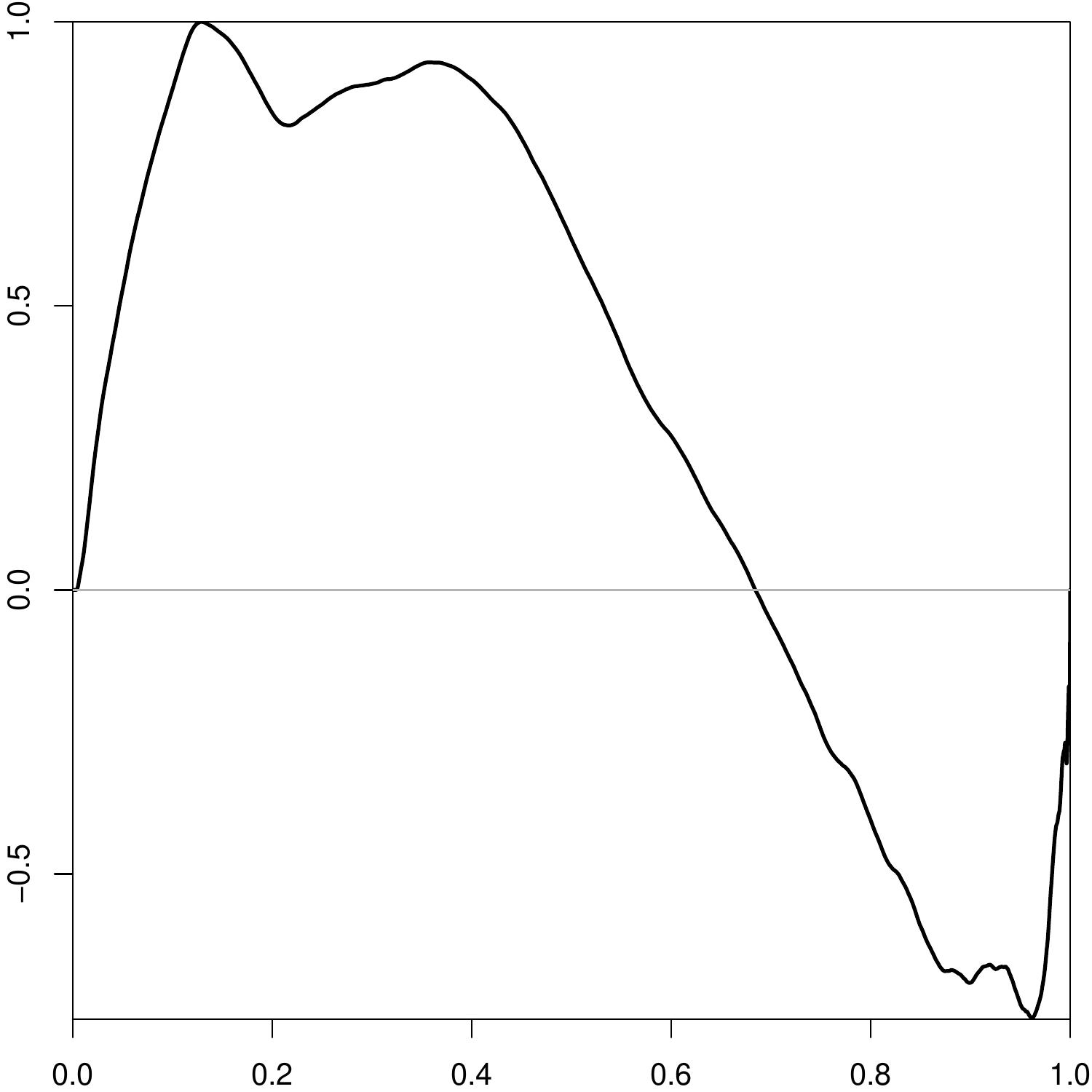} \\
\includegraphics[width=0.47\textwidth]{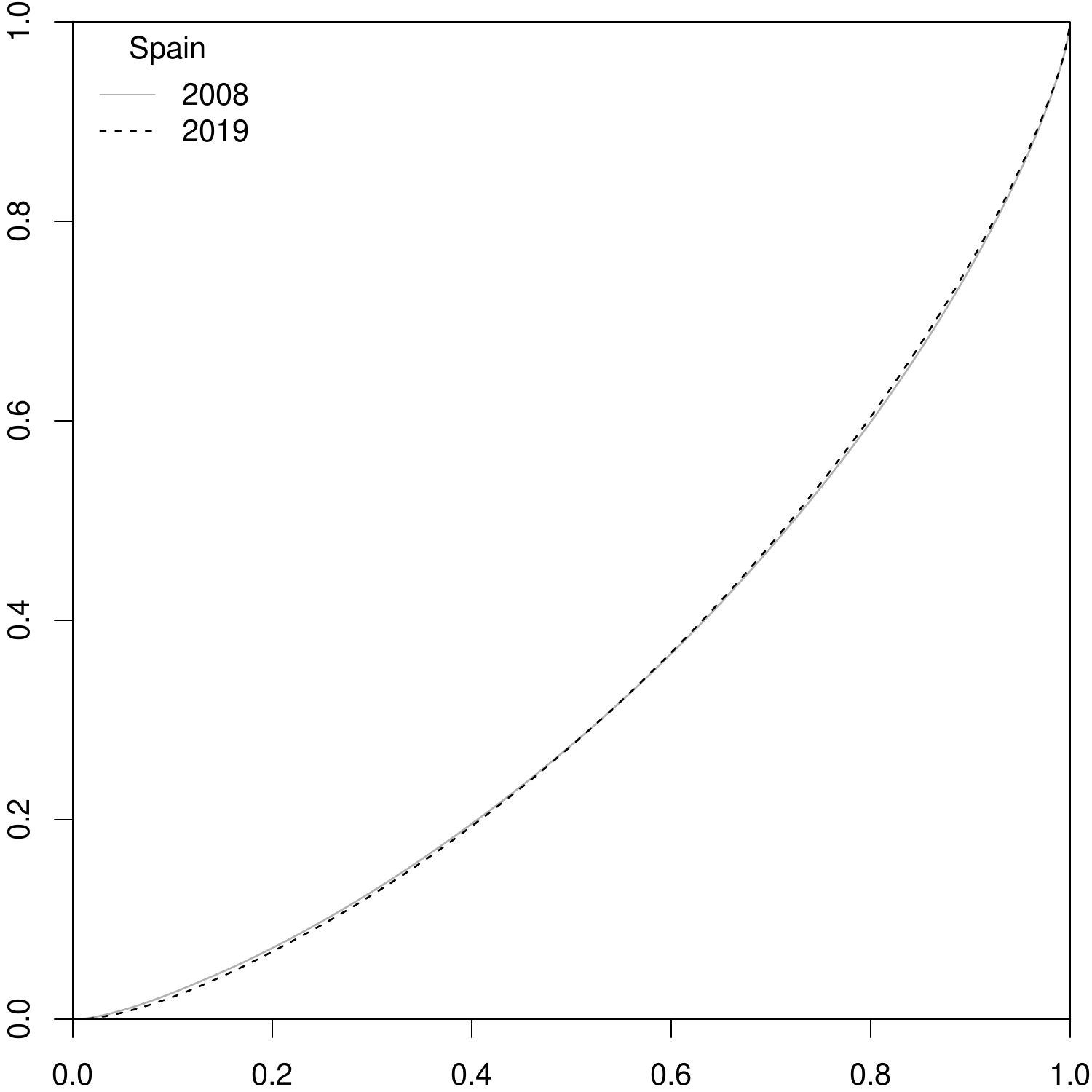} & \includegraphics[width=0.47\textwidth]{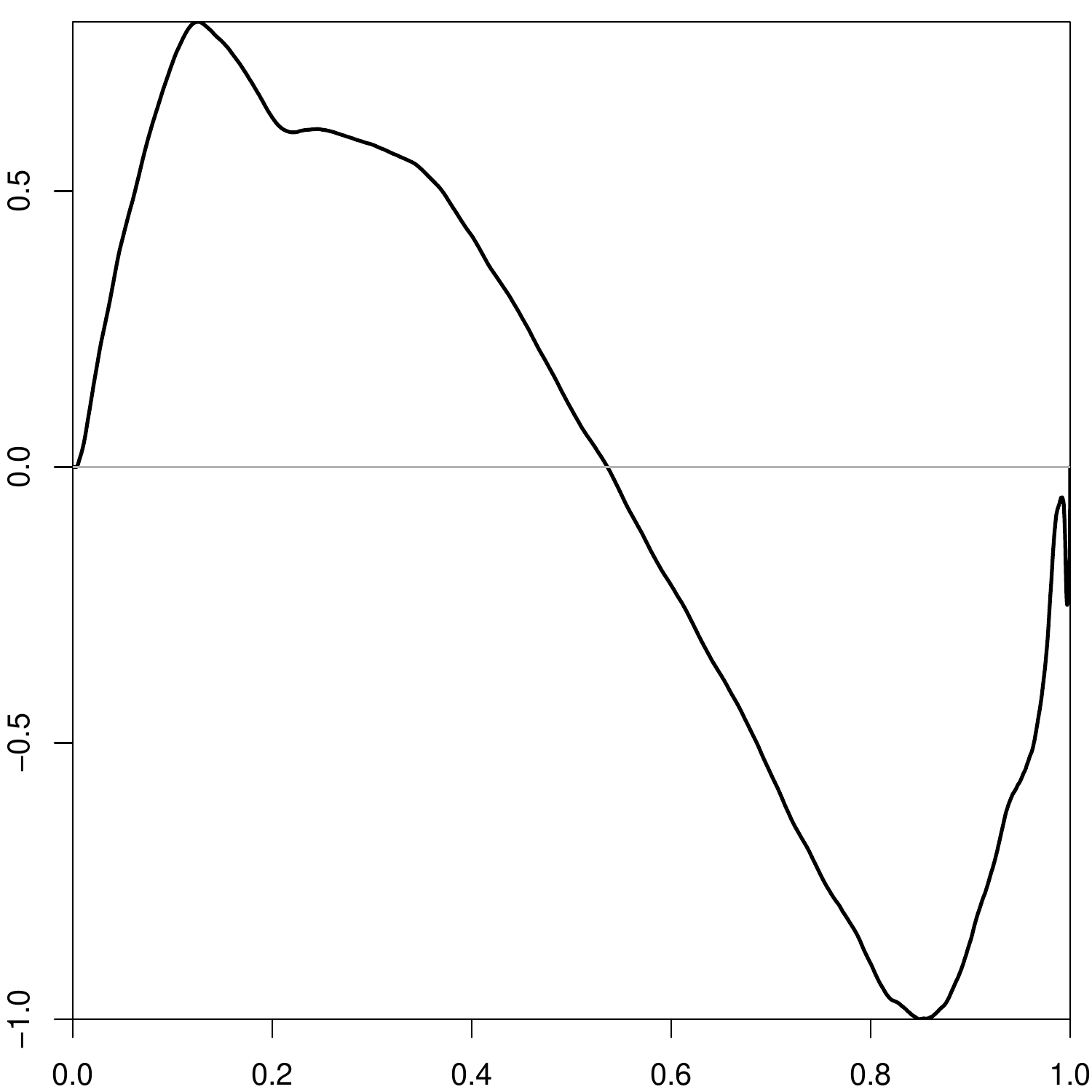}
\end{tabular}
\end{center}
\caption{The Lorenz curves $\hat\ell_1$ and $\hat\ell_2$ (left column) of income distribution and their difference scaled by the maximum absolute difference, $(\hat\ell_1-\hat\ell_2)/\|\hat\ell_1-\hat\ell_2\|_\infty$, (right column) corresponding to the pairs of years formed by 2008 ($X_1$) and one in the span 2009--2019.}
\label{SuppMatfi:SpainLorenz4}
\end{figure}

\newpage

\begin{table}[h] \label{SuppMatTa:DescriptivesIncomesES0819}
\begin{center}
\begin{tabular}{cccc}
Year & Weighted mean annual income & Sample size & Gini index (in \%) \\ \hline
2008 & 16414.18 & 12987 &  32.9 \\
2009 & 17382.66 & 13310 &  32.9 \\
2010 & 17167.77 & 13544 &  33.1 \\
2011 & 16436.31 & 13052 &  33.7 \\
2012 & 16398.07 & 12661 &  33.8 \\
2013 & 15992.40 & 12091 &  33.4 \\
2014 & 15696.13 & 11912 &  34.1 \\
2015 & 15752.62 & 12332 &  34.0 \\
2016 & 16078.59 & 14212 &  34.3 \\
2017 & 16570.21 & 13716 &  34.1 \\
2018 & 17006.14 & 13344 &  33.2 \\
2019 & 17419.96 & 15861 &  32.9 \\ \hline
\end{tabular}
\end{center}
\caption{Summaries for the annual household equivalised disposable incomes in Spain.}
\end{table}

\subsection{Comparing inequality between Finland and Greece} \label{Subsection.RealData.FinGr}

In Figures~\ref{SuppMatfi:GRFILorenz1}, \ref{SuppMatfi:GRFILorenz3} and~\ref{SuppMatfi:GRFILorenz5}, on the left we display the Lorenz curves, \(\hat\ell_1\) and \(\hat\ell_2\), of the equivalised household income in Greece and Finland, respectively, for a year between 2004 and 2019. The two Lorenz curves are close to each other and it is difficult to appreciate the full detail of their differences (for instance, one could think that Finnish income is less than Greek income in the Lorenz order for any of the years). This is why on the right we plot the difference of the two Lorenz curves, \(\hat\ell_1-\hat\ell_2\), scaled by \(\|\hat\ell_1-\hat\ell_2\|_\infty\). The resulting plots show that the empirical distribution of income in Finland and Greece is indeed ordered in 2016 and 2018, but only almost ordered in 2004.

\begin{table}[h] \label{SuppMatTa:DescriptivesOfIncomesGRFI}
\begin{center}
\begin{tabular}{ccccccc}
     & \multicolumn{2}{c}{Weighted mean annual income} & \multicolumn{2}{c}{Sample size} & \multicolumn{2}{c}{Gini index} \\
Year & Greece & Finland & Greece & Finland & Greece & Finland \\ \hline
2004 & 13340.20 & 21861.31 &   6210 & 	11191 & 34.2 & 29.4	\\
2005 & 14362.08 & 22926.73 &   5547 & 	11220 & 34.5 & 30.4	\\
2006 & 15116.40 & 23740.09 &   5685 & 	10860 & 34.9 & 30.1	\\
2007 & 15723.72 & 24453.90 &   5622 & 	10612 & 35.3 & 30.1	\\
2008 & 16674.89 & 25900.65 &   6481 & 	10463 & 34.2 & 30.8	\\
2009 & 17528.87 & 27062.00 &   6986 & 	10126 & 34.4 & 30.4	\\
2010 & 18013.19 & 27603.45 &   6979 & 	10981 & 34.5 & 29.6	\\
2011 & 16160.25 & 28196.67 &   5998 & 	 9346 & 34.7 & 30.2	\\
2012 & 13691.87 & 29223.84 &   5560 & 	10303 & 33.1 & 30.3	\\
2013 & 12386.56 & 30033.81 &   7386 & 	11364 & 33.9 & 29.9	\\
2014 & 11524.52 & 30214.28 &   8607 & 	11029 & 34.7 & 30.0	\\
2015 & 11414.76 & 30311.59 &  14000 & 	10725 & 34.4 & 29.9	\\
2016 & 11413.31 & 30459.17 &  18101 & 	10620 & 34.4 & 30.0	\\
2017 & 11531.00 & 30811.17 &  22524 & 	10208 & 33.6 & 29.8	\\
2018 & 11862.39 & 31373.71 &  24190 &    9830 & 33.0 & 30.4 \\
2019 & 12323.37 & 31874.73 &  17829 &    9644 & 31.5 & 30.7 \\ \hline
\end{tabular}
\end{center}
\caption{Summaries for the annual household equivalised disposable incomes in Greece and Finland.}
\end{table}

\begin{figure}[H]
\begin{center}
\begin{tabular}{cc}
\includegraphics[width=0.47\textwidth]{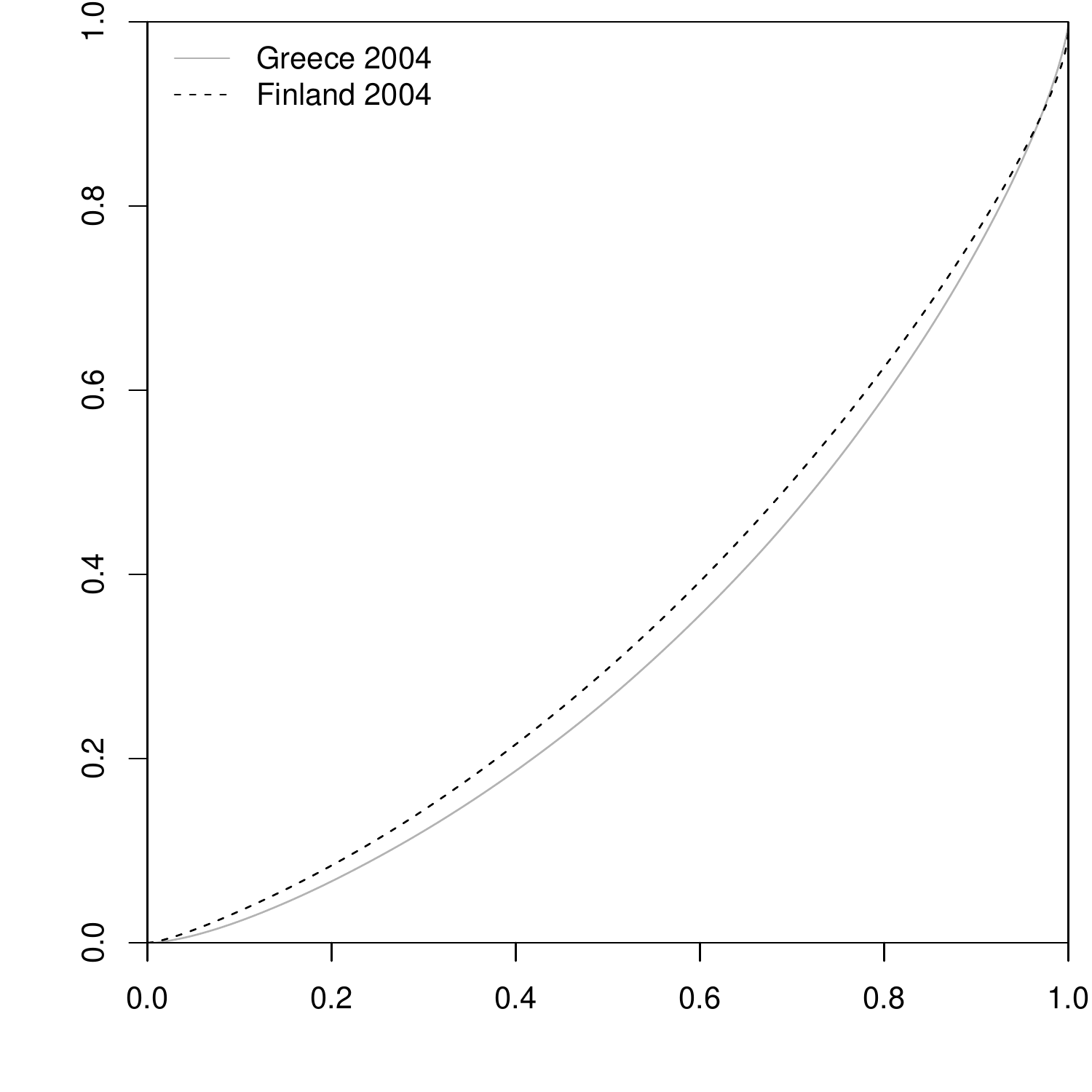} & \includegraphics[width=0.47\textwidth]{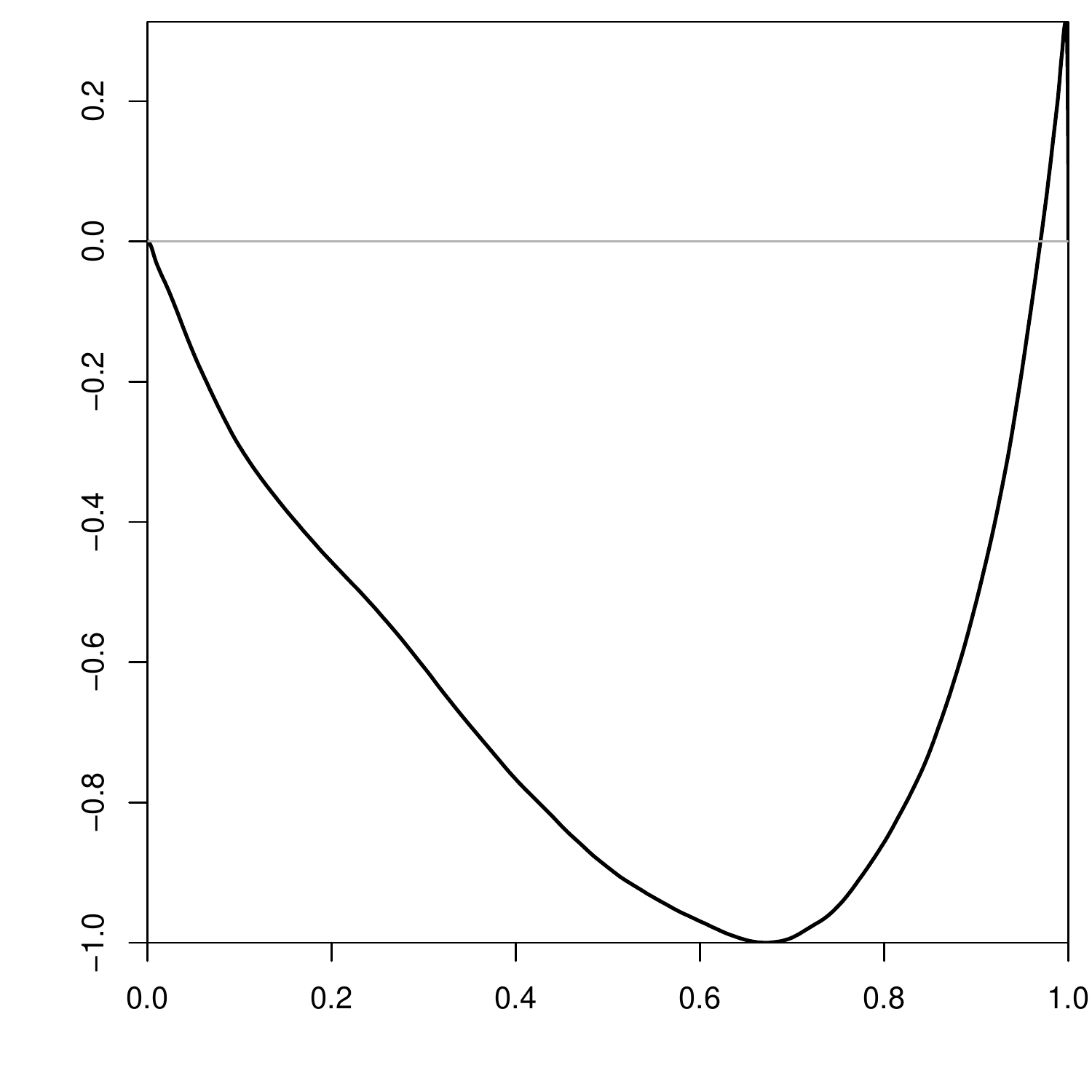}  \\
\includegraphics[width=0.47\textwidth]{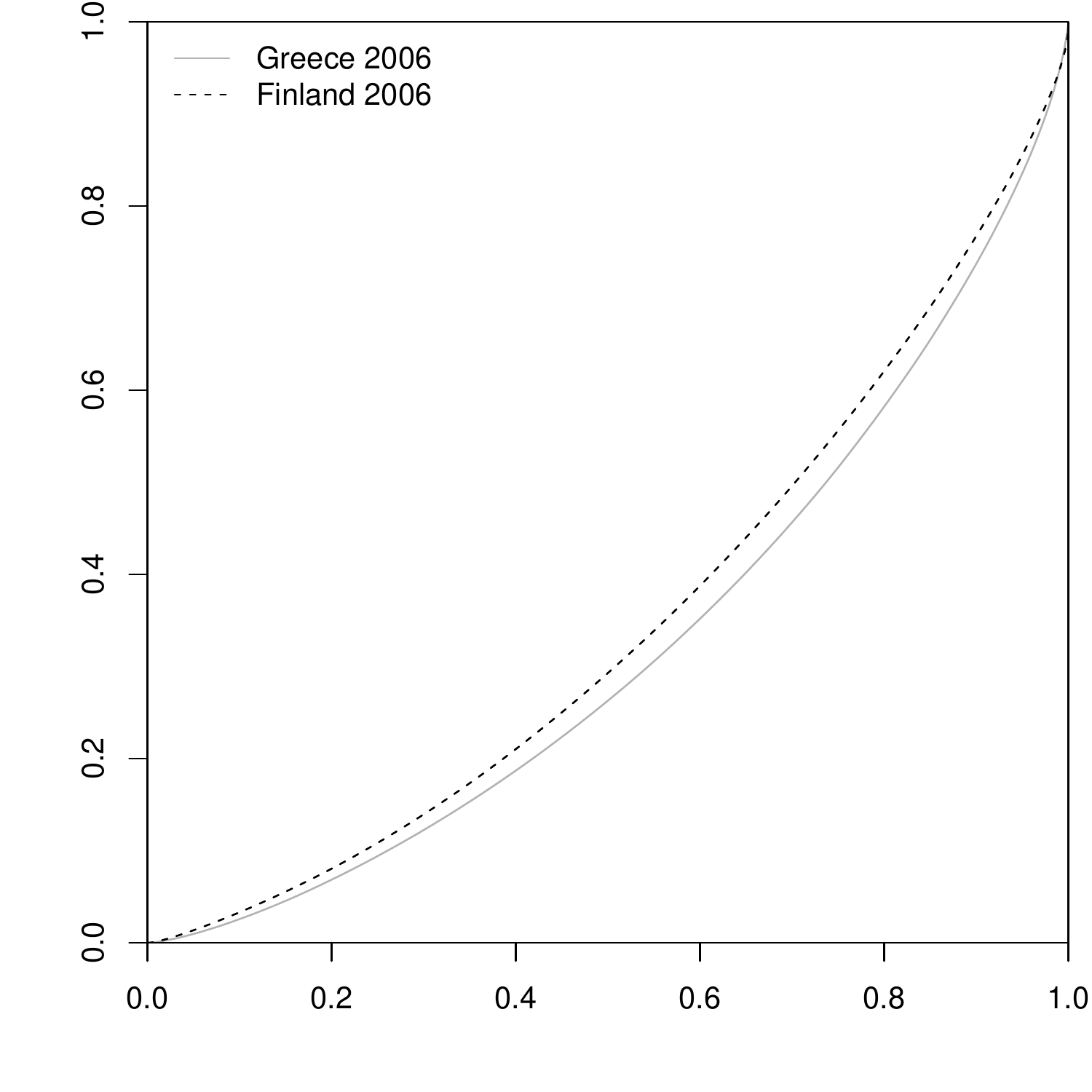} & \includegraphics[width=0.47\textwidth]{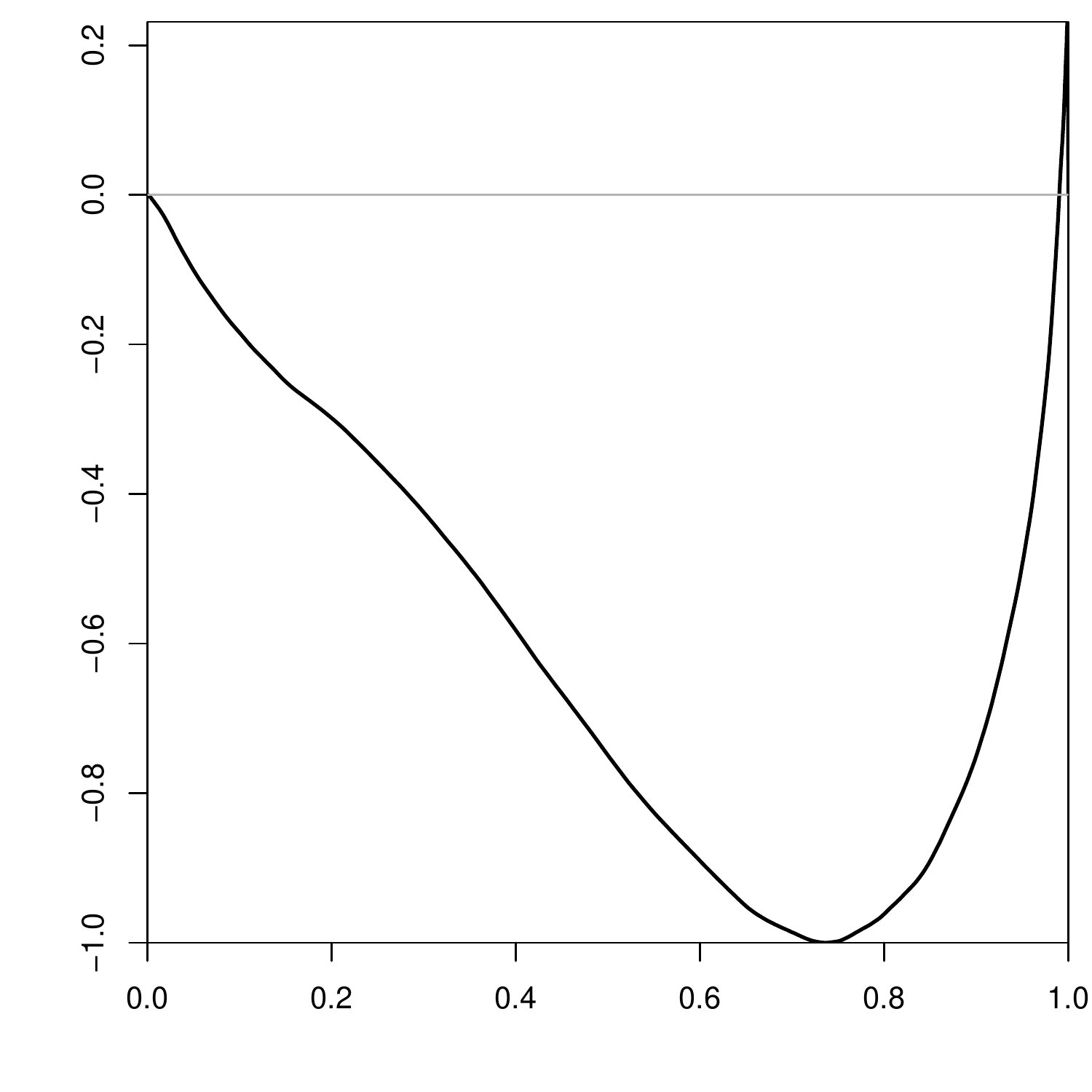} \\
\includegraphics[width=0.47\textwidth]{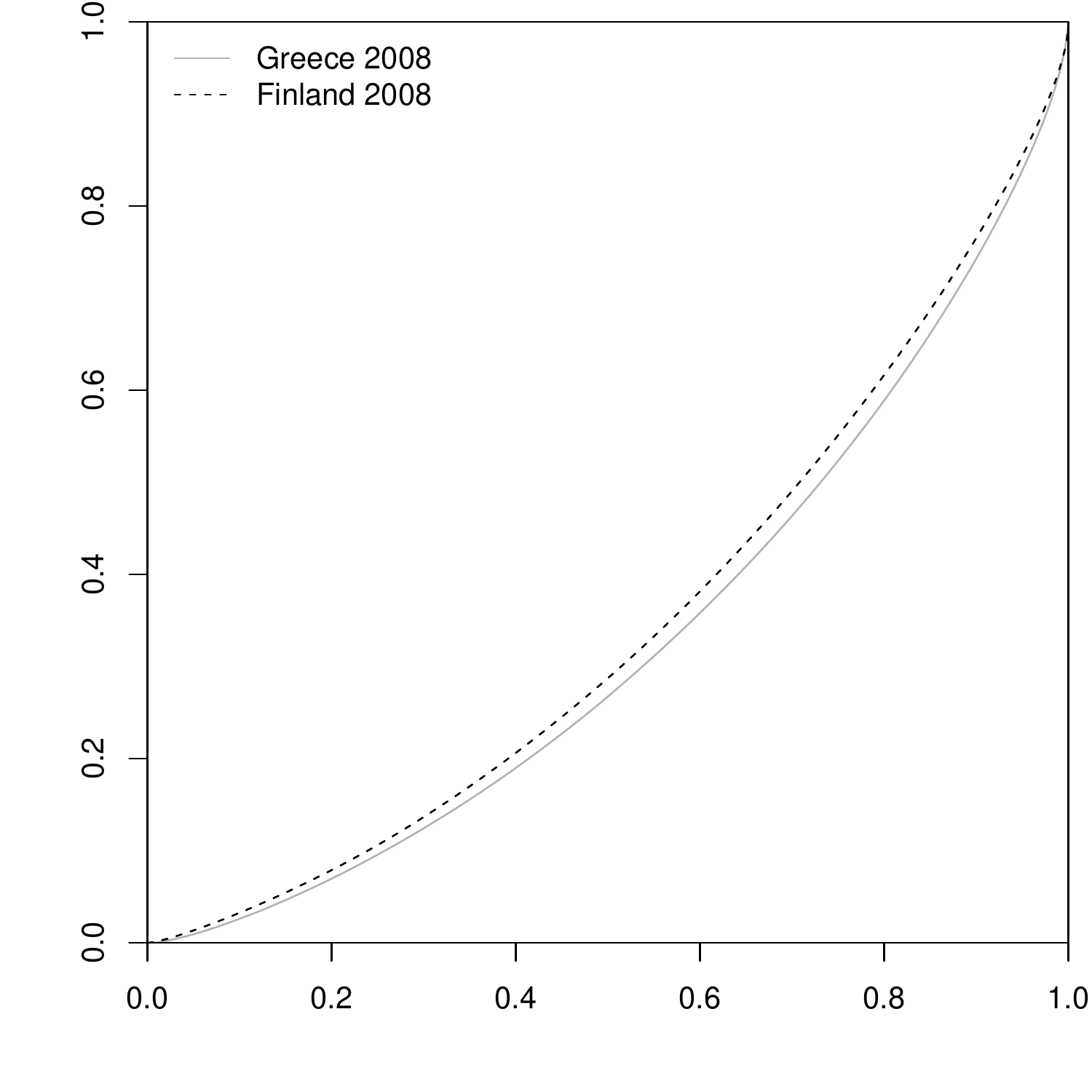} & \includegraphics[width=0.47\textwidth]{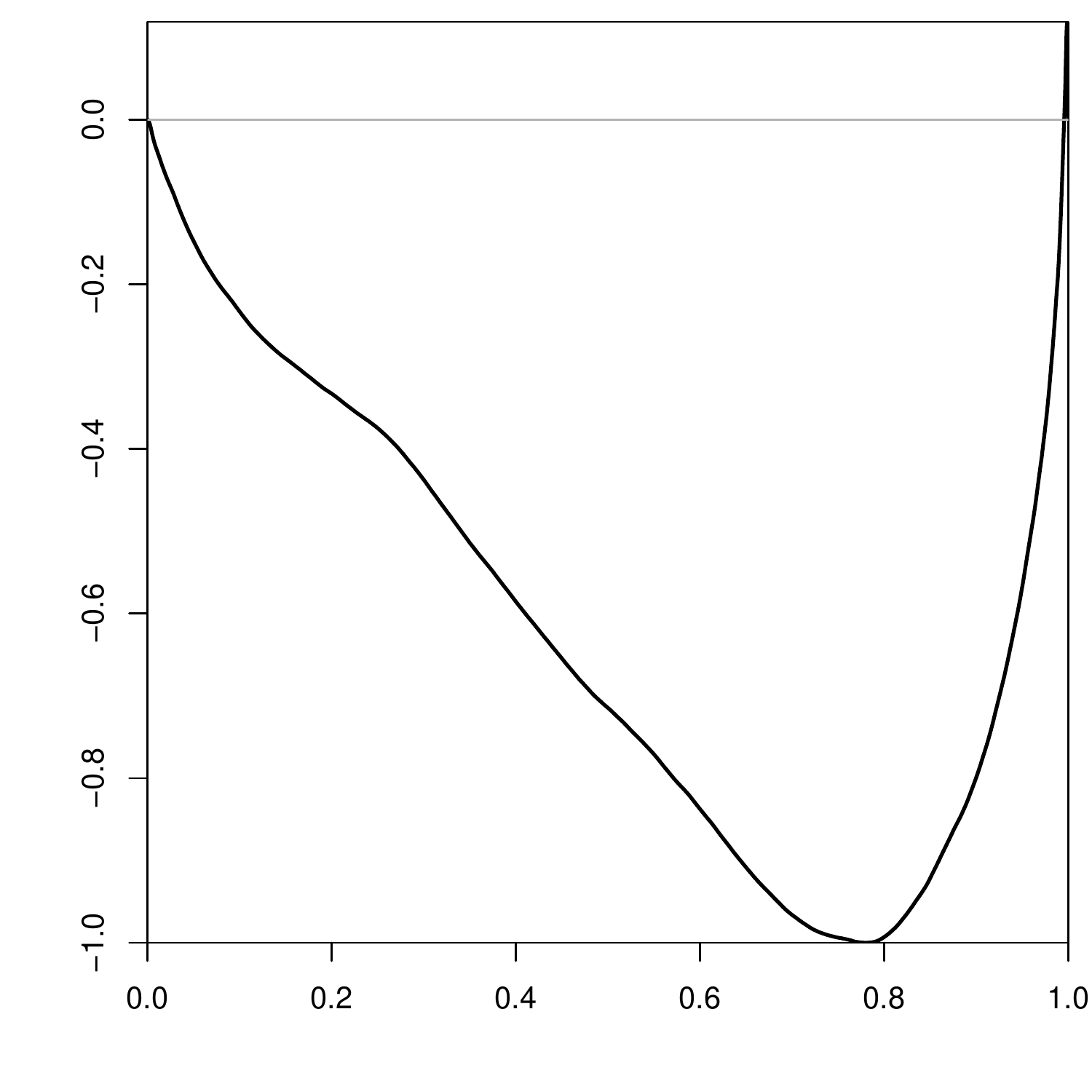}
\end{tabular}
\end{center}
\caption{The Lorenz curves $\hat\ell_1$ and $\hat\ell_2$ (left column) of income distribution and their difference scaled by the maximum absolute difference, $(\hat\ell_1-\hat\ell_2)/\|\hat\ell_1-\hat\ell_2\|_\infty$, (right column) corresponding to Greece ($X_1$)  and Finland ($X_2$) in a year of the span 2004--2019.}
\label{SuppMatfi:GRFILorenz1}
\end{figure}

%\begin{figure}[H]
%\begin{center}
%\begin{tabular}{cc}
%\includegraphics[width=0.47\textwidth]{EU_SILC_EL07FI07.pdf} & \includegraphics[width=0.47\textwidth]{EU_SILC_EL07FI07DIF.pdf}  \\
%\includegraphics[width=0.47\textwidth]{EU_SILC_EL08FI08.pdf} & \includegraphics[width=0.47\textwidth]{EU_SILC_EL08FI08DIF.pdf} \\
%\includegraphics[width=0.47\textwidth]{EU_SILC_EL09FI09.pdf} & \includegraphics[width=0.47\textwidth]{EU_SILC_EL09FI09DIF.pdf}
%\end{tabular}
%\end{center}
%\caption{The Lorenz curves $\hat\ell_1$ and $\hat\ell_2$ (left column) of income distribution and their difference scaled by the maximum absolute difference, $\hat\ell_1-\hat\ell_2/\|\hat\ell_1-\hat\ell_2\|_\infty$, (right column) corresponding to Greece ($X_1$)  and Finland ($X_2$) in a year of the span 2004--2019.}
%\label{SuppMatfi:GRFILorenz2}
%\end{figure}

\begin{figure}[H]
\begin{center}
\begin{tabular}{cc}
\includegraphics[width=0.47\textwidth]{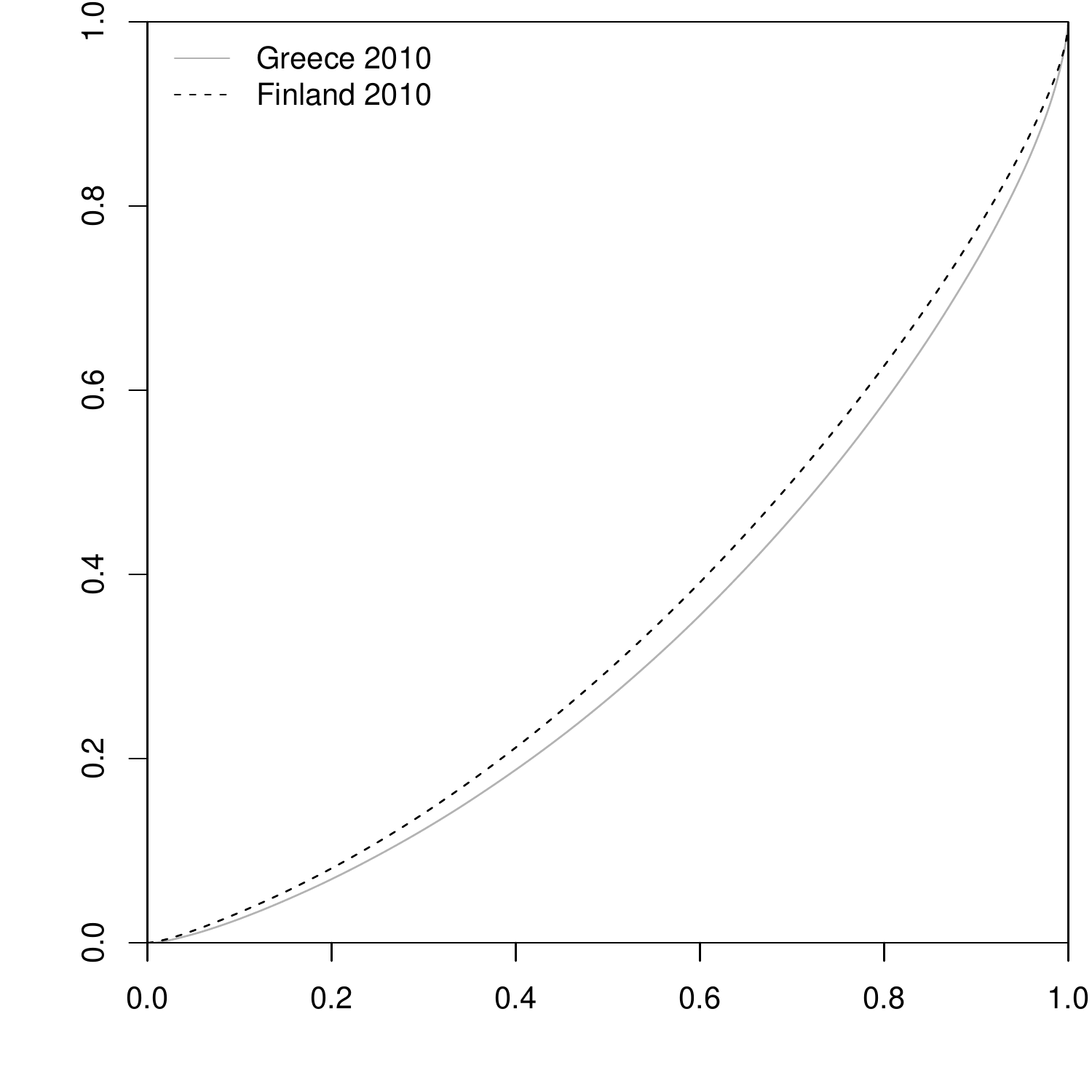} & \includegraphics[width=0.47\textwidth]{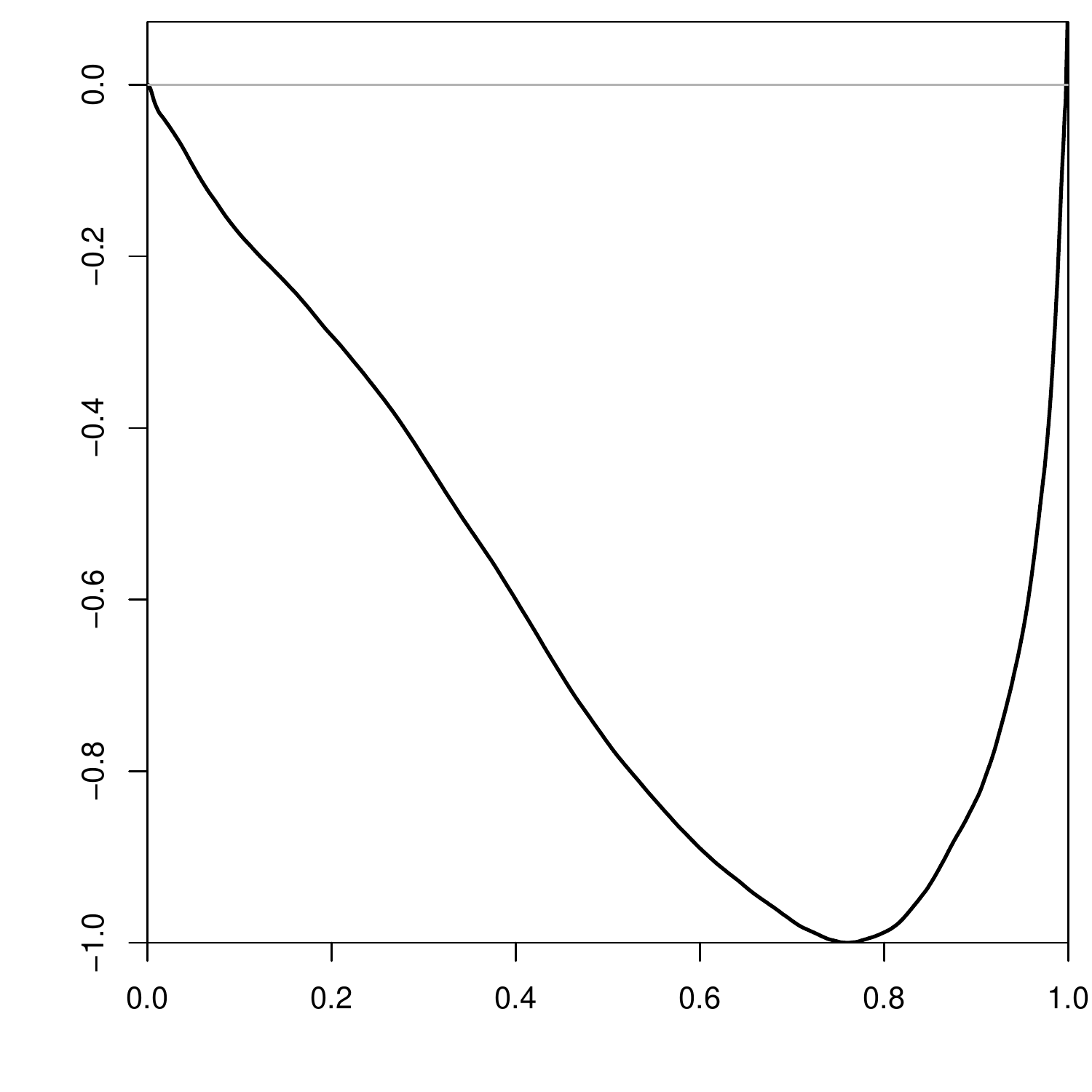}  \\
\includegraphics[width=0.47\textwidth]{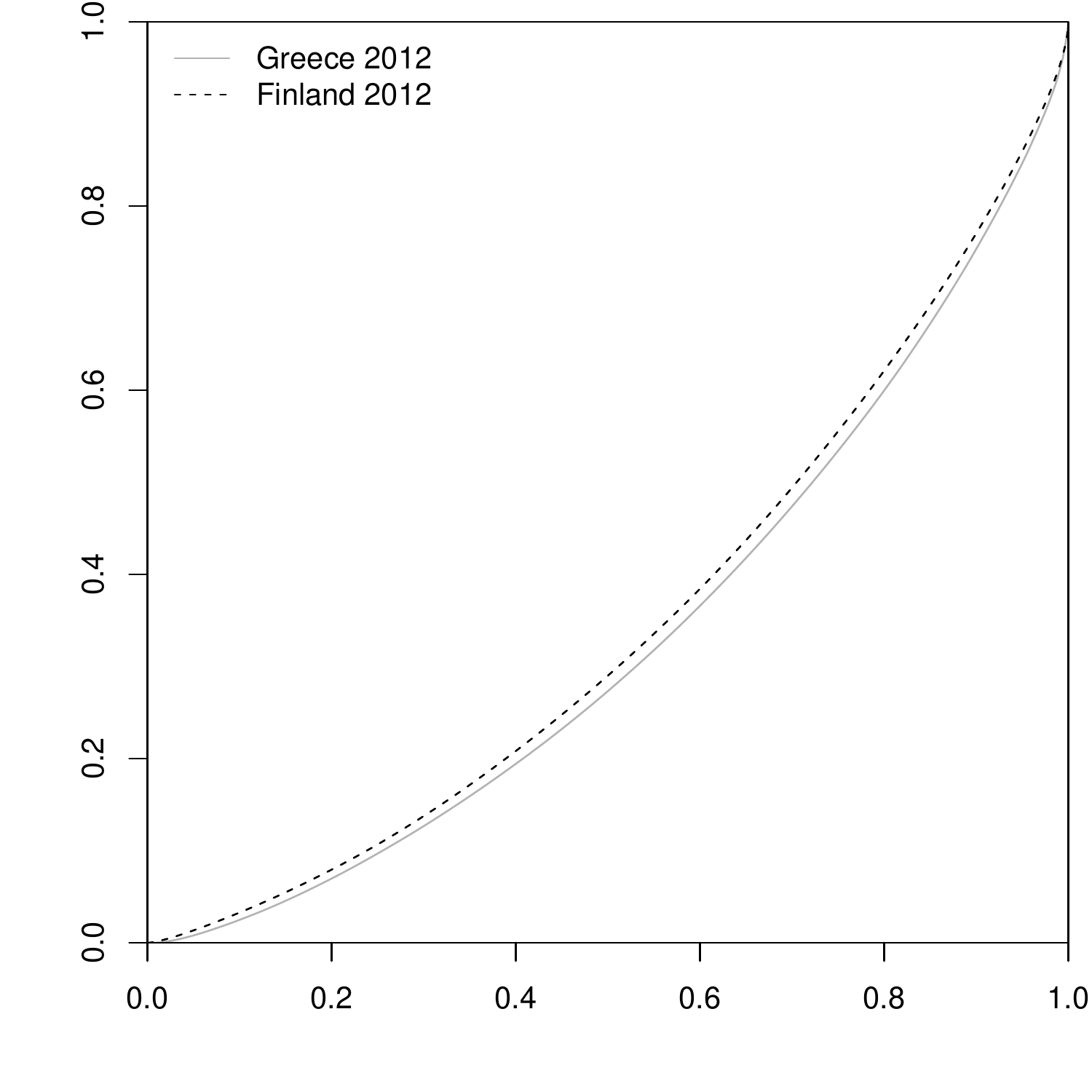} & \includegraphics[width=0.47\textwidth]{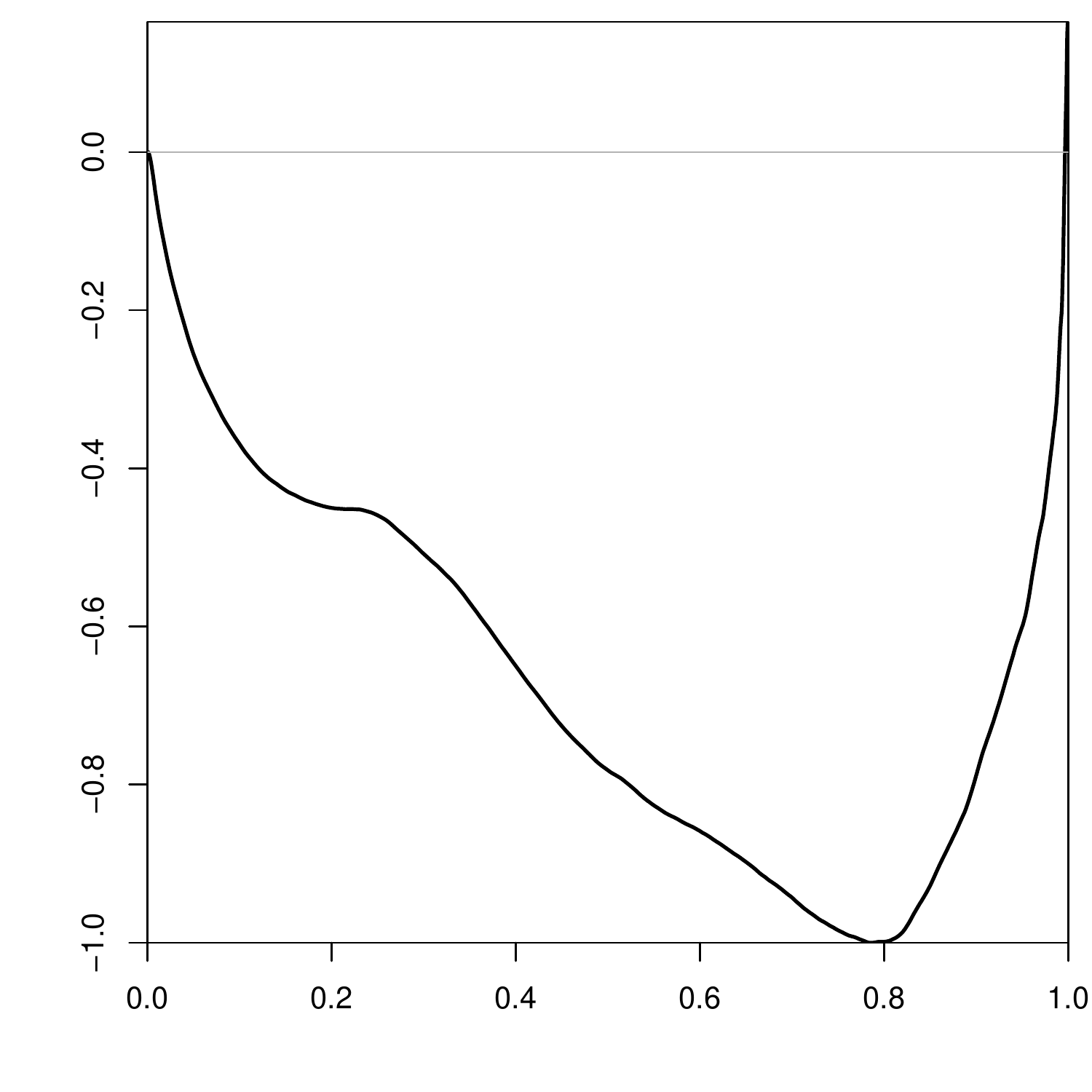} \\
\includegraphics[width=0.47\textwidth]{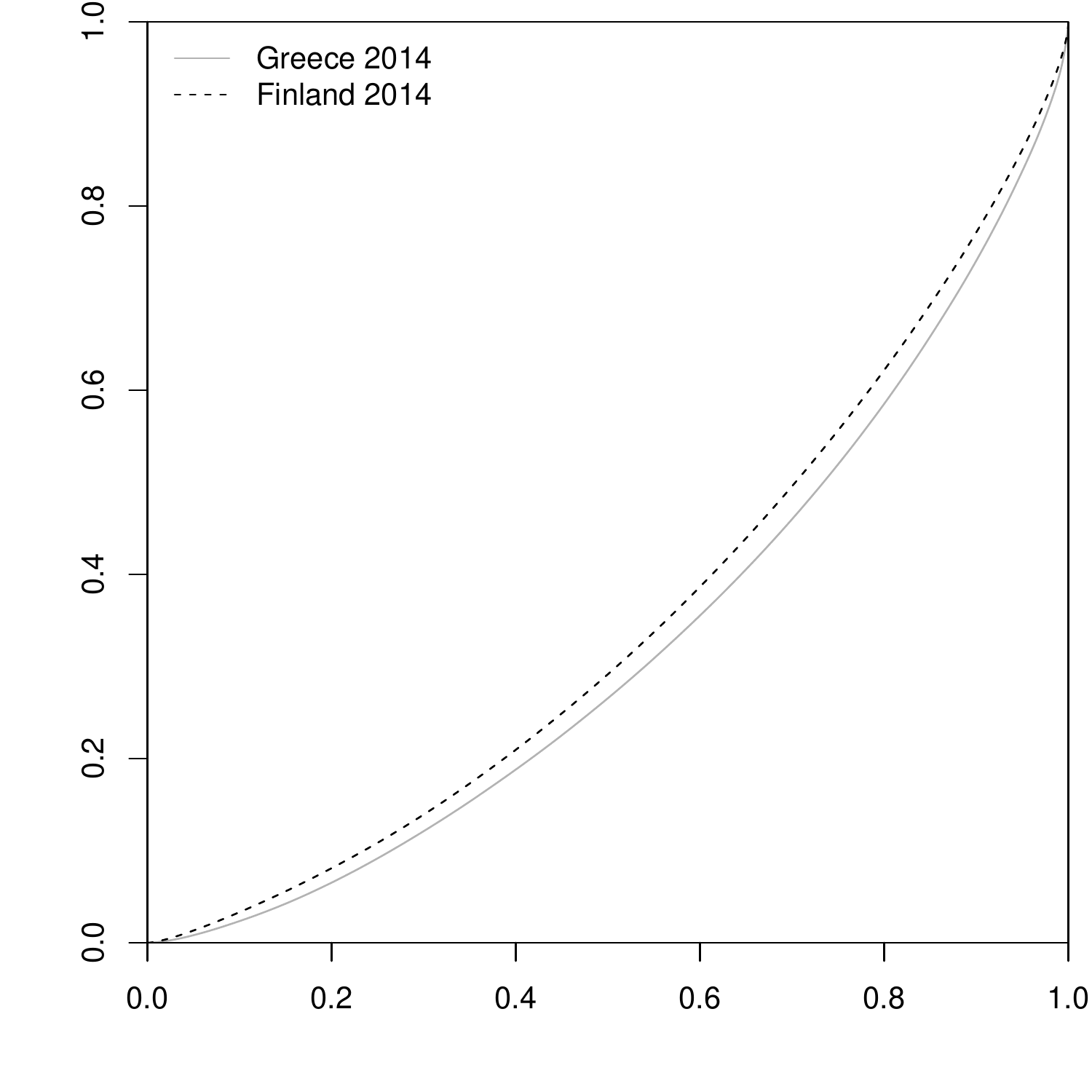} & \includegraphics[width=0.47\textwidth]{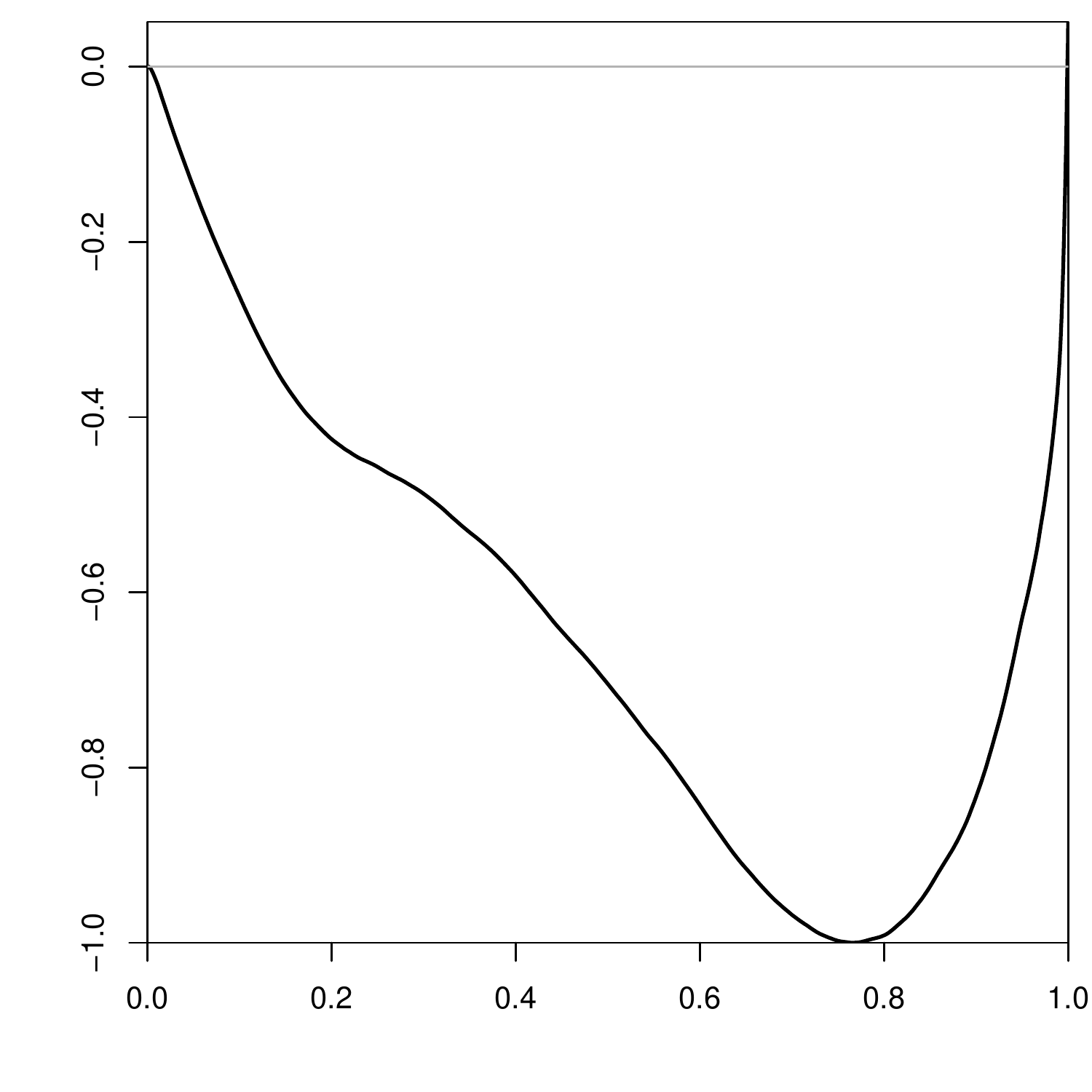}
\end{tabular}
\end{center}
\caption{The Lorenz curves $\hat\ell_1$ and $\hat\ell_2$ (left column) of income distribution and their difference scaled by the maximum absolute difference, $(\hat\ell_1-\hat\ell_2)/\|\hat\ell_1-\hat\ell_2\|_\infty$, (right column) corresponding to Greece ($X_1$)  and Finland ($X_2$) in a year of the span 2004--2019.}
\label{SuppMatfi:GRFILorenz3}
\end{figure}

% \begin{figure}[H]
% \begin{center}
% \begin{tabular}{cc}
% \includegraphics[width=0.47\textwidth]{EU_SILC_EL13FI13.pdf} & \includegraphics[width=0.47\textwidth]{EU_SILC_EL13FI13DIF.pdf}  \\
% \includegraphics[width=0.47\textwidth]{EU_SILC_EL14FI14.pdf} & \includegraphics[width=0.47\textwidth]{EU_SILC_EL14FI14DIF.pdf} \\
% \includegraphics[width=0.47\textwidth]{EU_SILC_EL15FI15.pdf} & \includegraphics[width=0.47\textwidth]{EU_SILC_EL15FI15DIF.pdf}
% \end{tabular}
% \end{center}
% \caption{The Lorenz curves $\hat\ell_1$ and $\hat\ell_2$ (left column) of income distribution and their difference scaled by the maximum absolute difference, $\hat\ell_1-\hat\ell_2/\|\hat\ell_1-\hat\ell_2\|_\infty$, (right column) corresponding to Greece ($X_1$)  and Finland ($X_2$) in a year of the span 2004--2019.}
% \label{SuppMatfi:GRFILorenz4}
% \end{figure}

\begin{figure}[H]
\begin{center}
\begin{tabular}{cc}
\includegraphics[width=0.47\textwidth]{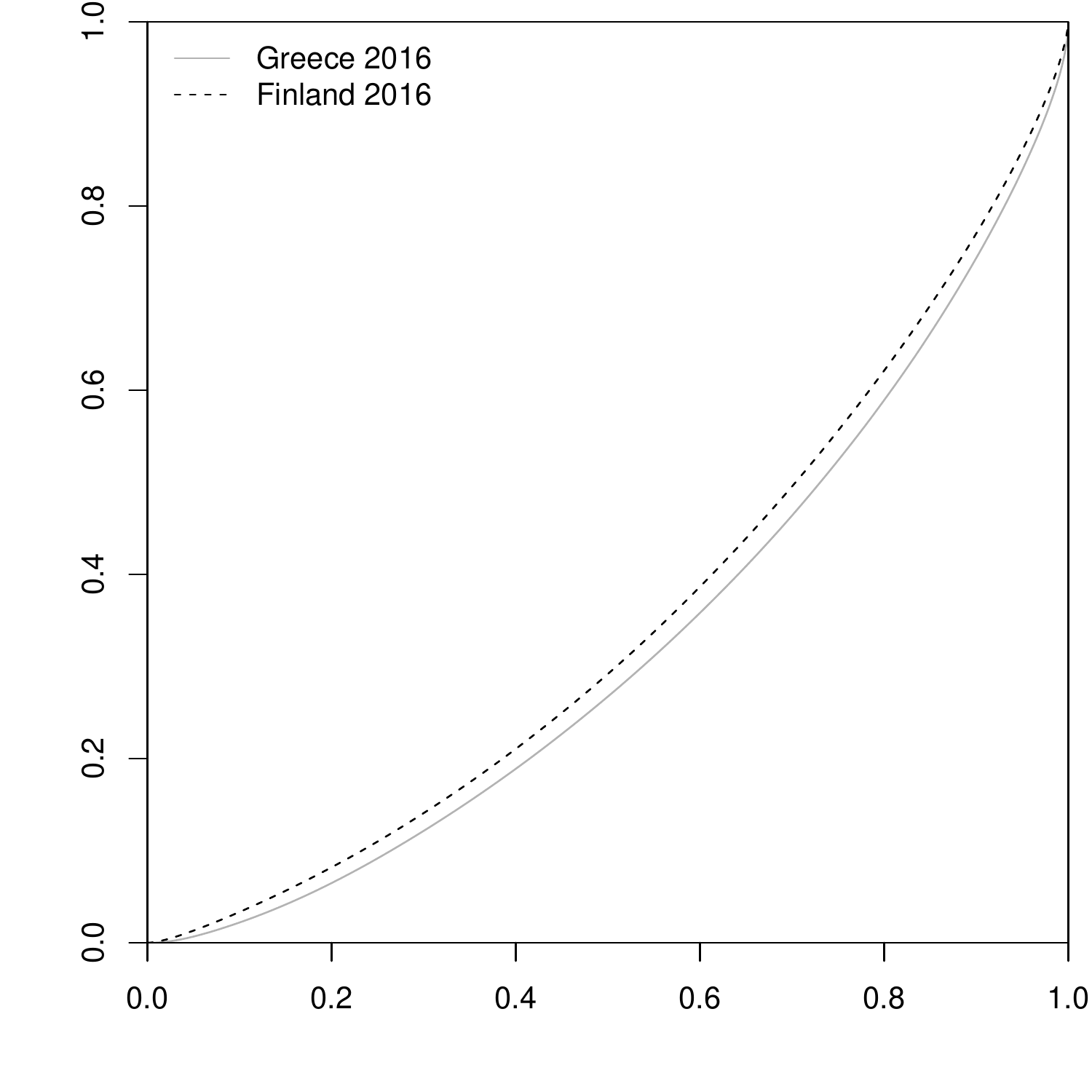} & \includegraphics[width=0.47\textwidth]{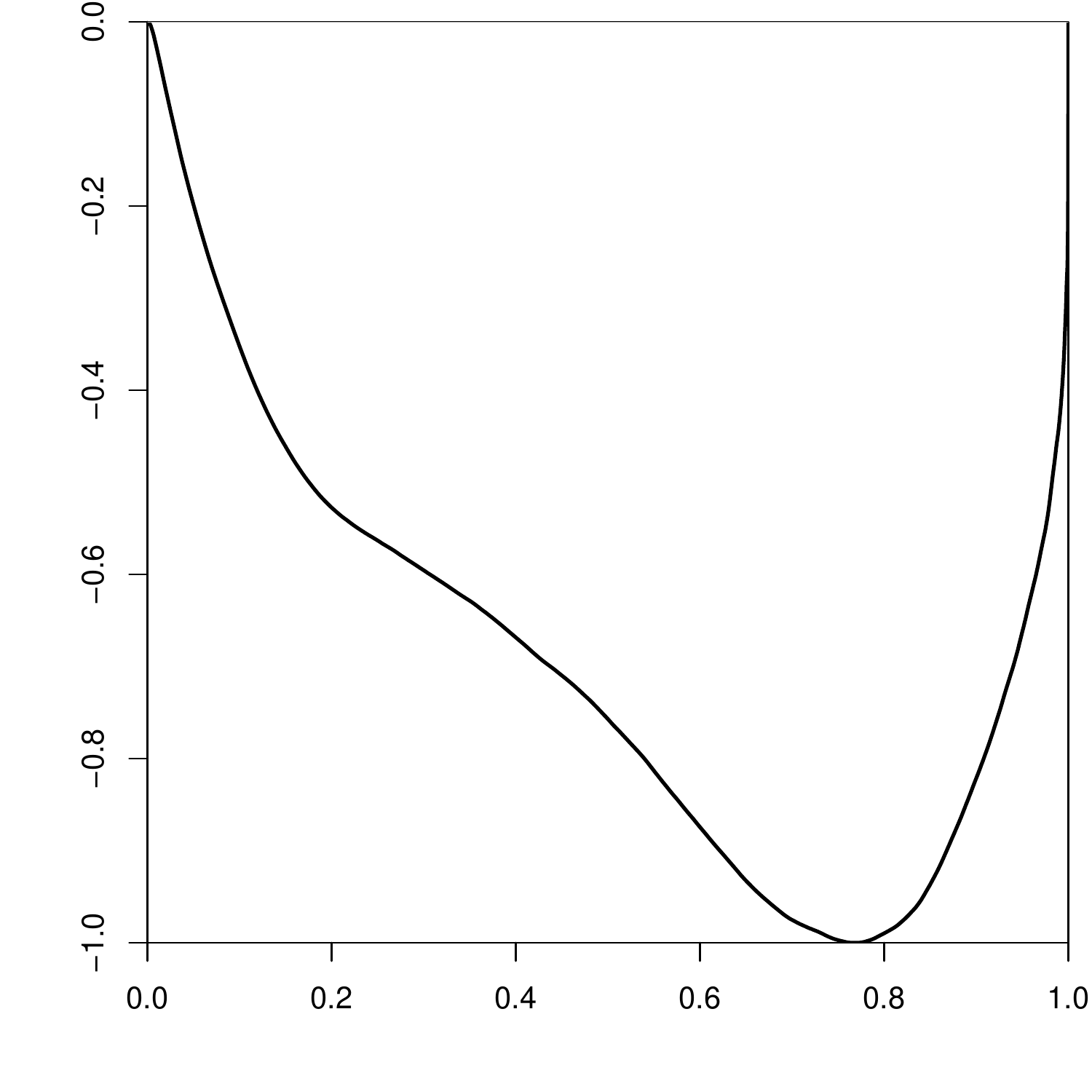}  \\
\includegraphics[width=0.47\textwidth]{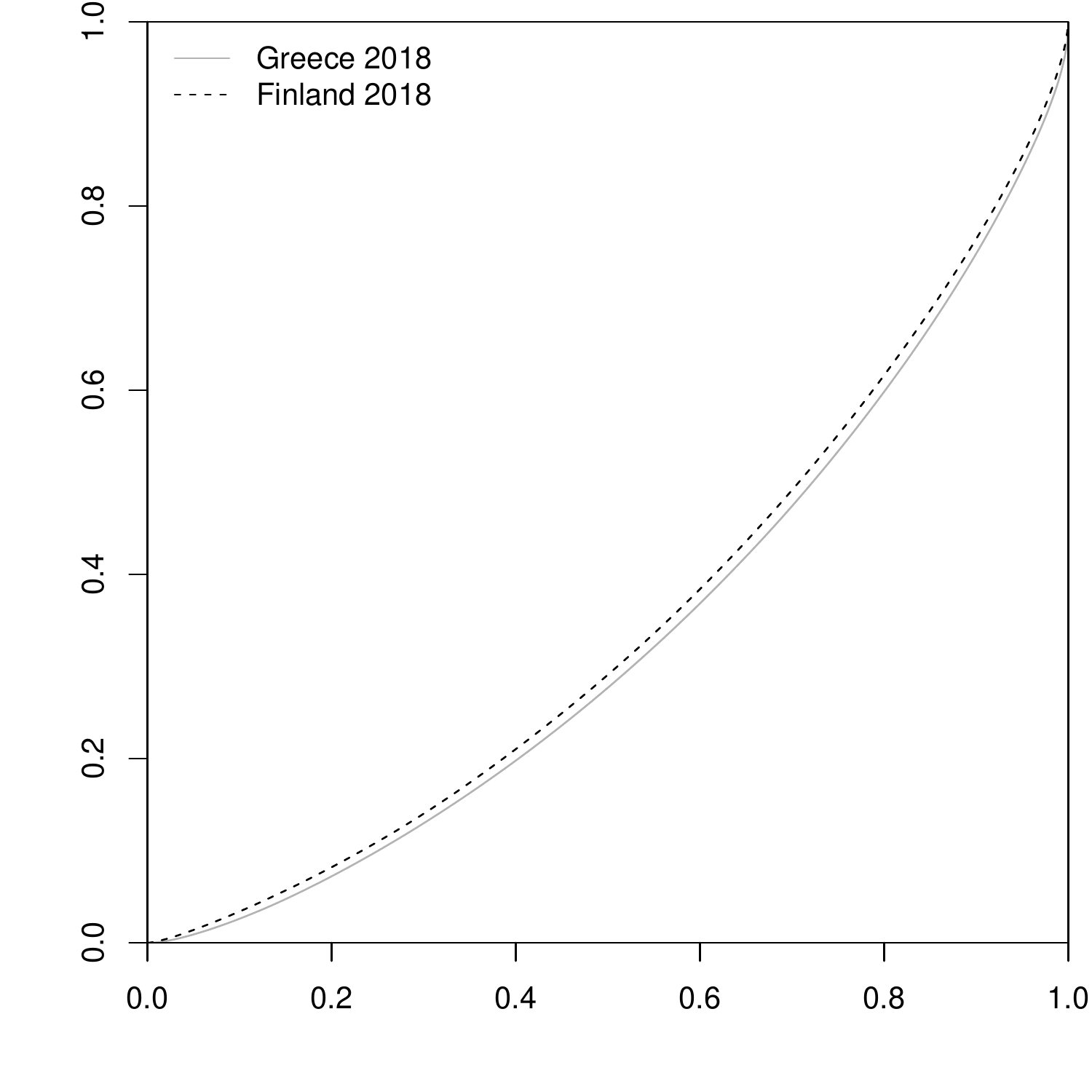} & \includegraphics[width=0.47\textwidth]{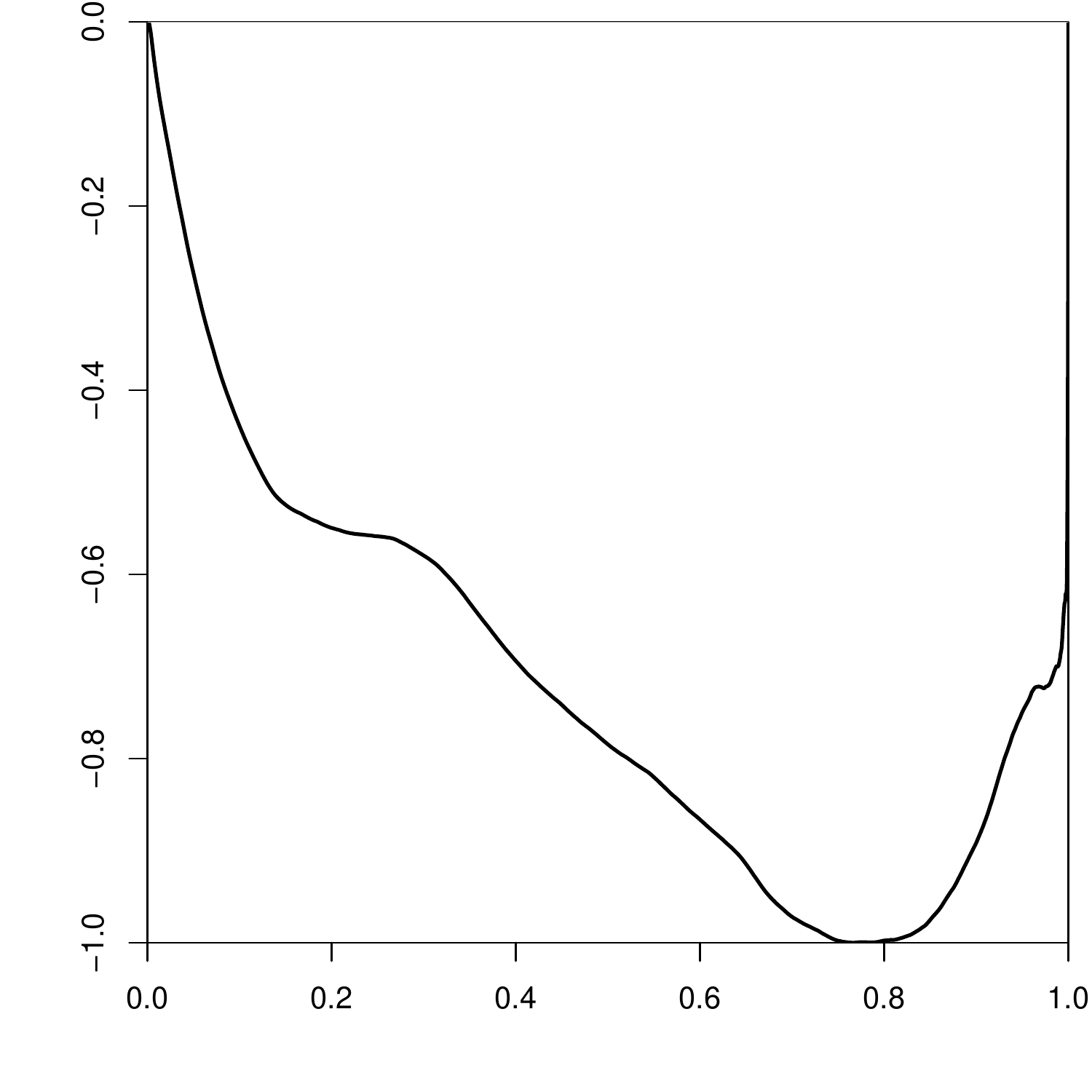} \\
\includegraphics[width=0.47\textwidth]{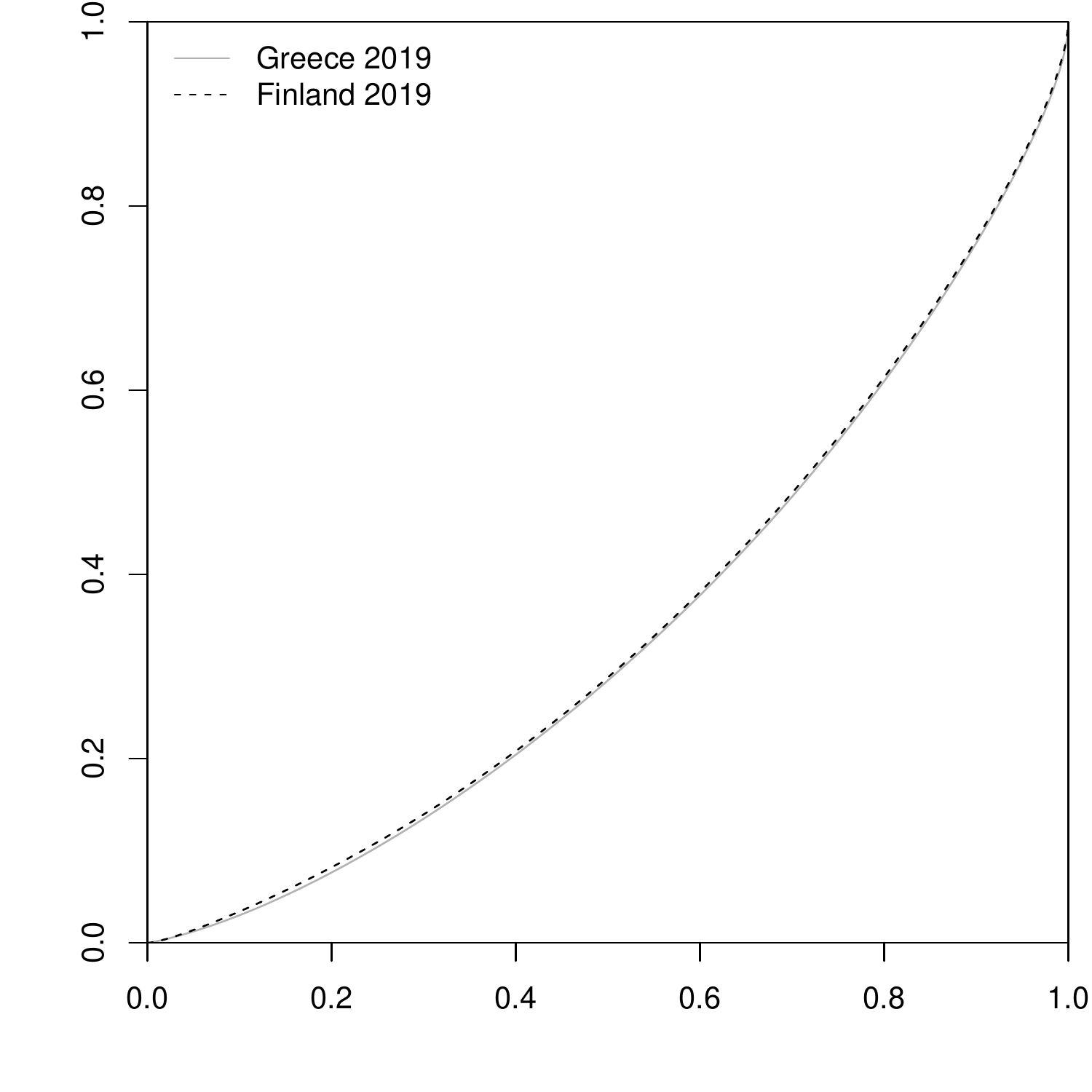} & \includegraphics[width=0.47\textwidth]{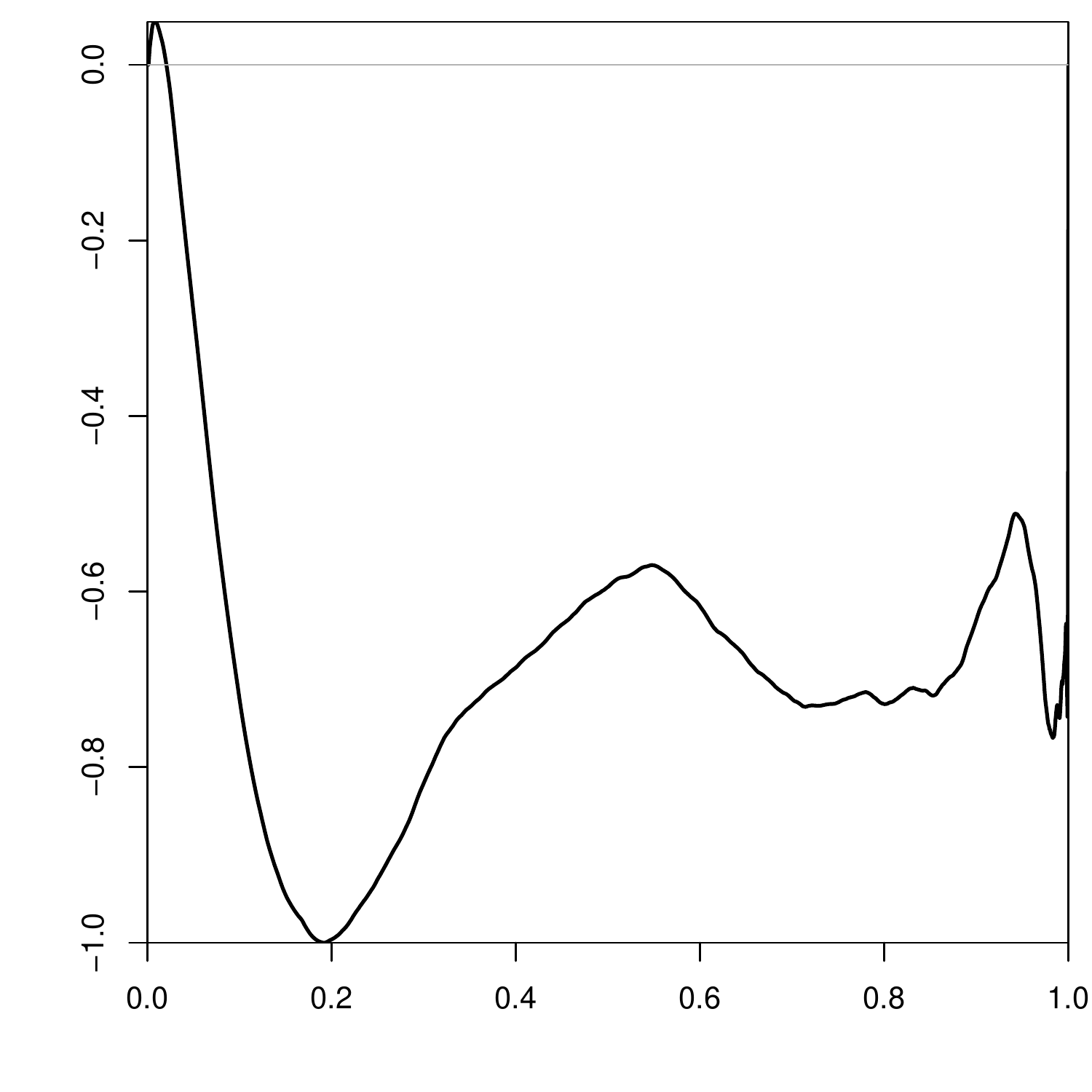}
\end{tabular}
\end{center}
\caption{The Lorenz curves $\hat\ell_1$ and $\hat\ell_2$ (left column) of income distribution and their difference scaled by the maximum absolute difference, $(\hat\ell_1-\hat\ell_2)/\|\hat\ell_1-\hat\ell_2\|_\infty$, (right column) corresponding to Greece ($X_1$)  and Finland ($X_2$) in a year of the span 2004--2019.}
\label{SuppMatfi:GRFILorenz5}
\end{figure}

\subsection{Comparing inequality between Greece and Portugal}

\begin{figure}[h]
\begin{center}
\includegraphics[width=12cm]{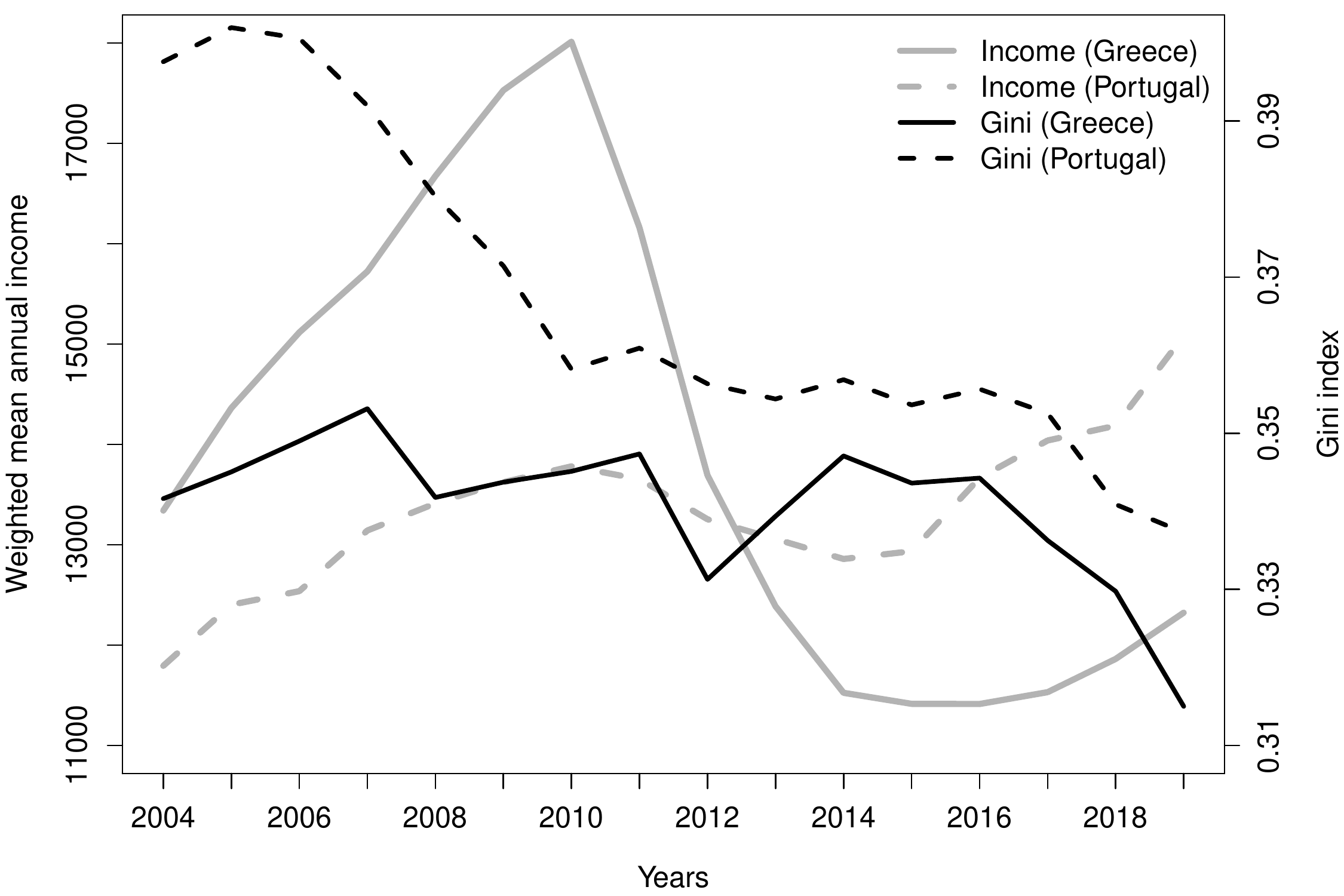}
\end{center}
\label{IncomeGiniELPT0419}
\caption{Weighted mean annual income and Gini index in Greece and Portugal from 2004 to 2019.}
\end{figure}

\begin{figure}[H]
\begin{center}
\includegraphics[width=13cm]{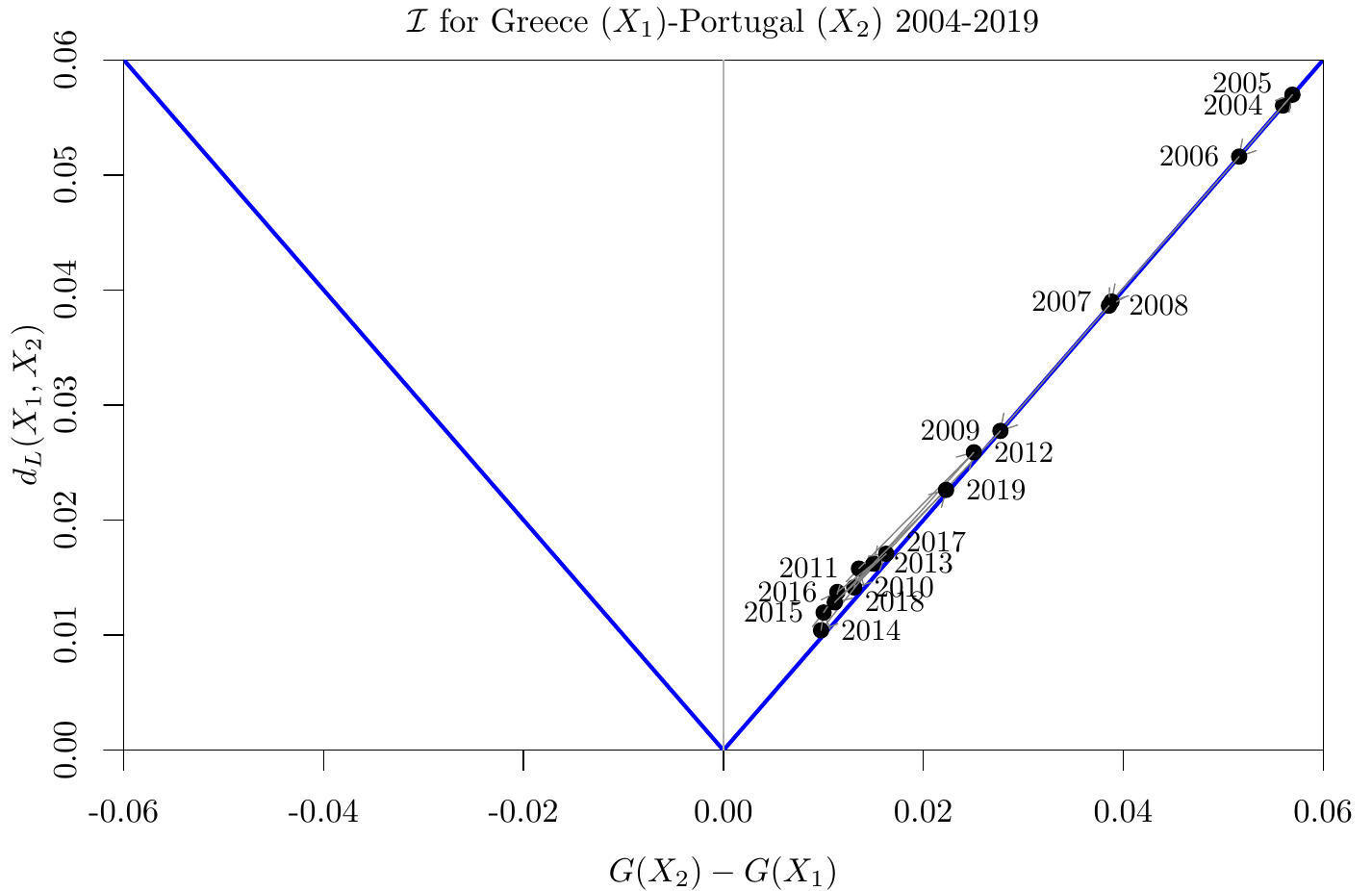}
\end{center}
\caption{Bidimensional inequality index $\mathcal I$ for income data from Greece ($X_1$) and Portugal ($X_2$).}
\label{fi:GreecePortugal0419I}
\end{figure}

\subsection{Relative inequality between Spain and Portugal} \label{Subsection.RealData.ESPT}

In Figure~\ref{SuppMatfi:ESPTLorenz}, on the left we display the Lorenz curves, \(\hat\ell_1\) and \(\hat\ell_2\), of income in Spain and Portugal, respectively, for a year between 2008 and 2019. Since the two Lorenz curves are close to each other, on the right we plot the difference of the two Lorenz curves, \(\hat\ell_1-\hat\ell_2\), scaled by \(\|\hat\ell_1-\hat\ell_2\|_\infty\).

\begin{figure}[H]
\begin{center}
\begin{tabular}{cc}
\includegraphics[width=0.47\textwidth]{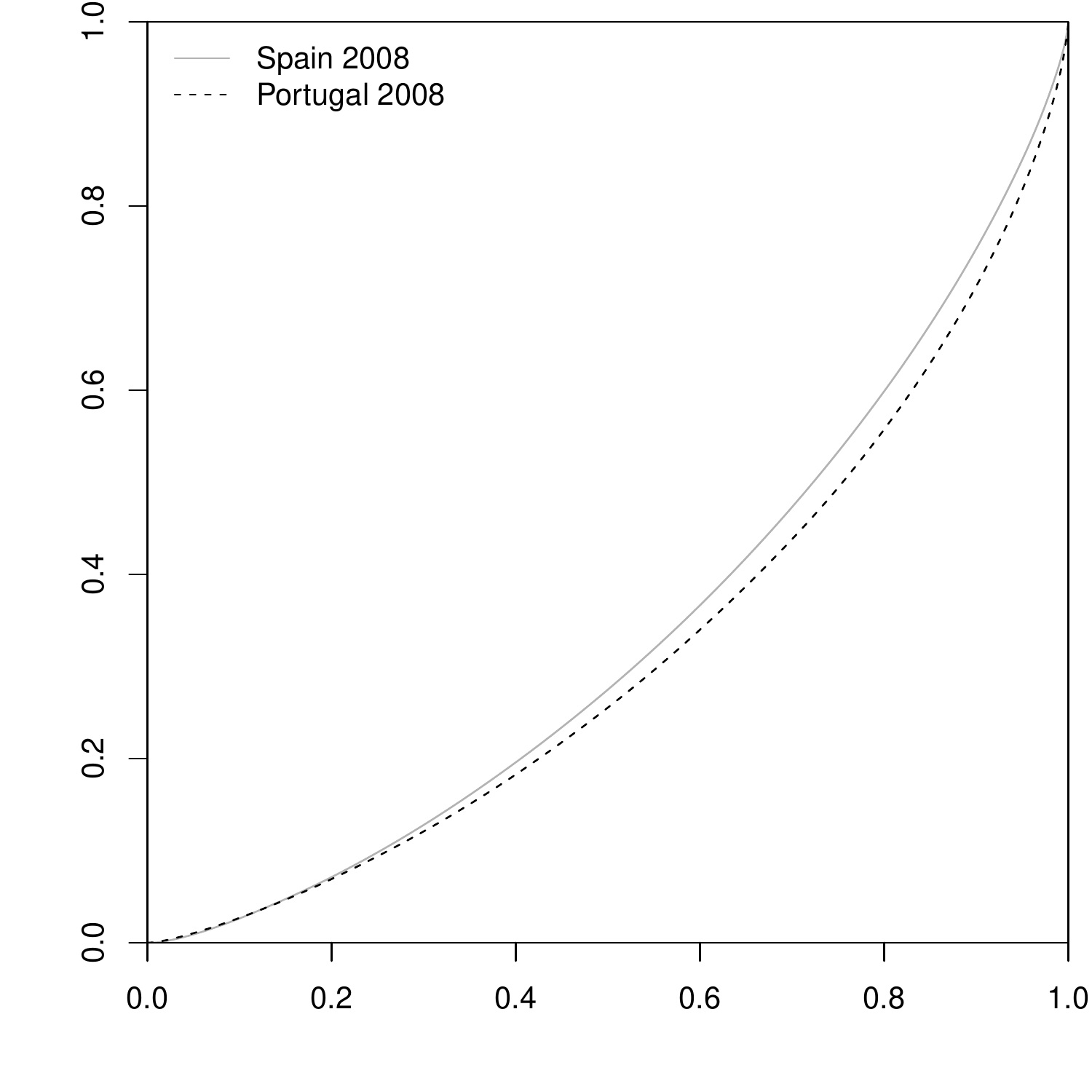} & \includegraphics[width=0.47\textwidth]{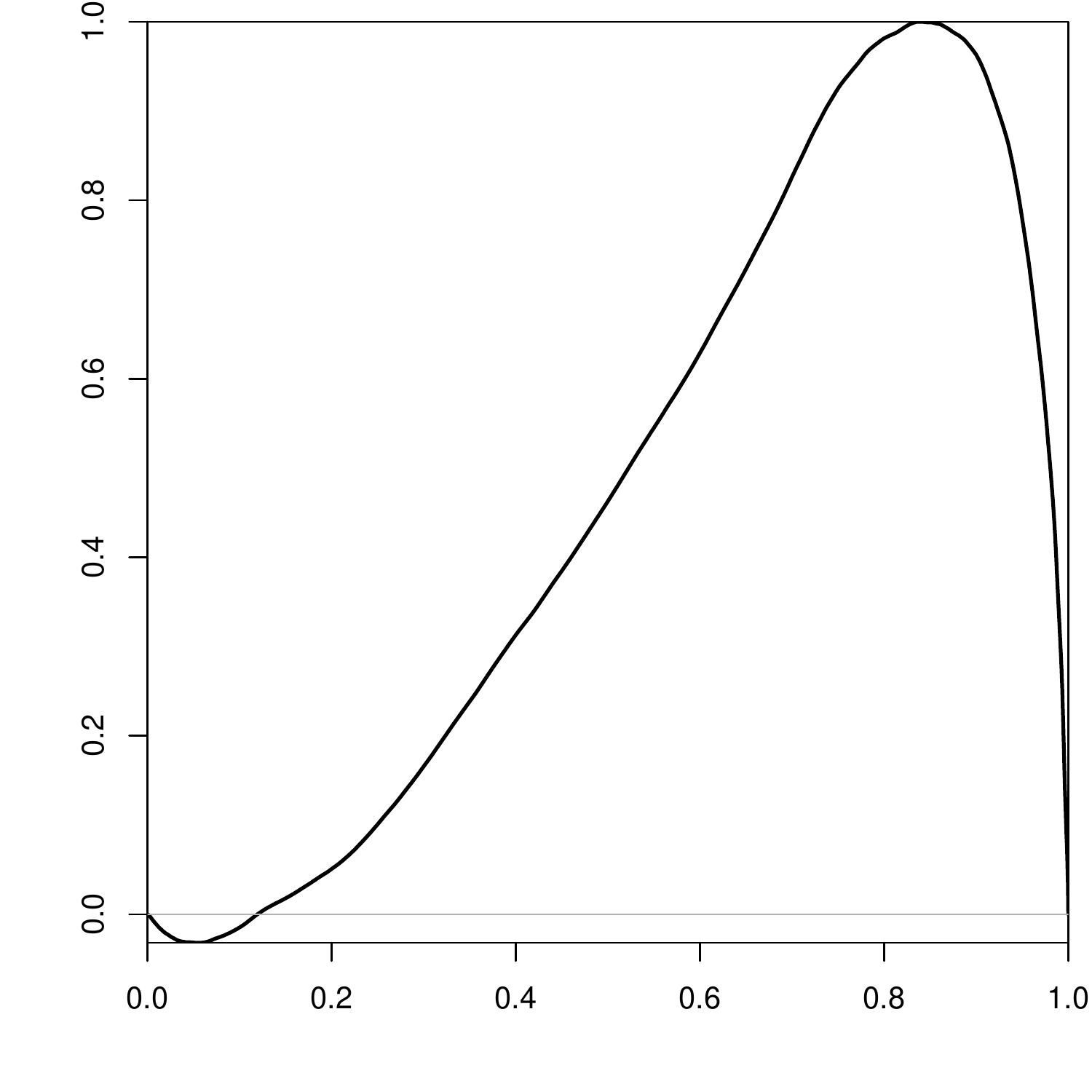}  \\
\includegraphics[width=0.47\textwidth]{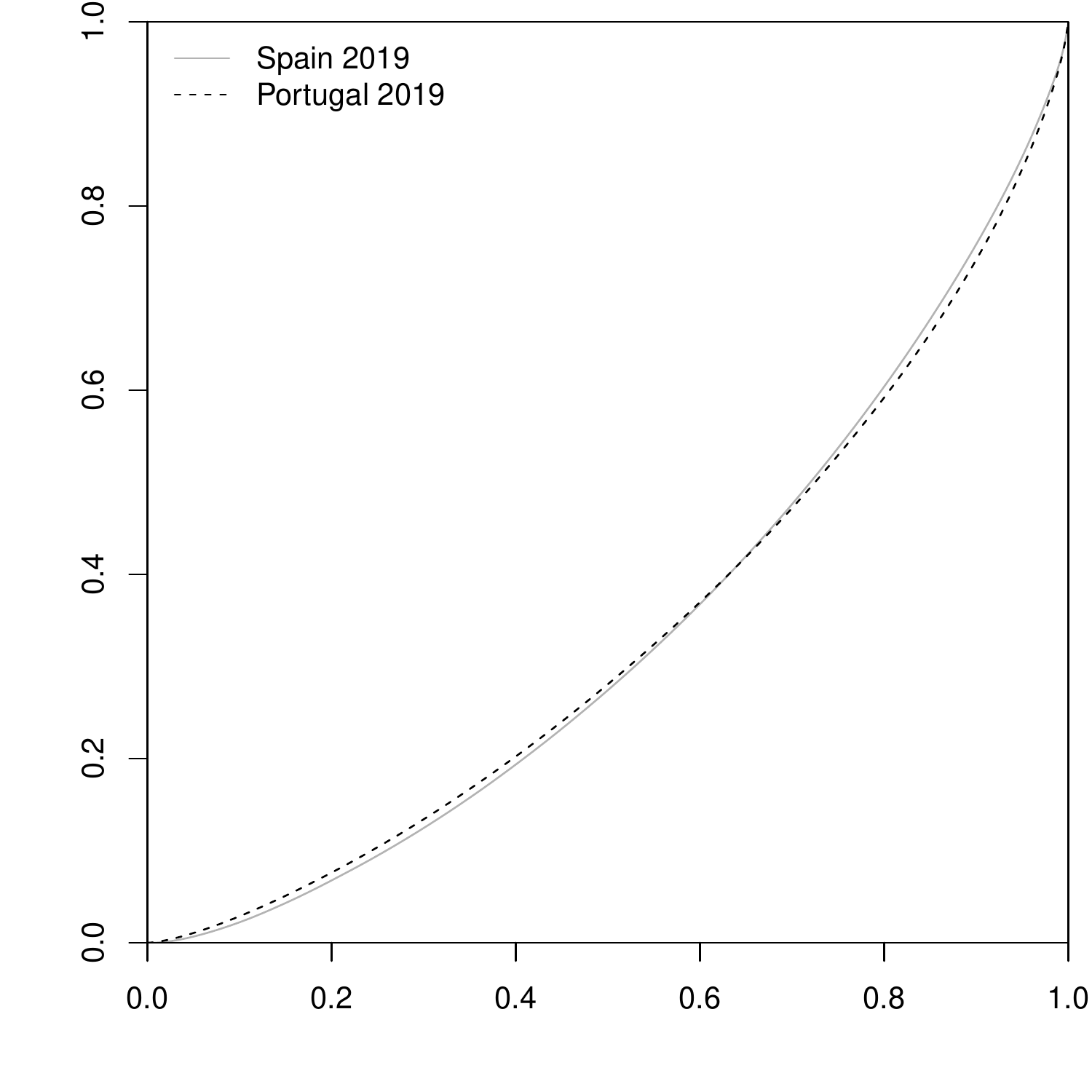} & \includegraphics[width=0.47\textwidth]{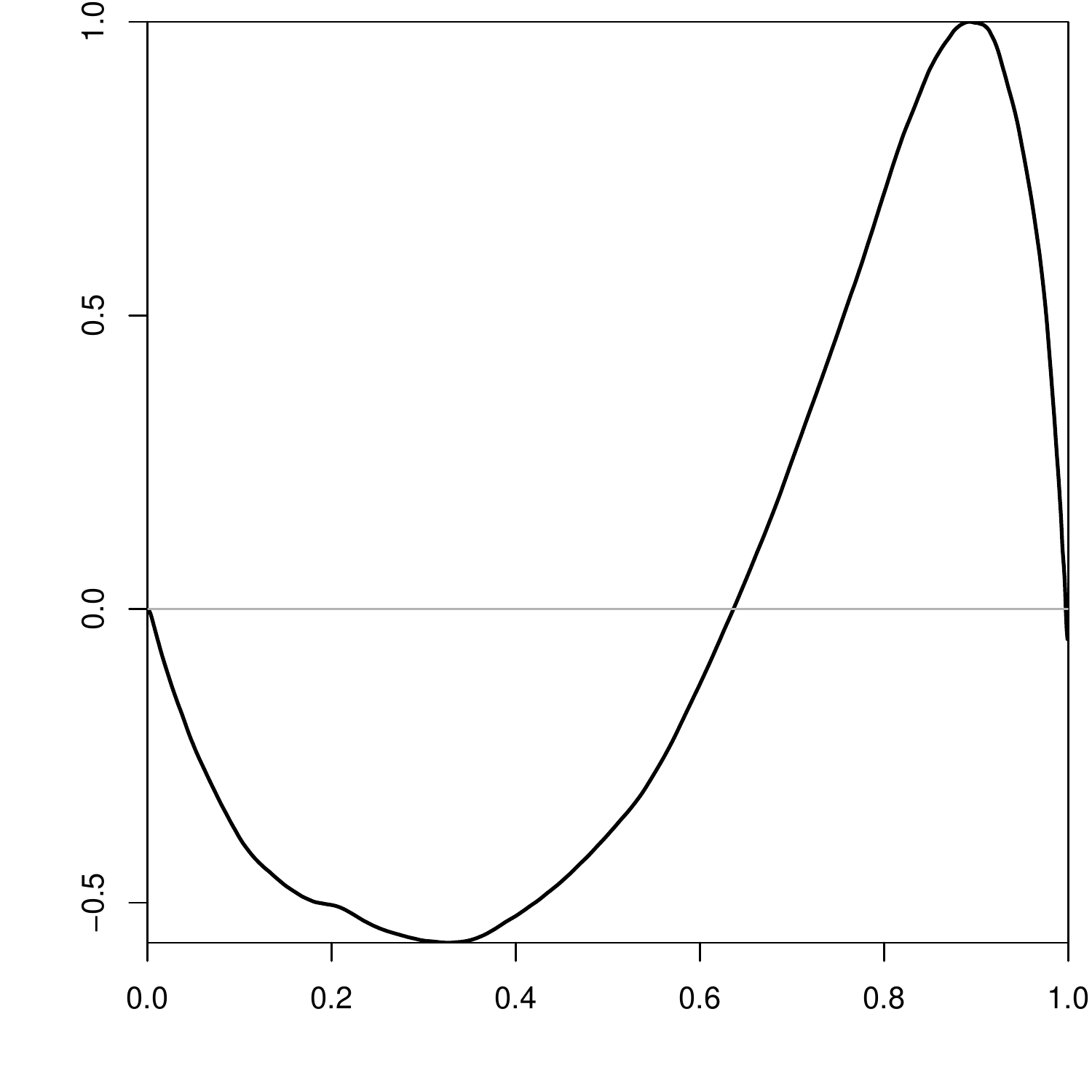}
\end{tabular}
\end{center}
\caption{The Lorenz curves $\hat\ell_1$ and $\hat\ell_2$ (left column) of income distribution and their difference scaled by the maximum absolute difference, $(\hat\ell_1-\hat\ell_2)/\|\hat\ell_1-\hat\ell_2\|_\infty$, (right column) corresponding to Spain ($X_1$)  and Portugal ($X_2$) in 2008 and 2019.}
\label{SuppMatfi:ESPTLorenz}
\end{figure}

%--------------------------------------------------------------
\end{document}